\newtheorem{definition}{Definition}
\newcommand{\defref}[1]{Definition~\ref{#1}}
\newtheorem{lemma}{Lemma}
\numberwithin{lemma}{section}
\newcommand{\lemref}[1]{Lemma~\ref{#1}}
\newtheorem{prop}{Proposition}
\numberwithin{prop}{section}
\newcommand{\propref}[1]{Proposition~\ref{#1}}
\newtheorem{corollary}{Corollary}
\numberwithin{corollary}{section}
\newcommand{\corref}[1]{Corollary~\ref{#1}}
\newtheorem{theorem}{Theorem}
\numberwithin{theorem}{section}
\newcommand{\thmref}[1]{Theorem~\ref{#1}}
\DeclareMathOperator\supp{supp}
\DeclareMathOperator\spann{span}
\newcommand{\<}{\langle}
\renewcommand{\>}{\rangle}
\newcommand{\lo}{{\mathcal{L}}}
\newcommand{\los}{{\mathcal{L}_{s}}}
\newcommand{\lop}{{\mathcal{L}_{p}}}
\newcommand{\trc}{{\mathcal{T}}}
\newcommand{\s}{{\mathcal{S}}}
\newcommand{\ee}{{\mathcal{E}}}
\newcommand{\xx}{{\mathcal{X}}}
\newcommand{\yy}{{\mathcal{Y}}}
\newcommand{\zz}{{\mathcal{Z}}}
\renewcommand{\aa}{{\mathcal{A}}}
\newcommand{\ff}{{\mathcal{F}}}
\newcommand{\qq}{{\mathcal{Q}}}
\newcommand{\mm}{{\mathcal{M}}}
\newcommand{\ii}{{\mathcal{I}}}
\newcommand{\jj}{{\mathcal{J}}}
\newcommand{\co}{\mathds{C}}
\newcommand{\re}{\mathds{R}}
\newcommand{\h}{{\mathcal{H}}}
\newcommand{\kk}{{\mathcal{K}}}
\newcommand{\hs}{{\mathcal{H}\sub{\s}}}
\newcommand{\ha}{{\mathcal{H}\sub{\aa}}}
\newcommand{\nsys}{{N\sub{\s}}}
\newcommand{\nsysk}[1]{{N\sub{\s}^{#1}}}
\newcommand{\napp}{{N\sub{\aa}}}
\newcommand{\nappk}[1]{{N\sub{\aa}^{#1}}}
\newcommand{\A}{\mathsf{A}}
\newcommand{\B}{\mathsf{B}}
\newcommand{\E}{\mathsf{E}}
\newcommand{\F}{\mathsf{F}}
\newcommand{\Z}{\mathsf{Z}}
\renewcommand{\P}{\mathsf{P}}
\newcommand{\Q}{\mathsf{Q}}
\newcommand{\R}{\mathsf{R}}
\newcommand{\G}{\mathsf{G}}
\newcommand{\nat}{\mathds{N}}
\newcommand{\one}{\mathds{1}}
\newcommand{\onesys}{\mathds{1}\sub{\s}}
\newcommand{\oneapp}{\mathds{1}\sub{\aa}}
\newcommand{\zero}{\mathds{O}}
\newcommand{\imag}{\mathfrak{i}}
\newcommand{\tr}{\mathrm{tr}}
\newcommand{\av}{\mathrm{av}}
\newcommand{\sub}[1]{_{\!\mathsmaller{\, #1}}}
\newcommand{\eq}[1]{Eq.~\eqref{#1}}
\newcommand{\fig}[1]{Fig.~\ref{#1}}
\newcommand{\sect}[1]{Sec.~\ref{#1}}
\newcommand{\app}[1]{Appendix~(\ref{#1})}
\newcommand{\ket}[1]{|{#1}\rangle}
\newcommand{\pr}[1]{P_{#1}}
\newcommand{\Ad}{\mathrm{Ad}}
\newcommand{\var}[1]{\mathrm{Var}\left({#1} \right)}
\title{Measurement disturbance and conservation laws in quantum mechanics}
\author{M. Hamed Mohammady}
\email{mohammad.mohammady@ulb.be}
\affiliation{QuIC, \'{E}cole Polytechnique de Bruxelles, CP 165/59, Universit\'{e} Libre de Bruxelles, 1050 Brussels, Belgium}
\affiliation{RCQI, Institute of Physics, Slovak Academy of Sciences, D\'ubravsk\'a cesta 9, Bratislava 84511, Slovakia}
\author{Takayuki Miyadera}
\email{miyadera@nucleng.kyoto-u.ac.jp}
\affiliation{Department of Nuclear Engineering, Kyoto University, Nishikyo-ku, Kyoto 615-8540, Japan}
\author{Leon  Loveridge}
\email{Leon.D.Loveridge@usn.no }
\affiliation{Quantum Technology Group, Department of Science and Industry Systems, University of South-Eastern Norway, 3616 Kongsberg, Norway}
\begin{document}

\begin{abstract}

Measurement error and disturbance, in the presence of conservation laws, are analysed in general operational terms. We provide novel quantitative bounds demonstrating necessary conditions under which accurate or non-disturbing measurements can be achieved, highlighting an interesting interplay between incompatibility, unsharpness, and coherence. From here we obtain a substantial generalisation of the Wigner-Araki-Yanase (WAY) theorem. Our findings are further refined through the analysis of the fixed-point set of the measurement channel, some extra structure of which is characterised here for the first time.

\end{abstract}

\maketitle

\section{Introduction}

That measurements generally disturb quantum systems is one of the fundamental aspects of quantum mechanics. The consequences of this effect range from the foundational
to the applied, sometimes entering in the guise of measurement ``back-action'',  playing a key role in quantum metrology, computation, and information processing \cite{Busch1990,  Ozawa2003, Busch2009, Heinosaari2010,  Tsang2010, Tsang2012, Rozema2012, Groen2013, Hatridge2013,  Busch2013a, Busch2014a, Kaneda2014, Blok2014, Shitara2016, Moller2017, Hamamura-Miyadera, Carmeli2019, Wu2020,  DAriano2020, Ipsen2021}. Measurement disturbance can be  seen when two observables are measured in succession, and the statistics of the second measurement depend on the first. While a well-known necessary condition for  non-disturbance is that the pair of observables must be \emph{compatible} \cite{Heinosaari2015, Guhne2021}, further necessary conditions  arise when the first measurement obeys a conservation law, i.e., when the  interaction between the measured system and measuring apparatus conserves some total quantity such as energy, charge, or angular momentum. Indeed, the  Wigner-Araki-Yanase (WAY) theorem  states that when a single sharp observable is measured in succession, the first measurement will not disturb the second only if the measured observable commutes with the system part of a conserved quantity \cite{E.Wigner1952,Busch2010,Araki1960}. 
The same constraint  holds for perfectly accurate measurements, and independently of disturbance, if the  pointer observable of the apparatus obeys the ``Yanase condition'', i.e., if it  commutes with the apparatus part of a conserved quantity  \cite{Loveridge2011}. 

The WAY theorem has evolved over the years and continues to inspire research in a variety of directions (some recent examples are \cite{Miyadera2006a,Kimura2008,Busch2011, Busch2013, uczak2016,  Tukiainen2017, Tajima2019b, Sotan2021}), having impact also in other fields of research: for instance in quantum computing \cite{Ozawa2002a,Karasawa2007,Karasawa2009},   the resource theories of asymmetry \cite{Ahmadi2013b} and coherence \cite{Aberg2014,Tajima2019}, the theory of quantum reference frames \cite{Loveridge2017a, Loveridge2020a}, quantum clocks \cite{Gisin2018}, and quantum thermodynamics \cite{Navascues2014a, Mohammady2017, Mohammady2019c, Chiribella2021a, Mohammady2021}. Despite the progress that has been made, however, the full scope of the WAY theorem is still not known. For instance, the theorem as stated pertains only to sharp observables, and has been shown only in the limited framework of ``normal'' measurement schemes, where the apparatus is prepared in a pure state and unitarily interacts with the measured system, and where the pointer observable is  sharp. But in the quantum theory of measurement \cite{Busch2016a}, observables are more properly represented by positive operator valued measures (POVMs) which can be unsharp,  measurement interactions are more generally described by channels which can be non-unitary, and the apparatus preparation  may be described by a mixed state.  Additionally, the WAY theorem  addresses disturbance only in the case where the same observable is measured in succession, and the situation where   the first and second observables in the sequence are not the same has received scant attention. In this paper, we shall investigate the role of conservation laws on measurement error and disturbance in the more general setting, stating our results in operational terms, in that the quantitative bounds we employ can be seen to arise from the probabilistic structure of quantum theory in its general measurement theoretic form \cite{PaulBuschMarianGrabowski1995}.

The paper is structured as follows. In  \sect{sec:pre}, we present the elements of operational quantum theory pertinent to our investigation. This includes a  background on the quantum theory of measurement, together with  a quantification of measurement error and disturbance. Readers familiar with these topics can jump directly to \sect{sect:Generalised-WAY}, where the main results of the paper begin; here we present a framework for describing conservation laws in quantum theory, prising apart two distinct notions of conservation---full and average---whose difference manifests for general channels and which plays a key role in interpreting our findings.  Next, we consider sequential measurements where the first measurement obeys a conservation law---whether average or full---and obtain general quantitative bounds for the error in the first measurement to realise a desired target observable, and the disturbance by the first measurement on a second, possibly different, observable. Here, we do not assume that the system observables are sharp, or that the apparatus pointer observable is sharp, or that the measurement interaction is unitary, or that the apparatus preparation is pure. In particular, the bounds demonstrate that in the case of a full conservation law, a large coherence in the apparatus preparation is in general a necessary condition for   approximately accurate and non-disturbing measurements of   observables  not commuting with the system part of a conserved quantity. These bounds are then used to prove a generalisation of the WAY theorem,  given in the form of a single quantitative bound, and capturing many essential features of the original  theorem.  Next, we provide an even stronger  generalisation of the WAY theorem, indicating a deep connection between measurability, non-disturbance, and ``definiteness'',   and demonstrating that there are unsharp observables not commuting with the conserved quantity whose measurement cannot be accurate or non-disturbing irrespective of the apparatus preparation. Finally, in \sect{sect:fixed-point-non-disturbance} we consider how the structure of the set of fixed states of the measurement channel  imposes further restrictions on non-disturbance. In particular, we show that an observable not commuting with the conserved quantity admits a non-disturbing measurement only if the measurement channel disturbs all ``faithful'' states, i.e.,  states with strictly positive eigenvalues.

\section{Preliminaries}\label{sec:pre}
In this section we  introduce the elements of operational quantum theory. This  includes  some background on  observables, instruments, and measurement schemes, as part of the quantum theory of measurement (see, e.g., \cite{Busch2016a, PaulBuschMarianGrabowski1995,Busch1996, Heinosaari2011}). In particular, an operationally motivated  quantification of measurement error and disturbance is provided, together with a review of two  special instances of non-disturbing measurements---measurements of the first kind and repeatable measurements.

\subsection{Operators on Hilbert space, operations, and channels}
Let $\h$ be a complex separable Hilbert space, with $\lo(\h) \supset \los(\h) \supset \lop(\h)$  the algebra of bounded (linear) operators, the real vector space of self-adjoint operators, and the (cone of) positive operators on $\h$, respectively. For any subset $\mathscr{A} \subseteq \lo(\h)$, the \emph{commutant} is defined as $\mathscr{A}':= \{B \in \lo(\h) : [A, B] = \zero \, \forall \, A \in \mathscr{A}\}$.  We shall denote by $\one$ and $\zero$  the identity and null operators of $\lo(\h)$, respectively, and an operator $A \in \lop(\h)$ satisfying $\zero \leqslant A \leqslant \one$ will be called an \emph{effect}.   We define by $\trc(\h) \subseteq \lo(\h)$ the two-sided ideal of trace-class operators in $\lo(\h)$. The (normal) state space  is the space of positive, unit-trace operators $\s(\h) \subset \trc(\h)$, and a state $\rho \in \s(\h)$ is called faithful if for all $A \in \lo(\h)$,  $\tr[A^*A \rho] = 0 \implies A = \zero$, which implies that all of the eigenvalues of $\rho$ are strictly positive.

Transformations of quantum systems are called \emph{operations}, defined as completely positive (CP), trace non-increasing linear maps  $\Phi: \trc(\h) \to \trc(\kk)$. Among the operations are the \emph{channels}, which preserve the trace. For any operation $\Phi : \trc(\h) \to \trc(\kk)$, there is an associated (``Heisenberg picture") dual operation $\Phi^* : \lo(\kk) \to \lo(\h)$,  defined via the duality $\tr[\Phi^*(A) T] = \tr[A \Phi(T)]$ for all $A\in \lo(\kk)$ and $T \in \trc(\h)$.
$\Phi^*$ is completely positive and sub-unital,  and unital exactly when $\Phi$ is trace-preserving. Unital  operations  $\Phi^*$  will also be referred to as channels. In \app{app:inequalities-operations} we present several properties of operations that are of central importance for the proofs of our results, most notably a Cauchy-Schwarz inequality \cite{Janssens2017}.

For channels $\Phi : \trc(\h) \to \trc(\h)$, and their duals $\Phi^* : \lo(\h) \to \lo(\h)$, we define the fixed-point sets as 
\begin{align*}
&\ff(\Phi) := \{T \in \trc(\h) : \Phi(T) = T\},  &\ff(\Phi^*) := \{A \in \lo(\h) : \Phi^*(A) = A\}.   
\end{align*}
Note that  $A \in \ff(\Phi^*)$ is equivalent to $\tr[A \Phi(\rho)] = \tr[A \rho]$ for all states $\rho \in \s(\h)$. Linearity of $\Phi^*$ ensures that $\ff(\Phi^*)$ is closed under linear combinations, and because $\Phi^*$ preserves the involution, $\ff(\Phi^*)^* = \ff(\Phi^*)$. While $\ff(\Phi^*)$ is not generally closed under multiplication,  if $\ff(\Phi)$ contains a faithful state then   multiplicative closure is guaranteed, in which case $\ff(\Phi^*)$ is a $*$-algebra; in fact, it is a von Neumann algebra \cite{Bratteli-operator-algebras-1, Bratteli1998}. See \app{app:fixed-point-von-Neumann} for further details.

\subsection{Observables}

An \emph{observable} of a quantum system $\s$, with Hilbert space $\hs$, 
 is represented by a \emph{normalised  positive operator valued measure} (POVM)  $\E: \Sigma\to \lop(\hs)$, where $\Sigma$ is a $\sigma-$algebra of subsets of some value space $\xx$, representing possible outcomes of a measurement of $\E$. For any $X\in \Sigma$, the positive operator $\zero \leqslant \E(X) \leqslant \onesys$ is referred to as an effect of $\E$.  $\E$ is sigma-additive on disjoint elements of $\Sigma$, and normalisation implies that $\E(\xx)$ is the identity operator on $\hs$. An effect  $\E(X) = \alpha \onesys$, where $\alpha \in [0,1]$, is called \emph{trivial}, and an observable $\E$ is called non-trivial if at least one of the effects in its range is non-trivial. {\it Discrete observables} are those for which $\xx = \{x_1, x_2, \dots\}$ is countable,  in which case $\E$ can be identified with the set  $ \{\E(x) \equiv \E(\{x\})  \in \lop(\hs) :   x \in \xx\} \equiv \E$. If it is not stated otherwise, observables will be assumed to be discrete.   Combined with states, observables give rise to the probabilities 
\begin{align*}
    p^{\E}_\rho(x) := \tr[\E(x) \rho],
\end{align*}
holding for all $\rho \in \mathcal{S}(\hs)$
and all $x \in \xx$, interpreted as the probability of observing outcome $x$ when the observable $\E$ is measured in the state $\rho$.

If $\E$ is a POVM acting in $\hs$, the commutant of $\E$ is denoted by $\E' := \{A \in \lo(\hs) : [\E(x), A]=\zero \, \forall \, x\in \xx\}$. Since $\E =\E^*$ is a self-adjoint set, $\E'$ is a von Neumann algebra, and $\E'' \equiv (\E')'$ is the smallest von Neumann algebra containing $\E$ (i.e., it is the von Neumann algebra generated by $\E$). For any $A \in \lo(\hs)$ such that $A \in \E'$, we write   $[\E, A]=\zero$. Similarly, for any observable $\F :=\{\F(y): y \in \yy\}$ such that $\F \subset  \E'$, we shall write  $[\E, \F]=\zero$. Among the observables are those that are \emph{commutative}, meaning that $\E \subset \E'$ (that is, all the effects $\E(x)$ mutually commute). Among the commutative observables are the {\it sharp} observables, which satisfy the additional condition that for all $x, y \in \xx$, $\E(x) \E(y) = \delta_{x,y}\E(x)$, i.e., $\E(x)$ are mutually orthogonal projection operators. These observables correspond to self-adjoint operators through the spectral theorem. Observables that are not sharp will be called {\it unsharp}, and similarly any effect $E$ which is not a projection will be called unsharp. The unsharpness of $E$ can be quantified through the operator norm as $0\leqslant \| E - E^2\|\leqslant 1/4$, which vanishes exactly when $E$ is a projection. Finally, an observable $\E$ is defined as being ``norm-1'', or having the norm-1 property, if $\| \E(x)\|=1$ for every $x$ for which $\E(x) \ne \zero$. While sharp observables are trivially  norm-1, this property may also be enjoyed by some unsharp observables.

\subsection{Instruments}

\begin{figure}[htbp!]
\begin{center}
\includegraphics[width=0.5\textwidth]{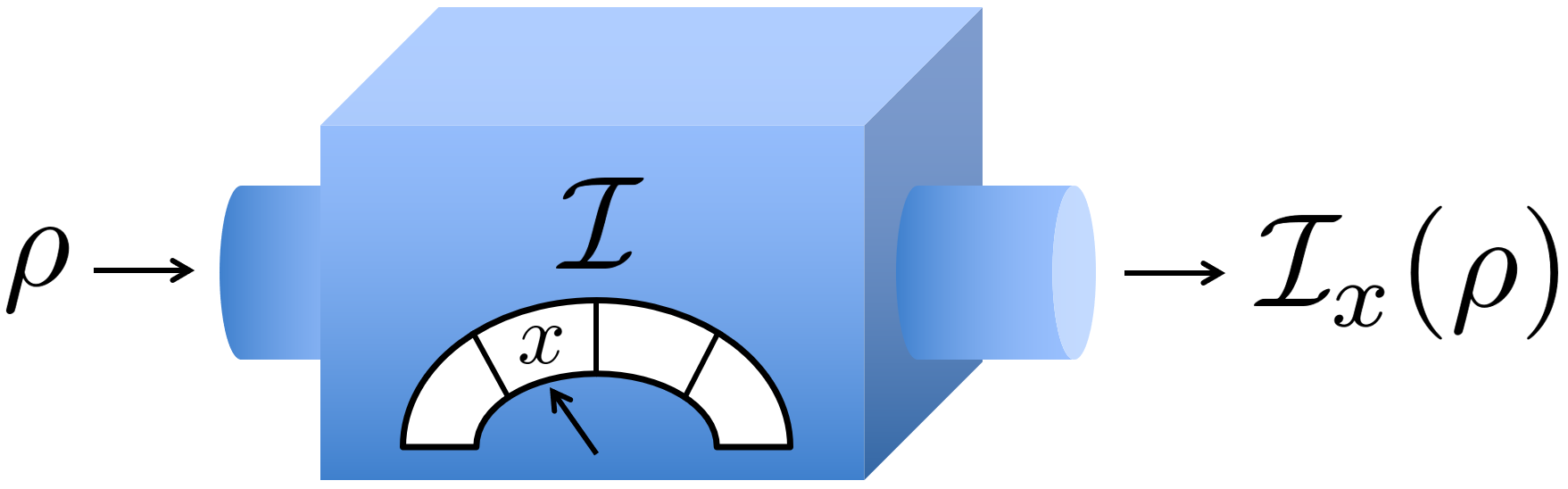}
\vspace*{-0.2cm}
\caption{An instrument measures an observable $\E$ of the system $\s$, and also transforms the system conditional on registering a given outcome. The system, initially prepared in an arbitrary state $\rho$, enters the instrument which then registers outcome $x$  with probability $p^\E_\rho(x) := \tr[\E(x) \rho] = \tr[\ii_x(\rho)]$. Subsequently, the instrument transforms the system to the (non-normalised) state $\ii_x(\rho)$. }\label{fig:Instrument}
\vspace*{-0.5cm}
\end{center}
\end{figure}

\bigskip

Though the state-observable pairings describe the totality of the measurement statistics, this is not sufficient for determining other interesting properties of a measurement, for instance the form of the associated state changes. To this end, we  make use of the notion of \emph{instrument}, or \emph{operation valued measure} \cite{Davies1970, Ozawa2000, Ozawa2001, Pellonpaa2013, Pellonpaa2013a}. A discrete instrument is a collection of operations $ \ii := \{\ii_x \equiv \ii_{\{x\}}  : x\in \xx\}$ such that $\ii_\xx(\cdot) := \sum_{x\in \xx} \ii_x(\cdot)$ is a channel. Throughout, we shall always assume that $\ii$ acts in $\hs$, that is,   $\ii_x:  \trc(\hs) \to \trc(\hs)$.   Each instrument is associated with a unique observable $\E$   via $\ii^*_x(\onesys) = \E(x)$, which implies that $p^{\E}_\rho(x) := \tr[\E(x)\rho] = \tr[\ii_x(\rho)]$. We refer to such an $\ii$ as an $\E$-compatible instrument, or an $\E$-instrument for short, and  to $\ii_\xx(\cdot)$ as the associated $\E$-channel. $\ii_x(\rho)$ is interpreted as the non-normalised state after a measurement of $\E$  has taken place and the outcome $x$ has been registered, and $\ii_\xx(\rho)$ is the normalised state after a non-selective measurement.  A schematic representation of an instrument is given in \fig{fig:Instrument}. 

We note that for every discrete observable $\E$, there are infinitely many $\E$-compatible instruments; every $\E$-instrument $\ii$ can be constructed as the set of  operations $\{\ii_x = \Phi_x \circ\ii^L_x : x\in \xx\}$ \cite{Ozawa2001, Pellonpaa2013a}, where $\Phi_x : \trc(\hs) \to \trc(\hs)$ are arbitrary channels that may depend on outcome $x$, and $\ii^L$ is the  L\"uders instrument for $\E$ \cite{Luders2006},  defined as 
\begin{align}\label{eq:luders}
    &\ii^L_x(T) := \sqrt{\E(x)} T \sqrt{\E(x)} &{\ii^L_x}^*(A) := \sqrt{\E(x)} A \sqrt{\E(x)},
\end{align}
to hold for all $x\in \xx$, $T \in \trc(\hs)$, and $A \in \lo(\hs)$.

\subsection{Measurement schemes}

\begin{figure}[htbp!]
\begin{center}
\includegraphics[width=0.4\textwidth]{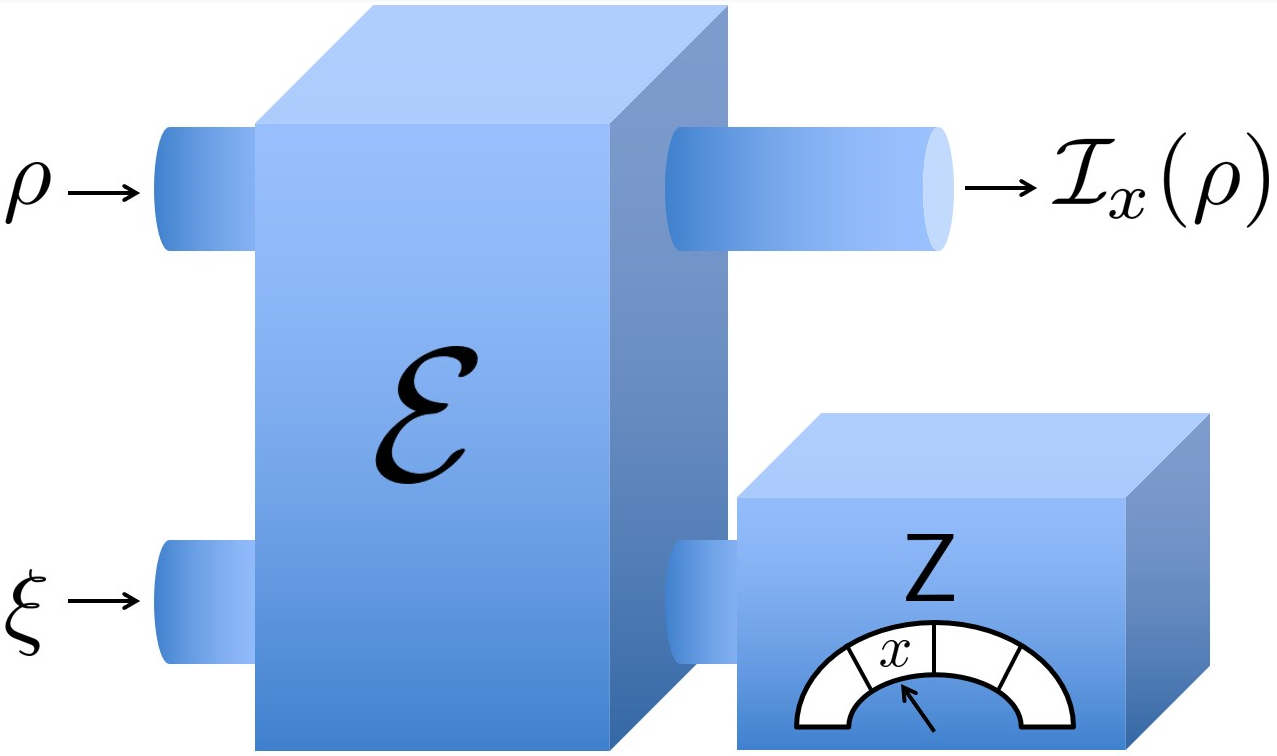}
\vspace*{-0.2cm}
\caption{An $\E$-instrument $\ii$ is implemented on the system $\s$ via a measurement scheme. The system, initially prepared in an arbitrary state $\rho$, and a measuring apparatus $\aa$, prepared in a fixed state $\xi$, undergo a joint evolution by the channel $\ee$. Subsequently, a pointer observable $\Z$ of the apparatus is measured. With probability $p^\E_\rho(x) := \tr[\E(x) \rho] = \tr[\ii_x(\rho)]$ the apparatus registers outcome $x$, thereby transforming the system to the non-normalised state $\ii_x(\rho)$.  }\label{fig:Measurement-Scheme}
\vspace*{-0.5cm}
\end{center}
\end{figure}

\bigskip

An even more comprehensive description of the measurement process involves the modelling of a measuring apparatus $\aa$ and a specification of how it couples to the system under investigation. A  \emph{measurement scheme}  is characterised by the tuple $\mm:= (\ha, \xi, \ee, \Z)$ where: $\ha$ is the Hilbert space for the measuring apparatus $\aa$ and $\xi \in \s(\ha)$ is a state on $\ha$; $\ee: \trc(\hs\otimes \ha) \to \trc(\hs\otimes \ha)$ is a channel   which serves to correlate system and apparatus; and $\Z := \{\Z(x)  : x \in \xx\}$ is a  ``pointer'' observable of the apparatus. The operations of the instrument implemented by $\mm$ can be written as
\begin{align}\label{eq:instrument-dilation}
\ii_x(\cdot) = \tr\sub{\aa}[(\onesys \otimes \Z(x) ) \ee(\cdot \otimes \xi)],
\end{align}
 where  $\tr\sub{\aa}: \trc(\hs\otimes \ha)\to \trc(\hs)$ is the partial trace channel over the apparatus, defined as $\tr[A \tr\sub{\aa}[T]] = \tr[(A \otimes \oneapp) T]$ for all $A\in \lo(\hs)$ and $T \in \trc(\hs\otimes \ha)$. The channel implemented by $\mm$ may thus be written as $\ii_\xx(\cdot) = \tr\sub{\aa}[\ee(\cdot \otimes \xi)]$. A schematic representation of a measurement scheme is given in \fig{fig:Measurement-Scheme}. We note that every $\E$-compatible instrument admits infinitely many \emph{normal} measurement schemes, where $\xi$ is chosen to be pure, $\ee$ is chosen to be unitary, and $\Z$ is chosen to be sharp   \cite{Ozawa1984}. However, unless stated otherwise, we shall consider the more general situation where $\xi$ may be mixed, $\ee$ may be  non-unitary, and $\Z$ may be unsharp. 

We now introduce the unital, completely positive normal conditional expectation  $\Gamma_\xi: \lo(\hs\otimes \ha) \to \lo(\hs)$. $\Gamma_\xi$, called a  {\it restriction map} for $\xi$, is defined as the dual of the isometric embedding (or the {\it preparation map}) $T \mapsto T \otimes \xi$, and  satisfies $\tr[\Gamma_\xi(B)T] = \tr[B (T\otimes \xi)]$ for all $B \in \lo(\hs\otimes \ha)$ and $T \in \trc(\hs)$.  We may use the restriction map to define the channel $\Gamma_\xi^\ee : \lo(\hs\otimes \ha) \to \lo(\hs)$ as 
\begin{align}\label{eq:Gamma-U}
    \Gamma_\xi^\ee(\cdot) := \Gamma_\xi\circ \ee^*(\cdot).
\end{align}
 Using \eq{eq:Gamma-U}, we may express the duals of the operations defined in  \eq{eq:instrument-dilation} as 
\begin{align}\label{eq:instrument-dual-dilation}
    \ii_x^*(\cdot) = \Gamma_\xi^\ee( \cdot \otimes \Z(x) ).
\end{align}
In particular, we may write the dual channel as $\ii^*_\xx(\cdot) =  \Gamma_\xi^\ee(\cdot \otimes \oneapp)$.  

We may also be interested in asking how the apparatus is transformed as a result of the measurement interaction. To this end, we introduce the channel  $\Lambda: \trc(\hs) \to \trc(\ha)$ and its dual $\Lambda^*: \lo(\ha) \to \lo(\hs)$, referred to as \emph{conjugate} channels to $\ii_\xx$ and $\ii_\xx^*$, respectively, defined as
\begin{align}\label{eq:conjugate-channel}
&\Lambda(T) := \tr\sub{\s}[\ee(T \otimes \xi)], &\Lambda^*(A) := \Gamma_\xi^\ee(\onesys \otimes A),
\end{align}
to hold for all $T \in \lo(\hs)$ and $A \in \lo(\ha)$.  That is, $\Lambda(\rho)$ is the state of the apparatus after it has interacted with the system, when the system is initially prepared in state $\rho$. On the other hand, for an initial system state $\rho$,   the expected value of   $A \in \lo(\ha)$ in the  state of the apparatus after the measurement interaction  can be obtained by evaluating the expected value of $\Lambda^*(A)$ in $\rho$.

\subsection{Quantifying measurement error and measurement disturbance}

In order to quantify  \emph{measurement error} and \emph{measurement disturbance},  we shall first provide a quantification of the difference, or discrepancy, between two effects $E$ and $F$ acting in a generic space $\h$. For any state $\rho \in \s(\h)$,  the probabilities  that the properties corresponding to $E$ and $F$ are realised can be compared as $|\tr[\rho (E - F) ]|$, which can be estimated through repeated measurements of $E$ and $F$ in the state $\rho$. Given that we wish to quantify the sense in which $E$ and $F$ differ as effects, i.e., independently of the state, it is natural to take the supremum over all states, and note that 
\begin{equation}\label{eq:effect-difference-op}
    \sup_{\rho \in \s(\h)}\left|\tr[ \rho (E - F) ] \right | = \|E-F\|,
\end{equation}
the right hand side denoting the operator norm, which of course vanishes  when $E=F$ and is non-zero otherwise. \eq{eq:effect-difference-op} gives an operationally motivated---in the sense of being derived directly from the experimental probabilities---quantification of the discrepancy between two effects, which will be utilised in the analysis of measurement error and disturbance.

Let us first address the question of measurement error.  Note that by \eq{eq:instrument-dual-dilation} and \eq{eq:conjugate-channel},  the  observable that is measured by a scheme $\mm:=(\ha, \xi, \ee, \Z)$  has the effects  $\ii_x^*(\onesys) = \Gamma_\xi^\ee(\onesys \otimes \Z(x)) \equiv \Lambda^*(\Z(x))$. Now, let $\E$ be the \emph{target} observable, i.e., the observable we wish to measure, which may be different to the measured observable, but has the same value space $\xx$. By \eq{eq:effect-difference-op}, the measurement error for each effect of the target observable can be quantified through the operator norm as $\| \epsilon(x)\|$, where 
\begin{align}\label{eq:error-equality}
    \epsilon(x) := \Lambda^*(\Z(x)) - \E(x).
\end{align}
A global quantification of measurement error may thus be defined as the largest error over all effects of $\E$, 
\begin{align*}
    \epsilon := \max_{x\in \xx} \|\epsilon(x)\|.
\end{align*}
$\mm$ is a measurement scheme for $\E$ if $\epsilon = 0$, that is, if the target observable is what is actually measured, so that  $\Lambda^*(\Z(x)) = \E(x)$ for all $x$. In the absence of any constraints, perfectly accurate measurements are possible for any target observable $\E$.

\begin{figure}[htbp!]
\begin{center}
\includegraphics[width=0.5\textwidth]{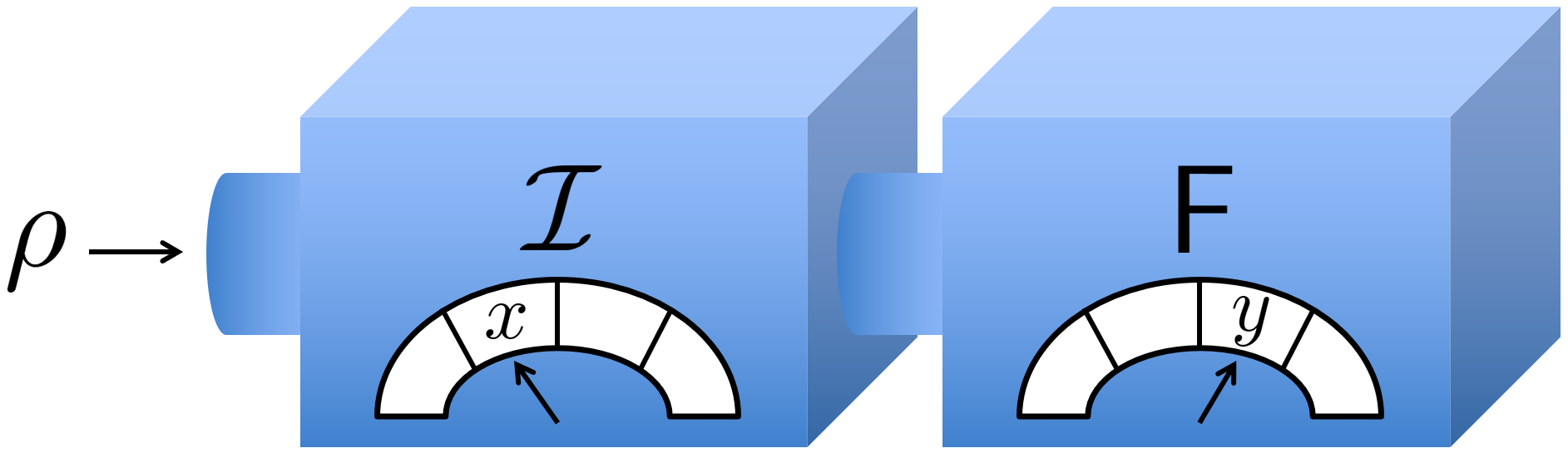}
\vspace*{-0.2cm}
\caption{In a sequential measurement, an observable $\E$  is measured by an instrument $\ii$, and subsequently a second observable $\F$ is measured. $\ii$ does not disturb $\F$ if for all input states $\rho$, the statistics of $\F$ do not depend on whether an $\E$-measurement took place or not.    }\label{fig:Sequential-Measurement}
\vspace*{-0.5cm}
\end{center}
\end{figure}

\bigskip

Now let us address the question of measurement disturbance. Let $\E := \{\E(x) : x \in \xx\}$ and  $\F:= \{\F(y) : y \in \yy\}$ be two observables acting in $\hs$. Consider the  sequential measurement of these observables, as depicted in \fig{fig:Sequential-Measurement},  where at first $\E$ is measured by the instrument $\ii$ (implemented by some measurement scheme $\mm$), and subsequently $\F$ is measured.  For any initial state $\rho \in \s(\hs)$, the probability of observing outcome $y$ of $\F$ after a non-selective measurement by the $\E$-instrument $\ii$ is given as 
\begin{align*}
    \tr[\F(y) \ii_\xx(\rho)] \equiv  \tr[\ii^*_\xx(\F(y)) \rho].
\end{align*}
That is, the prior $\E$-measurement implies that  we perform a measurement of the disturbed observable $\{\ii_\xx^*(\F(y)) : y \in \yy\}$ in the state $\rho$.  By \eq{eq:effect-difference-op}, the disturbance of each effect of $\F$ may be quantified through the operator norm as $\| \delta(y) \|$,   where 
\begin{align}\label{eq:disturbance-quantification}
 \delta(y) &:= \ii^*_\xx(\F(y)) - \F(y).  
 \end{align}
Note that if $\ii$ is implemented by the measurement scheme $\mm:=(\ha, \xi, \ee, \Z)$, then we may equivalently write $\delta(y) = \Gamma_\xi^\ee(\F(y)\otimes \oneapp) - \F(y)$. A global quantification of the disturbance of $\F$ can then be defined as the largest disturbance over all the effects,
\begin{align*}
    \delta := \max_{y \in \yy} \| \delta(y)\|,
\end{align*}
and $\F$ is non-disturbed by $\ii$ exactly when  $\delta = 0$, which is the case when $\ii_\xx^*(\F(y)) = \F(y)$  for all $y \in \yy$. In other words, $\ii$ does not disturb $\F$  exactly when each  $\F(y)$  is a fixed point of the $\E$-channel $\ii^*_\xx$, i.e.,  $\F \subset \ff(\ii_\xx^*)$. In such a case, for any initial state $\rho$, a non-selective measurement  by $\ii$ does not affect the subsequent measurement statistics of   $\F$. In the absence of any constraints,  non-disturbance is always possible when the pair of observables commute; since $\E' \subset \ff({\ii^L_\xx}^*)$ always holds, where $\ii^L$ is the L\"uders $\E$-instrument defined in \eq{eq:luders}, then a L\"uders measurement of $\E$ is guaranteed not to disturb any $\F$ commuting with $\E$ \cite{Busch1998}. While the fixed-point set of the $\E$-channel $\ii_\xx^*$ is not always contained in the commutant of $\E$,  in \app{app:instrument-fixed-points} we present some cases where  $\ff(\ii_\xx^*) \subset \E'$ necessarily holds. For a wider discussion on the relationship between disturbance, commutation, and compatibility, and a quantitative bound relating the minimum disturbance in terms of the commutation between the pair of observables, and the unsharpness of each,  see \app{app:disturbance-compatibility}.

\subsection{Measurements of the first kind, and repeatable measurements}

 A special instance of a non-disturbing measurement is when an $\E$-instrument $\ii$ does not disturb $\E$ itself,  i.e., when $\E \subset \ff(\ii_\xx^*)$. Such measurements are referred to as \emph{measurements of the first kind}.  A subclass of  measurements of the first kind are those which are \emph{repeatable}. Though repeatability is a standard assumption in many textbook treatments of quantum mechanics, that it is a property which a measurement may or may not enjoy appeared already in Wigner's 1952 contribution on the WAY theorem. However, within the general framework presented thus far, repeatability corresponds to a very special form of state change, possible only for a privileged class of observables---an observable $\E$ admits a repeatable measurement only if it is discrete \cite{Ozawa1984}, and all   the effects  have at least one eigenvector with eigenvalue 1 \cite{Busch1995}. $\ii$ is a repeatable  $\E$-instrument if 
\begin{align*}
    \tr[\ii_y \circ \ii_x(\rho) ] =  \delta_{x,y}\tr[\ii_x(\rho)] \qquad \forall \, \rho \in \s(\hs), \,  x,y \in \xx,
\end{align*}
 which implies that 
\begin{align}\label{eq:measurement-repeatable}
    \ii_x^*(\E(y)) =   \delta_{x,y}\E(x)\qquad   \forall \, x,y\in \xx.
\end{align}
The above definition is equivalent to $\ii^*_x(\E(x)) = \E(x)$ for all $x$, since if this holds then  $\ii_x^*(\onesys - \E(x)) = \E(x) - \E(x) = \zero$ \cite{Busch1990}. In other words, if $\ii$ is a repeatable instrument, then repeated measurements by $\ii$ are guaranteed (with probability one) to  produce the same result.  It is straightforward to verify that if a measurement of $\E$  is repeatable, then it is also of the first kind, since 
\begin{align*}
\ii_\xx^*(\E(y)) = \sum_{x \in \xx}\ii_x^*(\E(y))= \sum_{x \in \xx}\delta_{x,y}\E(x) = \E(y).
\end{align*}
While the converse relation does not hold in general---a measurement can be of the first kind and not repeatable, such as is the case for a L\"uders instrument compatible with a commutative but unsharp observable---in the special case of  sharp observables repeatability and first-kindness coincide (Theorem 1 in Ref. \cite{Lahti1991}). In \app{app:repeatable}, we provide a series of results regarding the structure of repeatable instruments. 

\section{Generalisation of the Wigner-Araki-Yanase theorem}\label{sect:Generalised-WAY}
The Wigner-Araki-Yanase (WAY) theorem is the classic result connecting measurement, conservation, and disturbance. This theorem was formulated by Araki and Yanase in 1960 \cite{Araki1960}, capturing in a fairly general setting an observation due to Wigner given in 1952 \cite{E.Wigner1952, Busch2010} regarding spin measurements in the presence of angular momentum conservation. The WAY theorem as formulated in Ref. \cite{Loveridge2011} states that for any discrete sharp observable  represented as a self-adjoint operator $A$ not commuting with the system part of a (bounded, additive) conserved quantity,  the measurement---described by a normal measurement scheme---cannot be repeatable and must violate the Yanase condition, i.e., the pointer observable of the apparatus must fail to commute with the apparatus part of the conserved quantity. In other words, if the Yanase condition is satisfied, then   the measurement cannot  be ``accurate", in the sense that $A$ is not measured by the scheme.  But the WAY theorem does not rule out approximate measurements with approximate repeatability properties, where approximate measurement is understood to mean that the unsharp observable which is actually measured can be made statistically close to $A$. Therefore, WAY has both a strict impossibility part, along with the provision of conditions under which approximate measurements may be possible; as already hinted at by Wigner's original observation, and subsequently refined by  Yanase \cite{Yanase1961} and Ozawa \cite{Ozawa2002} in the form of  quantitative bounds, a normal measurement scheme obeying a conservation law and the Yanase condition can achieve approximately accurate measurements for $A$ only if the uncertainty of the apparatus preparation in the conserved quantity is large. 

While the WAY theorem has developed over the years, its full scope  is still not known. In particular, while much of the previous work around the WAY theorem has focused on the ``measurability question''---upon which observables cannot be measured, or can be only approximately measured given the conservation law---the role of disturbance has been much less fully examined. Moreover, previous proofs of the WAY theorem concerned only  sharp target observables, and were shown in the limited framework of normal measurement schemes, leaving open the question as to whether the implications of the theorem will carry over to the more general setting. In this section, we shall close this gap. 

We begin by  first  introducing two operational definitions of conservation laws for channels---full conservation,  and the generally weaker notion of average conservation. Next, we obtain quantitative bounds for  measurement error and measurement disturbance in the presence of conservation---whether average or full---in the general setting, i.e., without assuming sharpness of the measured observable, the sharpness of the pointer observable, the unitarity of the interaction or the purity of the apparatus preparation. These bounds allow us to prove a generalisation of  WAY, which is presented as a single quantitative   bound that contains both the possibility and impossibility statements of the theorem. A further generalisation of the WAY theorem is also provided, this time presented as an equality that serves to further strengthen the  strict impossibility statement of the WAY theorem for observables that may be unsharp, but are ``definite''.   The section concludes with a demonstration that by  imposing conservation laws on  pointer objectification in addition to  the measurement interaction between system and apparatus, the measurability part of WAY can be recovered without the Yanase condition.

\subsection{Measurement schemes in the presence of  conservation laws}

\begin{figure}[htbp!]
\begin{center}
\includegraphics[width=0.4\textwidth]{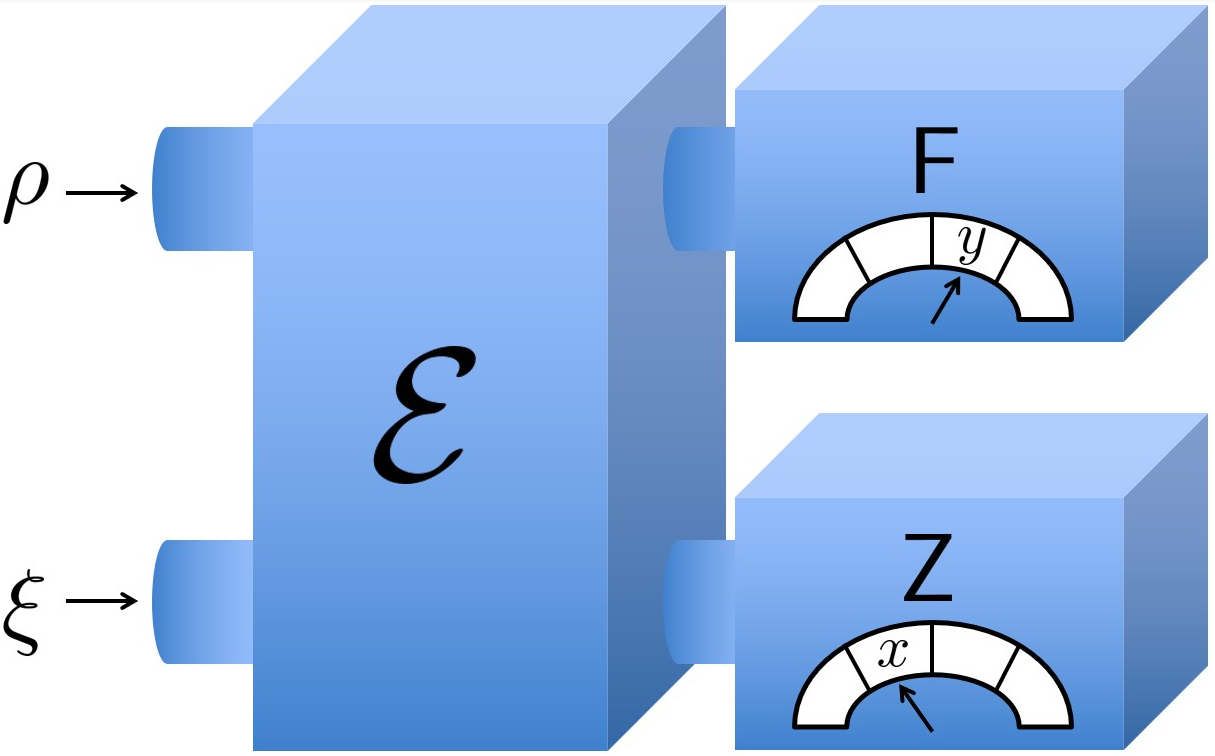}
\vspace*{-0.2cm}
\caption{The system is measured in succession, with the first measurement   implemented by a scheme  $\mm:= (\ha, \xi, \ee, \Z)$ where $\ee$ conserves an additive quantity $N = \nsys \otimes \oneapp + \onesys \otimes \napp$. The conservation law imposes  restrictions on the possibility of $\mm$ to accurately measure a target observable $\E$, and on the possibility of non-disturbance for the second observable $\F$. }\label{fig:Sequential-Measurement-Scheme}
\vspace*{-0.5cm}
\end{center}
\end{figure} 

\bigskip

While every $\E$-compatible instrument $\ii$ admits some  measurement scheme $\mm := (\ha, \xi, \ee, \Z)$,  any restrictions imposed on the elements of $\mm$ will in turn restrict the types of instruments that can be implemented, and hence the class of observables $\E$ that can be accurately measured, and the class of observables $\F$ that will be  non-disturbed.  One such restriction is given by conservation laws---for example, the interaction channel $\ee$ between system and apparatus may be restricted so that the total energy, charge, or angular momentum must be conserved. Before investigating how conservation limits measurements, let us first  consider two operational definitions of  conservation laws for channels, where the conserved quantity $N$ is always assumed to be a bounded self-adjoint operator. In the first analysis, a conservation law can be defined by equality of expectation values  before and after the action of the channel, i.e., average conservation:

\begin{definition}\label{defn:average-conservation}
A channel $\Phi: \trc(\h) \to \trc(\h)$ conserves a self-adjoint operator  $N\in \los(\h)$ on average if  for all $\rho \in \s(\h)$,
\begin{align*}
    \tr[N \Phi(\rho)] = \tr[N \rho],
\end{align*}
i.e., $N \in \ff(\Phi^*)$. 
\end{definition}

However, this does not rule out the higher moments of the ``conserved'' quantity changing their values.  Thus we may strengthen the definition in the following way:

\begin{definition}\label{defn:conservation-law}
A channel $\Phi: \trc(\h) \to \trc(\h)$ fully conserves a self-adjoint operator $N \in \los(\h)$ if for all $\rho \in \s(\h)$ and $k \in \nat$, 
\begin{align*}
      \tr[N^k \Phi(\rho)] = \tr[N^k \rho],
\end{align*}
i.e., $N^k \in \ff(\Phi^*)$ for all $k \in \nat$.
\end{definition}

As shown in \app{app:conservation-law}, full conservation is in fact equivalent to just the first two moments being conserved, i.e., $ N^k \in \ff(\Phi^*)$ for $k=1,2$. Moreover, full conservation is also shown to be equivalent to ``invariance'' of the unitary group generated by $N$ under the action of the channel, i.e., $\Phi^*(e^{\imag t N}) = e^{\imag t N}$ for all $t \in \re$. We note that invariance implies (but is not equivalent to) ``covariance'', i.e., $\Phi^*(e^{\imag t N} A e^{-\imag t N}) = e^{\imag t N} \Phi^*(A) e^{-\imag t N}$ for all $t \in \re$ and $A \in \lo(\h)$.  While full conservation trivially implies average conservation, however,  it is shown that  in general a channel may conserve $N$ on average but not fully. Indeed, it is possible for a channel to conserve $N$ on average while not  being covariant.  Therefore, average conservation is generally a weaker form of conservation law, and is logically distinct from the concept of ``symmetry'' \cite{Marvian2014,Cirstoiu2020}. However, in the special case where $\Phi(\cdot) := U \cdot U^*$ is a unitary channel, average and full conservation coincide, and   are both equivalent to the commutation relation   $[U, N]=\zero$. Since a normal measurement scheme uses a unitary interaction channel, it follows that in such cases  there is no distinction to be drawn between the two notions of conservation law. But if a measurement scheme is not normal, i.e., if the interaction channel is non-unitary, then the distinction between average and full conservation will no longer be void and, as we shall see, leads to interesting consequences.

Throughout what follows, we shall only consider the case where the interaction channel $\ee$ conserves a quantity $N$ that is a bounded,  additive, self-adjoint operator. That is, $N = \nsys \otimes \oneapp + \onesys\otimes \napp$, where  $\nsys  \in \los(\hs)$ and $\napp\in \los(\ha)$ are respectively bounded quantities of the system and apparatus alone.  Note that   conservation of  an additive $N$ by the interaction channel $\ee$ does not generally imply  conservation of $\nsys$ by the channel $\ii_\xx$, since  $\ee$ may allow for an ``exchange'' of the conserved quantity between system and apparatus; specifically, average conservation of $N$ by $\ee$ implies that 
\begin{align}\label{eq:conservation-compensation}
    \tr[\nsys (\ii_\xx(\rho) - \rho)] = \tr[\napp (\xi - \Lambda(\rho))]
\end{align}
holds for all $\rho \in \s(\hs)$, where $\Lambda$ is the conjugate channel of $\ii_\xx$ defined in \eq{eq:conjugate-channel}, so that $\Lambda(\rho)$ is the state of the apparatus after the measurement interaction. We can see that it is possible for the expected value of $\nsys$ to increase (decrease), provided that the expected value of $\napp$  decreases (increases) by an equal amount. Indeed,   such a ``compensation'' by the measuring apparatus is in general  necessary for the instrument $\ii$  to accurately measure some observable $\E$ not commuting with the conserved quantity: if $\E$ is a sharp observable and the $\E$-channel $\ii_\xx$ conserves $\nsys$ on average,  by item (i) of \lemref{lemma:fixed-points-instrument} it holds that $\E$ must commute with $\nsys$. Additionally,   if  $\ii_\xx$ fully conserves $\nsys$ then by item (iii) of \lemref{lemma:fixed-points-instrument}  $\E$ must commute with $\nsys$, even when $\E$ is unsharp.

\subsection{Measurement error and disturbance  under  conservation laws}

Consider  the case where the system is measured in succession, but where the first measurement is constrained by a conservation law, as shown in \fig{fig:Sequential-Measurement-Scheme}. We now present our first main result, providing quantitative bounds for the error of the first measurement in achieving the desired target observable $\E$, and for the disturbance caused by the first measurement on the second observable $\F$. These bounds will be used to obtain our generalisation of the WAY theorem in the sequel. 

\begin{theorem} \label{theorem:error-disturbance-main}
 Let  $\mm:=(\ha, \xi, \ee, \Z)$ be a measurement scheme for an instrument $\ii:=\{\ii_x : x\in \xx\}$ acting in $\hs$, and assume that $\ee$  conserves an additive quantity $N = N\sub{\s} \otimes \oneapp + \onesys\otimes N\sub{\aa}$ on average, where $\nsys \in \los(\hs)$ and $\napp \in \los(\napp)$. Let $\|\epsilon(x)\|$ be the error in measuring the effects of a target observable $\E:=\{\E(x) : x\in \xx\}$ by $\mm$, as defined in \eq{eq:error-equality}. Then for all $x\in \xx$ it holds that
\begin{align}\label{eq:error-bound}
    \|[\E(x), \nsys] - \Lambda^*([\Z(x), \napp]) \| \leqslant 2 \| \nsys\| \|\epsilon(x) \| +  2 \|\Gamma_\xi^\ee(N^2) - \Gamma_\xi^\ee(N)^2 \|^{\frac{1}{2}} \bigg(  2\|\epsilon(x)\| +  \|\E(x) - \E(x)^2 \| \bigg)^{\frac{1}{2}} ,
\end{align}
where $\Gamma_\xi^\ee$ is the channel defined in \eq{eq:Gamma-U}, and $\Lambda$ is the conjugate channel to $\ii_\xx$ defined in \eq{eq:conjugate-channel}. 

Let $\|\delta(y)\|$ be the disturbance of the effects of an observable $\F:=\{\F(y) : y \in \yy\}$ caused by $\ii$, as defined in \eq{eq:disturbance-quantification}. Then  for all $y \in \yy$ it holds that
\begin{align}\label{eq:disturbance-bound}
 \| [\F(y), \nsys] - \ii^*_\xx([\F(y), \nsys]) \| & \leqslant  2  \|\nsys \| \|\delta(y) \| +  2 \|\Gamma_\xi^\ee(N^2) - \Gamma_\xi^\ee(N)^2 \|^{\frac{1}{2}} \bigg(  2\|\delta(y)\| +  \|\F(y) - \F(y)^2 \| \bigg)^{\frac{1}{2}}.
\end{align}

\end{theorem}

The proof for the error bound \eq{eq:error-bound} is provided in \app{app:error-bounds}, and the proof for the disturbance bound \eq{eq:disturbance-bound} is given in \app{app:disturbance-bounds}.  In \app{app:disturbance-compatibility}, we also provide similar bounds for disturbance that are independent of conservation, but take into account the commutation between $\F$ and the observable that is measured by $\ii$.    Note that while the upper bounds of both inequalities above (the terms on the right hand side) are structurally the same, the lower bounds (the terms on the left hand side) are not. Specifically, while the constraints on measurement error depend on the choice of pointer observable, the constraints on disturbance are independent of this. It follows that the implications of these inequalities differ markedly.  We shall illustrate this by considering when the inequalities impose no constraints, i.e., when the lower bounds vanish.

Let us first consider  \eq{eq:error-bound}. This  will not impose any constraints on perfectly accurate measurements, i.e., $\epsilon = 0$, if it holds that $\Lambda^*([\Z(x), \napp]) = [\E(x), \nsys]$, as in such a case the lower bound vanishes. If the pointer observable obeys the Yanase condition, i.e., if $\Z$ commutes with $\napp$, it follows that there will not be any constraints on perfectly accurate measurements  if  $\E$ commutes with $\nsys$, since by complete positivity it holds that $[\Z(x), \napp] =[\E(x), \nsys] = \zero \implies \Lambda^*([\Z(x), \napp]) = [\E(x), \nsys]$. But if we are free to choose the pointer observable, then there exists a measurement scheme obeying a conservation law that can accurately measure all target observables $\E$. For example, let us assume that $\mm$ is a ``trivial'' measurement scheme where the system and apparatus are identical, and where  $\ee$ is a unitary swap channel. That is, choose $\ha \simeq \hs$, $\napp = \nsys$, and  $\ee^*(A \otimes B) = B \otimes A$ for all $A,B$. It is clear that  $\ee$  conserves $\nsys \otimes \oneapp + \onesys \otimes \nsys$. In such a case, $\Lambda^*(A) = A$ for all $A$, and so by choosing the pointer observable so that $\Z = \E$, we obtain $\epsilon(x) = \Lambda^*(\E(x)) - \E(x) = \zero$, and so all target observables are measurable. This is perfectly consistent with \eq{eq:error-bound}, since in such a case  $\Lambda^*([\E(x), \nsys]) = [\E(x), \nsys]$, and so the lower bound  vanishes. But note that  if $\mm$ is trivial, then the instrument that it implements is also trivial, i.e., it will hold that $\ii_\xx^*(\cdot) = \tr[\cdot \xi] \onesys$. In such a case, we have $\delta(y) = \tr[\F(y) \xi] \onesys - \F(y)$, so all non-trivial observables will be disturbed.  

Let us now consider  \eq{eq:disturbance-bound}. This will not impose any constraints on non-disturbance for $\F$, i.e., $\delta = 0$, if    $\F$ commutes with the conserved quantity. This is because by  complete positivity, it holds that $[\F(y), \nsys] = \zero \implies \ii^*_\xx([\F(y), \nsys]) = [\F(y), \nsys]$, in which case the lower bound vanishes. But note that  $ \ii^*_\xx([\F(y), \nsys]) = [\F(y), \nsys]$ does not in general imply that $[\F(y), \nsys] = \zero$. For example, let us assume that $\ii_\xx$ fully conserves  $\nsys$, i.e., $\nsysk{k} \in \ff(\ii_\xx^*)$ for $k=1,2$. In such a case, by the multiplicability theorem (\corref{corollary:multiplicability}) it will hold that $\ii_\xx^*([A, \nsys]) = [\ii_\xx^*(A), \nsys]$ for all $A \in \lo(\hs)$.  If it additionally holds that $\F \subset \ff(\ii_\xx^*)$,  then $ \ii^*_\xx([\F(y), \nsys]) = [\F(y), \nsys]$, so that the lower bound of \eq{eq:disturbance-bound} will vanish and  non-disturbance will not be ruled out for  $\F$, whether it commutes with $\nsys$ or not. But  by item (iii) of \lemref{lemma:fixed-points-instrument} such an instrument $\ii$ will accurately measure $\E$ only if $[\E,\nsys]=\zero$.  Indeed, as a result of the above arguments, and as shown in \corref{corrollary:Luders-commuting-no-constraint}, if the measurement scheme $\mm$ implements the L\"uders instrument $\ii^L$ compatible with an observable $\E$ commuting with $\nsys$, then  non-disturbance will not be ruled out for  any observable $\F$ that commutes with $\E$. 

Notwithstanding the special cases discussed above, when $\E$ and $\F$ do not commute with the conserved quantity, the lower bounds in  \thmref{theorem:error-disturbance-main} will  not vanish in general, in which case the upper bounds must also not vanish. It follows that   a large value of $\|\Gamma_\xi^\ee(N^2) - \Gamma_\xi^\ee(N)^2 \|$ is a necessary condition for achieving an arbitrarily small measurement  error for $\E$ when $[\E(x), \nsys] \ne \Lambda^*([\Z(x), \napp])$ and an arbitrarily small disturbance for $\F$  when $[\F(y), \nsys] \ne \ii^*_\xx([\F(y), \nsys])$. If the error and disturbance are to be  exactly zero, then $\E$ and $\F$ must also be unsharp.    The term $\|\Gamma_\xi^\ee(N^2) - \Gamma_\xi^\ee(N)^2 \|$ is clearly dependent on the choice of apparatus preparation $\xi$ and, as we show below, under the stronger constraint of a full conservation law  this quantity  obtains a clearer interpretation as the uncertainty of $\napp$ in the apparatus preparation, as quantified by the variance.

\begin{lemma}\label{lemma:conservation-variance-condition}
If the channel $\ee$ fully conserves an additive quantity $N = N\sub{\s} \otimes \oneapp + \onesys\otimes N\sub{\aa}$, then   
\begin{align*}
\|\Gamma_\xi^\ee(N^2) - \Gamma_\xi^\ee(N)^2 \| = \var{\napp, \xi},    
\end{align*} 
where  $\var{\napp, \xi}:= \tr[N\sub{\aa}^2 \xi] - \tr[N\sub{\aa} \xi]^2$ denotes the variance of $N\sub{\aa}$ in the state $\xi$. 
\end{lemma}
\begin{proof}
If $N$ is fully conserved by $\ee$, then by \defref{defn:conservation-law} we have  $\ee^*(N^k) = N^k$ for  $k=1,2$. It follows that $\Gamma_\xi^\ee(N^k) = \Gamma_\xi(N^k)$ for $k=1,2$, and hence  $\Gamma_\xi^\ee(N^2) - \Gamma_\xi^\ee(N)^2 = \Gamma_\xi(N^2) - \Gamma_\xi(N)^2$.  Recall that the restriction map satisfies $\Gamma_\xi(A\otimes B) = \tr[B \xi] A$ for all $A \in \lo(\hs)$ and $B \in \lo(\ha)$. It follows  that $\Gamma_\xi(N) = \nsys + \tr[\napp \xi]\onesys$. Now note that  $N^2 = N\sub{\s}^2\otimes \oneapp + 2 \nsys \otimes \napp + \onesys \otimes N\sub{\aa}^2$.  Therefore, 
\begin{align*}
\Gamma_\xi(N^2) &=  N\sub{\s}^2 + 2 \tr[\napp \xi] \nsys + \tr[N\sub{\aa}^2 \xi] \onesys, \nonumber \\
\Gamma_\xi(N)^2 &= N\sub{\s}^2 + 2 \tr[\napp \xi] \nsys + \tr[N\sub{\aa} \xi]^2 \onesys,
\end{align*}
which gives
\begin{align*}
\Gamma_\xi( N^2) - \Gamma_\xi(N)^2 & = \tr[N\sub{\aa}^2 \xi] \onesys - \tr[\napp \xi]^2 \onesys = \var{\napp, \xi} \onesys.    
\end{align*}
\end{proof}

If $\xi$ is a pure state, then a large variance  implies a large coherence. This is because  $\var{\napp,\xi} = 0 \iff [\napp, \xi] = \zero$ for pure states.  Of course, if $\xi$ is a mixed state then it may still be the case that $\var{\napp, \xi}$ is large even if $\xi$ commutes with $\napp$, and hence has zero coherence in the conserved quantity. A quantifier of coherence (or asymmetry) for general states is given by the quantum Fisher information  \cite{PETZ2011, Streltsov2016a, Takagi2018, Marvian2021},  which is equal to four times the convex roof of the  variance \cite{Toth2013, QFI-variance}. Let $\{q_i, \phi_i\}$ be an arbitrary ensemble of (not necessarily orthogonal) unit vectors $\phi_i \in \ha$, with $\{q_i\}$ a probability distribution. The quantum Fisher information of $\napp$ in $\xi$ can be written as
 \begin{align}\label{eq:QFI-defn}
  \qq(\napp, \xi)   = 4 \inf_{\{q_i, \phi_i\}}\left \{\sum_i q_i \var{\napp, \phi_i} : \xi = \sum_i q_i \pr{\phi_i} \right \}.
\end{align}
Here,  $\pr{\psi} \equiv |\psi\>\<\psi|$ denotes the  projection on $\psi$, and we use the short-hand notation $\var{\napp, \phi_i} \equiv \var{\napp, \pr{\phi_i}}$. It is clear that $\qq(\napp, \xi)=0$ if and only if $[ \napp, \xi]=\zero$, while $\qq(\napp, \xi) = 4 \var{\napp, \xi}$ if $\xi$ is a pure state. The following demonstrates that  a large coherence of the conserved quantity in the initial state of the apparatus, when such a state may be mixed, is a necessary condition for accurate and non-disturbing measurements in the presence of a full conservation law.

\begin{prop}\label{prop:First-kind-WAY-Fisher}
Consider again the setup of \thmref{theorem:error-disturbance-main}, and assume that the interaction channel $\ee$ also fully conserves  $N = N\sub{\s} \otimes \oneapp + \onesys\otimes N\sub{\aa}$. Then for all $x\in \xx$ it also holds that
\begin{align}\label{eq:error-bound-QFI}
    \|[\E(x), \nsys] - \Lambda^*([\Z(x), \napp]) \| \leqslant 2 \| \nsys\| \|\epsilon(x) \| +  \frac{1}{2}\qq(\napp, \xi)^{\frac{1}{2}},
\end{align}
and for all $y \in \yy$ it also holds that
\begin{align}\label{eq:disturbance-bound-QFI}
 \| [\F(y), \nsys] - \ii^*_\xx([\F(y), \nsys]) \| & \leqslant 2 \| \nsys\| \|\delta(y) \| + \frac{1}{2}\qq(\napp, \xi)^{\frac{1}{2}}.
\end{align}
\end{prop}
The proof for \eq{eq:error-bound-QFI} is provided in \app{app:error-bounds}, and the proof for \eq{eq:disturbance-bound-QFI} is given in \app{app:disturbance-bounds}. Let us stress that since full conservation implies average conservation, then the inequalities in the above proposition are \emph{complementary} to those given in \thmref{theorem:error-disturbance-main}. That is, when the observables are sharp,  then zero error and zero disturbance are  possible only if $[\E(x), \nsys] = \Lambda^*([\Z(x), \napp])$ and $[\F(y), \nsys] = \ii^*_\xx([\F(y), \nsys])$, respectively, independently  of the coherence in the apparatus preparation. This is because when the observables are sharp and there is zero error for $\E$ or zero disturbance for $\F$, while the upper bounds in \propref{prop:First-kind-WAY-Fisher} may be large, the upper bounds  in \thmref{theorem:error-disturbance-main} vanish. Notwithstanding, we see that so long as the apparatus preparation has a large coherence, approximately accurate measurements for $\E$ and approximate non-disturbance for $\F$ will not be ruled out, even when these observables are sharp.

\subsection{The generalised Wigner-Araki-Yanase theorem}

We are now ready to give two formulations of the WAY theorem which go beyond existing work in several respects. The first formulation is a direct consequence of the quantitative bounds given above.

\begin{theorem}[Generalised  WAY theorem 1]\label{theorem:Generalized-WAY}
Let $\mm := (\ha, \xi, \ee, \Z)$ be a measurement scheme for an $\E$-instrument $\ii$ acting in $\hs$, and  assume that $\ee$  conserves an additive quantity  $N = N\sub{\s} \otimes \oneapp + \onesys\otimes N\sub{\aa}$ on average, where $\nsys \in \los(\hs)$ and $\napp \in \los(\ha)$. If either $\ii$ is repeatable, or the Yanase condition $[\Z, \napp]=\zero$ is satisfied, then   for all $x\in \xx$ it holds that
\begin{align}\label{eq:WAY-bound-1}
    \| [\E(x), \nsys] \| \leqslant  2 \|\Gamma_\xi^\ee(N^2) - \Gamma_\xi^\ee(N)^2 \|^{\frac{1}{2}} \|\E(x) - \E(x)^2 \|^{\frac{1}{2}},
\end{align}
where $\Gamma_\xi^\ee$ is the channel defined in \eq{eq:Gamma-U}. If $\ee$ also fully conserves $N$, and if either $\ii$ is repeatable or the Yanase condition is satisfied, then for all $x\in \xx$ it also holds that
\begin{align}\label{eq:WAY-bound-1-Fisher}
    \| [\E(x), \nsys] \| \leqslant  \frac{1}{2} \qq(\napp, \xi)^\frac{1}{2}, 
\end{align}
where $\qq(\napp, \xi)$ is the quantum Fisher information of $\napp$ in $\xi$ as defined in \eq{eq:QFI-defn}.
\end{theorem}

\begin{proof}

We first prove \eq{eq:WAY-bound-1}.  By  \thmref{theorem:error-disturbance-main}, and setting $\epsilon = 0$, if $\mm$ is a measurement scheme for $\E$ then it must hold that
\begin{align*}
 \|[\E(x), \nsys] - \Lambda^*([\Z(x), \napp]) \| \leqslant  2 \|\Gamma_\xi^\ee(N^2) - \Gamma_\xi^\ee(N)^2 \|^{\frac{1}{2}}  \| \E(x) - \E(x)^2 \|^{\frac{1}{2}}  \qquad \qquad  \qquad \forall \, x \in \xx.
\end{align*}
If the Yanase condition is satisfied, then  $\Lambda^*([\Z(x), \napp]) = \Lambda^*(\zero) = \zero$, and so  we obtain  \eq{eq:WAY-bound-1}.

Now let $\Z$ be an arbitrary pointer observable, but assume that  $\ii$ is a repeatable $\E$-instrument. Recall that repeatability implies first-kindness, which is a specific instance of non-disturbance. Then by \thmref{theorem:error-disturbance-main},  identifying $\F$ with $\E$, and setting $\delta=0$,  it must hold that 
\begin{align*}
\| [\E(x), \nsys] - \ii^*_\xx([\E(x), \nsys]) \| & \leqslant  2 \|\Gamma_\xi^\ee(N^2) - \Gamma_\xi^\ee(N)^2 \|^{\frac{1}{2}} \|\E(x) - \E(x)^2 \|^{\frac{1}{2}} \qquad \qquad  \qquad \forall \, x \in \xx.    
\end{align*}
 By item (vi) of \propref{prop:repeatable-instrument-identity}, if $\ii$ is a repeatable measurement of $\E$, then   $\ii^*_\xx(\E(x) A) = \ii^*_\xx(A \E(x))$ for all $A \in \lo(\hs)$. It follows that  $\ii^*_\xx([\E(x), \nsys])  = \zero$, and so once again we   obtain \eq{eq:WAY-bound-1}. Indeed, let us note that if the measurement is repeatable,  then for any choice of pointer observable $\Z$ it will also hold that $\Lambda^*(\Z(x) B) = \Lambda^*(B \Z(x))$ for all $B \in \lo(\ha)$, and so $\Lambda^*([\Z(x), \napp]) = \zero$ even if $\Z$ violates the Yanase condition. It follows that we can obtain \eq{eq:WAY-bound-1} under the repeatability assumption directly from the measurability bound.     By the same arguments as above, \eq{eq:WAY-bound-1-Fisher} is obtained from  \propref{prop:First-kind-WAY-Fisher}. 
\end{proof}

This theorem goes beyond the original WAY theorem (and its descendants) in the setting of  bounded conserved quantities in the following respects: it holds for general interaction channels, unsharp target observables, unsharp pointer observables, and mixed apparatus states. It also provides an operationally motivated quantitative bound from which the original theorem can be obtained as a special case: if $\E$ is a sharp observable, the upper bound of \eq{eq:WAY-bound-1} vanishes,  in which case an additive  conservation law together with either repeatability or the Yanase condition necessitates commutation of $\E$ with the system part of the conserved quantity. Note that the impossibility statement of the original WAY theorem holds even under the weaker notion of average conservation.  This shows that the impossibility of perfect measurements, for sharp observables not commuting with the conserved quantity, holds in much broader contexts than previously assumed. Indeed, such constraint holds even when the measurement is not constrained by ``symmetry'';  recall that while full conservation of $N$ by $\ee$ implies that $\ee$ is covariant with respect to unitary evolution generated by $N$ (in fact, it is also invariant), it may be the case that $\ee$ conserves $N$ on average without being covariant.

\thmref{theorem:Generalized-WAY} does not rule out  accurate (under the Yanase condition) or repeatable measurements for unsharp observables not commuting with the conserved quantity, provided an appropriate apparatus preparation: in the special case of a full conservation law, the apparatus preparation must have a large coherence in the conserved quantity. However, this does not imply that coherence allows for accurate  or repeatable measurements of \emph{all} unsharp observables. We now present a  further generalisation of the WAY theorem,  providing additional necessary conditions for perfect measurements that are independent of the apparatus preparation.  

\begin{theorem}[Generalised WAY theorem 2]\label{theorem:Strong-WAY}
Let $\mm := (\ha, \xi, \ee, \Z)$ be a measurement scheme for an $\E$-instrument $\ii$ acting in $\hs$, and  assume that $\ee$  conserves an additive quantity  $N = N\sub{\s} \otimes \oneapp + \onesys\otimes N\sub{\aa}$ on average, where $\nsys \in \los(\hs)$ and $\napp \in \los(\ha)$.  If either $\ii$ is a measurement of the first kind, or the Yanase condition $[\Z, \napp]=\zero$ is satisfied, then for any effect $\E(x)$ that has both eigenvalue 1 and 0, it holds that
\begin{align}\label{eq:WAY-bound-2}
  \P [\E(x), \nsys] \P =   [\E(x), \P \nsys \P] = \zero,
\end{align}
where $\P := \P_0(x) + \P_1(x)$, with $\P_0(x)$ and $\P_1(x)$ orthogonal projections onto the eigenvalue-0 and eigenvalue-1 eigenspaces of $\E(x)$, respectively. 
\end{theorem}

For a proof, see \app{app:strong-WAY-proof}. The first equality in \eq{eq:WAY-bound-2}  follows from the fact that $\E(x)\P = \P \E(x) = \P_1(x)$, so that $\E(x)$ commutes with $\P$, and the second equality states that   while the commutator $[\E(x), \nsys]$ may not vanish entirely, it does vanish when projected onto the subspace $\P \hs$. The above theorem is an even stronger extension of the original WAY theorem, as it relaxes the repeatability condition to that of first-kindness; recall that while a measurement that is repeatable is also of the first kind,  repeatability and first-kindness coincide only for sharp observables, and a measurement of an unsharp observable may be of the first kind but not repeatable. Moreover, note that an observable admits a repeatable measurement only if all effects have eigenvalue 1 which, by normalisation, implies that all effects have both eigenvalue 1 and 0. In such a case,  \eq{eq:WAY-bound-2} applies to every effect, and $\P$ may be interpreted as the projection onto the union of  eigenvalue-1 eigenspaces of all the effects of $\E$, i.e., $\P = \sum_{x\in \xx} \P_1(x)$. Finally, note that if $\E$ is sharp, then $\P = \onesys$, in which case  the original WAY theorem is once again recovered. 

The condition of an effect $\E(x)$ having both eigenvalue 1 and 0 implies that the effect is \emph{definite}, or admits definite values. Specifically, such a condition implies that there exist states $\rho$ for which outcome $x$ can be predicted to obtain with probabilistic certainty, i.e., $\tr[\E(x) \rho] = 1$, and that there exist states $\sigma$ for which outcome $x$ can be predicted to \emph{not} obtain with probabilistic certainty, i.e., $\tr[\E(x) \sigma] = 0$. Therefore, a measurement of such an $\E$ allows for perfect distinguishability of states $\rho$ and $\sigma$; if outcome $x$ is observed, we know with probabilistic certainty that the system was not prepared in state $\sigma$. Conversely, if any outcome $y \ne x$ is observed, we know with probabilistic certainty that the system was not prepared in state $\rho$.   \thmref{theorem:Strong-WAY} therefore demonstrates that the impossibility part of the WAY theorem---originally pertaining to sharpness---is more properly understood as concerning observables with definite values, even if unsharp.   That is to say, if we wish to achieve perfectly accurate (under the Yanase condition) or first-kind measurements of an observable that truly does not commute with the conserved quantity, i.e., such that  $[\E(x), \nsys]$ does not vanish even when projected onto a subspace of $\hs$, then not only must such an observable be unsharp, but it must also not admit definiteness.

 \thmref{theorem:Strong-WAY} imposes stronger constraints than \thmref{theorem:Generalized-WAY} and  demonstrates that in general, and irrespective of the apparatus preparation,  there exist unsharp observables not commuting with the conserved quantity that do not admit a repeatable or first-kind measurement, and which cannot be accurately measured if the Yanase condition holds. To illustrate this, let us introduce the following model. Consider a system $\hs \simeq \co^3$ with the orthonormal basis $\{\ket{-1}, \ket{0}, \ket{1}\}$, and the conserved quantity $\nsys = \sum_{n} n |n\>\<n| \equiv |1\>\<1| - |-1\>\<-1|$. Consider also the class of binary observables $\E_\lambda  := \{\E_\lambda(+), \E_\lambda(-)\}$ acting in $\hs$, defined by
\begin{align}\label{eq:toy-model-strong-WAY}
\E_\lambda(\pm) := \lambda |\pm\>\<\pm| + (1-\lambda) |\mp\>\<\mp| + \frac{1}{2} |0\>\<0|    ,
\end{align}
where $1/2 < \lambda \leqslant  1$ and  $\ket{\pm} := \frac{1}{\sqrt{2}}(\ket{1} \pm \ket{-1})$. In the absence of any constraints, all observables in this  class  admit  first-kind measurements. For example, since $\E_\lambda$ is commutative, then the corresponding L\"uders instrument is a first-kind (but not repeatable) measurement of $\E_\lambda$.   On the other hand, $\E_\lambda$ admits a repeatable measurement if and only if $\lambda = 1$,  in which case an instrument with operations $\ii_\pm(\cdot) = \tr[\E_\lambda(\pm) \cdot ] |\pm\>\<\pm|$ is a repeatable measurement of $\E_\lambda$. Since $\E_\lambda$ is unsharp, even when $\lambda=1$, then \thmref{theorem:Generalized-WAY} does not rule out accurate or repeatable measurements for such an observable, provided an appropriate apparatus preparation. But now note that when $\lambda = 1$, both effects have eigenvalue 1 and 0, and we have  $\P_1(\pm) = |\pm\>\<\pm|$ and $\P_0(\pm) = |\mp\>\<\mp|$, and so  $\P = \P_0(\pm) + \P_1(\pm) = |1\>\<1| + |-1\>\<-1|$. It is easily verified that in such a case, 
\begin{align*}
  [\E_\lambda(\pm),  \P \nsys \P ] =    [\E_\lambda(\pm),  \nsys ] =  |\mp 1\>\< \pm 1| - |\pm 1\>\<\mp 1| \ne \zero. 
\end{align*}
 By \thmref{theorem:Strong-WAY}, it follows that when $\lambda = 1$,  a measurement of $\E_\lambda$ that is constrained by an average conservation law cannot be repeatable or even first-kind, and must violate the Yanase condition. However, the effects of $\E_\lambda$ when $\lambda <1$ do not commute with the conserved quantity, and have neither eigenvalue 1 nor eigenvalue 0. In such a case,  \thmref{theorem:Strong-WAY} does not rule out accurate or first-kind (but not repeatable) measurements.

\subsection{The Wigner-Araki-Yanase theorem without the Yanase condition}

 Traditionally, the Yanase condition  is justified by applying the repeatability part of the WAY theorem to the pointer observable; if the pointer observable is sharp, and we consider its measurement as being implemented by a conservative interaction between one measuring apparatus and another, then the pointer observable will admit a repeatable measurement only if it commutes with the conserved quantity. Repeatability of the measurement of the pointer observable is deemed a natural requirement for the possibility of measurement, since an experimenter should be able to confirm the measurement outcome by repeated observations of the apparatus: there must be a stable record of the measurement outcomes.  However, such an argument suffers from two drawbacks. Firstly, it  applies only to sharp pointer observables. Secondly, it runs into the problem of infinite regress, since we have now shifted the role of the ultimate pointer observable from the first apparatus to the second; repeatability of the first pointer observable can be abandoned if the second admits a repeatable measurement, in which case the experimenter may continue to verify the measurement outcomes. In \app{app:Weak-Yanase}, we  show that  the measurability part of the WAY theorem---\thmref{theorem:Generalized-WAY} and \thmref{theorem:Strong-WAY}---can be justified without an appeal to the Yanase condition, but rather by imposing a conservation law on  the total  measurement process, i.e., including pointer objectification. Such conservation is shown to   give rise to the so-called ``weak'' Yanase condition $[\Z^\tau, N]=\zero$, where $\Z^\tau(x) := \ee^*(\onesys \otimes \Z(x))$ is the ``Heisenberg-evolved'' pointer observable \cite{Tukiainen2017}. Subsequently, it is shown that if the weak Yanase condition is satisfied, then \eq{eq:WAY-bound-1} (with $\|\Gamma_\xi^\ee(N^2) - \Gamma_\xi^\ee(N)^2 \|$ replaced by $\var{\napp, \xi}$), \eq{eq:WAY-bound-1-Fisher}, and \eq{eq:WAY-bound-2} will hold.

\section{Fixed points and non-disturbing measurements in the presence of conservation laws}\label{sect:fixed-point-non-disturbance}

In \sect{sect:Generalised-WAY} we provided general quantitative bounds for measurement disturbance, when  observables $\E$ and $\F$ are measured in succession and when the first measurement is subject to a conservation law. As we saw, these bounds generally do not prohibit non-disturbance for an unsharp $\F$ that does not commute with the conserved quantity.   In this section, we provide tighter restrictions on the possibility of non-disturbance that depend on the structure of the fixed-point set $\ff(\ii_\xx^*)$, which intimately depends on the properties of observable $\E$ and the states that are left invariant by the $\E$-channel $\ii_\xx$. We first consider the case where $\ff(\ii_\xx^*)$ is a von Neumann algebra, which is guaranteed to be the case when $\ff(\ii_\xx)$ contains a faithful state. Next, we relax the  faithfulness condition on the states in $\ff(\ii_\xx)$,  and obtain similar restrictions for non-disturbance in the finite-dimensional setting. Finally, we show that in the finite-dimensional case, the first-kindness statement of our generalisation of WAY in \thmref{theorem:Strong-WAY} can be extended to a quantitative bound.  

\subsection{Non-disturbance and von Neumann algebras}

By \thmref{theorem:error-disturbance-main},  when a sharp observable $\F$ is not disturbed, then  the upper bound of \eq{eq:disturbance-bound} vanishes, implying that non-disturbance is  possible only if $[\F(y), \nsys] \in \ff(\ii_\xx^*)$. But in \app{app:disturbance-bounds} we provide a tighter upper bound than that of \eq{eq:disturbance-bound}, which vanishes if both $\F \subset \ff(\ii_\xx^*)$ and $\F^2 := \{\F(y)^2 : y \in \yy\} \subset \ff(\ii_\xx^*)$ hold, in which case non-disturbance will  be possible only if $[\F(y), \nsys] \in \ff(\ii_\xx^*)$. While non-disturbance of $\F$ trivially implies that $\F^2 \subset \ff(\ii_\xx^*)$ when $\F$ is sharp,  the implication  $\F \subset \ff(\ii_\xx^*)\implies \F^2 \subset \ff(\ii_\xx^*)$ holds for all observables whenever $\ff(\ii_\xx^*)$ is a von Neumann algebra, which is guaranteed to be the case whenever  $\ff(\ii_\xx)$ contains at least one faithful state.  We now show that in the presence of conservation laws, if the fixed-point set of the measurement channel $\ii_\xx^*$ is a von Neumann algebra, there are strong constraints imposed on the possibility of non-disturbance.

\begin{theorem}\label{theorem:algebra-fixed-point-WAY}
Let $\E :=\{\E(x) : x\in \xx\}$ and $\F:=\{\F(y) : y \in \yy\}$ be observables acting in $\hs$. Let $\mm:=(\ha, \xi, \ee, \Z)$ be a measurement scheme for an  $\E$-instrument $\ii$, and assume that $\ee$ conserves an additive quantity $N = \nsys \otimes \oneapp + \onesys \otimes \napp$ on average, where $\nsys \in \los(\hs)$ and $\napp \in \los(\ha)$. If $\ff(\ii_\xx^*)$ is a von Neumann algebra, then the following hold:
\begin{enumerate}[(i)]
    \item $\F \subset \ff(\ii_\xx^*)$ only if  $\F$ commutes with $\E$, with  $\Delta \nsys := \ii_\xx^*(\nsys) - \nsys$, and with  $\{[\E(x),\nsys] :x\in \xx\}$.
    
    \item $\ii$ is a measurement of the first kind only if $\E$ is  commutative  and commutes with $\nsys$.
    
    \item $\ii$ is a repeatable measurement only if $\E$ is  sharp and commutes with $\nsys$.
\end{enumerate}
\end{theorem}
We note that the condition $[\E, \F] = \zero$ in item (i) is independent of conservation, and was already shown in Ref. \cite{Heinosaari2010}. The proof of the above theorem is given in \app{app:faithful-fixed-points} (\thmref{theorem:faithful-fixedpoint-WAY}), and here is a rough sketch for item (i). If $\ff(\ii_\xx^*)$ is an algebra, then $\ff(\ii_\xx^*) \subset \E'$, and so non-disturbance is  possible only if $[\E, \F] = \zero$. But if the measurement obeys a conservation law, and $\F \subset \ff(\ii_\xx^*)$, then additionally it holds that $[\F(y), \nsys] \in \ff(\ii_\xx^*)$, which by the multiplicability theorem implies that $[\F(y), \Delta \nsys] = \zero$. Finally, by  $[\F(y), \nsys] \in \ff(\ii_\xx^*) \subset \E'$  and $[\E, \F] = \zero$, it holds that $ [\F(y), [\E(x), \nsys]] = [\E(x), [\F(y), \nsys]] = \zero$.    

Assuming that $\F$ commutes with $\E$, the additional necessary condition for non-disturbance,  $[\F(y), [\E(x), \nsys]] = \zero$, is guaranteed to hold if either  $\E$ commutes $\nsys$, or if $\F$  commutes with $\nsys$.  Note that since $\ff(\ii_\xx^*) \subset \E'$ when the fixed-point set of the measurement channel is an algebra,  whenever $\E$ does not commute with $\nsys$ then $\F$ must also commute with $\Delta \nsys \ne \zero$, which is not in general guaranteed by commutation of $\F$ with $\nsys$. Of course, unless $\E = \F$, non-disturbance may be possible even if neither $\E$ nor $\F$ commute with $\nsys$.   But by items (ii) and (iii) of the above theorem, when $\E = \F$, i.e., when $\ii$ is a first-kind or repeatable measurement of $\E$, then non-disturbance is  possible only if $\E$ commutes with $\nsys$. Indeed, we see that  when the fixed-point set of the measurement channel is an algebra, then the constraints on   repeatability and first-kindness are much stronger than in the more general case as given by \thmref{theorem:Generalized-WAY} and \thmref{theorem:Strong-WAY}. We may therefore strengthen the necessary conditions for repeatability and first-kindness that are given by the WAY theorem with the following:  in the presence of a conservation law, an  unsharp and possibly non-commutative observable $\E$  not commuting with the conserved quantity  admits a repeatable or first-kind measurement only if the $\E$-channel $\ii_\xx$ perturbs all faithful states.

Let us now consider some interesting consequences of the above theorem. As shown by Proposition 6 in Ref. \cite{Heinosaari2010}, when the system is a qubit, i.e.,   $\dim(\hs) = 2$,  then $\ff(\ii_\xx^*)$ is  an algebra for any instrument $\ii$. See also \corref{corollary:first-kind-WAY-qubit} in \app{app:non-disturbance-fixed-point}.  It follows that for qubits,  the implications of \thmref{theorem:algebra-fixed-point-WAY} will hold in general.   Now let us assume that $\E$ is a binary observable with the effects
\begin{align*}
    \E(\pm) = \frac{1}{2} \left(\onesys \pm \lambda \sigma_1\right),
\end{align*}
where $\lambda \in (0, 1]$ and $\sigma_1, \sigma_2, \sigma_3$ are the Pauli operators. This observable is sharp when $\lambda =1$, and is unsharp when $\lambda <1$.  Since binary observables are commutative, then item (ii) of \thmref{theorem:algebra-fixed-point-WAY} will permit a first-kind measurement of $\E$ so long as the conserved quantity commutes with $\sigma_1$. On the other hand,   by item (iii) repeatability will be allowed only if  $\lambda = 1$ also holds. This is not so surprising since  repeatability is  permitted only when all effects have eigenvalue 1, with such condition being satisfied for qubit observables only when the observable is sharp. Now let us assume that the conserved quantity is $\nsys = \sigma_3$, so that it does not commute with $\E$, which implies that repeatability and first-kindness will be ruled out. But can a measurement of $\E$ not disturb some other observable? Note that $[\E(\pm), \sigma_3] = \mp \lambda \imag \sigma_2$. By item (i), a non-trivial observable $\F$ will be non-disturbed only if it commutes with $\sigma_1$ \emph{and} with $\sigma_2$, which is clearly impossible. Indeed, for qubits in the presence of conservation,   non-disturbing measurements are permitted only if  $[\E,\F] = [\E,\nsys] = [\F, \nsys] = \zero$.

Now let us consider the case where an observable $\E$ is measured by the  L\"uders instrument $\ii^L$ defined in \eq{eq:luders}.  If $\E$ is commutative, then it holds that $\ff({\ii^L_\xx}^*) = \E'$ is a von Neumann algebra (as the commutant of a self-adjoint subset of $\hs$), even in infinite dimensions  \cite{Weihua2010, Prunaru2011}. But recall that the L\"uders instrument is a first-kind measurement if  it is compatible with a commutative observable. By item (ii) of \thmref{theorem:algebra-fixed-point-WAY} it follows that in the presence of a conservation law, a commutative $\E$ admits a L\"uders instrument only if $\E$ commutes with the system part of the conserved quantity. Now let us consider an observable $\E$ that may  be non-commutative, but commutes with $\nsys$. Note that in this case, unless $\dim(\hs) < \infty$, then $\ff({\ii^L_\xx}^*)$ is not necessarily an algebra,  since in infinite dimensions there exist non-commutative observables $\E$ such that $\ff({\ii^L_\xx}^*) \not\subset \E'$  \cite{Arias2002,Weihua2009}. But since $\E'\subset \ff({\ii^L_\xx}^*)$ always holds, then   $[\E, \nsys]=\zero \implies N\sub{\s}^k \in \ff({\ii^L_\xx}^*)$ for all $k\in \nat$. That is, the L\"uders measurement of $\E$ commuting with $\nsys$ will fully conserve $\nsys$. Recall that in such a case,   non-disturbance will not be ruled out for  any observable $\F$ that commutes with $\E$ (see \corref{corrollary:Luders-commuting-no-constraint}).  Indeed,  it will hold that $\Delta \nsys = \zero$, and so any $\F$ will trivially commute with $\Delta \nsys$.  But does an observable $\E$ commuting with $\nsys$ always admit a L\"uders measurement in the presence of a conservation law? We shall now show that in the presence of a full conservation law, and where the apparatus part of the conserved quantity is highly non-degenerate,  such measurements will require a large coherence in the apparatus preparation. This surprising observation can be seen as a ``converse'' WAY theorem. 

\begin{prop}\label{prop:converse-WAY-main}
Let $\mm:=(\ha, \xi, \ee, \Z)$ be a measurement scheme for an instrument $\ii$ acting in $\hs$. Assume that $\ee$ fully conserves an additive quantity $N = \nsys \otimes \oneapp + \onesys \otimes \napp$, where $\nsys \in \los(\hs)$ and $\napp \in \los(\ha)$, and that $\ii_\xx$ fully conserves $\nsys$. Define the subspace of $\ha$ that is involved during the measurement process as
\begin{align*}
\ha(\text{meas}) := \bigcup_{\rho \in \s(\hs)} \supp( \Lambda(\rho)) \cup \supp(\xi)  \subseteq \ha,    
\end{align*}
where $\Lambda$ is the conjugate channel to $\ii_\xx$ defined in \eq{eq:conjugate-channel}. Then either  $\ha(\text{meas})$ is contained within a single degenerate eigenspace of $\napp$, or  $\var{\napp, \xi}:= \tr[\nappk{2}\xi] - \tr[\napp \xi]^2$ must be large. Additionally, if $\ii$ is an extremal instrument, and if $\ha(\text{meas})$ is not contained within a single degenerate eigenspace of $\napp$, then the quantum Fisher information $\qq(\napp, \xi)$ as defined in \eq{eq:QFI-defn} must be large. 
\end{prop}

Note that even if $\ii$ is not a L\"uders instrument, by item (iii) of  \lemref{lemma:fixed-points-instrument} it holds that  if $\ii$ is compatible with observable $\E$ then $\ii_\xx$ fully conserves $\nsys$ only if $[\E, \nsys] = \zero$. Moreover, let us note that  the L\"uders $\E$-instrument is extremal whenever the effects of $\E$ are linearly independent \cite{DAriano2011}. The proof of the above proposition is provided in \app{app:converse-WAY}, and here we present a rough sketch. In the case that $\ee$ fully conserves $N$ and $\ii_\xx$ fully conserves $\nsys$  then both the expected value and the variance of $\napp$ must not change as a result of the measurement interaction. That is,  $\tr[\napp\Lambda(\rho)] = \tr[\napp \xi]$ and $\var{\napp, \Lambda(\rho)} = \var{\napp, \xi}$ for all $\rho \in \s(\hs)$.  It follows that if $\xi$ is an eigenstate of $\napp$, i.e., if $\xi$  has support only in a single degenerate eigenspace of $\napp$, then $\Lambda(\rho)$ must live in the same eigenspace for all $\rho$. That is, $\napp$ must be ``effectively''  fully degenerate, in the sense that $\ha(\text{meas})$ must be contained within a single degenerate eigenspace of $\napp$.  This generalises an observation made in Ref. \cite{Mohammady2020}, which held only in the case of L\"uders measurements of sharp observables, implemented by   normal measurement schemes satisfying the Yanase condition.  But in many physically relevant situations  $\napp$ will not be (effectively) fully degenerate---for example, the apparatus may be a system with a conserved quantity $\napp$ that is completely non-degenerate. In such cases,   when the interaction between system and apparatus obeys a full conservation law,  an instrument that fully conserves $\nsys$ can  be implemented only if the apparatus preparation is not  an eigenstate of $\napp$, which implies that $\var{\napp, \xi}$ must be large. Finally, if the instrument is extremal, and $\napp$ is not (effectively) fully degenerate, then for every pure state decomposition $\xi=\sum_i q_i \pr{\phi_i}$, the uncertainty of $\napp$ in $\phi_i$ must be large,   which implies that the apparatus preparation must have a large coherence as quantified by the quantum Fisher information.

\subsection{Non-disturbance and operator spaces}

Due to the Schauder–Tychonoff fixed point theorem \cite{Fixed-points-appl}, all $\E$-channels  $\ii_\xx$ have at least one fixed state. However, it may be  that none of these are faithful. In such a case, the fixed-point set of the dual channel $\ii_\xx^*$ is not necessarily a von Neumann algebra, but rather forms an operator space \cite{Intro-Operator-Space}. This setting has been much less investigated, and  in  \app{app:operator-space-theory} we provide some novel analysis of the structure of such fixed-point sets. While the discussion thus far has  been applicable for infinite-dimensional systems---except in some examples---in this section we shall always assume that $\dim(\hs) < \infty$. 

We define the minimal support projection $P$ on the fixed-point set $\ff(\ii_\xx)$ as  
\begin{align}\label{eq:P-defn-instrument}
   P = \min \{Q : Q \text{ is a projection, } \rho = Q \rho Q  \, \forall \, \rho \in \ff(\ii_\xx)\}. 
\end{align}
In other words, for all projections $Q$ and fixed states $\rho \in \ff(\ii_\xx)$ such that $\rho = Q \rho Q$, it holds that $Q \geqslant P$. Note that  $P=\onesys$ if and only if $\ff(\ii_\xx)$ contains a faithful state, in which case $\ff(\ii_\xx^*)$ is an algebra, so that we recover the results of \thmref{theorem:algebra-fixed-point-WAY}. We now provide a generalisation of this result which accounts for situations where  $P$ may be smaller than the identity, i.e., where  the $\E$-channel $\ii_\xx$ may perturb all faithful states. Here, we define $P \E P := \{P \E(x) P : x\in \xx\}$ and $P \F P := \{P\F(y) P : y \in \yy\}$ as restrictions of observables $\E$ and $\F$ in $P\hs$.

\begin{theorem}\label{theorem:non-faithful-fixedpoint-WAY-main}
Let $\E :=\{\E(x) : x\in \xx\}$ and $\F:=\{\F(y) : y \in \yy\}$ be observables acting in $\hs$. Let $\mm:=(\ha, \xi, \ee, \Z)$ be a measurement scheme for an  $\E$-instrument $\ii$, and assume that $\ee$ conserves an additive quantity $N = \nsys \otimes \oneapp + \onesys \otimes \napp$ on average, where $\nsys \in \los(\hs)$ and $\napp \in \los(\ha)$. If $P$ is  the minimal support projection on $\ff(\ii_\xx)$, then the following hold: 
\begin{enumerate}[(i)]
    \item $\F \subset \ff(\ii_\xx^*)$  only if $P \F P$ commutes with   $P \E P$, with  $P \Delta  \nsys P := P \ii_\xx^*(\nsys ) P - P \nsys P$, and with \\ $\{[P \E(x) P, P \nsys P] : x\in \xx\}$.
    \item $\ii$  is a measurement of the first kind only if  $P \E P$ is  commutative and commutes with $P \nsys P$. 
    
    \item  $\ii$ is repeatable only if $P \E P$ is sharp and commutes with $P \nsys P$.
\end{enumerate}
\end{theorem}

The proof is provided in \app{app:non-disturbance-fixed-point} (\thmref{theorem:non-faithful-fixedpoint-WAY}) and it follows from similar arguments as those used in \thmref{theorem:algebra-fixed-point-WAY}. That is, by noting that there exists a faithful fixed state in the subspace $P\hs$, we observe that  the projection of the fixed-point set  $\ff(\ii_\xx^*)$  onto the subspace $P \hs$  is a   von Neumann algebra.

As a simple example, let us consider the case where $\E$ is measured by a \emph{nuclear} instrument. The operations of a nuclear instrument $\ii$ are written as  $\ii_x(\cdot) = \tr[\E(x) \cdot]\sigma_x$, where $\{\sigma_x\}$ is a family of states.  It is simple to verify that in such a case,  $P$ is the minimal projection on $\cup_x\supp(\sigma_x) $. Additionally, if $\E$ is a norm-1 observable, and for every $x$ the support of $\sigma_x$ is contained within the eigenvalue-1 eigenspace of $\E(x)$, then such an instrument will be a repeatable measurement of $\E$.  Every observable admits a nuclear instrument and, as shown in Corollary 1 of Ref. \cite{Heinosaari2010}, every instrument compatible with a  rank-1 observable is nuclear.    Now assume that $\ii$ does not disturb some observable $\F$. Since the dual $\E$-channel may be written as $\ii_\xx^*(\cdot) = \sum_x \tr[\cdot \sigma_x] \E(x)$, we observe that $\F \subset \ff(\ii_\xx^*)$ implies that $\F(y) = \sum_{x} p(y|x) \E(x)$. That is, non-disturbance is  possible only if $\F$ is a classical post-processing of $\E$. But note that unless $\E$ is commutative, this does not generally imply that $\F$ must commute with $\E$.  However,  item (i) of \thmref{theorem:non-faithful-fixedpoint-WAY-main} states that $P\F P$ must commute with $P \E P$  and with $P \Delta  \nsys P$. Given that $P \Delta  \nsys P = P \ii_\xx^*( \nsys ) P - P \nsys P = \sum_x \tr[\nsys \sigma_x] P \E(x) P -  P \nsys P$, then  $[P\F P,P \E P]=\zero$ implies that the commutation of $P\F  P$  with  $P \Delta  \nsys P$ simplifies to $[P \F P, P \nsys P ]=\zero$. Indeed, in the case where  $\cup_x\supp(\sigma_x) =\hs$ so that $\P = \onesys$, it follows that only observables $\F$ that commute with both the measured observable $\E$, and with the system part of the conserved quantity $\nsys$, will be non-disturbed. 

While the implications of the above theorem depend on the minimal support projection $P$ on the fixed states  of the measurement channel $\ii_\xx$, and hence on the specific measurement implementation, we may use the structure of the fixed-point set to obtain necessary conditions for first-kindness that depend only on the measured observable. In \app{app:non-disturbance-distinguishability}, we provide some necessary conditions for non-disturbance  that are independent of conservation laws,  showing that non-disturbance is intimately related to distinguishability. In particular, we show that if $\ii$ is a first-kind measurement of $\E$, then this observable must be a classical post-processing of a norm-1 observable $\G \subset \ff(\ii_\xx^*) $, and that there exists a family of states $\{\rho_z\}$ that are perfectly distinguishable by a measurement of $\G$ such that $\{\ii_\xx(\rho_z)\}$ remain perfectly distinguishable. Next,  we use these results to obtain    a quantitative form of the WAY theorem for first-kindness, presented below.

\begin{theorem}\label{theorem:first-kind-WAY-bound}

Consider a measurement scheme $\mm := (\ha, \xi, \ee, \Z)$ for  a nontrivial observable $\E$ with the instrument $\ii$ acting in $\hs$. Assume that $\ii$ is a measurement of the first kind,  and  that $\ee$  conserves an additive quantity $N = N\sub{\s} \otimes \oneapp + \onesys\otimes N\sub{\aa}$ on average, where $\nsys \in \los(\hs)$ and $\napp \in \los(\ha)$.   
For each outcome $x$ associated with a non-trivial effect $\E(x)$, let  $\mathcal{K}_{\max}(x)$ and $\mathcal{K}_{\min}(x)$ be  subspaces of $\hs$ defined by 
\begin{align*}
&\mathcal{K}_{\max}(x):=\{\psi \in \hs : \ \E(x)\psi =\| \E(x)\| \psi\}, &\mathcal{K}_{\min}(x):=\{\phi \in \hs:\ (\onesys - \E(x))\phi  =\| \onesys  - \E(x)\| \phi\}.    
\end{align*}
$\mathcal{K}_{\max}(x)$ and $\mathcal{K}_{\min}(x)$ are orthogonal, and  for all unit vectors $\psi \in \mathcal{K}_{\max}(x)$ and $\phi \in \mathcal{K}_{\min}(x)$, it holds that 
\begin{align}\label{eq:distinguishability-first-kind-WAY-bound}
|\langle \psi |\nsys \phi\rangle | 
\leqslant \|\nsys\| \left( \| \E(x)\|^{\frac{1}{2}} (1 - \|\onesys - \E(x) \|)^{\frac{1}{2}} + (1 - \| \E(x) \|)^{\frac{1}{2}} \|\onesys - \E(x) \|^{\frac{1}{2}}\right). 
\end{align}
\end{theorem}
For a proof, see \app{app:first-kind-WAY} (\thmref{theorem:first-kind-WAY-distinguishability}).   We note that if $\E$ commutes with $\nsys$, then \thmref{theorem:first-kind-WAY-bound} imposes no restrictions on first-kindness. This is because so long as $\E(x)$ is non-trivial and commutes with $\nsys$, then the lower bound of \eq{eq:distinguishability-first-kind-WAY-bound}  vanishes. On the other hand, if for any non-trivial $\E(x)$ not commuting with $\nsys$ it holds that $\<\psi|  \nsys \phi\>\ne 0$ for some  $\psi \in \mathcal{K}_{\max}(x)$ and $\phi \in \mathcal{K}_{\min}(x)$, then $\E$  admits a first-kind measurement only if either (i)  $\|\E(x)\| < 1$ or (ii) $\|\onesys - \E(x)\|<1$. This is so because if both (i) and (ii) are violated, i.e., if $\|\E(x)\| = \|\onesys - \E(x)\| = 1$, then the upper bound of \eq{eq:distinguishability-first-kind-WAY-bound} vanishes. It follows that, in such a case,  $\E(x)$ cannot have both   $1$ and $0$ as an eigenvalue: such an observable must be highly unsharp. Indeed, we may also  recover the first-kindness statement of  \thmref{theorem:Strong-WAY} directly from \thmref{theorem:first-kind-WAY-bound}, which is presented  as \corref{corollary:repeatable-WAY-distinguishability} in \app{app:first-kind-WAY}. This raises an interesting question: will \eq{eq:distinguishability-first-kind-WAY-bound} also hold if we abandon the requirement of first-kindness, and instead assume that the measurement  satisfies the Yanase condition? This question is beyond the scope of the present paper, but the answer  may highlight to what extent the necessary conditions for measurability and non-disturbance will continue to satisfy the ``symmetry'' witnessed so far in WAY-type theorems.  

To demonstrate that \thmref{theorem:first-kind-WAY-bound} provides much stronger constraints than \thmref{theorem:Strong-WAY}, let us consider again the simple model of a binary observable $\E_\lambda := \{\E_\lambda(+), \E_\lambda(-)\}$ acting in $\hs \simeq \co^3$ introduced surrounding \eq{eq:toy-model-strong-WAY}. Recall that  \thmref{theorem:Strong-WAY} did not rule out first-kind measurements of $\E_\lambda$ for any $1/2 < \lambda < 1$.  But now note that   $\mathcal{K}_{\max}(\pm) = \spann\{\ket{\pm}\}$,  $\mathcal{K}_{\min}(\pm) = \spann\{\ket{\mp}\}$, $\| \E_\lambda(\pm)\| = \|\onesys -  \E_\lambda(\pm)\| = \lambda$, $\| \nsys \| = 1$,  and  $|\<\pm|\nsys|\mp\>| = 1$. By \thmref{theorem:first-kind-WAY-bound}, it follows that such an observable admits a first-kind measurement only if 
\begin{align*}
    1 \leqslant 2 \sqrt{\lambda (1-\lambda)},
\end{align*}
which cannot be satisfied for any  $1/2< \lambda < 1$; indeed, the above inequality is  satisfied only if $\lambda = 1/2$, in which case $\E_\lambda(\pm) = \onesys/2$ are trivial effects. 

\section{Conclusions}

We have provided a number of general and operational bounds which capture measurement error and disturbance, with emphasis on the setting in which there is a conservation law---both ``full'' conservation, and the weaker notion of ``average'' conservation. We obtained  new, quantitative versions of the WAY theorem, which generalise previous work in several respects, going beyond normal measurement schemes, and not assuming that the observable to be measured is sharp. The work presented surrounding the WAY theorem was also studied in the novel setting of sequential measurements for general pairs of observables, and the quantitative bounds were further refined by the analysis of the fixed point structure of the measurement channel in settings which have received scant attention.

We saw that the large apparatus coherence played a key role for measurability and non-disturbance in the presence of a full conservation law, pointing to the requirement of ``large" apparatus. This points further to possible deep connections between the WAY theorem and the rapidly developing theory of quantum reference frames, analysed so far only when the conserved quantity has a conjugate phase \cite{Loveridge2020a,Loveridge2017a}. While necessary, however,  the large apparatus coherence was  shown to not be sufficient for good measurements; we saw that conservation laws impose strict constraints on the error or disturbance for unsharp observables that admit definite values.   

Our work suffers from the drawback that many physically arising conserved quantities are unbounded. Very recently, the measurability part of the WAY theorem for sharp target observables was proven in the setting of  unbounded  conserved quantities, where the conservation law is stated as the invariance of the unitary group generated by the conserved quantity under the action of the  measurement interaction \cite{Kuramochi2022}. The measurability question for unsharp target observables, as well as the question of disturbance,   should also be systematically studied when the conserved quantity is unbounded. This is a technically challenging endeavour and we save it for future work.  

\acknowledgments
M.H.M. acknowledges funding from the European Union’s Horizon 2020 research and innovation programme under the Marie Skłodowska-Curie grant agreement No. 801505, as well as  from  the Slovak Academy of Sciences   under MoRePro project OPEQ (19MRP0027).
T.M. acknowledges financial support from JSPS KAKENHI Grant No.JP20K03732.

%
\appendix

\section{Properties of  operations}\label{app:inequalities-operations}

Operations allow for the construction of an ``operator-valued inner product", which will be frequently used in this paper. For an operation $\Phi^*: \lo(\kk) \to \lo(\h)$, we define the sesquilinear mapping $\langle\langle \cdot |\cdot \rangle \rangle 
: \lo(\kk) \times \lo(\kk) \to \lo(\h)$ by 
\begin{align}\label{eq:sesquilinear-map-defn}
\langle \langle A|B \rangle \rangle :=  \Phi^*(A^*B) - \Phi^*(A^*) \Phi^*(B), 
\end{align}
to hold for all $A, B \in \lo(\kk)$. The following lemma shows that such a map mimics several important properties of an inner product.
\begin{lemma}\label{lemma:cauchy-schwarz}
For all $A,B,C \in \lo(\kk)$, the sesquilinear mapping defined in \eq{eq:sesquilinear-map-defn} satisfies: (i) $\<\< A | B + \lambda C\>\> = \<\<A|B\>\> + \lambda \<\<A| C\>\>$ for all $\lambda \in \co$; (ii) $\<\< A| B \>\> = \<\< B| A \>\>^*$; (iii) $\<\< A| A \>\> \geqslant \zero$; and (iv) the Cauchy-Schwarz  inequality
\begin{align*}
\langle \langle A|B\rangle \rangle \langle \langle B|A\rangle \rangle 
\leqslant \Vert \langle \langle B | B \rangle \rangle \Vert \langle \langle A|A \rangle \rangle .   
\end{align*}
\end{lemma}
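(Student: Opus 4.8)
The plan is to split the four claims into two groups. Properties (i) and (ii) are purely algebraic and follow at once from the defining formula \eqref{eq:sesquilinear-map-defn}. For (i) I would expand $\langle \langle A| B + \lambda C\rangle \rangle$ using linearity of $\Phi^*$ and bilinearity of the product on $\lo(\kk)$, noting $A^*(B+\lambda C) = A^*B + \lambda A^*C$ and $\Phi^*(B+\lambda C) = \Phi^*(B) + \lambda \Phi^*(C)$, and collecting terms. For (ii) the only extra input is that any positive (hence our completely positive) linear map is $*$-preserving, i.e. $\Phi^*(X^*) = \Phi^*(X)^*$; taking the adjoint of $\langle \langle B| A\rangle \rangle = \Phi^*(B^*A) - \Phi^*(B^*)\Phi^*(A)$ and using $(B^*A)^* = A^*B$ together with $\big(\Phi^*(B^*)\Phi^*(A)\big)^* = \Phi^*(A^*)\Phi^*(B)$ returns exactly $\langle \langle A|B\rangle \rangle$.

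The substance lies in (iii) and (iv), and here my main tool would be the Stinespring dilation of the completely positive map $\Phi^*$: write $\Phi^*(X) = V^*\pi(X)V$ with $\pi$ a (unital) $*$-representation of $\lo(\kk)$ on an auxiliary Hilbert space and $V$ a bounded operator into it. Sub-unitality enters crucially at this point: $V^*V = \Phi^*(\one) \leqslant \one$ forces $\Vert V\Vert \leqslant 1$, whence $\Vert VV^*\Vert = \Vert V\Vert^2 \leqslant 1$ and $P := \one - VV^* \geqslant \zero$. Substituting into \eqref{eq:sesquilinear-map-defn} and using $\pi(A^*B) = \pi(A)^*\pi(B)$ collapses the definition to the factorised form $\langle \langle A|B\rangle \rangle = V^*\pi(A)^* P\, \pi(B) V = X_A^* X_B$, where $X_C := P^{1/2}\pi(C)V$.

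From this single factorisation both remaining claims drop out. Claim (iii) is immediate, since $\langle \langle A|A\rangle \rangle = X_A^* X_A \geqslant \zero$. For the Cauchy--Schwarz inequality (iv) I would invoke the elementary operator bound $X_B X_B^* \leqslant \Vert X_B X_B^*\Vert\, \one$ and sandwich it between $X_A^*$ and $X_A$:
\[
\langle \langle A|B\rangle \rangle \langle \langle B|A\rangle \rangle = X_A^*\,(X_B X_B^*)\,X_A \leqslant \Vert X_B X_B^*\Vert\, X_A^* X_A = \Vert \langle \langle B|B\rangle \rangle \Vert\, \langle \langle A|A\rangle \rangle,
\]
where I have used $X_B^*X_A = \langle \langle B|A\rangle \rangle$, $X_A^*X_A = \langle \langle A|A\rangle \rangle$, and $\Vert X_B X_B^*\Vert = \Vert X_B^* X_B\Vert = \Vert \langle \langle B|B\rangle \rangle \Vert$.

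The part needing the most care is recognising that (iv) is a genuine \emph{operator} inequality that cannot be obtained from the abstract properties (i)--(iii) alone. The naive route—evaluating $\langle \langle \mu A - B | \mu A - B\rangle \rangle \geqslant \zero$ in a vector state and optimising over the scalar $\mu$—yields only the scalar estimate $|\langle \phi | \langle \langle A|B\rangle \rangle | \phi\rangle|^2 \leqslant \langle \phi | \langle \langle A|A\rangle \rangle | \phi\rangle\, \langle \phi | \langle \langle B|B\rangle \rangle | \phi\rangle$, which is strictly weaker than the statement involving the product $\langle \langle A|B\rangle \rangle \langle \langle B|A\rangle \rangle$. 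It is precisely the complete positivity of $\Phi^*$, encoded in the factorisation $\langle \langle A|B\rangle \rangle = X_A^* X_B$ (equivalently, positivity of the operator Gram matrix $\left(\begin{smallmatrix}\langle \langle A|A\rangle \rangle & \langle \langle A|B\rangle \rangle\\ \langle \langle B|A\rangle \rangle & \langle \langle B|B\rangle \rangle\end{smallmatrix}\right)$ on $\h \oplus \h$), that upgrades the bound to the operator level, with sub-unitality guaranteeing $P \geqslant \zero$ so that the factorisation exists in the first place.
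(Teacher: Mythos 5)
Your proof is correct and takes essentially the same route as the paper: (i)--(ii) by direct algebra, and (iii)--(iv) via the Stinespring dilation, where sub-unitality gives $\zero \leqslant \one - VV^*$ and hence the factorisation $\langle\langle A|B\rangle\rangle = X_A^* X_B$ (the paper's $\pi := \sqrt{\one - VV^*}$ is exactly your $P^{1/2}$, with the concrete representation $X \mapsto X \otimes \one$), followed by the same sandwich estimate $X_A^*(X_B X_B^*)X_A \leqslant \Vert X_B X_B^*\Vert\, X_A^* X_A$ and the C*-identity. Your closing observation---that properties (i)--(iii) alone yield only the scalar Cauchy--Schwarz inequality and that complete positivity is what upgrades it to the operator level---is accurate and goes slightly beyond what the paper states, but the core argument is identical.
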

\begin{proof}
(i) trivially follows from linearity of operations, while (ii) follows from the fact that an operation preserves the involution, i.e., $\Phi^*(A)^* = \Phi^*(A^*)$. (iii) follows from  Kadison's inequality, or the two-positivity of CP maps \cite{Kadison1952,Choi1974}. To show this, note that  by Stinespring's dilation theorem \cite{Stinespring1955} we may write $\Phi^*(A) = V^*(A \otimes \one\sub{\kk'}) V$, where $V : \h \to \kk\otimes \kk'$ is a linear operator. Since $\Phi^*$ is sub-unital, it must hold that $\Phi^*(\one\sub{\kk}) =  V^*(\one\sub{\kk}\otimes \one\sub{\kk'} )V  \equiv V^* V \leqslant \one\sub{\h}$, with equality if $\Phi^*$ is a channel, in which case $V$ is an isometry. By  the C* identity we therefore have  $\|V V^*\| = \|V^* V\| \leqslant 1$, which implies that  $\zero \leqslant V V^* \leqslant \one\sub{\kk}\otimes\one\sub{\kk'}$. By \eq{eq:sesquilinear-map-defn} we may therefore write  
\begin{align}\label{eq:stinespring-sesquilinear}
 \<\< A| B \>\> &=     V^*(A^* \otimes \one\sub{\kk'})\pi^* \pi (B \otimes \one\sub{\kk'})V,
\end{align}
where $\pi = \pi^* := \sqrt{\one\sub{\kk}\otimes\one\sub{\kk'} - V V^*}$. That $\<\<A|A\>\> \geqslant \zero$ trivially follows.

Finally, we prove the Cauchy-Schwarz inequality which, for the case of channels, was  proven by Janssens in   Lemma 1 of Ref.  \cite{Janssens2017}. The proof for the case of general operations is identical; by  \eq{eq:stinespring-sesquilinear} we may write 
\begin{align*}
\langle \langle A|B\rangle \rangle \langle \langle B|A\rangle \rangle &=    V^*(A^* \otimes \one\sub{\kk'})\pi^* \pi (B \otimes \one\sub{\kk'})V  V^*(B^* \otimes \one\sub{\kk'})\pi^* \pi (A \otimes \one\sub{\kk'})V \nonumber \\
& \leqslant \| \pi (B \otimes \one\sub{\kk'})V  V^*(B^* \otimes \one\sub{\kk'})\pi^* \| V^*(A^* \otimes \one\sub{\kk'})\pi^*  \pi (A \otimes \one\sub{\kk'})V \nonumber \\
& =  \|V^*(B^* \otimes \one\sub{\kk'})\pi^* \pi (B \otimes \one\sub{\kk'})V \| V^*(A^* \otimes \one\sub{\kk'})\pi^*  \pi (A \otimes \one\sub{\kk'})V \nonumber \\
& = \|\<\< B| B\>\> \| \<\<A|A \>\>.
\end{align*}
In the second line we have used the fact that for any self-adjoint operator $A \in \los(\h)$, it holds that $B^* A B \leqslant \| A\| B^* B$ for all $B \in \lo(\h)$, while  in the third line we have used the C* identity $\|A A^* \| = \| A^* A\|$ for all $A \in \lo(\h)$. 
\end{proof}
Note that the sesquilinear mapping  in \eq{eq:sesquilinear-map-defn} does not satisfy  the positive definiteness property  in general, that is, $\langle \langle A| A\rangle \rangle =
\zero $ does not  imply $A=\zero$. This  plays an important role in the multiplicability theorem \cite{Choi1974}, which can be seen as a consequence of \lemref{lemma:cauchy-schwarz}:
\begin{corollary}\label{corollary:multiplicability}
Let $\Phi^*: \lo(\kk) \to \lo(\h)$ be an operation, and consider an operator $B \in \lo(\kk)$. The following hold:
\begin{enumerate}[(i)]
    \item If $\Phi^*(B^* B) = \Phi^*(B^*) \Phi^*(B)$, then $\Phi^*(A B) = \Phi^*(A)\Phi^*(B)$ for all $A \in \lo(\kk)$.
    
    \item If $\Phi^*(B B^*) = \Phi^*(B) \Phi^*(B^*)$, then $\Phi^*(B A) = \Phi^*(B)\Phi^*(A)$ for all $A \in \lo(\kk)$.
\end{enumerate}

\end{corollary}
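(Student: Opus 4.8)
The plan is to recast everything through the sesquilinear map $\<\<\cdot|\cdot\>\>$ and then exploit the Cauchy-Schwarz inequality of \lemref{lemma:cauchy-schwarz}(iv) in the guise of the elementary fact that a vector of zero ``length'' is ``orthogonal'' to every other vector. First I would observe that the hypothesis of (i) is nothing but the statement $\<\< B|B\>\> = \Phi^*(B^*B) - \Phi^*(B^*)\Phi^*(B) = \zero$, and that the desired conclusion $\Phi^*(AB) = \Phi^*(A)\Phi^*(B)$ is, upon writing $C := A^*$, precisely the assertion $\<\< C|B\>\> = \zero$ for every $C \in \lo(\kk)$. Thus (i) reduces to showing that $\<\< B|B\>\> = \zero$ forces $\<\< C|B\>\> = \zero$ for all $C$.

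For this key step I would apply the Cauchy-Schwarz inequality with $B$ in the distinguished slot (the one appearing in the norm on the right-hand side), giving
\begin{align*}
\<\< C|B\>\> \<\< B|C\>\> \leqslant \| \<\< B|B\>\> \| \, \<\< C|C\>\> = \zero,
\end{align*}
the right-hand side vanishing because $\| \<\< B|B\>\> \| = 0$. By the conjugate-symmetry property (ii), the left-hand side equals $X X^*$ with $X := \<\< C|B\>\>$, which is positive by construction. Being simultaneously $\geqslant \zero$ and $\leqslant \zero$, it must vanish, $X X^* = \zero$; the C*-identity $\|X\|^2 = \| X X^*\| = 0$ then yields $X = \zero$, as required.

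Claim (ii) is the mirror image, obtained by passing to adjoints: its hypothesis $\Phi^*(BB^*) = \Phi^*(B)\Phi^*(B^*)$ reads $\<\< B^*|B^*\>\> = \zero$, and the target identity $\Phi^*(BA) = \Phi^*(B)\Phi^*(A)$ is $\<\< B^*|A\>\> = \zero$. I would run the same argument with $A$ and $B^*$ occupying the two slots, now placing $B^*$ in the distinguished position so that the vanishing norm factor on the right-hand side is $\| \<\< B^*|B^*\>\>\| = 0$; invoking (ii) again converts the resulting $Y Y^* \leqslant \zero$ into $Y := \<\< A|B^*\>\> = \zero$, whence also $\<\< B^*|A\>\> = Y^* = \zero$, which is exactly the target. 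The only bookkeeping to watch is which argument sits in which slot, so that the vanishing ``self-overlap'' lands on the correct side of the inequality; conceptually, the one subtlety worth flagging is that $\<\<\cdot|\cdot\>\>$ is \emph{not} positive definite (as noted after \lemref{lemma:cauchy-schwarz}), so one cannot infer $B = \zero$ from $\<\< B|B\>\> = \zero$---it is precisely the Cauchy-Schwarz inequality that rescues multiplicativity in the absence of positive-definiteness.
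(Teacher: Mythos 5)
Your proposal is correct and follows essentially the same route as the paper: both reduce each claim to the vanishing of $\<\<A^*|B\>\>$ (resp. $\<\<B^*|A\>\>$), apply the Cauchy--Schwarz inequality of \lemref{lemma:cauchy-schwarz} with the zero self-overlap in the distinguished (normed) slot, and extract the conclusion from $XX^* = \zero$ via positivity and the C* identity. The slot bookkeeping and the use of conjugate symmetry match the paper's argument exactly, so there is nothing to add.
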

\begin{proof}
Let us first prove (i). If $ \<\< B| B\>\> = \Phi^*(B^* B) - \Phi^*(B^*) \Phi^*(B)=\zero$, then   $\|\<\< B|B \>\>\|=0$. Therefore,  by \lemref{lemma:cauchy-schwarz} we have for all $A \in \lo(\kk)$ the following:
\begin{align*}
   \zero \leqslant   \<\< A^*| B\>\> \<\<A^* | B \>\>^* = \<\< A^*| B\>\> \<\<B | A^* \>\> \leqslant \zero.
\end{align*}
This implies that  $\<\< A^*| B\>\> = \Phi^*(A B) - \Phi^*(A)\Phi^*(B) = \zero$.  Similarly for (ii), $ \<\< B^*| B^*\>\> = \Phi^*(B B^*) - \Phi^*(B) \Phi^*(B^*)=\zero$ implies that for all $A \in \lo(\kk)$ we have
\begin{align*}
   \zero \leqslant   \<\< B^*| A\>\>^* \<\< B^*| A\>\> = \<\< A| B^*\>\> \<\<B^* | A \>\> \leqslant \zero,
\end{align*}
which implies that  $\<\< B^*| A\>\> = \Phi^*(B A) - \Phi^*(B)\Phi^*(A) = \zero$.

\end{proof}

\lemref{lemma:cauchy-schwarz} also has the following useful consequence:
\begin{corollary}\label{corollary:Norm-commutator-inequality}
Let $\Phi^*: \lo(\kk) \to \lo(\h)$ be an operation. Given the sesquilinear mapping defined in \eq{eq:sesquilinear-map-defn}, for all $A,B \in \lo(\kk)$ it holds that
\begin{align}\label{eq:norm-channel-commutator}
    \|[\Phi^*(A), \Phi^*(B)] - \Phi^*([A,B]) \| \leqslant &\| \<\<A|A\>\>  \|^{\frac{1}{2}}  \| \<\<B^*|B^*\>\> \|^{\frac{1}{2}} + \|\<\<A^*|A^*\>\>   \|^{\frac{1}{2}}  \| \<\<B|B\>\> \|^{\frac{1}{2}}.
\end{align}
\end{corollary}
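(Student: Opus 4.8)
The plan is to rewrite the left-hand side entirely in terms of the sesquilinear mapping, and then bound the result using the triangle inequality together with the Cauchy-Schwarz inequality of \lemref{lemma:cauchy-schwarz}.

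First I would expand $[\Phi^*(A),\Phi^*(B)] - \Phi^*([A,B])$ and regroup the four terms into the pair of differences $\Phi^*(AB) - \Phi^*(A)\Phi^*(B)$ and $\Phi^*(BA) - \Phi^*(B)\Phi^*(A)$. Comparing with the definition \eqref{eq:sesquilinear-map-defn}, each of these is a value of the sesquilinear form evaluated on adjointed arguments, namely $\langle\langle A^*|B\rangle\rangle = \Phi^*(AB) - \Phi^*(A)\Phi^*(B)$ and $\langle\langle B^*|A\rangle\rangle = \Phi^*(BA) - \Phi^*(B)\Phi^*(A)$. This produces the clean identity
\[
[\Phi^*(A),\Phi^*(B)] - \Phi^*([A,B]) = \langle\langle B^*|A\rangle\rangle - \langle\langle A^*|B\rangle\rangle ,
\]
so that it remains only to bound the operator norm of a difference of two off-diagonal sesquilinear values.

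Next I would apply the triangle inequality to split this norm into $\|\langle\langle B^*|A\rangle\rangle\| + \|\langle\langle A^*|B\rangle\rangle\|$, and bound each summand by a geometric mean of diagonal terms. The crucial step is converting the operator-valued Cauchy-Schwarz inequality of \lemref{lemma:cauchy-schwarz}(iv) into a scalar norm bound: for any $X,Y \in \lo(\kk)$, property (ii) gives $\langle\langle Y|X\rangle\rangle = \langle\langle X|Y\rangle\rangle^*$, so the left-hand side $\langle\langle X|Y\rangle\rangle\langle\langle Y|X\rangle\rangle$ of the inequality equals $\langle\langle X|Y\rangle\rangle\langle\langle X|Y\rangle\rangle^*$. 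Taking operator norms (using monotonicity of the norm on positive operators) and invoking the C* identity then yields $\|\langle\langle X|Y\rangle\rangle\|^2 \leqslant \|\langle\langle X|X\rangle\rangle\|\,\|\langle\langle Y|Y\rangle\rangle\|$, hence $\|\langle\langle X|Y\rangle\rangle\| \leqslant \|\langle\langle X|X\rangle\rangle\|^{1/2}\|\langle\langle Y|Y\rangle\rangle\|^{1/2}$. Applying this once with $(X,Y) = (B^*,A)$ and once with $(X,Y) = (A^*,B)$ recovers precisely the two terms on the right-hand side of \eqref{eq:norm-channel-commutator}.

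I expect the only real care needed to be bookkeeping: tracking which of $A$, $A^*$, $B$, $B^*$ sits in each slot so that the diagonal labels produced by the Cauchy-Schwarz applications line up exactly with the stated bound. There is no genuine analytic obstacle beyond the single norm-extraction step above, which is precisely where the interplay between property (ii) and the C* identity does the work.
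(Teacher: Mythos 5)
Your proposal is correct and matches the paper's own proof essentially step for step: the same identity $[\Phi^*(A),\Phi^*(B)] - \Phi^*([A,B]) = \<\<B^*|A\>\> - \<\<A^*|B\>\>$, the same triangle-inequality split, and the same conversion of the operator-valued Cauchy-Schwarz inequality of \lemref{lemma:cauchy-schwarz} into the scalar bound $\|\<\<X|Y\>\>\| \leqslant \|\<\<X|X\>\>\|^{\frac{1}{2}}\|\<\<Y|Y\>\>\|^{\frac{1}{2}}$ via property (ii) and the C* identity. The only cosmetic difference is that you isolate that norm-extraction step as a generic lemma before instantiating $(X,Y)=(B^*,A)$ and $(A^*,B)$, whereas the paper performs it inline.
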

\begin{proof}
Let us first write 
\begin{align*}
 [\Phi^*(A), \Phi^*(B)] - \Phi^*([A,B]) =    \<\<B^*|A\>\> - \<\<A^*|B\>\>, 
\end{align*}
which gives 
\begin{align}\label{eq:norm-channel-commutator-1}
 \| [\Phi^*(A), \Phi^*(B)] -  \Phi^*([A,B])\| \leqslant   \|\<\<B^*|A\>\>\| + \|\<\<A^*|B\>\>\|. 
\end{align}
By  \lemref{lemma:cauchy-schwarz} and the  C* identity $\| A^* \| = \| A \| = \| A^* A\|^{\frac{1}{2}} =  \|  A A^* \|^{\frac{1}{2}}$ for all $A \in \lo(\h)$, 
 we therefore have
\begin{align*}
\|\<\<B^*|A\>\>\| & = \|\<\<B^*|A\>\>\<\<A|B^*\>\> \|^{\frac{1}{2}} \leqslant \| \<\<A|A\>\>  \|^{\frac{1}{2}} \| \<\<B^*|B^*\>\> \|^{\frac{1}{2}},\nonumber \\
\|\<\<A^*|B\>\>\| &= \|\<\<B|A^*\>\>\<\<A^*|B\>\> \|^{\frac{1}{2}} \leqslant \|\<\<A^*|A^*\>\>   \|^{\frac{1}{2}} \| \<\<B|B\>\> \|^{\frac{1}{2}}.
\end{align*}
Inserting the above inequalities in \eq{eq:norm-channel-commutator-1} gives the bound in \eq{eq:norm-channel-commutator}.
\end{proof}

Finally, we present the following useful properties of  operations:
\begin{lemma}\label{lemma:unsharp-disturbance-bound}
Let $\Phi^* : \lo(\kk) \to \lo(\h)$ be an operation. For any effects  $A\in \lop(\kk)$ and $B \in \lop(\h)$,  it holds that 
\begin{align*}
    \|\Phi^*(A^2) - \Phi^*(A)^2 \| \leqslant 2 \|\Phi^*(A) - B \| + \|B - B^2 \|. 
\end{align*}
\end{lemma}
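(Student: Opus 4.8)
The plan is to reduce the stated operator inequality to a purely ``effect-theoretic'' statement on $\h$, and then prove that statement by a carefully chosen, asymmetric factorisation. First set $C := \Phi^*(A)$; since $\Phi^*$ is positive and sub-unital and $\zero \leqslant A \leqslant \one\sub{\kk}$, we have $\zero \leqslant C \leqslant \Phi^*(\one\sub{\kk}) \leqslant \one\sub{\h}$, so $C$ is itself an effect. The key observation is that $\Phi^*(A^2) - \Phi^*(A)^2$ is sandwiched between $\zero$ and $C - C^2$. The lower bound $\Phi^*(A)^2 \leqslant \Phi^*(A^2)$ is exactly $\<\<A|A\>\> \geqslant \zero$ from \lemref{lemma:cauchy-schwarz}(iii) (using $A^* = A$), while the upper bound holds because $A$ is an effect, so $A^2 \leqslant A$ by the functional calculus, and positivity of $\Phi^*$ gives $\Phi^*(A^2) \leqslant \Phi^*(A) = C$. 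Hence
\begin{align*}
\zero \leqslant \Phi^*(A^2) - \Phi^*(A)^2 \leqslant C - C^2,
\end{align*}
and since $\zero \leqslant S \leqslant T$ for positive operators implies $\|S\| \leqslant \|T\|$, it suffices to bound $\|C - C^2\|$.

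It then remains to establish the effect inequality $\|C - C^2\| \leqslant 2\|C - B\| + \|B - B^2\|$ for the effects $C, B \in \lop(\h)$. I would prove this by exhibiting the algebraic identity
\begin{align*}
(C - C^2) - (B - B^2) = (C - B)(\one\sub{\h} - C) - B(C - B),
\end{align*}
which is verified by direct expansion. Taking norms, applying the triangle inequality and submultiplicativity, and using that $\one\sub{\h} - C$ and $B$ are effects (hence of norm at most $1$) yields $\|(C - C^2) - (B - B^2)\| \leqslant 2\|C - B\|$. A final triangle inequality, $\|C - C^2\| \leqslant \|(C - C^2) - (B - B^2)\| + \|B - B^2\|$, then gives the claim, and combining with the reduction above completes the argument.

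I expect the main obstacle to be obtaining the constant $2$ rather than $3$. The naive telescoping---inserting $B$ and $B^2$ and bounding $\|C^2 - B^2\| \leqslant \|C(C-B)\| + \|(C-B)B\| \leqslant 2\|C - B\|$---wastes a factor and produces the weaker bound $3\|C - B\| + \|B - B^2\|$. The point of the asymmetric factorisation $(C-B)(\one\sub{\h} - C) - B(C-B)$ is that it already absorbs the linear term $C - B$, so that only two norm-$\leqslant 1$ factors survive, each contributing a single $\|C - B\|$. Everything else is routine: the reduction step uses only monotonicity of the norm under the operator order on positive operators together with the positivity of the sesquilinear map already recorded in \lemref{lemma:cauchy-schwarz}.
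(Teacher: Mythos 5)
Your proof is correct, and at the step that actually matters it takes a different route from the paper. The opening reduction is shared: the paper likewise uses $A^2 \leqslant A$ to pass from $\Phi^*(A^2)-\Phi^*(A)^2$ to $\Phi^*(A)-\Phi^*(A)^2$ (your $C-C^2$), with positivity of $\Phi^*(A^2)-\Phi^*(A)^2$ (your appeal to \lemref{lemma:cauchy-schwarz}(iii), which the paper uses implicitly) licensing the norm comparison. Where you diverge is in extracting the constant $2$. The paper sets $C' := \Phi^*(A)-B$ and writes $\Phi^*(A)-\Phi^*(A)^2 = [C',B] + C'(\one\sub{\h}-\Phi^*(A)-B) + B - B^2$, then must prove the non-obvious commutator bound $\|[C',B]\|\leqslant \|C'\|$ via Robertson's uncertainty relation (the naive estimate gives $2\|C'\|\,\|B\|$, which would spoil the constant). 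Your factorisation $(C-B)(\one\sub{\h}-C) - B(C-B)$ is in fact the same algebraic identity regrouped---expanding the paper's first two terms yields exactly $C'(\one\sub{\h}-\Phi^*(A)) - BC'$---but your grouping makes both factors manifestly contractive, so the bound $2\|C-B\|$ follows from submultiplicativity and the triangle inequality alone, with no uncertainty-relation machinery. Your version is thus more elementary and isolates a clean standalone effect inequality, $\|C-C^2\|\leqslant 2\|C-B\|+\|B-B^2\|$. The one thing the paper's route buys that yours does not: the intermediate bound $\|[\Phi^*(A)-B,\,B]\|\leqslant\|\Phi^*(A)-B\|$ established inside its proof is cited again later (e.g.\ in \propref{prop:quantitative-bound-disturbance}, where $\|[\delta(y),\E(x)]\|\leqslant\|\delta(y)\|$ is attributed to this lemma), so if your proof were adopted wholesale, that auxiliary commutator fact would need its own short justification.
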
 
\begin{proof}
This inequality (for channels) was given as Eq.(4) in Ref. \cite{Miyadera2008}; the proof below follows  Theorem 2 of Ref. \cite{Miyadera2015e}. 
Let us first define $C:= \Phi^*(A) - B$ for notational simplicity. Now,  given that  $\zero \leqslant A \leqslant \one\sub{\kk}$ implies   $A^2 \leqslant A$, we may write  
\begin{align*}
    \Phi^*(A^2) - \Phi^*(A)^2 &\leqslant \Phi^*(A) - \Phi^*(A)^2 \nonumber \\
    & = [C, B] + C\big(\one\sub{\h} -   \Phi^*(A) - B \big) + B - B^2,
\end{align*}
and so we have the bound 
\begin{align*}
    \|\Phi^*(A^2) - \Phi^*(A)^2 \| & \leqslant \| [C , B]\| + \|C\big(\one\sub{\h} -  \Phi^*(A) - B \big)\| + \|B - B^2\| \nonumber \\
    &\leqslant \| [C , B]\| + \|C\|  \|\one\sub{\h} -  \Phi^*(A) - B\| + \|B - B^2\|\nonumber \\
    & \leqslant \| [C , B]\| + \| C\|  +  \|B - B^2\| \nonumber \\
    & \leqslant 2 \|C\| +  \|B - B^2\|.
\end{align*}
In the  third line we use the fact that $A$ and $B$ are effects which, given that  $\Phi^*$ is an operation, gives $\zero \leqslant  \Phi^*(A) + B   \leqslant 2\one\sub{\h} $. This in turn implies that $\|\one\sub{\h} -  \Phi^*(A) - B\|\leqslant 1$. The inequality in the final line follows from Robertson's uncertainty relation, by which we have 
\begin{align*}
\| [C , B]\| = \sup_{\|\phi\|=1} |\<\phi | \imag[C, B] \phi \>| & \leqslant 2  \sqrt{\<\phi|C^2 \phi\> - \<\phi| C\phi\>^2} \sqrt{\<\phi| B^2 \phi\> - \<\phi| B \phi\>^2} \nonumber \\
 & \leqslant 2  \| C\| \sqrt{\<\phi| B^2 \phi\> - \<\phi| B \phi\>^2}\nonumber \\
 & \leqslant \| C\|.
\end{align*}
The final line follows from the fact that $\zero \leqslant B \leqslant \one\sub{\h}$ implies  $\sqrt{\<\phi| B^2 \phi\> - \<\phi| B \phi\>^2} \leqslant 1/2$.

\end{proof}
		
\begin{lemma}\label{lemma:operation-annihilation}
Let $\Phi^* : \lo(\kk) \to \lo(\h)$ be an operation. Assume that $\Phi^*(A)= \zero$ for some $A \in \lop(\kk)$. It holds that 
\begin{align*}
\Phi^*(A B) = \Phi^*(B A) = \zero
\end{align*}
for all $B \in \lo(\kk)$. 
\end{lemma}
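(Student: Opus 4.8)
The plan is to deduce both identities from the multiplicability property of \corref{corollary:multiplicability}, whose hypotheses I will verify after establishing the key intermediate fact that $\Phi^*$ also annihilates the positive square root of $A$. Let $A^{1/2} \in \lop(\kk)$ denote the unique positive square root, so that $A = (A^{1/2})^2$ and $(A^{1/2})^* = A^{1/2}$. The first move is to evaluate the sesquilinear mapping of \eq{eq:sesquilinear-map-defn} on $A^{1/2}$: by property (iii) of \lemref{lemma:cauchy-schwarz} and the hypothesis $\Phi^*(A) = \zero$,
\begin{align*}
\zero \leqslant \<\< A^{1/2} | A^{1/2} \>\> = \Phi^*\big((A^{1/2})^* A^{1/2}\big) - \Phi^*(A^{1/2})^2 = \Phi^*(A) - \Phi^*(A^{1/2})^2 = - \Phi^*(A^{1/2})^2 .
\end{align*}

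The crux of the argument is the next step. The inequality above gives $\Phi^*(A^{1/2})^2 \leqslant \zero$. On the other hand, since operations preserve the involution, $\Phi^*(A^{1/2})$ is self-adjoint, whence $\Phi^*(A^{1/2})^2 \geqslant \zero$. The two bounds force $\Phi^*(A^{1/2})^2 = \zero$, and the C* identity then yields $\|\Phi^*(A^{1/2})\|^2 = \|\Phi^*(A^{1/2})^2\| = 0$, i.e. $\Phi^*(A^{1/2}) = \zero$. I expect this vanishing of $\Phi^*(A^{1/2})$ to be the only genuinely non-trivial part; once it is in hand the remainder is bookkeeping with the square root.

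With $B := A^{1/2}$, both hypotheses of \corref{corollary:multiplicability} now hold: as $B$ is self-adjoint, $B^*B = B B^* = A$, and
\begin{align*}
\Phi^*(B^*B) = \Phi^*(A) = \zero = \Phi^*(A^{1/2})^2 = \Phi^*(B^*)\Phi^*(B),
\end{align*}
with the analogous identity $\Phi^*(BB^*) = \Phi^*(B)\Phi^*(B^*)$ holding for the same reason. Part (i) then gives $\Phi^*(C A^{1/2}) = \Phi^*(C)\Phi^*(A^{1/2}) = \zero$ for all $C \in \lo(\kk)$, and part (ii) gives $\Phi^*(A^{1/2} C) = \Phi^*(A^{1/2})\Phi^*(C) = \zero$ for all $C$, both using $\Phi^*(A^{1/2}) = \zero$. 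Finally I specialise these: for any $B \in \lo(\kk)$, taking $C = B A^{1/2}$ in the first yields $\Phi^*(BA) = \Phi^*(B A^{1/2} A^{1/2}) = \zero$, and taking $C = A^{1/2} B$ in the second yields $\Phi^*(AB) = \Phi^*(A^{1/2} A^{1/2} B) = \zero$, which completes the proof. (One could equally bypass \corref{corollary:multiplicability} and argue directly from \lemref{lemma:cauchy-schwarz}: since $\Phi^*(A^{1/2}) = \zero$ one has $\<\< X | A^{1/2}\>\> = \Phi^*(X^* A^{1/2})$ and $\<\< A^{1/2} | X \>\> = \Phi^*(A^{1/2} X)$, and the Cauchy-Schwarz inequality together with $\|\<\<A^{1/2}|A^{1/2}\>\>\| = 0$ forces both to vanish for every $X$, giving the same two conclusions after the substitutions above.)
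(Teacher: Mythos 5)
Your proof is correct, and it takes a genuinely different route from the paper's. The paper never introduces $\Phi^*(A^{1/2})$: it first shows $\Phi^*(ABA)=\zero$ for all positive $B$ by the operator sandwich $\zero \leqslant ABA \leqslant \|B\| \sqrt{A}A\sqrt{A} \leqslant \|B\|\|A\| A$, and then applies the two-positivity (Kadison--Schwarz) inequality once to the operator $BA$, obtaining $\zero = \Phi^*(AB^*BA) \geqslant \Phi^*(BA)^*\Phi^*(BA)$, which kills $\Phi^*(BA)$ and, by taking adjoints, $\Phi^*(AB)$. You instead extract the auxiliary fact $\Phi^*(A^{1/2})=\zero$ from positivity of the sesquilinear form---the observation that $\<\<A^{1/2}|A^{1/2}\>\> = -\Phi^*(A^{1/2})^2$ is simultaneously $\geqslant \zero$ and $\leqslant \zero$ is the nice twist here---and then feed $B=A^{1/2}$ into \corref{corollary:multiplicability}, applying it twice via the substitutions $C = BA^{1/2}$ and $C=A^{1/2}B$. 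Both arguments ultimately rest on the same two-positivity inequality (yours through \lemref{lemma:cauchy-schwarz}(iii) and the corollary it feeds, the paper's directly), and yours is careful to use only results proved before the lemma, so there is no circularity. What the paper's version buys is economy: one sandwich, one Schwarz application, no square-root bookkeeping. What yours buys is the strictly stronger intermediate conclusion that $\Phi^*$ annihilates $A^{1/2}$ as well---indeed that $\Phi^*(C A^{1/2}) = \Phi^*(A^{1/2} C) = \zero$ for all $C$, so the positive annihilator of an operation is closed under square roots---and it stays entirely within the sesquilinear-form formalism the paper has set up, as your closing remark bypassing the corollary via Cauchy--Schwarz makes explicit.
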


\begin{proof}
First, let us note that for any $B \in \lop(\kk)$, we have 
\begin{align*}
\zero \leqslant \Phi^*(A B A ) \leqslant \| B\| \Phi^*(\sqrt{A}A\sqrt{A}) \leqslant   \| B\| \| A\| \Phi^*(A ) = \zero,
\end{align*}
and so $\Phi^*(A B A )=\zero$. By the two-positivity of CP maps, it follows that for any $B \in \lo(\kk)$ we have 
\begin{align*}
\zero = \Phi^*(A B^* B A ) \geqslant \Phi^*(A B^*) \Phi^*(B A) \geqslant \zero.
\end{align*}
The claim immediately follows.
\end{proof}

\section{Faithful fixed states and von Neuman algebras}\label{app:fixed-point-von-Neumann}
Recall that for channels $\Phi : \trc(\h) \to \trc(\h)$, and their duals $\Phi^* : \lo(\h) \to \lo(\h)$,  the fixed-point sets are defined as 
\begin{align*}
&\ff(\Phi) := \{T \in \trc(\h) : \Phi(T) = T\},  &\ff(\Phi^*) := \{A \in \lo(\h) : \Phi^*(A) = A\}.   
\end{align*}
\begin{lemma}[Lindblad]\label{lemma:Lindblad}  
Assume that $ \ff(\Phi)$ contains a faithful state. Then $\ff(\Phi^*)$ is a von Neumann algebra.
\end{lemma}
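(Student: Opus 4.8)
The plan is to verify the two properties that, together with the linear and $*$-structure already recorded, characterise a von Neumann algebra: closure under multiplication and closure in the weak (ultraweak) operator topology. Since $\ff(\Phi^*)$ is already known to be a self-adjoint linear subspace containing $\one$ (as $\Phi^*$ is unital), the substantive content is to show it is closed under products. The engine for this is the multiplicability result \corref{corollary:multiplicability}, whose hypothesis $\Phi^*(B^*B) = \Phi^*(B^*)\Phi^*(B)$ I will establish for every fixed point $B$ --- and it is precisely here that the faithful fixed state is used.

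First I would fix $B \in \ff(\Phi^*)$; since $\ff(\Phi^*)^* = \ff(\Phi^*)$ we also have $\Phi^*(B^*) = B^*$, so that $\Phi^*(B^*)\Phi^*(B) = B^*B$ and hence, by \eq{eq:sesquilinear-map-defn},
\begin{align*}
\<\< B| B\>\> = \Phi^*(B^*B) - B^*B,
\end{align*}
which is positive by part (iii) of \lemref{lemma:cauchy-schwarz}. The crucial step is to show it vanishes. Let $\rho \in \ff(\Phi)$ be the assumed faithful state. Using the duality $\tr[\Phi^*(X)\rho] = \tr[X\,\Phi(\rho)]$ together with $\Phi(\rho) = \rho$, I would compute $\tr[\rho\,\Phi^*(B^*B)] = \tr[\Phi(\rho)\, B^*B] = \tr[\rho\,B^*B]$, whence $\tr[\rho\,\<\< B|B\>\>] = 0$. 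Writing $\<\< B|B\>\>^{1/2}$ for the positive square root, faithfulness of $\rho$ forces $\<\< B|B\>\>^{1/2} = \zero$, i.e.\ $\<\< B|B\>\> = \zero$, which is exactly the hypothesis of the multiplicability corollary.

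With this in hand, \corref{corollary:multiplicability}(i) yields $\Phi^*(AB) = \Phi^*(A)\, B$ for every $A \in \lo(\h)$; restricting to $A \in \ff(\Phi^*)$ gives $\Phi^*(AB) = AB$, so $AB \in \ff(\Phi^*)$. Thus $\ff(\Phi^*)$ is closed under multiplication and, together with the properties already noted, is a unital $*$-algebra. Finally, to promote this to a von Neumann algebra I would invoke normality: as the Heisenberg dual of the channel $\Phi$ on $\trc(\h)$, the map $\Phi^*$ is ultraweakly continuous, so $\ff(\Phi^*) = \ker(\Phi^* - \mathrm{id})$ is ultraweakly closed, and a weakly closed unital $*$-subalgebra of $\lo(\h)$ is a von Neumann algebra. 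The main obstacle is the single step $\<\< B|B\>\> = \zero$: positivity alone does not suffice, and it is the faithfulness of the fixed state --- converting a vanishing expectation into a vanishing operator --- that drives the argument (and without which the conclusion genuinely fails).
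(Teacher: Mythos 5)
Your proof is correct, and its core engine is exactly the paper's: for a fixed point $B$, positivity of $\<\< B|B\>\> = \Phi^*(B^*B) - B^*B$ (two-positivity/Kadison), vanishing of its expectation in the faithful invariant state, and then \corref{corollary:multiplicability} to get $\Phi^*(AB) = \Phi^*(A)B$ and hence multiplicative closure. Where you genuinely diverge is the last step, upgrading the unital $*$-algebra to a von Neumann algebra. The paper invokes the Kraus-representation characterisation $\ff(\Phi^*) = \{K_i, K_i^*\}'$ (valid once $\ff(\Phi^*)$ is known to be an algebra), so that $\ff(\Phi^*)$ is weakly closed as the commutant of a self-adjoint set; this has the side benefit of an explicit description of the fixed-point set that the paper reuses later (e.g.\ in the L\"uders-instrument analysis, where $\{K_i,K_i^*\}' = \E'$). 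You instead argue from normality: $\Phi^*$, being the Banach adjoint of a channel on $\trc(\h)$, is ultraweakly continuous, so $\ff(\Phi^*) = \ker(\Phi^* - \mathrm{id})$ is ultraweakly closed, and an ultraweakly closed unital $*$-subalgebra equals its bicommutant (Kaplansky density, or the coincidence of weak and ultraweak closures for convex sets). This is a clean and valid alternative that avoids importing the Kraus fixed-point theorem, at the cost of leaning on the bicommutant machinery and yielding no explicit commutant description. One small point done well: your passage from $\tr[\rho\,\<\< B|B\>\>] = 0$ to $\<\< B|B\>\> = \zero$ via the square root matches the paper's definition of faithfulness ($\tr[A^*A\rho]=0 \implies A = \zero$) precisely, a step the paper's own proof leaves implicit.
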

\begin{proof}

Suppose $ B \in \ff(\Phi^*)$, and define the operator $\Phi^*(B^*B)- \Phi^*(B^*) \Phi^*(B)=\Phi^*(B^*B)-B^*B$, which is positive due to the two-positivity of CP maps. Let $\ff(\Phi)$ contain a faithful state $\omega$. Then we have 
\begin{eqnarray*}
\tr[\omega(\Phi^*(B^*B)-B^*B)]=\tr[\omega(B^*B-B^*B)]=0.
\end{eqnarray*}
But since $\omega$ is faithful and $\Phi^*(B^*B)-B^*B$ is positive, $\tr[\omega(\Phi^*(B^*B)-B^*B)] = 0$  implies that $\Phi^*(B^*B) = B^*B$.  \corref{corollary:multiplicability} therefore implies that for all $A\in\lo(\h)$, 
\begin{eqnarray*}
\Phi^*( A B) = \Phi^*(A) B. 
\end{eqnarray*}
Therefore, if $A\in \ff(\Phi^*)$, then
$\Phi^*( A B) = A B$, and so  $\ff(\Phi^*)$ is closed under  multiplication and is therefore a $*$-algebra.
Finally, if $\ff(\Phi^*)$ is an algebra, then $\ff(\Phi^*) = \{K_i, K_i^*\}' := \{A \in \lo(\h) : [K_i, A] = [K_i^*, A] = \zero \, \forall i\}$, with $\{K_i\}$  any Kraus representation of $\Phi$ \cite{Kraus1983}, making $\ff(\Phi^*)$ a von Neumann algebra (as the commutant of a self-adjoint subset of $\lo(\h)$) \cite{Bratteli1998}.
\end{proof}

If $\ff(\Phi^*)$ is a von Neumann algebra, it holds that for any  self-adjoint operator $A \in \ff(\Phi^*)$,  the spectral measure of $A$ is also contained in $\ff(\Phi^*)$. In the case that $A$ has a discrete spectrum, i.e.,  $A = \sum_n \lambda_n P_n$, this implies that   $\{P_n\} \subset \ff(\Phi^*)$.

\section{Fixed points of instrument channels}\label{app:instrument-fixed-points}

Here we prove a useful result regarding the fixed-point structure of the $\E$-channel $\ii^*_\xx$, describing a non-selective measurement of an observable $\E$,  which we shall use in several places in this paper.
\begin{lemma}\label{lemma:fixed-points-instrument}
Let  $\ii$ be an instrument compatible with an observable $\E$ acting in $\hs$. The following hold:
\begin{enumerate} [(i)]
    \item If $\E$ is sharp, then $\ff(\ii^*_\xx) \subset \ii_\xx^*(\lo(\hs)) \subset \E'$. 
    \item If $\ff(\ii^*_\xx)$ is a von Neumann algebra, then  $\ff(\ii^*_\xx) \subset \E'$.
    \item If $\ii_\xx$ fully conserves a self-adjoint operator $A \in \los(\hs)$, then $A \in \E'$.
\end{enumerate}

\end{lemma}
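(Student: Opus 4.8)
My plan is to handle (i) by an outcome-wise multiplicativity argument, and to reduce (ii) and (iii) to a single statement about the vanishing of defects of the sesquilinear map. For (i), the inclusion $\ff(\ii^*_\xx)\subset\ii^*_\xx(\lo(\hs))$ is immediate, since $A\in\ff(\ii^*_\xx)$ means $A=\ii^*_\xx(A)$; the content is the second inclusion. I would argue outcome by outcome: each $\ii^*_x$ is an operation with $\ii^*_x(\onesys)=\E(x)=:P_x$, and sharpness of $\E$ makes $P_x$ a projection. Applying \corref{corollary:multiplicability} to the operation $\ii^*_x$ with the element $\onesys$ — whose hypothesis reduces to $P_x=P_x^2$ — gives $\ii^*_x(A)=\ii^*_x(A)P_x$ from its part (i) and $\ii^*_x(A)=P_x\ii^*_x(A)$ from its part (ii), so $\ii^*_x(A)=P_x\ii^*_x(A)P_x$ lives in the $x$-th corner. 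Summing over $x$ shows $\ii^*_\xx(A)=\sum_xP_x\ii^*_x(A)P_x$ is block-diagonal with respect to $\{P_x\}$, and $P_yP_x=\delta_{xy}P_x$ then gives $[\ii^*_\xx(A),P_y]=\zero$ for every $y$, i.e. $\ii^*_\xx(A)\in\E'$.

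The heart of (ii) and (iii) is one implication I would prove once and for all: if $A\in\lo(\hs)$ satisfies $\<\<A|A\>\>=\<\<A^*|A^*\>\>=\zero$ for the map $\ii^*_\xx$, then $A\in\E'$. To prove it I would use the Stinespring form of \eq{eq:stinespring-sesquilinear} for the channel $\ii_\xx$, taking the dilation isometry $V:\hs\to\hs\otimes\kk'$ adapted to the instrument, namely $V\psi=\sum_{x,i}K_{x,i}\psi\otimes\ket{x,i}$ with $\{K_{x,i}\}_i$ a Kraus family of $\ii_x$, so that $\ii^*_\xx(B)=V^*(B\otimes\one)V$ and $\E(x)=\sum_iK_{x,i}^*K_{x,i}$. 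Then $\<\<A|A\>\>=V^*(A^*\otimes\one)(\one-VV^*)(A\otimes\one)V=\zero$ forces $(\one-VV^*)^{1/2}(A\otimes\one)V=\zero$, hence $(A\otimes\one)V=VV^*(A\otimes\one)V=V\ii^*_\xx(A)=VA$; reading off the $\ket{x,i}$ components gives $AK_{x,i}=K_{x,i}A$ for all $x,i$. The same argument applied to $A^*$ (using $\<\<A^*|A^*\>\>=\zero$) gives $[A^*,K_{x,i}]=\zero$, i.e. $[A,K_{x,i}^*]=\zero$, and therefore $[A,\E(x)]=\sum_i[A,K_{x,i}^*K_{x,i}]=\zero$.

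Both claims then reduce to verifying the defect hypotheses. For (iii), \propref{prop:conservation-multiplication} turns full conservation of the self-adjoint $A$ into $\ii^*_\xx(A)=A$ and $\ii^*_\xx(A^2)=A^2$, so $\<\<A|A\>\>=\ii^*_\xx(A^2)-\ii^*_\xx(A)^2=\zero$, and self-adjointness identifies the $A^*$-defect with it. For (ii), if $\ff(\ii^*_\xx)$ is a von Neumann algebra and $A\in\ff(\ii^*_\xx)$, then $A^*A,AA^*\in\ff(\ii^*_\xx)$, whence $\ii^*_\xx(A^*A)=A^*A=\ii^*_\xx(A^*)\ii^*_\xx(A)$ and the analogous identity with $A,A^*$ interchanged, giving both $\<\<A|A\>\>=\zero$ and $\<\<A^*|A^*\>\>=\zero$.

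The step I expect to be the genuine obstacle is bridging the full channel and the individual operations. The hypotheses of (ii) and (iii) constrain only the $\E$-channel $\ii^*_\xx$, whereas each effect $\E(x)=\ii^*_x(\onesys)$ is determined by the per-outcome operation $\ii^*_x$; and the outcome-resolved defects $\<\<A|A\>\>_{\ii^*_x}$ are \emph{not} controlled by the global defect (their difference from it is sign-indefinite), so global multiplicativity such as $\ii^*_\xx(AB)=A\ii^*_\xx(B)$ does not by itself reach $\E$. The device that closes this gap is the outcome-adapted dilation above: once the global defect is converted into commutation of $A$ with every Kraus operator $K_{x,i}$, the per-outcome information reappears through $\E(x)=\sum_iK_{x,i}^*K_{x,i}$. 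The only points needing care are the existence of such an outcome-grouped Kraus family for a general discrete instrument — which holds by taking a Kraus decomposition of each $\ii_x$ separately — and the validity of the manipulations in the separable infinite-dimensional setting, both secured by the Stinespring representation already used in \lemref{lemma:cauchy-schwarz}.
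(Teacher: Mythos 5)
Your proposal is correct, and it takes a genuinely different route from the paper's. The paper proves all three parts with a single quantitative estimate: every $\E$-instrument admits a measurement scheme $\mm=(\ha,\xi,\ee,\Z)$, so $\E(x)=\Gamma_\xi^\ee(\onesys\otimes\Z(x))$ and $\ii_\xx^*(A)=\Gamma_\xi^\ee(A\otimes\oneapp)$ are images of \emph{commuting} operators under one channel, and \corref{corollary:Norm-commutator-inequality} gives
\begin{align*}
\|[\E(x),\ii_\xx^*(A)]\| \leqslant \|\E(x)-\E(x)^2\|^{\frac{1}{2}}\Big(\|\ii_\xx^*(A^*A)-\ii_\xx^*(A^*)\ii_\xx^*(A)\|^{\frac{1}{2}} + \|\ii_\xx^*(AA^*)-\ii_\xx^*(A)\ii_\xx^*(A^*)\|^{\frac{1}{2}}\Big),
\end{align*}
after which (i) follows because sharpness kills the first factor, and (ii), (iii) because the hypotheses kill the defect factors and identify $\ii_\xx^*(A)$ with $A$. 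You replace this inequality with exact algebra. Your part (i) --- \corref{corollary:multiplicability} applied outcome-wise to the operation $\ii_x^*$ with $B=\onesys$, whose hypothesis is precisely $\E(x)=\E(x)^2$, yielding $\ii_x^*(A)=\E(x)\ii_x^*(A)\E(x)$ and hence block-diagonality of $\ii_\xx^*(A)$ --- is more elementary than the paper's (no measurement scheme needed) and even yields a stronger per-outcome corner property; the summation over countable $\xx$ goes through in the ultraweak topology, as you anticipate. Your parts (ii)--(iii), via the outcome-grouped Kraus dilation, amount to a localized rerun of the commutant characterization $\ff(\Phi^*)=\{K_i,K_i^*\}'$ invoked in \lemref{lemma:Lindblad}, and your diagnosis of the genuine obstacle --- that the global channel defect does not control the per-outcome defects, and that the outcome-adapted dilation is what carries the global information down to $\E(x)=\sum_i K_{x,i}^*K_{x,i}$ --- is exactly right. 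What the paper's route buys is the displayed bound itself, which it reuses in \propref{prop:quantitative-bound-disturbance} and \thmref{theorem:quantitative-bound-disturbance-WAY}; what yours buys is exactness and independence from measurement schemes.

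One formulation needs repair. The bridging claim as you state it --- that $\<\<A|A\>\>=\<\<A^*|A^*\>\>=\zero$ alone forces $A\in\E'$ --- is false: the step $(A\otimes\one)V=V\ii_\xx^*(A)=VA$ in your own proof silently uses $\ii_\xx^*(A)=A$, which is not among the stated hypotheses. For a counterexample, take $\hs\simeq\co^2$, the sharp observable $\E(x)=|x\>\<x|$ with $x\in\{0,1\}$, the measure-and-prepare instrument $\ii_x(\rho)=\tr[\E(x)\rho]\,\pr{\phi_x}$ with $\phi_0=(|0\>+|1\>)/\sqrt{2}$, $\phi_1=(|0\>-|1\>)/\sqrt{2}$, and the self-adjoint $A=\pr{\phi_0}-\pr{\phi_1}$: then $\ii_\xx^*(A^2)=\onesys=\ii_\xx^*(A)^2$, so both defects vanish, yet $[A,\E(x)]\ne\zero$. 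What the defect conditions alone yield --- and what your intertwining relations, run with $A':=\ii_\xx^*(A)$ in place of $A$ on the right (i.e., $AK_{x,i}=K_{x,i}A'$ and $K_{x,i}^*A=A'K_{x,i}^*$), actually prove --- is $[\E(x),\ii_\xx^*(A)]=\zero$, consistent with the example, where $\ii_\xx^*(A)=|0\>\<0|-|1\>\<1|\in\E'$; this weaker conclusion is precisely the content of the paper's inequality. Since both of your applications supply the missing hypothesis ($A\in\ff(\ii_\xx^*)$ by assumption in (ii); $\ii_\xx^*(A)=A$ is part of full conservation in (iii)), the defect is purely one of formulation: add $A\in\ff(\ii_\xx^*)$ to the bridging claim and your argument is complete.
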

\begin{proof}

All $\E$-compatible instruments $\ii$ admit a measurement scheme  $\mm:= (\ha, \xi, \ee, \Z)$. Therefore, by the channel $\Gamma_\xi^\ee$ defined in \eq{eq:Gamma-U}, we may write  $[\E(x), \ii^*_\xx(A)] = [\Gamma_\xi^\ee(\onesys \otimes \Z(x)), \Gamma_\xi^\ee(A \otimes \oneapp)]$. Since $[\onesys \otimes \Z(x), A \otimes \oneapp]=\zero$, and $\Z(x)$ are positive operators,  then by the sesquilinear mapping $\langle \langle A|B \rangle \rangle :=  \Gamma_\xi^\ee(A^*B) - \Gamma_\xi^\ee(A^*) \Gamma_\xi^\ee(B)$ and \corref{corollary:Norm-commutator-inequality} we obtain
\begin{align}\label{eq:lemma-fixed-point-intro-1}
\|[\E(x), \ii^*_\xx(A)] \|  &\leqslant  \| \<\< \onesys \otimes \Z(x)| \onesys \otimes \Z(x)\>\>\|^{\frac{1}{2}} \bigg( \|\<\<A \otimes \oneapp|A \otimes \oneapp\>\>\|^{\frac{1}{2}} + \|\<\<A^* \otimes \oneapp|A^* \otimes \oneapp\>\>\|^{\frac{1}{2}}\bigg).
\end{align}
Since $\Z(x)$ are effects, it follows that 
\begin{align*}
  \<\<\onesys \otimes \Z(x)|\onesys \otimes \Z(x)\>\> = \Gamma_\xi^\ee(\onesys \otimes \Z(x)^2) - \Gamma_\xi^\ee(\onesys \otimes \Z(x))^2 \leqslant \Gamma_\xi^\ee(\onesys \otimes \Z(x)) - \Gamma_\xi^\ee(\onesys \otimes \Z(x))^2=  \E(x) - \E(x)^2.  
\end{align*} 
On the other hand, we have    $\<\<A \otimes \oneapp|A \otimes \oneapp\>\> =  \ii^*_\xx(A^* A ) - \ii^*_\xx( A^*)\ii^*_\xx(A) $ and $\<\<A^* \otimes \oneapp|A^* \otimes \oneapp\>\> = \ii^*_\xx(A A^*) - \ii^*_\xx(A) \ii^*_\xx( A ^*)$.  We thus obtain from \eq{eq:lemma-fixed-point-intro-1} the bound
\begin{align}\label{eq:commutation-effect-fixed-point-sharp-or-algebra}
\|[\E(x), \ii^*_\xx(A)] \|   \leqslant \| \E(x) - \E(x)^2\|^{\frac{1}{2}}\bigg(   \|\ii^*_\xx(A^* A) - \ii^*_\xx(A^*)\ii^*_\xx(A) \|^{\frac{1}{2}} + \|\ii^*_\xx(A A^*) - \ii^*_\xx(A)\ii^*_\xx(A^*) \|^{\frac{1}{2}} \bigg).
\end{align}
Now we may prove (i). If $\E$ is sharp, then the upper bound of \eq{eq:commutation-effect-fixed-point-sharp-or-algebra} vanishes and so for all $A \in \lo(\hs)$, $\ii_\xx^*(A) \in \E'$. As such, $\ii_\xx^*(\lo(\hs))\subset \E'$.  That  $\ff(\ii_\xx^*) \subset \ii_\xx^*(\lo(\hs))$ is trivial.  

Now we prove (ii).  Assume that $A \in \ff(\ii^*_\xx)$, which implies that $A^* \in \ff(\ii^*_\xx)$. But if $\ff(\ii^*_\xx)$ is a von Neumann algebra, this implies that $A^*A, A A^*  \in \ff(\ii^*_\xx)$, and so the upper bound of  \eq{eq:commutation-effect-fixed-point-sharp-or-algebra} vanishes. Consequently, we see that   for all $A \in \lo(\hs)$, $A \in \ff(\ii_\xx^*) \implies A \in \E'$, which implies that $\ff(\ii_\xx^*) \subset \E'$. 

Finally, let us prove (iii). Let $A$ be a self-adjoint operator, and assume that $\ii_\xx$ fully conserves $A$. By \defref{defn:conservation-law}  it holds that $\ii_\xx^*(A^k) = A^k$ for $k=1,2$, and so once again the upper bound of \eq{eq:commutation-effect-fixed-point-sharp-or-algebra} vanishes, implying that $A \in \E'$.

\end{proof}

\section{Disturbance, commutation, and compatibility}\label{app:disturbance-compatibility}

The pair of observables $\E:=\{\E(x): x\in \xx\}$ and $\F:=\{\F(y) : y \in \yy\}$ acting in $\hs$ are \emph{compatible}, or \emph{jointly measurable},   if  they admit a joint observable  $\G := \{\G(x,y) : (x,y) \in \xx\times \yy\}$ so that
\begin{align}\label{eq:joint-observable-compatible}
     &\sum_{y \in \yy} \G(x,y) = \E(x), &\sum_{x\in \xx} \G(x,y) = \F(y) \qquad \forall \, x\in \xx, y \in \yy.
\end{align}
If $\E$ and $\F$ do not admit a joint observable, then they are \emph{incompatible} \cite{Heinosaari2015}. Now let $\ii$ be an $\E$-compatible instrument, and assume that  $\F \subset \ff(\ii^*_\xx)$. In such a case,  we may choose $\G$ as $\G(x,y) = \ii^*_x(\F(y))$, which satisfies \eq{eq:joint-observable-compatible}. It follows that non-disturbance implies compatibility, and so    for two incompatible observables $\E$ and $\F$,  no $\E$-instrument $\ii$ exists that satisfies $\F \subset \ff(\ii^*_\xx)$. Note that while non-disturbance requires compatibility, compatibility does not guarantee non-disturbance. 
For instance, while any observable is compatible with itself, for every informationally complete observable the fixed-point set of its compatible channel is trivial. Indeed,  the size of the fixed-point set of an $\E$-channel is strongly related to the amount of information given by $\E$
as shown in Ref. \cite{Hamamura-Miyadera}. Furthermore, 
as shown in Ref. \cite{Heinosaari2010}, there exist pairs of compatible observables $\E$ and $\F$ where $\E$ admits an instrument that does not disturb $\F$, but all possible $\F$-instruments necessarily disturb $\E$. This further demonstrates that unlike compatibility, non-disturbance is not symmetric.  

As shown in Ref. \cite{Miyadera2008}, the pair of observables $\E$ and $\F$ are compatible only if 
\begin{align}\label{eq:compatibility-necessary-commutativity}
    \|[\E(x), \F(y)] \| \leqslant 2  \|\E(x) - \E(x)^2 \|^{\frac{1}{2}} \| \F(y) - \F(y)^2 \|^{\frac{1}{2}} \qquad \forall \, x\in \xx, y \in \yy.
\end{align}
Commutation is a sufficient condition for compatibility; if $\E$ commutes with $\F$, then  there is  a joint observable $\G$ with effects $\G(x,y) = \E(x) \F(y) \equiv  (\sqrt{\E(x)} \sqrt{\F(y)})^* (\sqrt{\E(x)} \sqrt{\F(y)})$. On the other hand, if either $\E$ or $\F$ is sharp, in which case the upper bound of \eq{eq:compatibility-necessary-commutativity} vanishes, then commutation is a necessary condition for compatibility  \cite{Lahti2003}. For two non-commuting observables to be compatible, therefore, their effects must be sufficiently unsharp. We now provide a bound for  the  disturbance of $\F$ by an $\E$-instrument $\ii$, in terms of the commutation between the effects of $\E$ and $\F$.

\begin{prop}\label{prop:quantitative-bound-disturbance}
Consider the observables $\E$ and $\F$ acting in $\hs$, and let $\|\delta(y)\|$ be  the disturbance of the effects of $\F$ caused by  an $\E$-instrument $\ii$. Then for all $x\in \xx$ and $y \in \yy$ it holds that
\begin{align}\label{eq:disturbance-inequality}
\|  [\E(x), \F(y)] \| &\leqslant   \|\delta(y)\| +  2\| \E(x) - \E(x)^2 \|^{\frac{1}{2}} \| \ii^*_\xx(\F(y)^2) - \ii^*_\xx(\F(y))^2 \|^{\frac{1}{2}}.
\end{align}
 If $\F$ is non-disturbed by $\ii$, that is, if $\delta = 0$, then  for all $x\in \xx$ and $y \in \yy$ it holds that
\begin{align}\label{eq:non-disturbance-inequality}
   \|  [\E(x), \F(y)] \| &\leqslant 2\| \E(x) - \E(x)^2 \|^{\frac{1}{2}} \|\ii^*_\xx(\F(y)^2) - \F(y)^2 \|^{\frac{1}{2}}.
\end{align}
\end{prop}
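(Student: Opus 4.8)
The plan is to reduce the commutator $\|[\E(x), \F(y)]\|$ to a commutator involving the \emph{disturbed} effect $\ii^*_\xx(\F(y))$, to which the machinery already assembled in the proof of \lemref{lemma:fixed-points-instrument} applies directly. Using the definition of the disturbance in \eqref{eq:disturbance-quantification}, namely $\delta(y) := \ii^*_\xx(\F(y)) - \F(y)$, I would split
\begin{align*}
[\E(x), \F(y)] = [\E(x), \ii^*_\xx(\F(y))] - [\E(x), \delta(y)],
\end{align*}
and apply the triangle inequality to obtain $\|[\E(x), \F(y)]\| \leqslant \|[\E(x), \ii^*_\xx(\F(y))]\| + \|[\E(x), \delta(y)]\|$. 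The first summand will be bounded by the unsharpness machinery and the second will produce the $\|\delta(y)\|$ contribution.

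For the disturbance term the point requiring care is to obtain the coefficient $1$ rather than the naive $2\|\E(x)\|\,\|\delta(y)\|$. Since $\E(x)$ is an effect, the shifted operator $\E(x) - \tfrac{1}{2}\onesys$ is self-adjoint with $\|\E(x) - \tfrac{1}{2}\onesys\| \leqslant \tfrac{1}{2}$; as the identity commutes with everything, $[\E(x), \delta(y)] = [\E(x) - \tfrac{1}{2}\onesys, \delta(y)]$, so that $\|[\E(x), \delta(y)]\| \leqslant 2\|\E(x) - \tfrac{1}{2}\onesys\|\,\|\delta(y)\| \leqslant \|\delta(y)\|$.

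For the remaining term I would invoke \eqref{eq:commutation-effect-fixed-point-sharp-or-algebra}, which for any $A \in \lo(\hs)$ bounds $\|[\E(x), \ii^*_\xx(A)]\|$ by $\|\E(x) - \E(x)^2\|^{\frac{1}{2}}$ times a sum of two sesquilinear-map norms built from $\ii^*_\xx$. Setting $A = \F(y)$, which is self-adjoint, the two asymmetric terms coincide and collapse to $2\|\ii^*_\xx(\F(y)^2) - \ii^*_\xx(\F(y))^2\|^{\frac{1}{2}}$, which is precisely the second summand of \eqref{eq:disturbance-inequality}. Combining this with the disturbance bound above yields \eqref{eq:disturbance-inequality}. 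The non-disturbance inequality \eqref{eq:non-disturbance-inequality} then follows immediately by specialising to $\delta = 0$: in that case $\ii^*_\xx(\F(y)) = \F(y)$ forces $\|\delta(y)\| = 0$ and $\ii^*_\xx(\F(y))^2 = \F(y)^2$, so \eqref{eq:disturbance-inequality} collapses to the stated form.

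The genuinely new content is minimal, as the proposition is essentially an assembly of \eqref{eq:commutation-effect-fixed-point-sharp-or-algebra} with the triangle inequality; the only real subtlety is the sharp constant on $\|\delta(y)\|$ handled above, together with checking that it is the self-adjointness of the effect $\F(y)$ that lets the two asymmetric terms of \eqref{eq:commutation-effect-fixed-point-sharp-or-algebra} merge into the single unsharpness-type quantity $\|\ii^*_\xx(\F(y)^2) - \ii^*_\xx(\F(y))^2\|$.
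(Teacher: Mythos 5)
Your proposal is correct and takes essentially the same route as the paper: the same decomposition $[\E(x),\F(y)] = [\E(x),\ii^*_\xx(\F(y))] - [\E(x),\delta(y)]$, with the main term bounded exactly by \eq{eq:commutation-effect-fixed-point-sharp-or-algebra} (which the paper re-derives inline via a measurement scheme, the sesquilinear map, and \corref{corollary:Norm-commutator-inequality}), and the two asymmetric terms merging for the self-adjoint $\F(y)$ just as you say. The only cosmetic difference is the disturbance term, where you obtain $\|[\E(x),\delta(y)]\| \leqslant \|\delta(y)\|$ via the elementary shift $\E(x)\mapsto \E(x)-\tfrac{1}{2}\onesys$, while the paper cites the Robertson-type bound $\|[C,B]\|\leqslant \|C\|$ established inside \lemref{lemma:unsharp-disturbance-bound}; both give the same sharp constant.
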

\begin{proof}
 By \eq{eq:disturbance-quantification}, we may  write 
\begin{align}\label{eq:non-disturbance-equality-1}
    [\E(x), \F(y)] = [\delta(y), \E(x)] + [\E(x),\ii^*_\xx(\F(y))].
\end{align}
Every $\E$-instrument $\ii$ admits a measurement scheme $\mm:=(\ha, \xi, \ee, \Z)$.   Using the channel $\Gamma_\xi^\ee$ defined in \eq{eq:Gamma-U}, we may therefore write   $ [\E(x),\ii^*_\xx(\F(y))] = [\Gamma_\xi^\ee(\onesys \otimes \Z(x)), \Gamma_\xi^\ee(\F(y) \otimes \oneapp)]$. 
Given that $[\onesys \otimes \Z(x), \F(y) \otimes \oneapp]=\zero$, then by the sesquilinear mapping $\<\< A| B\>\> := \Gamma_\xi^\ee( A^* B) - \Gamma_\xi^\ee(A^*) \Gamma_\xi^\ee(B)$ and \corref{corollary:Norm-commutator-inequality}, we obtain from \eq{eq:non-disturbance-equality-1} the bound
\begin{align}\label{eq:non-disturbance-equality-2}
   \| [\E(x), \F(y)]\| &\leqslant \|[\delta(y), \E(x)]\| + 2 \|\<\< \onesys \otimes \Z(x) | \onesys \otimes \Z(x) \>\> \|^{\frac{1}{2}} \|\< \< \F(y) \otimes \oneapp | \F(y) \otimes \oneapp \>\> \|^{\frac{1}{2}}.
\end{align}
Since $\E(x)$ is an effect, then as shown in \lemref{lemma:unsharp-disturbance-bound} we have $ \|[\delta(y), \E(x)]\| \leqslant \| \delta(y) \|$. As shown in \lemref{lemma:fixed-points-instrument}, we have    $ \<\< \onesys \otimes \Z(x) | \onesys \otimes \Z(x) \>\> \leqslant   \E(x) - \E(x)^2$ and  $\< \< \F(y) \otimes \oneapp | \F(y) \otimes \oneapp \>\> =  \ii^*_\xx(\F(y)^2) - \ii^*_\xx(\F(y))^2$. We therefore obtain from \eq{eq:non-disturbance-equality-2} the bound given in \eq{eq:disturbance-inequality}. If $\F$ is non-disturbed by $\ii$, then $\|\delta(y)\| = 0$ and $\ii_\xx^*(\F(y))^2 = \F(y)^2$ for all $y$.  We thus arrive at   \eq{eq:non-disturbance-inequality}. 
\end{proof}

We see that when $\E$ commutes with $\F$ the lower bound of \eq{eq:non-disturbance-inequality} vanishes, in which case  \propref{prop:quantitative-bound-disturbance} does not prohibit non-disturbance. Indeed,  in the case of commuting observables there always exists a non-disturbing instrument; since $\E' \subset \ff({\ii^L_\xx}^*)$ always holds, where $\ii^L$ is the L\"uders $\E$-instrument defined in \eq{eq:luders}, then a L\"uders measurement of $\E$ is guaranteed not to disturb any $\F$ commuting with $\E$ \cite{Busch1998}.  On the other hand,  if $\E$ does not commute with $\F$, then \propref{prop:quantitative-bound-disturbance} allows us to obtain a lower bound for the disturbance that  results given any $\E$-compatible instrument, determined only by the unsharpness and non-commutation of $\E$ and $\F$:
\begin{corollary}\label{corollary:minimum-error-bound-sharpness}
Consider the setup of \propref{prop:quantitative-bound-disturbance}. For all $x\in \xx$ and $y\in \yy$, it also holds that
\begin{align}\label{eq:disturbance-inequality-observables}
\|  [\E(x), \F(y)] \| & \leqslant   \|\delta(y)\| +  2\| \E(x) - \E(x)^2 \|^{\frac{1}{2}} \bigg(2 \|\delta(y)\| + \|\F(y) - \F(y)^2 \|\bigg)^{\frac{1}{2}} .
\end{align}
\end{corollary}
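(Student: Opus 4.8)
The plan is to obtain this corollary as an immediate combination of \propref{prop:quantitative-bound-disturbance} with \lemref{lemma:unsharp-disturbance-bound}. The bound \eqref{eq:disturbance-inequality} already controls $\|[\E(x),\F(y)]\|$ in terms of $\|\delta(y)\|$, the unsharpness $\|\E(x)-\E(x)^2\|$, and the single remaining quantity $\|\ii^*_\xx(\F(y)^2) - \ii^*_\xx(\F(y))^2\|^{1/2}$. The whole task, therefore, is to re-express this last factor using only the disturbance $\|\delta(y)\|$ and the intrinsic unsharpness $\|\F(y)-\F(y)^2\|$ of the target effect.

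First I would apply \lemref{lemma:unsharp-disturbance-bound} to the $\E$-channel $\ii^*_\xx$ (which is a channel, hence in particular an operation), choosing the effect $A = \F(y) \in \lop(\hs)$ and, crucially, the \emph{comparison} effect $B = \F(y) \in \lop(\hs)$ as well. With this choice the lemma yields
\begin{align*}
\|\ii^*_\xx(\F(y)^2) - \ii^*_\xx(\F(y))^2\| \leqslant 2\|\ii^*_\xx(\F(y)) - \F(y)\| + \|\F(y) - \F(y)^2\|.
\end{align*}
Next I would recall from the definition \eqref{eq:disturbance-quantification} that $\delta(y) = \ii^*_\xx(\F(y)) - \F(y)$, so that $\|\ii^*_\xx(\F(y)) - \F(y)\| = \|\delta(y)\|$, giving
\begin{align*}
\|\ii^*_\xx(\F(y)^2) - \ii^*_\xx(\F(y))^2\| \leqslant 2\|\delta(y)\| + \|\F(y) - \F(y)^2\|.
\end{align*}

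Finally I would substitute this estimate into the right-hand side of \eqref{eq:disturbance-inequality}, which immediately produces the claimed inequality \eqref{eq:disturbance-inequality-observables}. There is no real obstacle here: the argument is purely a chaining of two already-established bounds, and the only genuine idea is the choice $B = \F(y)$ in \lemref{lemma:unsharp-disturbance-bound}, which trades the ``disturbance of $\F(y)^2$'' for the disturbance of $\F(y)$ together with the fixed, instrument-independent unsharpness of $\F(y)$. The payoff is conceptual rather than technical: the resulting bound depends only on the disturbance and on properties of $\E$ and $\F$ themselves, so in particular setting $\delta(y)=0$ recovers the compatibility-type necessary condition \eqref{eq:compatibility-necessary-commutativity} for non-disturbance.
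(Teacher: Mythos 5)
Your proposal is correct and coincides with the paper's own proof: both apply \lemref{lemma:unsharp-disturbance-bound} to the channel $\ii^*_\xx$ with $A=B=\F(y)$ to get $\|\ii^*_\xx(\F(y)^2) - \ii^*_\xx(\F(y))^2\| \leqslant 2\|\delta(y)\| + \|\F(y)-\F(y)^2\|$, and then substitute this into \eq{eq:disturbance-inequality}. Nothing is missing.
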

\begin{proof}
Since $\F(y)$  are effects and $\ii^*_\xx$ is a channel, then by \lemref{lemma:unsharp-disturbance-bound} we have $\|\ii^*_\xx(\F(y)^2) - \ii^*_\xx(\F(y))^2 \| \leqslant 2 \|\delta(y)\| + \|\F(y) - \F(y)^2 \|$. As such,  \eq{eq:disturbance-inequality-observables} is obtained directly from \eq{eq:disturbance-inequality}.
\end{proof}
 Note that while \corref{corollary:minimum-error-bound-sharpness} provides a lower bound for the disturbance, which is strictly positive whenever either $\E$ or $\F$ is sharp and these observables do not commute,  such a lower bound will differ depending on whether $\E$ or $\F$ is sharp; if $\E$ is sharp, we have $\delta \geqslant \max_{x,y} \|[\E(x), \F(y)] \|$, whereas if $\F$ is sharp but $\E$ is unsharp, the lower bound for the disturbance may be smaller. Let us illustrate this with the following example. Consider a system  $\hs \simeq \co^2$, with the orthonormal basis $\{|0\>, |1\>\}$, and define $|\pm\> := \frac{1}{\sqrt{2}}(|0\> \pm |1\>)$. Now consider a pair of binary observables   $\A=\{\A(a): a=0,1\}$ and $\B_{\lambda}= \{\B_{\lambda}(b) : b = \pm\}$ acting in $\hs$,  defined by $\A(a)=|a\rangle \langle a|$ and $\B_{\lambda} (b)= \lambda |b\rangle \langle b | + (1-\lambda) \frac{\one}{2}$ for some $0\leqslant \lambda \leqslant 1$. It is simple to verify that $\|[\A(a), \B_{\lambda}(b)]\| = \frac{\lambda}{2}$ for any $a=0,1$ and $b=\pm$. Now we may evaluate the disturbance of one of these observables caused by a L\"uders measurement of the other.  The disturbance of $\B_\lambda(b)$ by a  L\"uders measurement of $\A$ reads  $\| \delta(b)\| = \frac{\lambda}{2}$ for each $b$. Since $\A$ is sharp, then by setting $\E = \A$ and $\F = \B_\lambda$, we see that the inequality in \eq{eq:disturbance-inequality-observables} is tight.  On the other hand, the disturbance of $\A(a)$ by a L\"uders measurement of $\B_\lambda$ reads  $\| \delta(a)\| = \frac{1-\sqrt{1-\lambda^2}}{2}$ for each $a$, which is  smaller than $\frac{\lambda}{2}$ for $0<\lambda <1$.

Let us now consider the case of non-disturbance more carefully. First, let us note  that when we set $\|\delta(y)\|=0$, \eq{eq:disturbance-inequality-observables}  reduces to the compatibility bound of \eq{eq:compatibility-necessary-commutativity}, and states that for non-disturbance to be possible when $\E$ and $\F$ do not commute, then both observables must be sufficiently unsharp so as to be compatible. To be sure, compatibility is a necessary condition for non-disturbance, and the fact that \eq{eq:disturbance-inequality-observables} does not contradict the compatibility bound is not surprising. On the other hand, in the case of non-disturbance this bound is also not very informative---it is possible for two observables to be compatible, while a measurement of one still disturbs the other. To gain a better understanding of non-disturbance, let us consider instead  \eq{eq:non-disturbance-inequality}, the upper bound of which is smaller than the upper bound in \eq{eq:disturbance-inequality-observables} when we set $\|\delta(y)\|=0$, and vanishes if both $\F \subset \ff(\ii^*_\xx)$ and $\F^2 := \{\F(y)^2 : y \in \yy\} \subset \ff(\ii^*_\xx)$ hold. We immediately see that while unsharpness of both $\E$ and $\F$ is necessary for non-disturbance when $\E$ and $\F$ do not commute, it is not sufficient; as shown in Ref. \cite{Heinosaari2010} there are at least two classes of unsharp observables $\F$ where given \emph{any} instrument $\ii$, i.e.,  including instruments that measure an unsharp observable $\E$ that does not commute with $\F$ but is still compatible with $\F$, it holds that $\F \subset \ff(\ii^*_\xx)$ guarantees    $\F^2 \subset \ff(\ii^*_\xx)$:  if $\F$ is a rank-1 observable, or   if $\F$ is an ``informationally equivalent  coarse-graining'' of a sharp observable. 

Let us consider the first option. If $\F$ is a rank-1 observable, then all the effects of $\F$ may be written as $\F(y) = \lambda_y P_y$, where $P_y$ is a rank-1 projection operator and $\lambda_y \in (0,1]$. As shown in \cite{Pellonpaa2014a}, all observables $\E$ that are compatible with a rank-1 observable $\F$ are the post-processings  of $\F$, that is, the effects of $\E$ may be written as $\E(x) = \sum_y p(x|y) \F(y)$, where $\{p(x|y)\}$ is a family of non-negative numbers satisfying $\sum_x p(x|y) =1$ for all $y$. It follows that so long as $\F$ is a non-commutative rank-1 observable, then there exists an unsharp observable $\E$ that is compatible with $\F$ but does not commute with $\F$.  But note that $\ii^*_\xx(\F(y)) = \F(y)$ if and only if $\ii^*_\xx(P_y) = P_y$. As such,   $\ii^*_\xx(\F(y)^2) = \lambda_y^2\ii^*_\xx(P_y) = \lambda_y^2 P_y = \F(y)^2$. It follows that $\F$ will be non-disturbed by  an $\E$-compatible instrument $\ii$  only if $\E$ commutes with $\F$. 

Let us now consider the second option. We say that $\F$ is an informationally equivalent coarse-graining of a sharp observable $\G:= \{\G(z)  : z \in \zz\}$ if there exists an invertible 
 stochastic matrix $M$   such that 
\begin{align*}
    &\F(y) = \sum_z M_{y, z} \G(z), & \G(z) = \sum_y M_{z, y}^{-1} \F(y).
\end{align*}
$\F$ and $\G$ are informationally equivalent because   a  measurement of $\F$ produces  different probability distributions for two states $\rho_1$ and $\rho_2$ if and only if these states produce different probability distributions  given a measurement of $\G$.
Since $\G$ is sharp, then  $\F(y)^2 = \sum_z M_{y,z}^2 \G(z)$. Now assume that $\F \subset \ff(\ii^*_\xx)$. It is simple to verify that this implies  $\G \subset \ff(\ii^*_\xx)$. Therefore, we have  $\ii^*_\xx(\F(y)^2) = \sum_z M_{y,z}^2 \ii^*_\xx(\G(z)) = \sum_z M_{y,z}^2 \G(z) = \F(y)^2$. Once again, $\F$ will be non-disturbed by an $\E$-compatible instrument $\ii$ only if $\E$ commutes with $\F$.  

Both of the above examples offer a very simple interpretation in terms of compatibility. If $\F$ is a rank-1 observable, then non-disturbance of $\F$ implies non-disturbance of sharp rank-1 effects $P_y$. Since non-disturbance requires compatibility, this implies that $\E$ must commute with all $P_y$, and hence with $\F$. 
On the other hand, if $\F$ is a classical coarse-graining of a sharp observable $\G$, then non-disturbance of $\F$ implies non-disturbance of $\G$, and by compatibility $\E$ must commute with $\G$.   Since the effects of  $\F$ are constructed as a mixture of the (projective) effects of $\G$, this concludes that $\E$ must commute with $\F$.

\section{Properties of repeatable instruments}\label{app:repeatable}

In this section, we prove a series of useful results regarding the structure of repeatable instruments, and the measurement schemes that implement them.

\begin{prop}\label{prop:repeatable-instrument-identity}
Let $\mm:=(\ha, \xi, \ee, \Z)$ be a measurement scheme for an $\E$-compatible instrument $\ii$ acting in $\hs$. If  $\ii$ is repeatable, then the following hold: 
\begin{enumerate}[(i)]
     \item For all $x\in \xx$ and $n \in \nat$, it holds that  $\E(x) = \Gamma_\xi^\ee(\E(x)^n \otimes \oneapp) =\Gamma_\xi^\ee(\onesys \otimes \Z(x)^n)$.
    \item For all $x\in \xx$, it holds that $\E(x)$ and  $ \Z(x)$ have $1$ as an eigenvalue, and so  there exist projection operators $\P(x) \in \lop(\hs)$ and $\Q(x) \in \lop(\ha)$ which project onto the eigenvalue-1 eigenspaces of $\E(x)$ and $\Z(x)$, respectively. 
    
    \item For all $x, y \in \xx$, it holds that $\P(x) \E(y) = \P(x) \P(y) = \delta_{x,y} \P(x)$ and $\Q(x) \Z(y) = \Q(x) \Q(y) = \delta_{x,y} \Q(x)$.

\item For all $x \in \xx$ and $A \in \lo(\hs\otimes \ha)$, it holds that $\Gamma_\xi^\ee(\P^c(x) \otimes \oneapp A) = \Gamma_\xi^\ee(A \P^c(x) \otimes \oneapp)= \zero$ and $ \Gamma_\xi^\ee(\onesys \otimes \Q^c(x)A) = \Gamma_\xi^\ee(A\onesys \otimes \Q^c(x)) = \zero$, where  $\P^c(x) := \E(x) - \P(x)$ and $\Q^c(x) := \Z(x) - \Q(x)$.

 \item For all $A \in \lo(\hs\otimes \ha)$, it holds that $\Gamma_\xi^\ee(A) =  \Gamma_\xi^\ee(\R  A \R )$,  where $\R  := \sum_{x\in \xx} \P(x)\otimes \Q(x)$.

 \item For all $x\in \xx$, $A \in \lo(\hs)$, and $B \in \lo(\ha)$, it holds that $\ii_\xx^*(\E(x) A)  = \ii_\xx^*(A \E(x)) = \ii_\xx^*(\P(x) A \P(x))$ and  $\Lambda^*(\Z(x) B) = \Lambda^*(B \Z(x) ) =  \Lambda^*(\Q(x) B \Q(x))$.
    
 \item For all $x \in \xx$ and $A \in \lo(\hs)$, it holds that $\ii_x^*(A) =  \ii_x^*(\P(x) A \P(x))$.

\end{enumerate}
\end{prop}
\begin{proof}
\begin{enumerate}[(i):]



\item The repeatability condition implies that for all $x\in \xx$, it holds that $\E(x) = \ii_x^*(\E(x)) = \Gamma_\xi^\ee(\E(x)\otimes \Z(x))$. It follows that for any state $\rho \in \s(\hs)$, we have 
\begin{align*}
\tr[\rho \E(x)] &=\tr[\rho \Gamma_\xi^\ee(\E(x)\otimes \Z(x))] \\
&\leqslant  \tr[\rho \Gamma_\xi^\ee(\E(x)^2\otimes \oneapp)]^{\frac{1}{2}} \tr[\rho\Gamma_\xi^\ee(\onesys\otimes \Z(x)^2)]^{\frac{1}{2}} \\
&\leqslant  \tr[\rho\Gamma_\xi^\ee(\E(x)\otimes \oneapp)]^{\frac{1}{2}} \tr[\rho\Gamma_\xi^\ee(\onesys \otimes \Z(x))]^{\frac{1}{2}} \\
&=\tr[\rho \E(x)]. 
\end{align*}
Here, the second line follows from the Cauchy-Schwarz inequality,  the 
third line follows from the fact that $\E(x)$ and $\Z(x)$ are effects and so $\E(x)^2 \leqslant \E(x)$ and $\Z(x)^2 \leqslant  \Z(x)$, and the final line follows from the fact that repeatability implies first-kindness and that $\mm$ is a measurement scheme for $\E$. As the second inequality must be an equality, we thus have  $\E(x) = \Gamma_\xi^\ee(\E(x)^n \otimes \oneapp) =\Gamma_\xi^\ee(\onesys \otimes \Z(x)^n)$ for $n=1,2$. To show that the relations hold for all $n\in \nat$, it suffices to show that for all $\rho$, the Cauchy-Schwarz inequality and the above arguments implies
\begin{align*}
0 \leqslant \tr[\rho\Gamma_\xi^\ee((\E(x)^n - \E(x)^{n+1})\otimes \oneapp )] 
&\leqslant  \tr[\rho\Gamma_\xi^\ee(\E(x)^{2(n-1)}\otimes \oneapp)]^{\frac{1}{2}}\tr[\rho\Gamma_\xi^\ee((\E(x) -\E(x)^2)^2 \otimes \oneapp)]^{\frac{1}{2}} \\
&\leqslant  \tr[\rho\Gamma_\xi^\ee (\E(x)^{2(n-1)}\otimes \oneapp)]^{\frac{1}{2}} \tr[\rho \Gamma_\xi^\ee ((\E(x)-\E(x)^2)\otimes \oneapp)]^{\frac{1}{2}} = 0, 
\end{align*}
and so it holds that $\Gamma_\xi^\ee((\E(x)^n - \E(x)^{n+1})\otimes \oneapp )=\zero$. Similar steps show that  $\Gamma_\xi^\ee(\onesys \otimes (\Z(x)^n - \Z(x)^{n+1})) = \zero$. The claims are thus obtained by induction.

\item Note that for any operation $\Phi^* : \lo(\kk) \to \lo(\h)$, it holds that $\|\Phi^*(A)\| \leqslant \| A\|$ for all $A \in \lo(\kk)$. As such, by (i) we have   
$\|\E(x) \| = \|\Gamma_\xi^\ee(\E(x)^2\otimes \oneapp)\| \leqslant \| \E(x)^2\| = \| \E(x) \|^2$. But since $\E(x)$ is an effect it also holds that  $\| \E(x)\| \geqslant  \| \E(x)\| ^2$. It follows that  $\|\E(x)\|$ is either zero or one. As we assume that $\E(x)$ is not vanishing, then $\| \E(x)\| =1$ follows. Similarly, we have $1 = \| \E(x) \| = \|\Gamma_\xi^\ee(\onesys \otimes \Z(x))\| \leqslant \| \Z(x)\|$, and since $\Z(x)$ is an effect, then it must hold that $\|\Z(x)\|=1$. 

Now we shall show that $\E(x)$ has $1$ as an eigenvalue, i.e., there exists a unit-vector $\psi \in \hs$ such that $\E(x) \psi = \psi$. If this is not so, then we would have  $\lim_{n \to \infty} \E(x)^n = \zero $, which would contradict (i). Therefore, there exists a projection operator $\P(x)$ that projects onto  the eigenvalue-1 eigenspace of  $\E(x)$. Similar arguments hold for $\Z(x)$ and $\Q(x)$.

\item 
For each $x$, define $\P^c(x) := \E(x) - \P(x)$. Since $\E(x)$ is an effect and $\P(x)$ projects onto the eigenvalue-1 eigenspace of $\E(x)$,  it trivially holds that $\psi \in \supp(\P(x)) \implies \psi \in \ker(\P^c(x))$. Now, given that  $\psi \in \supp(\P(x))$ implies that $\P(x) \psi = \psi$, and denoting  the null vector in $\hs$ as $\emptyset$, we have
\begin{align*}
   \emptyset& =  (\onesys - \P(x)) \psi \nonumber \\
   &= (\onesys - \E(x)) \psi + \P^c(x) \psi \nonumber \\
    & =(\onesys - \E(x)) \psi  \nonumber \\
    & = \sum_{y \ne x} \E(y) \psi .
\end{align*}
 By positivity of $\E(y)$, the above equation implies that 
\begin{align*}
    \sum_{y \ne x} \<\psi| \E(y) \psi\> = \sum_{y \ne x} \<\sqrt{\E(y)} \psi| \sqrt{\E(y)} \psi\> = 0,
\end{align*}
which can  be satisfied only if $\sqrt{\E(y)} \psi = \emptyset \implies \E(y) \psi = \emptyset$ for all $y \ne x$.  We thus have $\psi \in \supp(\P(x)) \implies \psi \in \ker(\E(y)) \, \forall \, y \ne x$, and so the support of $\P(x)$ must be orthogonal to the support of $\E(y)$ for all $y \ne x$. That $\P(x)$ and $\P(y)$ for $x\ne y$ have orthogonal supports follows trivially. Similar arguments hold for $\Q(x)$, $\Z(y)$,  and $\Q(y)$.

\item  It trivially  holds that   $\| \P^c(x)\| < 1$ and $\| \Q^c(x) \| <1 $. We thus have $\lim_{n \to \infty } \P^c(x)^n = \zero$ and  $\lim_{n \to \infty } \Q^c(x)^n = \zero$. As stated in (iii), the supports of $\P(x)$ and $\P^c(x)$ are orthogonal, and so it holds that $\P(x) \P^c(x) = \P^c(x) \P(x) = \zero$. As such,  for all $n\in \nat$ we have $\E(x)^n = \P(x) + \P^c(x)^n$. Similarly, $\Z(x)^n = \Q(x) + \Q^c(x)^n$.  By (i), it holds that $\Gamma_\xi^\ee(\E(x)^n \otimes \oneapp) = \Gamma_\xi^\ee(\P(x) \otimes \oneapp) + \Gamma_\xi^\ee(\P^c(x)^n \otimes \oneapp) = \E(x)$ for all $n \in \nat$, and so it must hold that $\Gamma_\xi^\ee(\P^c(x) \otimes \oneapp) = \lim_{n \to \infty } \Gamma_\xi^\ee(\P^c(x)^n \otimes \oneapp) =\zero$. Similarly, $\E(x) = \Gamma_\xi^\ee(\onesys \otimes \Q(x)) + \Gamma_\xi^\ee(\onesys \otimes \Q^c(x)^n)$ for all $n \in \nat$ implies that $\Gamma_\xi^\ee(\onesys \otimes \Q^c(x)) = \zero$. The claim follows from  \lemref{lemma:operation-annihilation}.

\item First, let us note that by repeatability, it holds that $\Gamma_\xi^\ee(\E(x) \otimes \Z(y)) = \ii_y^*(\E(x)) = \zero$ for all $x\ne y$. By \lemref{lemma:operation-annihilation}, it follows that $\Gamma_\xi^\ee(\E(x) \otimes \Z(y) A) = \Gamma_\xi^\ee(A\E(x) \otimes \Z(y)) = \zero$ holds for all $A$ and $x \ne y$. We may therefore write
\begin{align*}
\Gamma_\xi^\ee(A) &= \sum_{x,x',y,y'} \Gamma_\xi^\ee(\E(x)\otimes \Z(x') A \E(y) \otimes \Z(y')) \\
& = \sum_{x,y} \Gamma_\xi^\ee(\E(x)\otimes \Z(x) A \E(y)\otimes \Z(y)) \\
&= \sum_{x,y} \Gamma_\xi^\ee(\P(x)\otimes \Q(x) A \P(y)\otimes \Q(y)) = \Gamma_\xi^\ee(\R  A \R ).    
\end{align*}
In the final line, we have used the fact that $\E(x) = \P(x) + \P^c(x)$ and $\Z(x) = \Q(x) + \Q^c(x)$, together with (iv).

\item We may write
\begin{align*}
\ii_\xx^*(\E(x) A) & =  \Gamma_\xi^\ee(\R  ( \E(x) A \otimes \oneapp ) \R ) \\
&= \Gamma_\xi^\ee(\P(x) A \P(x) \otimes \Q(x)) \\
& = \Gamma_\xi^\ee(\P(x) A \P(x) \otimes \onesys) - \Gamma_\xi^\ee(\P(x) A \P(x) \otimes \Q(x)^\perp) \\
& = \Gamma_\xi^\ee(\P(x) A \P(x) \otimes \onesys) = \ii_\xx^*(\P(x) A \P(x)).
\end{align*}
The first line follows from (v), and the third line follows from the definition $\Q(x)^\perp := \onesys - \Q(x)$. The final line is obtained by (v) and noting that $(\P(x) A \P(x) \otimes \Q(x)^\perp) \R  = \P(x) A \P(x) \otimes \Q(x)^\perp \Q(x) = \zero$. The relation $\ii_\xx^*(A \E(x)) = \ii_\xx^*(\P(x) A \P(x))$ holding for all $A$ trivially follows from above and by observing that $\ii_\xx^*(\E(x) A^*)^* = \ii_\xx^*(A\E(x))$ and $\ii_\xx^*(\P(x) A^* \P(x))^* = \ii_\xx^*(\P(x) A \P(x))$.

Similarly, we may write
\begin{align*}
\Lambda^*(\Z(x) B) & =  \Gamma_\xi^\ee(\R  (\onesys \otimes \Z(x) B) \R ) \\
&= \Gamma_\xi^\ee(\P(x)  \otimes \Q(x) B \Q(x)) \\
& =  \Gamma_\xi^\ee(\onesys  \otimes \Q(x) B \Q(x)) - \Gamma_\xi^\ee(\P(x)^\perp  \otimes \Q(x) B \Q(x)) \\
& = \Gamma_\xi^\ee(\onesys  \otimes \Q(x) B \Q(x)) = \Lambda^*(\Q(x) B \Q(x)).
\end{align*}

\item We may write 
\begin{align*}
\ii_x^*(A) &=  \Gamma_\xi^\ee(\R (A \otimes  \Z(x) ) \R ) \\
& = \Gamma_\xi^\ee(\P(x) A \P(x) \otimes \Q(x)) \\
& = \Gamma_\xi^\ee(\P(x) A \P(x) \otimes \Z(x)) - \Gamma_\xi^\ee(\P(x) A \P(x) \otimes \Q^c(x)) \\
& = \Gamma_\xi^\ee(\P(x) A \P(x) \otimes \Z(x)) = \ii_x^*(\P(x) A \P(x)).
\end{align*}
In the first line we have used (v), in the third line we use $\Q(x) = \Z(x) - \Q^c(x)$, and in the final line we use (iv).

\end{enumerate}
\end{proof}

Let us highlight one interesting property of repeatable instruments: if $\ii$ is repeatable, then for all input states $\rho$, the output states will be perfectly distinguishable. For the input state $\rho$, we define the normalised post-measurement states as $\rho_x:= \ii_x(\rho)/\tr[\ii_x(\rho)]$ for any $x$ satisfying $\tr[\ii_x(\rho)]>0$. By item (vii) of the above proposition,  the Schr\"odinger picture operations of a repeatable instrument satisfy $\ii_x(T) = \P(x) \ii_x(T) \P(x)$ for all $x$ and $T \in \trc(\hs)$, and so $\rho_x$ will  have support only in the eigenvalue-1 eigenspace of $\E(x)$. But by item (iii), such eigenvalue-1 eigenspaces are orthogonal, and so it holds that $\rho_x \rho_y = \rho_y \rho_x = \zero$ for all $x\ne y$.

\section{Conservation laws}\label{app:conservation-law}

Recall that by  \defref{defn:average-conservation}  a channel $\Phi$ conserves $N$ on average if $\Phi^*(N)=N$, while by \defref{defn:conservation-law} $\Phi$ fully conserves $N$ if $\Phi^*(N^k)=N^k$ for all $k\in \nat$. We shall now show that  full conservation is in fact equivalent to just the first two moments being conserved, and that is is also equivalent to ``invariance'' of the unitary group generated by $N$ under the action of $\Phi^*$, i.e., that $\Phi^*(e^{\imag t N}) = e^{\imag t N}$ for all $t \in \re$.

\begin{prop}\label{prop:conservation-multiplication}
Let $\Phi : \trc(\h) \to \trc(\h)$ be a channel, and let $N \in \los(\h)$ be a self-adjoint operator. The following statements are equivalent:
\begin{enumerate}[(i)]
    \item $\Phi$ fully conserves $N$.
    \item $\Phi^*(N^k) = N^k$ for $k=1,2$.
    \item $\Phi^*(e^{\imag t N}) = e^{\imag t N}$ for all $t\in \re$.
\end{enumerate}
\end{prop}  
\begin{proof}

(i) $\implies$ (ii): This is trivial. 

(ii) $\implies$ (i):  Assume that $\Phi^*(N^k) = N^k$ for $k=1,2$.   For any $k \geqslant 2$,  \corref{corollary:multiplicability} implies that  $\Phi^*(N^{k+1}) = \Phi^*(N^{k} N)  = \Phi^*(N^k) N$.   The claim follows  by induction. 

(ii) $\implies$ (iii): Define $f(t):= \Phi^*(e^{\imag t N})$. Since $N$ is bounded,  $e^{\imag t N}$ is bounded and strongly continuous, and $\Phi^*$ is a channel, then $f(t)$ is infinitely differentiable. Now assume that $\Phi^*(N^k) = N^k$ for $k=1,2$. It follows that
\begin{align*}
 \frac{d}{dt}f(t) = \imag  \Phi^*( N e^{\imag t N}) = \imag N\Phi^*(e^{itN}) = \imag N f(t),  
\end{align*}
where the second equality follows from   \corref{corollary:multiplicability}. Indeed, by induction we obtain $(d^k/ dt^k) f(t) = \imag ^k N^k f(t)$ for all $k \in \nat$. Since $f(0) = \one$, by Taylor expansion around $t=0$ we observe that
\begin{align*}
\Phi^*(e^{\imag t N}) = \sum_{k=0}^\infty \frac{t^k}{k !} \left(\frac{d^k}{dt^k}f(t)\right) \bigg|_{t=0} =  \sum_{k=0}^\infty \frac{\imag^k t^k}{k !}  N^k =  e^{\imag t N}.
\end{align*}

(iii) $\implies$ (i): Assume that $\Phi^*(e^{\imag t N}) = e^{\imag t N}$ for all $t\in \re$. It follows that 
\begin{align*}
\frac{d^k}{dt^k}\Phi^*(e^{\imag t N}) =  \frac{d^k}{dt^k} e^{\imag t N} \implies  \Phi^*(N^k e^{\imag t N}) =  N^k e^{\imag t N} 
\end{align*}
holds for all $k\in \nat$ and all $t$. Since $e^{\imag t N} =\one$ when $t=0$, it follows that   $\Phi^*(N^k) = N^k$ for all $k$. 
\end{proof}  

A property that channels may enjoy is   ``covariance'' under the action of a unitary group, i.e., that $\Phi^*(e^{\imag t N} A e^{-\imag t N}) = e^{\imag t N} \Phi^*(A) e^{-\imag t N}$ holds for all $t\in \re$ and $A\in \lo(\h)$. While a channel may be coavariant while not invariant---for example, $\Phi^*(\cdot) = \tr[\cdot \omega] \one$ such that $[\omega, N]=\zero$ is covariant but not invariant---we now show that invariance implies covariance.

\begin{corollary}
Let $\Phi : \trc(\h) \to \trc(\h)$ be a channel, and let $N \in \los(\h)$ be a self-adjoint operator. If $\Phi$ fully conserves $N$, then for all $t\in \re$ and $A\in \lo(\h)$ it holds that 
\begin{align*}
    \Phi^*(e^{\imag t N} A e^{-\imag t N}) = e^{\imag t N} \Phi^*(A) e^{-\imag t N}
\end{align*}
\end{corollary}
\begin{proof}
Let us define $V(t) := e^{\imag t N}$ for notational simplicity. By \propref{prop:conservation-multiplication}, full conservation of $N$ by $\Phi$ implies that $\Phi^*(V(t)) = V(t)$ for all $t$. In particular, noting that $V(t)^* = V(-t)$,  this implies that $\Phi^*(V(t) ^* V(t)) - \Phi^*(V(t)^*) \Phi^*(V(t)) = \Phi^*(\one) - \one = \zero$. As such, by \corref{corollary:multiplicability} it follows that 
\begin{align*}
    \Phi^*(V(t) A V(-t)) = V(t) \Phi^*( A V(-t)) = V(t) \Phi^*( A ) V(-t).
\end{align*}
\end{proof}

We note that the condition $\Phi^*(N^k) = N^k$ for $k=1,2$ was taken as a potential definition of conservation \emph{simpliciter} in Ref. \cite{Luo2007}. However, the authors here  conjectured that, in finite dimensions,  the condition $\Phi^*(N^2) = N^2$ may be dropped, and that (in our formulation) both average and full conservation are  equivalent. We shall now address this issue: by a simple counter-example, we shall show that average and full conservation are in fact  not equivalent for general channels, even in finite dimensions. 

Let us consider a system   $\h \simeq \co^3$ with an orthonormal basis $\{|-1\>, |0\>,
|1 \>\}$. Now consider $N=\sum_n n |n\rangle \langle n| \equiv  |1\rangle \langle 1| - |-1\rangle \langle -1|$, and  a channel $\Phi^*$ defined by 
\begin{align*}
\Phi^*(A) = \<1|A|1\> |1\> \<1| +\< -1|A|-1\> |-1\>\<-1| + \< +|A|+\>|0\>\< 0|,
\end{align*}
to hold for all $A \in \lo(\h)$, where we define $|+\> := \frac{1}{\sqrt{2}}(|1\> + |-1\>)$. It is simple to verify that   $\Phi^*(N)=N$, that is, $\Phi$  conserves $N$ on average. However,  $\Phi^*(N^2) = \one \ne N^2$, and so  $\Phi$ does not fully conserve $N$. Since full conservation is equivalent to invariance, then it follows that $\Phi$ is also not invariant. Indeed, we can easily verify that $\Phi$ is not covariant either;  for example, if we choose $A = |+\>\<+|$, then it holds that
\begin{align*}
 \Phi^*(e^{\imag t N} |+\>\<+| e^{-\imag t N}) &= \frac{1}{2} (|1\> \<1| + |-1\>\<-1|) + \cos(t)^2 |0\>\<0|, \\
 e^{\imag t N}\Phi^*( |+\>\<+| )e^{-\imag t N} &= \frac{1}{2} (|1\> \<1| + |-1\>\<-1|) + |0\>\<0|,
\end{align*}
which  coincide only when $t$  is  an integer multiple of $\pi$.

While the above discussion shows that  average conservation is in general a weaker condition than full conservation,  we shall now show that in the special case of unitary channels, average  and full conservation are equivalent: 

\begin{lemma}\label{lemma:conservation-unitary}
Let $\Phi(\cdot) := U(\cdot) U^*$ be a unitary channel, with $U \in \lo(\h)$ a unitary operator, and let $N\in \los(\h)$ be a self-adjoint operator. The following statements are equivalent:
\begin{enumerate}[(i)]
    \item $[U, N]=\zero$.
    \item $\Phi$ conserves $N$ on average.
    \item $\Phi$ fully conserves $N$.
\end{enumerate}
\end{lemma}
\begin{proof}
(i) $\iff$ (ii), (i) $\implies$ (iii),  and (iii) $\implies$ (ii) are trivial. To  show (iii) $\implies$ (i), let us first write 
\begin{align*}
    [U, N]^*[U,N] = \Phi^*(N^2) + N^2 - \Phi^*(N)N - N \Phi^*(N).
\end{align*}
If $\Phi$ fully conserves $N$, then the right hand side vanishes. But since the left hand side is a positive operator, then it holds that $[U, N]=\zero$. Finally, we shall show that (ii) $\implies$ (iii). Since $\Phi$ is unitary, then  $\Phi^*(A^*) \Phi^*(B) = U^* A^* U U^* B U = U^* A^* B U = \Phi^*(A^* B)$ holds for all $A, B\in \lo(\h)$. If $\Phi^*(N) = N$, it follows that  $\Phi^*(N^2) = N^2$. The claim follows from \propref{prop:conservation-multiplication}. 
\end{proof}

\section{Bounds for measurement error under conservation laws}\label{app:error-bounds}

Here we provide quantitative trade-off relations for measurement error under additive conservation laws, both average and full.  
\newpage
\begin{theorem}\label{theorem:measurability} 
Let $\mm:=(\ha, \xi, \ee, \Z)$ be a measurement scheme for an observable acting in $\hs$, and assume that $\ee$  conserves an additive quantity $N = \nsys \otimes \oneapp + \onesys \otimes \napp$ on average, where $\nsys \in \los(\hs)$ and $\napp \in \los(\ha)$. Let $\|\epsilon(x)\|$ be the error in measuring the effects of the target observable $\E$, as defined in \eq{eq:error-equality}. Then for all $x\in \xx$ it holds that
\begin{align}\label{eq:measurability-error-bound}
    \|[\E(x), \nsys] - \Lambda^*([\Z(x), \napp]) \| \leqslant 2  \|\nsys\|\|\epsilon(x)\| + 2 \|\Gamma_\xi^\ee(N^2) - \Gamma_\xi^\ee(N)^2 \|^{\frac{1}{2}} \bigg(2 \|\epsilon(x) \| + \| \E(x) - \E(x)^2 \|\bigg)^{\frac{1}{2}},
\end{align}
where $\Gamma_\xi^\ee$ is the channel defined in \eq{eq:Gamma-U}, and $\Lambda$ is the conjugate channel to $\ii_\xx$ defined in \eq{eq:conjugate-channel}.
\end{theorem}

\begin{proof}
By \eq{eq:error-equality}, we have
\begin{align*}
    [\E(x), \nsys] - \Lambda^*([\Z(x), \napp]) = [\nsys, \epsilon(x)] + [\Lambda^*(\Z(x)), \nsys] - \Lambda^*([\Z(x), \napp]).
\end{align*}
 Since  $N$ is additive, average conservation of $N$ by $\ee$ implies that $\Gamma_\xi^\ee(N) = \Gamma_\xi(N) = \nsys + \tr[\napp \xi] \onesys$. It follows that  for all $A \in \lo(\hs)$,  $[A, N\sub{\s} ] = [A, \Gamma_\xi^\ee(N)]$, and so we may write $[\Lambda^*(\Z(x)), \nsys] = [\Gamma_\xi^\ee(\onesys \otimes \Z(x)), \Gamma_\xi^\ee(N)]$. Moreover, by additivity of $N$ we have   $\Gamma_\xi^\ee([\onesys \otimes \Z(x), N]) = \Gamma_\xi^\ee(\onesys \otimes[ \Z(x), \napp])=\Lambda^*([\Z(x), \napp])$.  We may therefore write
\begin{align}\label{eq:error-quantity-equality-1}
    [\E(x), \nsys] - \Lambda^*([\Z(x), \napp]) = [\nsys, \epsilon(x)] + [\Gamma_\xi^\ee(\onesys \otimes \Z(x)), \Gamma_\xi^\ee(N)] - \Gamma_\xi^\ee([\onesys \otimes \Z(x), N]).
\end{align}
By the sesquilinear mapping $\<\<A|B\>\> = \Gamma_\xi^\ee(A^* B) - \Gamma_\xi^\ee(A^*)\Gamma_\xi^\ee(B)$ and \corref{corollary:Norm-commutator-inequality} we obtain from  \eq{eq:error-quantity-equality-1} the bound 
\begin{align*}
 \| [\E(x), \nsys] - \Lambda^*([\Z(x), \napp])\| &\leqslant \| [\nsys, \epsilon(x)]\|  +  2\|\<\< N | N \>\>\|^{\frac{1}{2}} \|\<\< \onesys \otimes \Z(x) | \onesys \otimes \Z(x) \>\>\|^{\frac{1}{2}}.
\end{align*}
Note that $\|[\nsys,  \epsilon(x) ] \|\leqslant 2 \|N\sub{\s} \|\|\epsilon(x) \| $, and  that  $\<\< N | N \>\> = \Gamma_\xi^\ee(N^2) - \Gamma_\xi^\ee(N)^2$ by definition. Given that   $\onesys\otimes \Z(x)$ and $\E(x)$ are effects and $\Gamma_\xi^\ee$ is a channel, \lemref{lemma:unsharp-disturbance-bound} gives $\|\<\< \onesys \otimes \Z(x) | \onesys \otimes \Z(x) \>\>\| \leqslant 2 \|\epsilon(x) \| + \| \E(x) - \E(x)^2 \|$. We thus obtain the bound given in \eq{eq:measurability-error-bound}.  
\end{proof}

\begin{prop}\label{prop:measurability-Fisher}
Let $\mm:=(\ha, \xi, \ee, \Z)$ be a measurement scheme for an observable acting in $\hs$, and assume that $\ee$ fully conserves an additive quantity $N = \nsys \otimes \oneapp + \onesys \otimes \napp$, where $\nsys \in \los(\hs)$ and $\napp \in \los(\ha)$. Let $\|\epsilon(x)\|$ be the error in measuring the effects of the target observable $\E$, as defined in \eq{eq:error-equality}. Then for all $x\in \xx$ it holds that
\begin{align}\label{eq:measurability-error-bound-Fisher}
    \|[\E(x), \nsys] - \Lambda^*([\Z(x), \napp]) \| \leqslant 2  \|\nsys\|\|\epsilon(x)\| + \frac{1}{2} \qq(\napp, \xi)^{\frac{1}{2}},
\end{align}
where $\qq(\napp, \xi)$ denotes the quantum Fisher information of $\napp$ in the state $\xi$. Additionally, if $\E$ is an extremal observable and $\mm$ is a measurement scheme for $\E$, then for all $x\in \xx$ it holds that
\begin{align}\label{eq:measurability-bound-Fisher-extremal}
    \|[\E(x), \nsys] - \Lambda^*([\Z(x), \napp]) \| \leqslant   \qq(\napp, \xi)^{\frac{1}{2}} \|\E(x) - \E(x)^2 \|^{\frac{1}{2}}.
\end{align}
\end{prop}
\begin{proof}
 Let $\{q_i, \phi_i\}$ be an arbitrary ensemble of unit vectors  that satisfies $\xi = \sum_i q_i \pr{\phi_i}$. We may thus write $\Gamma_\xi^\ee(\cdot) = \sum_i q_i \Gamma_{\phi_i}^\ee(\cdot)$, where $\Gamma_{\phi_i}^\ee(\cdot) \equiv \Gamma_{\pr{\phi_i}}^\ee(\cdot)$.   Given the additivity of $N$ and the  conservation law, we may rewrite \eq{eq:error-quantity-equality-1} as
 \begin{align*}
    [\E(x), \nsys] - \Lambda^*([\Z(x), \napp]) = [\nsys, \epsilon(x)] + \sum_i q_i \bigg([\Gamma_{\phi_i}^\ee(\onesys \otimes \Z(x)), \Gamma_{\phi_i}^\ee(N)] - \Gamma_{\phi_i}^\ee(\onesys \otimes[ \Z(x), \napp])\bigg).
\end{align*}
By the sesquilinear mappings $\<\<A|B\>\>_i := \Gamma_{\phi_i}^\ee(A^* B) - \Gamma_{\phi_i}^\ee(A^*) \Gamma_{\phi_i}^\ee(B)$,  \corref{corollary:Norm-commutator-inequality} and  \lemref{lemma:conservation-variance-condition},  we obtain the bounds
\begin{align*}
    \|[\E(x), \nsys] - \Lambda^*([\Z(x), \napp]) \|     & \leqslant 2\|\nsys\| \|\epsilon(x)\| + 2 \sum_i q_i \var{\napp, \phi_i} ^{\frac{1}{2}} \|\Gamma_{\phi_i}^\ee(\onesys \otimes \Z(x)^2) - \Gamma_{\phi_i}^\ee(\onesys \otimes \Z(x))^2\|^{\frac{1}{2}}.
\end{align*}
Since both $\Z(x)$ and $\Gamma_{\phi_i}^\ee(\onesys \otimes \Z(x))$ are effects, we obtain  
\begin{align*}
\|\Gamma_{\phi_i}^\ee(\onesys \otimes \Z(x)^2) - \Gamma_{\phi_i}^\ee(\onesys \otimes \Z(x))^2\|^{\frac{1}{2}} \leqslant \|\Gamma_{\phi_i}^\ee(\onesys \otimes \Z(x)) - \Gamma_{\phi_i}^\ee(\onesys \otimes \Z(x))^2\|^{\frac{1}{2}} \leqslant 1/2 .   
\end{align*}
We thus arrive at the bound 
\begin{align*}
    \|[\E(x), \nsys] - \Lambda^*([\Z(x), \napp]) \|     & \leqslant 2\|\nsys\| \|\epsilon(x)\| +  \sum_i q_i \var{\napp, \phi_i}^{\frac{1}{2}} \\
    &\leqslant 2\|\nsys\| \|\epsilon(x)\| + \left( \sum_i q_i \var{\napp, \phi_i} \right)^{\frac{1}{2}},
\end{align*}
where the second line follows from the concavity of the square root. By choosing the ensemble $\{q_i, \phi_i\}$ that gives the quantum Fisher information as in \eq{eq:QFI-defn},  we arrive at the bound in \eq{eq:measurability-error-bound-Fisher}.

Now assume that $\E$ is an extremal observable \cite{DAriano2005}. This implies that for any pair of observables $\E^{(1)}$ and $\E^{(2)}$, and any $\lambda \in (0,1)$, the effects of $\E$ can be decomposed as $\E(x) = \lambda \, \E^{(1)}(x) + (1-\lambda) \, \E^{(2)}(x)$ only if $\E = \E^{(1)} = \E^{(2)}$. It follows that if $\mm$ is a measurement scheme for $\E$, that is, if $\epsilon = 0$,  then $\Gamma_{\phi_i}^\ee(\onesys \otimes \Z(x)) = \E(x)$ for all $i$. Consequently, we obtain the bounds $\|\Gamma_{\phi_i}^\ee(\onesys \otimes \Z(x)^2) - \Gamma_{\phi_i}^\ee(\onesys \otimes \Z(x))^2\|^{\frac{1}{2}}  \leqslant \| \E(x) - \E(x)^2\|^{\frac{1}{2}}$ for all $i$, which gives 
\begin{align*}
    \|[\E(x), \nsys] - \Lambda^*([\Z(x), \napp]) \|     &  \leqslant 2\left( \sum_i q_i \var{\napp, \phi_i} \right)^{\frac{1}{2}}\| \E(x) - \E(x)^2\|^{\frac{1}{2}}.
\end{align*}
 Once again choosing the ensemble that gives the quantum Fisher information, we arrive at \eq{eq:measurability-bound-Fisher-extremal}. 

\end{proof}

\section{Bounds for measurement disturbance under conservation laws}\label{app:disturbance-bounds}

Here we provide quantitative trade-off relations for measurement disturbance under additive conservation laws, both average and full. Note that here, the observable that may or may not be disturbed is not necessarily the same observable that is measured by the instrument $\ii$. 

\begin{theorem} \label{theorem:quantitative-bound-disturbance-WAY}
 Let $\mm:=(\ha, \xi, \ee, \Z)$ be a measurement scheme for an instrument $\ii$ acting in $\hs$, and assume that $\ee$  conserves an additive quantity $N = N\sub{\s} \otimes \oneapp + \onesys\otimes N\sub{\aa}$ on average, where $\nsys \in \los(\hs)$ and $\napp \in \los(\ha)$. Let $\|\delta(y)\|$ be the disturbance of  the effects of an observable $\F:= \{\F(y) : y \in \yy\}$ caused by $\ii$, as defined in \eq{eq:disturbance-quantification}. Then for all  $y \in \yy$ it holds that
\begin{align}\label{eq:disturbance-inequality-WAY}
 \| [\F(y), \nsys] - \ii^*_\xx([\F(y), \nsys]) \| & \leqslant 2\|\nsys\| \|\delta(y)\| + 2 \|\Gamma_\xi^\ee(N^2) - \Gamma_\xi^\ee(N)^2 \|^{\frac{1}{2}} \|\ii^*_\xx(\F(y) ^2) - \ii^*_\xx(\F(y))^2  \|^{\frac{1}{2}},
\end{align}
where $\Gamma_\xi^\ee$ is the channel defined in \eq{eq:Gamma-U}.  If $\F$ is non-disturbed by $\ii$, that is, if $\delta = 0$, then for all $y \in \yy$ it holds that
\begin{align}\label{eq:non-disturbance-inequality-WAY}
\| [\F(y), \nsys] - \ii^*_\xx([\F(y), \nsys]) \| & \leqslant  2 \|\Gamma_\xi^\ee(N^2) - \Gamma_\xi^\ee(N)^2 \|^{\frac{1}{2}} \|\ii^*_\xx(\F(y) ^2) - \F(y)^2  \|^{\frac{1}{2}} .
\end{align}
\end{theorem}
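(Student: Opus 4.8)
The plan is to mirror the proof of \propref{prop:quantitative-bound-disturbance}, but now feeding the additive conserved quantity $N$ into the sesquilinear machinery rather than the pointer effect $\Z(x)$. Every $\E$-instrument $\ii$ admits the measurement scheme $\mm$, so I would work throughout with the channel $\Gamma_\xi^\ee$ of \eq{eq:Gamma-U} and apply \corref{corollary:Norm-commutator-inequality} to it, using the sesquilinear mapping $\<\< A| B\>\> := \Gamma_\xi^\ee(A^*B) - \Gamma_\xi^\ee(A^*)\Gamma_\xi^\ee(B)$. The two operators I would insert are $A = N$ and $B = \F(y) \otimes \oneapp$, both self-adjoint, so that $\Gamma_\xi^\ee(B) = \ii^*_\xx(\F(y))$ and the two summands on the right-hand side of \corref{corollary:Norm-commutator-inequality} coincide, producing the overall factor of $2$.

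The first key step is the commutator computation. Since $N = \nsys \otimes \oneapp + \onesys \otimes \napp$ and $\onesys \otimes \napp$ commutes with $\F(y) \otimes \oneapp$, one has $[N, \F(y) \otimes \oneapp] = [\nsys, \F(y)] \otimes \oneapp$, whence $\Gamma_\xi^\ee([N, \F(y)\otimes\oneapp]) = \ii^*_\xx([\nsys, \F(y)])$. The second key step is where the conservation law enters: average conservation $\ee^*(N) = N$ gives $\Gamma_\xi^\ee(N) = \Gamma_\xi(N) = \nsys + \tr[\napp \xi]\,\onesys$, i.e.\ $\nsys$ plus a central scalar, which therefore drops out of any commutator, so that $[\Gamma_\xi^\ee(N), \Gamma_\xi^\ee(\F(y)\otimes\oneapp)] = [\nsys, \ii^*_\xx(\F(y))]$. \corref{corollary:Norm-commutator-inequality} then yields directly
\begin{align*}
\|[\nsys, \ii^*_\xx(\F(y))] - \ii^*_\xx([\nsys, \F(y)])\| \leqslant 2\|\<\< N|N\>\>\|^{\frac{1}{2}} \|\<\< \F(y)\otimes\oneapp|\F(y)\otimes\oneapp\>\>\|^{\frac{1}{2}},
\end{align*}
and, exactly as in \lemref{lemma:fixed-points-instrument}, the two sesquilinear terms are identified with $\Gamma_\xi^\ee(N^2) - \Gamma_\xi^\ee(N)^2$ and $\ii^*_\xx(\F(y)^2) - \ii^*_\xx(\F(y))^2$, which gives the second term of the claimed bound.

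It remains to replace $\ii^*_\xx(\F(y))$ by $\F(y)$ inside the commutator on the left. Writing $\delta(y) = \ii^*_\xx(\F(y)) - \F(y)$ from \eq{eq:disturbance-quantification}, the discrepancy between $[\nsys, \ii^*_\xx(\F(y))]$ and $[\nsys, \F(y)]$ is exactly $[\nsys, \delta(y)]$, which I would bound crudely by $\|[\nsys, \delta(y)]\| \leqslant 2\|\nsys\|\|\delta(y)\|$; together with $[\F(y),\nsys] = -[\nsys,\F(y)]$ and the triangle inequality this produces \eq{eq:disturbance-inequality-WAY}. For the non-disturbing case $\delta = 0$ one has $\ii^*_\xx(\F(y)) = \F(y)$, so the $\|\delta(y)\|$ term vanishes and $\ii^*_\xx(\F(y))^2 = \F(y)^2$, immediately giving \eq{eq:non-disturbance-inequality-WAY}.

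The main obstacle, I expect, is not any individual estimate but the choice of inputs $A = N$ and $B = \F(y)\otimes\oneapp$ that makes all the pieces align, and in particular recognising that average conservation is precisely strong enough to collapse $\Gamma_\xi^\ee(N)$ to $\nsys$ up to a central scalar, while the unavoidable mismatch between $\ii^*_\xx(\F(y))$ and $\F(y)$ is what forces the disturbance correction $2\|\nsys\|\|\delta(y)\|$ into the bound. One should also verify at the outset that $\Gamma_\xi^\ee$ is genuinely a channel, so that \corref{corollary:Norm-commutator-inequality} applies; this holds since it is the composition of the unital completely positive maps $\Gamma_\xi$ and $\ee^*$.
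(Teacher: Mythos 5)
Your proposal is correct and follows essentially the same route as the paper's proof: the same decomposition via $\delta(y)$, the same application of \corref{corollary:Norm-commutator-inequality} to $\Gamma_\xi^\ee$ with inputs $N$ and $\F(y)\otimes\oneapp$, the same use of additivity and average conservation to reduce $\Gamma_\xi^\ee(N)$ to $\nsys$ plus a central scalar, and the same identifications of the sesquilinear terms. No gaps.
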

\begin{proof}

 By \eq{eq:disturbance-quantification}, we may write
\begin{equation*}
   [\F(y) ,\nsys ] - \ii^*_\xx([\F(y), \nsys]) = [N\sub{\s} ,\delta(y) ] + [\ii^*_\xx(\F(y) ),N\sub{\s} ] - \ii^*_\xx([\F(y), \nsys]).
\end{equation*}
 Since  $N$ is additive, average conservation of $N$ by $\ee$ implies that $\Gamma_\xi^\ee(N) = \Gamma_\xi(N) = \nsys + \tr[\napp \xi] \onesys$. It follows that  for all $A \in \lo(\hs)$,  $[A, N\sub{\s} ] = [A, \Gamma_\xi^\ee(N)]$. We may therefore  write $[\ii^*_\xx(\F(y) ),N\sub{\s} ] = [\Gamma_\xi^\ee(\F(y) \otimes \oneapp), \Gamma_\xi^\ee(N)]$. Additionally, by additivity of $N$ we may write  $\Gamma_\xi^\ee([\F(y) \otimes \oneapp, N]) = \Gamma_\xi^\ee([\F(y) , N\sub{\s} ]\otimes \oneapp) = \ii^*_\xx([\F(y), \nsys])$. We thus arrive at 
 \begin{align}\label{eq:WAY-disturbance-equality-1}
 [\F(y) ,\nsys ] - \ii^*_\xx([\F(y), \nsys]) = [N\sub{\s} ,\delta(y) ] + [\Gamma_\xi^\ee(\F(y) \otimes \oneapp), \Gamma_\xi^\ee(N) ] - \Gamma_\xi^\ee([\F(y) \otimes \oneapp, N]).    
 \end{align}
By the sesquilinear mapping $\<\< A | B\>\> := \Gamma_\xi^\ee(A^* B) - \Gamma_\xi^\ee(A^*) \Gamma_\xi^\ee(B)$ and \corref{corollary:Norm-commutator-inequality} we obtain from  \eq{eq:WAY-disturbance-equality-1} the bound 
\begin{align}\label{eq:equality-disturbance-commutation}
   \|  [\F(y) ,N\sub{\s} ] - \ii^*_\xx([\F(y) , N\sub{\s} ])\|   & \leqslant   \|[N\sub{\s} ,\delta(y) ]\|   + 2\| \<\<N | N\>\> \|^{\frac{1}{2}} \|\<\<  \F(y) \otimes \oneapp |  \F(y) \otimes \oneapp\>\> \|^{\frac{1}{2}}.
\end{align}
By definition, we have $\<\< \F(y) \otimes \oneapp | \F(y) \otimes \oneapp \>\> =  \ii^*_\xx(\F(y) ^2) - \ii^*_\xx(\F(y) )^2$ and $\<\<N | N\>\> = \Gamma_\xi^\ee(N^2) - \Gamma_\xi^\ee(N)^2$. Noting that $\|[\nsys,  \delta(y) ] \|\leqslant 2 \|N\sub{\s} \|\|\delta(y) \| $,   we thus obtain from  \eq{eq:equality-disturbance-commutation} the bound given in \eq{eq:disturbance-inequality-WAY}. If $\F$ is non-disturbed by $\ii$, then   $\|\delta(y)\| = 0$ and $\ii^*_\xx(\F(y) )^2 = \F(y)^2$ for all $y$. We thus arrive at    \eq{eq:non-disturbance-inequality-WAY}. 
\end{proof}
\begin{corollary}\label{corollary:WAY-disturbance-bound-unsharpness}
Consider the set-up of \thmref{theorem:quantitative-bound-disturbance-WAY}. For all $y \in \yy$ it also holds that 
\begin{align*}
 \| [\F(y), \nsys] - \ii^*_\xx([\F(y), \nsys]) \| & \leqslant 2\|\nsys\| \|\delta(y)\| + 2 \|\Gamma_\xi^\ee(N^2) - \Gamma_\xi^\ee(N)^2 \|^{\frac{1}{2}} \bigg( 2 \|\delta(y)\| + \|\F(y) - \F(y)^2 \|\bigg)^{\frac{1}{2}}  .
\end{align*}
\end{corollary}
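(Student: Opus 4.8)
The plan is to obtain this bound directly from the first inequality of \thmref{theorem:quantitative-bound-disturbance-WAY}, namely \eq{eq:disturbance-inequality-WAY}, by replacing the factor $\|\ii^*_\xx(\F(y)^2) - \ii^*_\xx(\F(y))^2\|^{\frac{1}{2}}$ with a bound expressed purely in terms of $\|\delta(y)\|$ and the unsharpness $\|\F(y) - \F(y)^2\|$. This mirrors exactly the way \corref{corollary:minimum-error-bound-sharpness} was extracted from \propref{prop:quantitative-bound-disturbance}.

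The single tool required is \lemref{lemma:unsharp-disturbance-bound}. Applying it with the operation $\Phi^* = \ii^*_\xx$ and choosing both effects to coincide, $A = B = \F(y)$, yields
\begin{align*}
\|\ii^*_\xx(\F(y)^2) - \ii^*_\xx(\F(y))^2\| \leqslant 2\|\ii^*_\xx(\F(y)) - \F(y)\| + \|\F(y) - \F(y)^2\|.
\end{align*}
Recalling the definition $\delta(y) := \ii^*_\xx(\F(y)) - \F(y)$ from \eq{eq:disturbance-quantification}, the first term on the right-hand side is precisely $2\|\delta(y)\|$, so that
\begin{align*}
\|\ii^*_\xx(\F(y)^2) - \ii^*_\xx(\F(y))^2\| \leqslant 2\|\delta(y)\| + \|\F(y) - \F(y)^2\|.
\end{align*}

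Substituting the square root of this inequality into \eq{eq:disturbance-inequality-WAY} immediately gives the claimed bound. I do not anticipate any substantial obstacle: the estimate is entirely mechanical once \lemref{lemma:unsharp-disturbance-bound} is invoked with the right choice of effects, the only point worth flagging being the (trivial) identification of $\|\ii^*_\xx(\F(y)) - \F(y)\|$ with $\|\delta(y)\|$.
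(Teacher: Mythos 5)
Your proposal is correct and coincides with the paper's own proof: the paper likewise invokes \lemref{lemma:unsharp-disturbance-bound} (with $\Phi^* = \ii^*_\xx$ and both effects taken to be $\F(y)$) to get $\|\ii^*_\xx(\F(y)^2) - \ii^*_\xx(\F(y))^2 \| \leqslant 2\|\delta(y)\| + \|\F(y) - \F(y)^2\|$, and then substitutes this into \eq{eq:disturbance-inequality-WAY}. No gaps.
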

\begin{proof}
Since $\F(y)$ are effects and $\ii^*_\xx$ is a channel, then by \lemref{lemma:unsharp-disturbance-bound} we have $\|\ii^*_\xx(\F(y) ^2) - \ii^*_\xx(\F(y) )^2  \|\leqslant 2 \|\delta(y)\| + \|\F(y) - \F(y)^2 \|$. The claim immediately follows from \eq{eq:disturbance-inequality-WAY}.
\end{proof}

\begin{corollary}\label{corrollary:Luders-commuting-no-constraint}
If $\E$ commutes with $\nsys$, and is measured by the L\"uders instrument $\ii^L$, then  \thmref{theorem:quantitative-bound-disturbance-WAY} does not rule out non-disturbance of any observable $\F$ that commutes with $\E$.
\end{corollary}
\begin{proof}
Since $\E'\subset \ff({\ii^L_\xx}^*)$ always holds, then $[\E,\F]=\zero$ implies that $\F \subset \ff({\ii^L_\xx}^*)$. But  $[\E, \nsys] = \zero$ implies that   ${\ii^L_\xx}^*(\nsysk{k}) = \nsysk{k}$ for $k=1,2$. By the multiplicability theorem (\corref{corollary:multiplicability}), it holds that ${\ii^L_\xx}^*(\nsys A) = \nsys {\ii^L_\xx}^*(A)$ and ${\ii^L_\xx}^*(A \nsys) = {\ii^L_\xx}^*(A) \nsys$ for all $A\in \lo(\hs)$, and so  we obtain ${\ii^L_\xx}^*([\F(y), \nsys]) = [\F(y), \nsys]$. In such a case,    the lower bound of \eq{eq:non-disturbance-inequality-WAY} vanishes, and so no constraints are imposed on non-disturbance for $\F$. 
\end{proof}
 
\begin{prop}\label{prop:quantitative-bound-disturbance-WAY-Fisher}
 Let $\mm:=(\ha, \xi, \ee, \Z)$ be a measurement scheme for an instrument $\ii$ acting in $\hs$, and assume that $\ee$ fully conserves an additive quantity $N = N\sub{\s} \otimes \oneapp + \onesys\otimes N\sub{\aa}$, where $\nsys \in \los(\hs)$ and $\napp \in \los(\ha)$. Let $\|\delta(y)\|$ be the disturbance of  the effects of an observable $\F= \{\F(y) : y \in \yy\}$ caused by $\ii$, as defined in \eq{eq:disturbance-quantification}. Then for all  $y \in \yy$ it holds that
\begin{align}\label{eq:WAY-bound-Fisher}
 \| [\F(y), \nsys] - \ii^*_\xx([\F(y), \nsys]) \| & \leqslant 2\|\nsys\| \|\delta(y)\| + \frac{1}{2}\qq(\napp, \xi)^{\frac{1}{2}},
\end{align}
where $\qq(\napp, \xi)$ denotes the quantum Fisher information of $\napp$ in the state $\xi$. Additionally, if $\ii$ is an extremal instrument,  then  for all $y\in \yy$ it holds that
\begin{align}\label{eq:WAY-bound-Fisher-extremal}
 \| [\F(y), \nsys] - \ii^*_\xx([\F(y), \nsys]) \| & \leqslant 2\|\nsys\| \|\delta(y)\| +  \qq(\napp, \xi)^{\frac{1}{2}} \|\ii_\xx^*(\F(y)^2) - \ii_\xx^*(\F(y))^2 \|^{\frac{1}{2}}.
\end{align}
\end{prop}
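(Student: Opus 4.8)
The plan is to reduce the bound to \thmref{theorem:quantitative-bound-disturbance-WAY} evaluated on the \emph{pure components} of the apparatus state, and then to exploit the convex-roof definition \eqref{eq:QFI-defn} of the quantum Fisher information. I fix an arbitrary ensemble $\xi = \sum_i q_i \pr{\phi_i}$ and, for each $i$, consider the scheme $(\ha, \pr{\phi_i}, \ee, \Z)$ with the \emph{same} $\ee$ and $\Z$ but pure preparation $\phi_i$; this implements an instrument $\ii^{(i)}$ whose channel I denote in the Heisenberg picture by $\Psi_i := (\ii^{(i)}_\xx)^*$. By linearity of \eqref{eq:instrument-dilation} in the apparatus state, $\ii_x = \sum_i q_i \ii^{(i)}_x$, and hence $\ii^*_\xx = \sum_i q_i \Psi_i$. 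Since $\ee$ fully conserves $N$ irrespective of the preparation, \lemref{lemma:conservation-variance-condition} applies to each component, giving $\|\Gamma_{\phi_i}^\ee(N^2) - \Gamma_{\phi_i}^\ee(N)^2\| = \var{\napp, \phi_i}$.

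The central algebraic step is to split the left-hand operator additively \emph{before} taking any norms. Writing $\delta^{(i)}(y) := \Psi_i(\F(y)) - \F(y)$, so that $\delta(y) = \sum_i q_i \delta^{(i)}(y)$, and applying the identity \eqref{eq:WAY-disturbance-equality-1} to each $\Psi_i$, one obtains
\begin{align*}
[\F(y), \nsys] - \ii^*_\xx([\F(y), \nsys]) = [\nsys, \delta(y)] + \sum_i q_i \big([\Psi_i(\F(y)), \nsys] - \Psi_i([\F(y), \nsys])\big).
\end{align*}
Keeping the first term \emph{combined} is essential: it collapses to the single operator $[\nsys,\delta(y)]$, whose norm is at most $2\|\nsys\|\|\delta(y)\|$, whereas a term-by-term estimate would produce only the weaker quantity $\sum_i q_i\|\delta^{(i)}(y)\| \geqslant \|\delta(y)\|$. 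For the second term I apply the triangle inequality and then, to each summand, \corref{corollary:Norm-commutator-inequality} with $\Gamma_{\phi_i}^\ee$, exactly as in the proof of \thmref{theorem:quantitative-bound-disturbance-WAY} (using average conservation, implied by full conservation, to replace $\nsys$ by $\Gamma_{\phi_i}^\ee(N)$ inside the commutator). This gives $\|[\Psi_i(\F(y)), \nsys] - \Psi_i([\F(y), \nsys])\| \leqslant 2\,\var{\napp, \phi_i}^{\frac12}\|\Psi_i(\F(y)^2) - \Psi_i(\F(y))^2\|^{\frac12}$, and hence the master inequality
\begin{align*}
\|[\F(y),\nsys] - \ii^*_\xx([\F(y),\nsys])\| \leqslant 2\|\nsys\|\|\delta(y)\| + 2\sum_i q_i \var{\napp,\phi_i}^{\frac12}\, \|\Psi_i(\F(y)^2)-\Psi_i(\F(y))^2\|^{\frac12}.
\end{align*}

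For \eqref{eq:WAY-bound-Fisher} I use that $\Psi_i(\F(y))$ is an effect, so $\zero \leqslant \Psi_i(\F(y)^2) - \Psi_i(\F(y))^2 \leqslant \Psi_i(\F(y)) - \Psi_i(\F(y))^2$ has norm at most $1/4$; this bounds the last factor by $1/2$. Concavity of the square root gives $\sum_i q_i \var{\napp, \phi_i}^{\frac12} \leqslant \big(\sum_i q_i \var{\napp, \phi_i}\big)^{\frac12}$, and since the left-hand side is independent of the ensemble I pass to the infimum over all ensembles, which by \eqref{eq:QFI-defn} equals $\tfrac12\qq(\napp, \xi)^{\frac12}$. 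For the extremal bound \eqref{eq:WAY-bound-Fisher-extremal} I retain the factor $\|\Psi_i(\F(y)^2) - \Psi_i(\F(y))^2\|^{\frac12}$ and invoke extremality: the relation $\ii = \sum_i q_i \ii^{(i)}$ exhibits $\ii$ as a convex combination of instruments, so if $\ii$ is extremal in the convex set of instruments then $\ii^{(i)} = \ii$, i.e.\ $\Psi_i = \ii^*_\xx$, for every $i$ and every ensemble. The unsharpness factor is then the constant $\|\ii^*_\xx(\F(y)^2) - \ii^*_\xx(\F(y))^2\|^{\frac12}$, which factors out of the sum, and the same concavity-plus-infimum argument turns $2\big(\sum_i q_i \var{\napp, \phi_i}\big)^{\frac12}$ into $\qq(\napp, \xi)^{\frac12}$.

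The Stinespring-level estimates are routine, being already packaged in \corref{corollary:Norm-commutator-inequality} and \lemref{lemma:conservation-variance-condition}. I regard the crux as twofold: (i) resisting the temptation to split the disturbance commutator across the ensemble, so that the prefactor of the \emph{actual} instrument, $\|\delta(y)\|$, survives; and (ii) the observation that extremality collapses every apparatus-state decomposition to $\ii^{(i)}=\ii$, which is precisely what lets the full-channel unsharpness $\|\ii^*_\xx(\F(y)^2)-\ii^*_\xx(\F(y))^2\|^{\frac12}$ be extracted and so sharpens $\tfrac12\qq^{\frac12}$ into $\qq^{\frac12}$ times this (possibly small) factor. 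One minor technicality is that the infimum in \eqref{eq:QFI-defn} need not be attained, but since the master inequality holds for \emph{every} ensemble one simply takes the infimum at the very end.
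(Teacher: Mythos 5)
Your proof is correct and follows essentially the same route as the paper's: the same decomposition of \eq{eq:WAY-disturbance-equality-1} over an ensemble $\xi = \sum_i q_i \pr{\phi_i}$ with the disturbance commutator $[\nsys,\delta(y)]$ kept whole, \corref{corollary:Norm-commutator-inequality} and \lemref{lemma:conservation-variance-condition} applied per pure component, the bound of $1/2$ on the unsharpness factor, concavity of the square root, and extremality forcing $\ii^{(i)} = \ii$ so the factor $\|\ii_\xx^*(\F(y)^2)-\ii_\xx^*(\F(y))^2\|^{\frac{1}{2}}$ comes out of the sum. Your two additions---making explicit that $\ii_x = \sum_i q_i \ii_x^{(i)}$ underlies the extremality step, and taking an infimum over ensembles at the end rather than assuming the optimal ensemble in \eq{eq:QFI-defn} is attained---are minor tightenings of the same argument.
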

\begin{proof}
 Let $\{q_i, \phi_i\}$ be an arbitrary ensemble of unit vectors that satisfies $\xi = \sum_i q_i \pr{\phi_i}$.  We may thus write $\Gamma_\xi^\ee(\cdot) = \sum_i q_i \Gamma_{\phi_i}^\ee(\cdot)$, where $\Gamma_{\phi_i}^\ee(\cdot) \equiv \Gamma_{\pr{\phi_i}}^\ee(\cdot)$. By the conservation law and additivity of $N$, we may therefore rewrite \eq{eq:WAY-disturbance-equality-1} as
 \begin{equation*}\label{eq:WAY-disturbance-equality-pure-dec}
   [\F(y) ,\nsys ] - \ii^*_\xx([\F(y), \nsys]) = [N\sub{\s} ,\delta(y) ] + \sum_i q_i \bigg([\Gamma_{\phi_i}^\ee(\F(y)\otimes \oneapp ), \Gamma_{\phi_i}^\ee(N) ] - \Gamma_{\phi_i}^\ee([\F(y)\otimes \oneapp, N])\bigg),
\end{equation*}
 which, by the sesquilinear mappings $\<\<A|B\>\>_i := \Gamma_{\phi_i}^\ee(A^*B) - \Gamma_{\phi_i}^\ee(A^*) \Gamma_{\phi_i}^\ee(B)$, \corref{corollary:Norm-commutator-inequality}, and \lemref{lemma:conservation-variance-condition} gives the bound
\begin{align*}
    \|[\F(y), \nsys] - \ii^*_\xx([\F(y), \nsys]) \| \leqslant 2\|\nsys\| \|\delta(y)\| + 2 \sum_i q_i \var{\napp, \phi_i}^{\frac{1}{2}} \| \Gamma_{\phi_i}^\ee(\F(y)^2 \otimes \oneapp) - \Gamma_{\phi_i}^\ee(\F(y) \otimes \oneapp)^2 \|^{\frac{1}{2}}.
\end{align*}
 Since both $\F(y)$ and $\Gamma_{\phi_i}^\ee(\F(y) \otimes \oneapp)$ are effects, we   have
\begin{align*}
 \|\Gamma_{\phi_i}^\ee(\F(y)^2 \otimes \oneapp) - \Gamma_{\phi_i}^\ee(\F(y) \otimes \oneapp)^2 \|^{\frac{1}{2}} \leqslant \|\Gamma_{\phi_i}^\ee(\F(y) \otimes \oneapp) - \Gamma_{\phi_i}^\ee(\F(y) \otimes \oneapp)^2 \|^{\frac{1}{2}}   \leqslant \frac{1}{2}.
\end{align*}
We thus arrive at the bound 
\begin{align*}
    \|[\F(y), \nsys] - \ii^*_\xx([\F(y), \nsys]) \| &\leqslant 2\|\nsys\| \|\delta(y)\| +  \sum_i q_i \var{\napp, \phi_i}^{\frac{1}{2}} \nonumber \\
    & \leqslant 2\|\nsys\| \|\delta(y)\| +  \left(\sum_i q_i \var{\napp, \phi_i}\right)^{\frac{1}{2}},
\end{align*}
where the second line follows from the concavity of the square root. By choosing the ensemble $\{q_i, \phi_i\}$ that gives the quantum Fisher information as in \eq{eq:QFI-defn},  we  arrive at  \eq{eq:WAY-bound-Fisher}.

Now assume that $\ii$ is an extremal instrument \cite{Pellonpaa2013}. This implies that for any pair of instruments $\ii^{(1)}$ and $\ii^{(2)}$, and any $\lambda \in (0,1)$, the operations of $\ii$ can be decomposed as $\ii_x(\cdot) = \lambda \, \ii_x^{(1)}(\cdot) + (1-\lambda) \, \ii_x^{(2)}(\cdot)$  only if $\ii = \ii^{(1)} = \ii^{(2)}$. It holds that $\Gamma_{\phi_i}^\ee(\cdot \otimes \oneapp) = \ii_\xx^*(\cdot)$ for all $i$, and so we obtain
\begin{align*}
    \|[\F(y), \nsys] - \ii^*_\xx([\F(y), \nsys]) \|  \leqslant  2\|\nsys\| \|\delta(y)\| + 2 \left(\sum_i q_i \var{\napp, \phi_i}\right)^{\frac{1}{2}} \| \ii_\xx^{*}(\F(y)^2) - \ii_\xx^{*}(\F(y))^2 \|^{\frac{1}{2}}.
\end{align*}
Once again, by choosing the ensemble that gives the quantum Fisher information, we  arrive at \eq{eq:WAY-bound-Fisher-extremal}.
\end{proof}

\section{Proof for Generalised WAY theorem 2}\label{app:strong-WAY-proof}

Here, we shall provide a detailed proof for \thmref{theorem:Strong-WAY} presented in the main text.

\begin{theorem}[Generalised WAY theorem 2]\label{theorem:General-WAY-2}
Let $\mm := (\ha, \xi, \ee, \Z)$ be a measurement scheme for an $\E$-instrument $\ii$ acting in $\hs$, and  assume that $\ee$  conserves an additive quantity  $N = N\sub{\s} \otimes \oneapp + \onesys\otimes N\sub{\aa}$ on average, where $\nsys \in \los(\hs)$ and $\napp \in \los(\ha)$.  If either $\ii$ is a measurement of the first kind, or the Yanase condition $[\Z, \napp]=\zero$ is satisfied, then for any effect $\E(x)$ that has both eigenvalue 1 and 0, it holds that
\begin{align}\label{eq:Strong-WAY}
  \P [\E(x), \nsys] \P =   [\E(x), \P \nsys \P] = \zero,
\end{align}
where  $\P := \P_0(x) + \P_1(x)$, with $\P_0(x)$ and $\P_1(x)$ orthogonal projections onto the eigenvalue-0 and eigenvalue-1 eigenspaces of $\E(x)$, respectively. 
\end{theorem}
\begin{proof}
Let us first note that $\E(x)\P = \P \E(x) = \P_1(x)$, and $\E(x)^\perp \P = \P \E(x)^\perp = \P_0(x)$, where $\E(x)^\perp := \onesys - \E(x)$. Indeed, $\P_0(x)$ may equivalently be considered as the projection onto the eigenvalue-1 eigenspace of $\E(x)^\perp$. Now define the operation $\Gamma_{\xi, \P}^\ee : \lo(\hs \otimes \ha) \to \lo(\hs)$ as $\Gamma_{\xi, \P}^\ee(\cdot) := \P \Gamma_\xi^\ee(\cdot) \P$, where $\Gamma_\xi^\ee$ is the unital CP map defined in \eq{eq:Gamma-U}. If $\mm$ is a measurement scheme for $\E$, then  $\Gamma_{\xi, \P}^\ee(\onesys \otimes \Z(x)) = \P_1(x)$.  Given  $1 \geqslant \|\Z(x) \| \geqslant \|\Gamma_{\xi,\P}^\ee(\onesys \otimes \Z(x)) \| = \| \P_1(x) \| = 1$, it follows that $\|\Z(x)\|=1$. Therefore, by the two-positivity of CP maps, and the relation $A^* B A \leqslant \|B\| A^*A$ for self-adjoint $B$, we observe that 
\begin{align*}
    \P_1(x) = \|\Z(x)\| \Gamma_{\xi, \P}^\ee(\onesys \otimes \Z(x)) \geqslant \Gamma_{\xi, \P}^\ee(\onesys \otimes \Z(x)^2) \geqslant \Gamma_{\xi, \P}^\ee(\onesys \otimes \Z(x))^2 = \P_1(x),
\end{align*}
and so 
\begin{align}\label{eq:strong-WAY-meas-1}
\Gamma_{\xi, \P}^\ee(\onesys \otimes \Z(x)^2) = \Gamma_{\xi, \P}^\ee(\onesys \otimes \Z(x))^2  = \P_1(x).
\end{align}
Now note that by additivity of $N$, and the conservation law,  it holds that $\Gamma_{\xi, \P}^\ee(N) = \P \nsys \P + \tr[\napp \xi] \P $. If the Yanase condition holds,  we may write
\begin{align*}
    \P [\E(x), \nsys] \P &= [\P \E(x) \P, \P \nsys \P] \\
    &= [\Gamma_{\xi, \P}^\ee(\onesys \otimes \Z(x)), \Gamma_{\xi, \P}^\ee(N)] \\
    & = \Gamma_{\xi, \P}^\ee([\onesys \otimes \Z(x), N]) \\
    &= \Gamma_{\xi, \P}^\ee(\onesys \otimes[\Z(x), \napp]) = \zero.
\end{align*}
The third line follows from \eq{eq:strong-WAY-meas-1} and the multiplicability theorem (\corref{corollary:multiplicability}), and the final line follows from the Yanase condition. As such,  we arrive at \eq{eq:Strong-WAY}.

Now let us abandon the Yanase condition, but instead assume that  $\ii$ is a first-kind measurement for $\E$. This implies that $\Gamma_{\xi, \P}^\ee(\E(x) \otimes \oneapp) = \P_1(x)$. Since $\|\E(x)\|=1$, then  by the two-positivity of CP maps, and the relation $A^* B A \leqslant \|B\| A^*A$ for self-adjoint $B$, we obtain
\begin{align*}
\P_1(x) = \| \E(x) \| \Gamma_{\xi, \P}^\ee(\E(x) \otimes \oneapp)   \geqslant  \Gamma_{\xi, \P}^\ee(\E(x)^2 \otimes \oneapp) \geqslant \Gamma_{\xi, \P}^\ee(\E(x) \otimes \oneapp)^2 = \P_1(x),
\end{align*}
and so 
\begin{align}\label{eq:strong-WAY-firstkind-1}
\Gamma_{\xi, \P}^\ee(\E(x)^2 \otimes \oneapp) = \Gamma_{\xi, \P}^\ee(\E(x) \otimes \oneapp)^2 =  \P_1(x).
\end{align}
 By the same arguments as in item (i)  of \propref{prop:repeatable-instrument-identity}, one can show from \eq{eq:strong-WAY-meas-1} and  \eq{eq:strong-WAY-firstkind-1} that $\Gamma_{\xi, \P}^\ee(\onesys \otimes \Z(x)^n) =  \Gamma_{\xi, \P}^\ee(\E(x)^n \otimes \oneapp) = \P_1(x)$ for all $n\in \nat$. Consequently, by the same arguments as in item  (iv) of \propref{prop:repeatable-instrument-identity}, it follows that  $ \Gamma_{\xi, \P}^\ee(\onesys \otimes (\Z(x) - \Q(x))) = \zero$, where $\Q(x)$ is the projection onto the eigenvalue-1 eigenspace of $\Z(x)$, and $\Gamma_{\xi, \P}^\ee(( \E(x) - \P_1(x)) \otimes \oneapp) = \zero$. Moreover, by \eq{eq:strong-WAY-meas-1} ,  \eq{eq:strong-WAY-firstkind-1},  the multiplicability theorem (\corref{corollary:multiplicability}), and defining  $\Z(x)^\perp := \oneapp - \Z(x)$,  it follows that 
\begin{align*}
\Gamma_{\xi, \P}^\ee(\E(x) \otimes \Z(x)^\perp) = \Gamma_{\xi, \P}^\ee(\E(x) \otimes \oneapp) \Gamma_{\xi, \P}^\ee(\onesys \otimes \Z(x)^\perp) = \P_1(x) \P \E(x)^\perp \P = \P_1(x) \P_0(x) = \zero, \\
\Gamma_{\xi, \P}^\ee(\E(x)^\perp \otimes \Z(x)) = \Gamma_{\xi, \P}^\ee(\E(x)^\perp \otimes \oneapp) \Gamma_{\xi, \P}^\ee(\onesys \otimes \Z(x)) =  \P \E(x)^\perp \P \P_1(x) = \P_0(x) \P_1(x) = \zero,
\end{align*}
and so by the same arguments as in items (v) and (vi)  of  \propref{prop:repeatable-instrument-identity} it follows that 
\begin{align}\label{eq:first-kind-repeatable-mimic-identity}
    \Gamma_{\xi, \P}^\ee(\E(x)A \otimes \oneapp) = \Gamma_{\xi, \P}^\ee(A \E(x) \otimes \oneapp) = \Gamma_{\xi, \P}^\ee(\P_1(x) A \P_1(x) \otimes\oneapp)
\end{align}
for all $A \in \lo(\hs)$. By additivity of $N$, and the conservation law,  we may therefore write
\begin{align*}
    \P [\E(x), \nsys] \P &= [\P \E(x) \P, \P \nsys \P] \\
    &= [\Gamma_{\xi, \P}^\ee(\E(x) \otimes \oneapp), \Gamma_{\xi, \P}^\ee(N)] \\
    & = \Gamma_{\xi, \P}^\ee([\E(x) \otimes \oneapp, N]) \\
    &= \Gamma_{\xi, \P}^\ee([\E(x), \nsys] \otimes \oneapp) = \zero.
\end{align*}
The third line follows from   \eq{eq:strong-WAY-firstkind-1} and \corref{corollary:multiplicability}, while the final line follows from \eq{eq:first-kind-repeatable-mimic-identity}.  Once again we arrive at \eq{eq:Strong-WAY}.

\end{proof}

\section{The Weak Yanase condition from conservation laws}\label{app:Weak-Yanase}

Thus far, we have only considered the case where the measurement interaction $\ee$ between system and apparatus conserves an additive quantity $N$. However, pointer objectification will also result in state changes, and it may be the case that the expected value of $N$ will change as a result. Now let us provide a generalised prescription of measurement schemes that captures also the state changes  due to pointer objectification. Recall that  $\mm:=(\ha, \xi, \ee, \Z)$ is a measurement scheme for an observable $\E$ acting in $\hs$ if $\E(x) = \Gamma_\xi\circ \ee^*(\onesys \otimes \Z(x))$.   Now consider the tuple $\tilde \mm := (\ha, \xi, \jj)$,  where $\jj := \{\jj_x : x\in \xx\}$ is an instrument acting in $\hs\otimes \ha$. $\tilde \mm$ is also a measurement scheme for $\E$ if  $\E(x) = \Gamma_\xi\circ \jj_x^*(\onesys \otimes \oneapp)$. It is straightforward to show that this is satisfied if $\jj$ is compatible with the ``Heisenberg-evolved'' pointer observable 
\begin{align}\label{eq:heisenberg-pointer}
 \Z^\tau(x)  := \ee^*(\onesys\otimes \Z(x)),
\end{align}
that is, if $\jj_x^*(\onesys \otimes \oneapp) =  \Z^\tau(x)$.  We  say that $\tilde \mm$ obeys a  full (average) conservation law if the channel  $\jj_\xx$ fully (on average)  conserves a quantity $N$. The operations $\jj_x$ can be constructed as a sequential application of the channel $\ee$ followed by the operations of some $\Z$-compatible instrument acting in $\ha$, the latter of which provides a physical characterisation of the pointer objectification process. In such a case, a sufficient condition for conservation of $N$ by $\jj_\xx$ is  the conservation of $N$ by both $\ee$ and the $\Z$-channel. But it may be the case that $\ee$ fully conserves $N$ while the $\Z$-channel  conserves $N$ only on average, and vice versa. In such cases, the  channel $\jj_\xx$ will   conserve $N$ only on average. 

By \lemref{lemma:fixed-points-instrument}, it holds that if $\jj_\xx$ conserves $N$ on average, and if  either $\Z^\tau$ is sharp or if $\jj_\xx$ also fully conserves $N$, then  
\begin{align}\label{eq:weak-Yanase}
    [\Z^\tau, N]=\zero.
\end{align}
 This commutation relation is known as the weak Yanase condition \cite{Tukiainen2017}.  We note that if $\ee$ conserves $N$ on average and if either $\Z^\tau$ is sharp or if $\ee$ also fully conserves $N$, then the Yanase condition implies the weak Yanase condition. First, let us assume that $\Z^\tau$ is sharp.  Since $\Z(x)$ is an effect then by two-positivity of CP maps we have $\Z^\tau(x)= \ee^*(\onesys \otimes \Z(x)) \geqslant \ee^*(\onesys \otimes \Z(x)^2) \geqslant \ee^*(\onesys \otimes \Z(x))^2 = \Z^\tau(x)$, and so  we have $\ee^*(\onesys \otimes \Z(x)^2) = \ee^*(\onesys \otimes \Z(x))^2$. On the other hand, if $\ee$ fully conserves $N$ then $\ee^*(N^2) = \ee^*(N)^2 = N^2$. In either case, by \corref{corollary:multiplicability} we have 
 \begin{align*}
[\Z^\tau(x), N] =  [\ee^*(\onesys \otimes \Z(x)), \ee^*(N)] = \ee^*([\onesys \otimes \Z(x), N]) = \ee^*(\onesys \otimes[ \Z(x), \napp]),
 \end{align*}
  and so if $[ \Z(x), \napp]=\zero$, then $[\Z^\tau(x), N]=\zero$. Moreover, if $\ee(\cdot) = U(\cdot)U^*$ is a unitary channel, and $\ee$ conserves $N$,   then $ [\Z^\tau(x), N] =  U^*(\onesys \otimes[\Z(x), \napp ])U$. In such a case the weak Yanase condition is equivalent to the Yanase condition;  multiplying both sides of the equality $U^*(\onesys \otimes[\Z(x), \napp ])U = \zero$  by $U$ from the left and by $U^*$ from the right shows that $[\Z^\tau(x), N]= \zero \iff [\Z(x), \napp ]=\zero$. However, in general it may be the case that the weak Yanase condition is satisfied but the Yanase condition is violated.   
 
 The following proposition shows that if the weak Yanase condition is satisfied, then the measurability part of the WAY theorem will hold. Moreover,    we see that there are cases where a large coherence of the conserved quantity in the apparatus is necessary for good measurements even  without a full conservation law---for example, if either the interaction channel $\ee$ or the $\Z$-channel   conserves $N$ only on average, but  $\Z^\tau$ is sharp, in which case the weak Yanase condition is guaranteed to hold.

\newpage
 
\begin{prop}\label{prop:weak-Yanase-WAY}
Let $\mm:= (\ha, \xi, \ee, \Z)$ be a measurement scheme for an observable acting in $\hs$, and  let $\| \epsilon(x)\|$ be the error in measuring the effects of the target observable $\E$. Assume that $\mm$ satisfies the weak Yanase condition $[\Z^\tau, N]=\zero$, where $\Z^\tau$ is the Heisenberg-evolved pointer observable defined in \eq{eq:heisenberg-pointer} and $N = \nsys \otimes \oneapp + \onesys \otimes \napp$, where $\nsys \in \los(\hs)$ and $\napp \in \los(\ha)$.  Then for all $x\in \xx$ it holds that
\begin{align}\label{eq:weak-Yanase-WAY-variance}
 \| [\E(x), \nsys] \| \leqslant 2  \| \nsys \|\| \epsilon(x)\| + 2 \var{\napp, \xi}^{\frac{1}{2}} \bigg(2 \| \epsilon(x)\| + \|\E(x) - \E(x)^2 \|\bigg)^{\frac{1}{2}},  
\end{align}
and 
\begin{align}\label{eq:weak-Yanase-WAY-QFI}
 \| [\E(x), \nsys] \| \leqslant 2  \| \nsys \|\| \epsilon(x)\| + \frac{1}{2} \qq(\napp, \xi)^{\frac{1}{2}},
\end{align}
where $\var{\napp, \xi}:= \tr[N\sub{\aa}^2 \xi] - \tr[\napp \xi]^2$  and $\qq(\napp, \xi)$ is the the quantum Fisher information of $\napp$ in $\xi$ as defined in \eq{eq:QFI-defn}. Additionally, if $\mm$ is a measurement scheme for $\E$, then for  any effect  $\E(x)$ that has both eigenvalue 1 and 0, it holds that
\begin{align}\label{eq:weak-Yanase-Strong-WAY}
  \P [\E(x), \nsys] \P = [\E(x), \P \nsys \P] = \zero,
\end{align}
where $\P = \P_0(x) + \P_1(x)$, with $\P_0(x)$ and $\P_1(x)$ orthogonal projections onto the eigenvalue-0 and eigenvalue-1 eigenspaces of $\E(x)$, respectively.
\end{prop}
\begin{proof}
 By \eq{eq:error-equality}, we may write  $ \epsilon(x) := \Gamma_\xi^\ee(\onesys \otimes \Z(x)) - \E(x) \equiv \Gamma_\xi(\Z^\tau(x)) - \E(x)$. By  additivity of $N$ we have $\Gamma_\xi(N) = \nsys + \tr[\napp \xi] \onesys$, and so we may write  
\begin{align*}
    [\E(x), \nsys]  = [\nsys, \epsilon(x)] + [\Gamma_\xi(\Z^\tau(x)), \Gamma_\xi(N)].
\end{align*}
Since $\Gamma_\xi$ is a channel, and the weak Yanase condition $[\Z^\tau, N]=\zero$ holds, then by the sesquilinear map $\<\<A|B\>\> := \Gamma_\xi(A^*B) - \Gamma_\xi(A^*) \Gamma_\xi(B)$ and \corref{corollary:Norm-commutator-inequality} we obtain 
\begin{align*}
 \|[\E(x), \nsys]\|  \leqslant  2  \|\nsys\| \| \epsilon(x)\|  +  2\| \Gamma_\xi(N^2) - \Gamma_\xi(N)^2 \|^{\frac{1}{2}} \| \Gamma_\xi(\Z^\tau(x)^2) - \Gamma_\xi(\Z^\tau(x))^2\|^{\frac{1}{2}}.   
\end{align*}
As shown in \lemref{lemma:conservation-variance-condition}, additivity of $N$ implies that $\|\Gamma_\xi(N^2) - \Gamma_\xi(N)^2\| = \var{\napp, \xi}$. On the other hand, by \lemref{lemma:unsharp-disturbance-bound} we obtain $\| \Gamma_\xi(\Z^\tau(x)^2) - \Gamma_\xi(\Z^\tau(x))^2\| \leqslant 2 \| \epsilon(x) \| + \|\E(x) - \E(x)^2 \|$. As such, we obtain the bound in \eq{eq:weak-Yanase-WAY-variance}.  \eq{eq:weak-Yanase-WAY-QFI} and \eq{eq:weak-Yanase-Strong-WAY} are trivially obtained by adapting the arguments in \propref{prop:measurability-Fisher} and \thmref{theorem:General-WAY-2} to the above, i.e., by replacing $\Z$ with $\Z^\tau$, and replacing the operations  $\Gamma_{\phi_i}^\ee$ and $\Gamma_{\xi, \P}^\ee$ with the operations  $\Gamma_{\phi_i}$ and $\Gamma_{\xi, \P}$, respectively. 
\end{proof}

\section{Proof of the ``converse'' WAY theorem}\label{app:converse-WAY}
Here, we provide a proof for \propref{prop:converse-WAY-main} presented in the main text. 

\begin{prop}\label{prop:converse-WAY}
Let $\mm:=(\ha, \xi, \ee, \Z)$ be a measurement scheme for an instrument $\ii$ acting in $\hs$. Assume that $\ee$ fully conserves an additive quantity $N = \nsys \otimes \oneapp + \onesys \otimes \napp$, where $\nsys \in \los(\hs)$ and $\napp \in \los(\ha)$, and that $\ii_\xx$ fully conserves $\nsys$. Define the eigenspace of $\ha$ that is involved during the measurement process as
\begin{align*}
\ha(\text{meas}) := \bigcup_{\rho \in \s(\hs)} \supp( \Lambda(\rho)) \cup \supp(\xi)  \subseteq \ha,   
\end{align*}
where $\Lambda$ is the conjugate channel to $\ii_\xx$ defined in \eq{eq:conjugate-channel}. Then either  $\ha(\text{meas})$ is contained within a single degenerate eigenspace of $\napp$, or  $\var{\napp, \xi}:= \tr[\nappk{2}\xi] - \tr[\napp \xi]^2$ must be large. Additionally, if $\ii$ is an extremal instrument, and if $\ha(\text{meas})$ is not contained within a single degenerate eigenspace of $\napp$, then the quantum Fisher information $\qq(\napp, \xi)$ as defined in \eq{eq:QFI-defn} must be large. 
\end{prop}
\begin{proof}
 Let us first observe that by \eq{eq:conservation-compensation}, if $\ee$ conserves $N$ on average, then $\nsys \in \ff(\ii_\xx^*)$ implies that irrespective of the apparatus preparation $\xi$, it holds that  $\tr[\napp \Lambda(\rho)] = \tr[\napp \xi]$ for all $\rho \in \s(\hs)$ or, equivalently, that $\Lambda^*(\napp) = \tr[\napp \xi] \onesys$. That is, the expected value of the apparatus part of the conserved quantity does not change as a result of the measurement interaction. While average conservation does not imply that the variance must also stay the same, this implication can be shown to follow in the case of full conservation. By  \defref{defn:conservation-law},  full conservation of $N$ by $\ee$ implies that $\Gamma_\xi(N^2) =  \Gamma_\xi^\ee(N^2)$, where $\Gamma_\xi^\ee$ is the channel defined in \eq{eq:Gamma-U}. Given that  $\Gamma_\xi^\ee(\cdot \otimes \oneapp) = \ii_\xx^* (\cdot)$ and $\Gamma_\xi^\ee(\onesys \otimes \cdot) = \Lambda^*(\cdot)$, we thus obtain 
\begin{align*}
\nsysk{2} + 2\tr[\napp \xi]\nsys + \tr[\nappk{2} \xi]\onesys & =  \Gamma_\xi^\ee(\nsysk{2} \otimes \oneapp) + 2\Gamma_\xi^\ee((\nsys \otimes \oneapp)(\onesys\otimes \napp)) + \Gamma_\xi^\ee(\onesys \otimes \nappk{2}) \\
& = \Gamma_\xi^\ee(\nsysk{2} \otimes \oneapp) + 2\Gamma_\xi^\ee(\nsys \otimes \oneapp)\Gamma_\xi^\ee(\onesys\otimes \napp) + \Gamma_\xi^\ee(\onesys \otimes \nappk{2}) \\
& = \ii_\xx^* (\nsysk{2}) +2 \ii_\xx^* (\nsys) \Lambda^*(\napp) + \Lambda^*(\nappk{2}) \\
& = \nsysk{2} + 2 \tr[\napp \xi] \nsys + \Lambda^*(\nappk{2}).
\end{align*}
In the second line, we have used the fact that full conservation of $\nsys$ by $\ii_\xx$ implies that $\Gamma_\xi^\ee(\nsysk{2} \otimes \oneapp) = \Gamma_\xi^\ee(\nsys \otimes \oneapp)^2 = \nsysk{2}$, and the multiplicability theorem (\corref{corollary:multiplicability}). It follows that $\Lambda^*(\nappk{k}) = \tr[\nappk{k} \xi]\onesys$ for $k=1,2$. As such, for any input state $\rho$ of the system to be measured, it holds that
\begin{align*}
\var{\napp, \Lambda(\rho)} = \tr[\nappk{2}\Lambda(\rho)] - \tr[\napp \Lambda(\rho)]^2 = \tr[\nappk{2} \xi] - \tr[\napp \xi]^2 = \var{\napp, \xi} .   
\end{align*}
Now assume that $\xi$ is an eigenstate of $\napp$, i.e., that there exists $c \in \re$ such that $\napp \xi = c \, \xi$. Since such a condition is equivalent to a vanishing variance, then we see that    $\var{\napp, \Lambda(\rho)} = \var{\napp,\xi} = 0$, and so  $\Lambda(\rho)$ must also be eigenstates of $\napp$ with the same eigenvalue $c$. In fact, in such a case it holds that $\Lambda^*(\nappk{k}) = \Lambda^*(\napp)^k = \tr[\napp \xi]^k \onesys$ for $k=1,2$, which implies that $\tr[\nappk{k} \Lambda(\rho)] = \tr[\nappk{k} \xi] = c^k$ for all $k \in \nat$ and $\rho \in \s(\hs)$.  In other words, if $\xi$ is an eigenstate of $\napp$, then  $\napp$  must be ``effectively'' fully degenerate, i.e., $\ha(\text{meas})$  must be contained within a single degenerate eigenspace of $\napp$. Therefore, if $\ha(\text{meas})$ contains more than one degenerate eigenspace of $\napp$,     the apparatus must be prepared in a state with a large uncertainty in $\napp$. 

Now assume that $\ii$ is an extremal instrument, i.e., that for any $\lambda \in (0,1)$, the operations of $\ii$ admit a decomposition $\ii_x(\cdot) = \lambda \, \ii_x^{(1)}(\cdot) + (1-\lambda) \, \ii_x^{(2)}(\cdot)$ only if $\ii = \ii^{(1)} = \ii^{(2)}$. In such a case, it follows that if $\mm:= (\ha, \xi, \ee, \Z)$ is a measurement scheme for $\ii$, then  for any pure state decomposition $\xi = \sum_i q_i \pr{\phi_i}$, it holds that   $(\ha, \phi_i, \ee, \Z)$ must also be measurement schemes for $\ii$. By the above arguments, it follows that unless $\napp$ is effectively degenerate, then each  $\phi_i$ must have a large uncertainty in $\napp$, i.e., the quantum Fisher information $\qq(\napp, \xi)$ must be large. 
\end{proof}

\section{Faithful fixed states and measurement disturbance}\label{app:faithful-fixed-points}

Recall from \propref{prop:quantitative-bound-disturbance} and  \thmref{theorem:quantitative-bound-disturbance-WAY} that under a conservation law, an $\E$-instrument $\ii$ will no disturb an observable $\F$ only if 
\begin{align*}
   \|  [\E(x), \F(y)] \| &\leqslant 2\| \E(x) - \E(x)^2 \|^{\frac{1}{2}} \|\ii^*_\xx(\F(y)^2) - \F(y)^2 \|^{\frac{1}{2}} \qquad \forall \, x\in \xx, \, y \in \yy,
\end{align*}
and 
\begin{align*}
\| [\F(y), \nsys] - \ii^*_\xx([\F(y), \nsys]) \| & \leqslant  2 \|\Gamma_\xi^\ee(N^2) - \Gamma_\xi^\ee(N)^2 \|^{\frac{1}{2}} \|\ii^*_\xx(\F(y) ^2) - \F(y)^2  \|^{\frac{1}{2}} \qquad \forall \,  y \in \yy.
\end{align*}
If $\F \subset \ff(\ii_\xx^*) \implies \F^2 \subset \ff(\ii_\xx^*)$, then the upper bound of the above equations vanish, and so $\F$ will be non-disturbed only if $[\F, \E] = \zero$ and $ [\F(y), \nsys] \in \ff(\ii_\xx^*)$. By the multiplicability theorem (\corref{corollary:multiplicability}), the second condition can be shown to simplify to $[\F(y), \nsys] = \ii_\xx^*([\F(y), \nsys]) = [\F(y), \ii_\xx^*(\nsys)]$, which reads $[\F(y), \ii_\xx^*(\nsys) - \nsys] = \zero$. In \app{app:disturbance-compatibility} we saw that the implication $\F \subset \ff(\ii_\xx^*) \implies \F^2 \subset \ff(\ii_\xx^*)$ holds if either $\F$ is sharp, rank-1, or a coarse-graining of a sharp observable.  We now show that if $\ff({\ii^*_\xx})$ is a von Neumann algebra, so that $\F \subset \ff(\ii_\xx^*) \implies \F^2 \subset \ff(\ii_\xx^*)$, similar and stronger constraints will hold for all observables.  To this end,  let us first prove a useful lemma.

\begin{lemma}\label{lemma:faithful-fixedpoint-commutant}
Let   $\mm:= (\ha, \xi, \ee, \Z)$ be a measurement scheme for an $\E$-instrument $\ii$ acting in $\hs$. Assume that  $\ee$  conserves an additive quantity $N = N\sub{\s} \otimes \oneapp + \onesys\otimes N\sub{\aa}$ on average, where $\nsys \in \los(\hs)$ and $\napp \in \los(\ha)$,  and that $\ff(\ii_\xx^*)$ is a von Neumann algebra.   Then  for all $A \in \lo(\hs)$, the following implication holds:  $A\in \ff(\ii_\xx^*) \implies [A, N\sub{\s} ] \in \ff(\ii_\xx^*) \implies [A, N\sub{\s} ] \in  \E'$.
\end{lemma}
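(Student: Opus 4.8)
The plan is to establish the two implications in the chain separately. The second, $[A,\nsys] \in \ff(\ii_\xx^*) \implies [A,\nsys] \in \E'$, is immediate: by hypothesis $\ff(\ii_\xx^*)$ is a von Neumann algebra, so item (ii) of \lemref{lemma:fixed-points-instrument} gives $\ff(\ii_\xx^*) \subset \E'$, and any fixed point—in particular $[A,\nsys]$, once the first implication is secured—therefore lies in $\E'$. The real content is the first implication, namely that $\ii_\xx^*([A,\nsys]) = [A,\nsys]$ whenever $A \in \ff(\ii_\xx^*)$.

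First I would extract the purely algebraic consequences of the von Neumann algebra hypothesis. Since $\Phi^*$ preserves the involution we have $\ff(\ii_\xx^*)^* = \ff(\ii_\xx^*)$, and since the set is multiplicatively closed, $A \in \ff(\ii_\xx^*)$ forces $A^*, A^*A, AA^* \in \ff(\ii_\xx^*)$. Hence $\ii_\xx^*(A)=A$, and crucially $\ii_\xx^*(A^*A) = A^*A = \ii_\xx^*(A^*)\ii_\xx^*(A)$, together with the symmetric identity $\ii_\xx^*(AA^*) = \ii_\xx^*(A)\ii_\xx^*(A^*)$.

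Next I would re-run the commutator estimate behind \thmref{theorem:quantitative-bound-disturbance-WAY}, replacing the effect $\F(y)$ by the general operator $A$ (the self-adjointness used there served only to merge two error terms into one, so here I keep both). Invoking the measurement scheme and the channel $\Gamma_\xi^\ee$ of \eq{eq:Gamma-U}, average conservation gives $\Gamma_\xi^\ee(N) = \nsys + \tr[\napp \xi]\onesys$, so that $[\ii_\xx^*(A),\nsys] = [\Gamma_\xi^\ee(A\otimes\oneapp), \Gamma_\xi^\ee(N)]$, while $[A\otimes\oneapp, N] = [A,\nsys]\otimes\oneapp$ and $\Gamma_\xi^\ee([A,\nsys]\otimes\oneapp) = \ii_\xx^*([A,\nsys])$. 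Rewriting the identity of \eq{eq:WAY-disturbance-equality-1} for general $A$ and applying \corref{corollary:Norm-commutator-inequality} to the sesquilinear map $\<\<A|B\>\> := \Gamma_\xi^\ee(A^*B) - \Gamma_\xi^\ee(A^*)\Gamma_\xi^\ee(B)$ (using $N=N^*$) yields
\begin{align*}
\| [A,\nsys] - \ii_\xx^*([A,\nsys]) \| \leqslant \|[\nsys,\delta]\| + \|\<\<N|N\>\>\|^{\frac12}\Big(\|\<\<A\otimes\oneapp|A\otimes\oneapp\>\>\|^{\frac12} + \|\<\<A^*\otimes\oneapp|A^*\otimes\oneapp\>\>\|^{\frac12}\Big),
\end{align*}
with $\delta := \ii_\xx^*(A)-A$. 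Now every term on the right vanishes: $\delta = \zero$ since $A$ is fixed, and $\<\<A\otimes\oneapp|A\otimes\oneapp\>\> = \ii_\xx^*(A^*A) - \ii_\xx^*(A^*)\ii_\xx^*(A) = A^*A - A^*A = \zero$, with $\<\<A^*\otimes\oneapp|A^*\otimes\oneapp\>\> = \zero$ likewise, by the three fixed-point identities above. Hence $\ii_\xx^*([A,\nsys]) = [A,\nsys]$, i.e.\ $[A,\nsys] \in \ff(\ii_\xx^*)$.

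The point I expect to be the main obstacle—and the reason a naive argument fails—is that $\nsys$ is \emph{not} itself a fixed point of $\ii_\xx^*$ in general, since the conserved quantity may be exchanged between system and apparatus, so $\Delta\nsys := \ii_\xx^*(\nsys) - \nsys \neq \zero$. Pushing $\ii_\xx^*$ through the commutator via \corref{corollary:multiplicability} only delivers $\ii_\xx^*([A,\nsys]) = [A,\ii_\xx^*(\nsys)]$, which equals $[A,\nsys]$ solely when $[A,\Delta\nsys]=\zero$, and that is not available a priori. It is precisely the conservation law—entering through the identification $\Gamma_\xi^\ee([A,\nsys]\otimes\oneapp) = \ii_\xx^*([A,\nsys])$ together with the vanishing Cauchy--Schwarz bound—that supplies this missing commutation and forces the commutator to be a fixed point. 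Care must therefore be taken to keep the $\ii_\xx^*([A,\nsys])$ term inside the norm rather than replacing it by $[A,\ii_\xx^*(\nsys)]$ too early.
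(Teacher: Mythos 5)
Your proof is correct, and it reaches the conclusion by a genuinely different (though closely related) route from the paper's. You handle the second implication exactly as the paper does, via item (ii) of \lemref{lemma:fixed-points-instrument}, and for the first implication you recycle the quantitative machinery of \thmref{theorem:quantitative-bound-disturbance-WAY}: write $[A,\nsys] - \ii_\xx^*([A,\nsys]) = [\nsys,\delta] + [\ii_\xx^*(A),\nsys] - \ii_\xx^*([A,\nsys])$, bound the last two terms via \corref{corollary:Norm-commutator-inequality}, and observe that the von Neumann algebra hypothesis kills every term on the right, since $\delta=\zero$ and $\<\<A\otimes\oneapp\,|\,A\otimes\oneapp\>\> = \<\<A^*\otimes\oneapp\,|\,A^*\otimes\oneapp\>\> = \zero$ once $A$, $A^*A$, $AA^*$ are all fixed points. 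The paper instead obtains an exact operator identity in one computation: since $\ii_\xx^*(A^*A)=\ii_\xx^*(A^*)\ii_\xx^*(A)$ and its adjoint counterpart hold, \corref{corollary:multiplicability} applied to the channel $\Gamma_\xi^\ee$ yields the module property $A\,\Gamma_\xi^\ee(B) = \Gamma_\xi^\ee((A\otimes\oneapp)B)$ and $\Gamma_\xi^\ee(B)\,A = \Gamma_\xi^\ee(B(A\otimes\oneapp))$ for all $B \in \lo(\hs\otimes\ha)$; combining this with $\nsys + \tr[\napp\xi]\onesys = \Gamma_\xi^\ee(\nsys\otimes\oneapp) + \Gamma_\xi^\ee(\onesys\otimes\napp)$, which is what average conservation gives, one computes directly $[A,\nsys] = \Gamma_\xi^\ee([A,\nsys]\otimes\oneapp) = \ii_\xx^*([A,\nsys])$. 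The two arguments rest on identical ingredients, and in fact the vanishing of $\<\<A\otimes\oneapp\,|\,A\otimes\oneapp\>\>$ is, via the Cauchy--Schwarz inequality of \lemref{lemma:cauchy-schwarz}, equivalent to the multiplicative-domain property the paper invokes, so your route is a norm-estimate repackaging of the paper's exact identity, at the harmless cost of passing through an inequality where equality is available. Your closing diagnosis is also accurate, with one refinement worth noting: applying \corref{corollary:multiplicability} to $\ii_\xx^*$ itself indeed only delivers $\ii_\xx^*([A,\nsys]) = [A,\ii_\xx^*(\nsys)]$, which is insufficient because $\Delta\nsys$ need not vanish; the paper escapes this trap not by avoiding \corref{corollary:multiplicability} but by applying it one level up, to $\Gamma_\xi^\ee$, which is precisely where the conservation law can be fed in.
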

\begin{proof}

Recall that for all $A \in \lo(\hs)$, we have $\ii^*_\xx(A) = \Gamma_\xi^\ee(A \otimes \oneapp)$, where $\Gamma_\xi^\ee$ is  the channel  defined in \eq{eq:Gamma-U}.    Average conservation of $N$ by $\ee$   implies that  
\begin{align*}
\nsys + \tr[\napp \xi] \onesys =  \Gamma_\xi^\ee( \nsys\otimes \oneapp) + \Gamma_\xi^\ee(\onesys\otimes \napp ). 
\end{align*}
Since $\ff(\ii^*_\xx)$ is a von Neumann algebra, it follows that for all  $A \in \lo(\hs)$,  $A \in \ff(\ii^*_\xx) \implies A^*A, A A^* \in \ff(\ii^*_\xx)$ which, by \corref{corollary:multiplicability},  implies that $A\Gamma_\xi^\ee(B) = \Gamma_\xi^\ee((A  \otimes \oneapp)B)$ and $ \Gamma_\xi^\ee(B)A = \Gamma_\xi^\ee(B(A  \otimes \oneapp))$  for all $B\in \lo(\hs\otimes \ha)$. Therefore, for all $A \in \ff(\ii^*_\xx)$ we have
\begin{align*}
[A,  \nsys] &= [A, \Gamma_\xi^\ee(\nsys \otimes \oneapp)] + [A, \Gamma_\xi^\ee(\onesys \otimes \napp)] \nonumber \\
& = \Gamma_\xi^\ee([A\otimes \oneapp, \nsys \otimes \oneapp ]) + \Gamma_\xi^\ee([A \otimes \oneapp, \onesys \otimes \napp]) \nonumber \\
& = \Gamma_\xi^\ee([A, \nsys ]\otimes \oneapp) = \ii^*_\xx([A, \nsys]).
\end{align*}
Consequently, we see that $A  \in \ff(\ii^*_\xx) \implies [A, \nsys] \in \ff(\ii^*_\xx)$. But as shown in \lemref{lemma:fixed-points-instrument}, if $\ff(\ii_\xx^*)$ is a von Neumann algebra then $\ff(\ii_\xx^*) \subset \E'$. It follows that $A  \in \ff(\ii^*_\xx) \implies [A,
 \nsys]  \in \E'$.

\end{proof}

We are now ready to prove the following:

\begin{theorem}\label{theorem:faithful-fixedpoint-WAY}
Let $\E :=\{\E(x) : x\in \xx\}$ and $\F:=\{\F(y) : y \in \yy\}$ be observables acting in $\hs$. Let $\mm:=(\ha, \xi, \ee, \Z)$ be a measurement scheme for an  $\E$-instrument $\ii$, and assume that $\ee$ conserves an additive quantity $N = \nsys \otimes \oneapp + \onesys \otimes \napp$ on average, where $\nsys \in \los(\hs)$ and $\napp \in \los(\ha)$. If $\ff(\ii_\xx^*)$ is a von Neumann algebra, then the following hold:
\begin{enumerate}[(i)]
    \item $\F \subset \ff(\ii_\xx^*)$ only if  $\F$ commutes with  $\E$,   with $\Delta \nsys := \ii_\xx^*(\nsys) - \nsys$, and with $\{[\E(x), \nsys] : x\in \xx\}$.
    \item $\ii$  is a measurement of the first kind only if  $\E$  is a commutative observable that commutes with $N\sub{\s}$.
    \item $\ii$ is  repeatable  only if $\E$ is sharp and commutes with $\nsys$.
\end{enumerate}
\end{theorem}
\begin{proof}
\begin{enumerate}[(i):]
    \item By \lemref{lemma:faithful-fixedpoint-commutant},  $\ff(\ii_\xx^*) \subset \E'$. Therefore, $\F \subset \ff(\ii_\xx^*) \implies [\E, \F] = \zero$. Moreover, the conservation law implies that $[\F, \nsys] \subset \ff(\ii^*_\xx)$ must hold.  Since $\ff(\ii_\xx^*)$ is an algebra, then $\F(y) \in \ff(\ii_\xx^*) \implies \F(y)^2 \in \ff(\ii_\xx^*)$. By \corref{corollary:multiplicability}, it follows that $\ii_\xx^*(\F(y) \nsys) = \F(y) \ii_\xx^*(\nsys)$ and $\ii_\xx^*(\nsys\F(y)) = \ii_\xx^*(\nsys) \F(y)$, and so $\ii_\xx^*([\F(y), \nsys]) = [\F(y), \ii_\xx^*(\nsys)]$. As such, $\ii_\xx^*([\F(y), \nsys]) = [\F(y), \nsys] \implies [\F(y), \ii_\xx^*(\nsys) - \nsys ] = \zero$. Moreover,  by \lemref{lemma:faithful-fixedpoint-commutant} $\F \subset \ff(\ii_\xx^*)$ implies  $[\F(y), \nsys] \in \E'$. Since $[A, [B, C]] = [B, [A,C]] + [C, [B, A]]$ holds for all $A, B, C \in \lo(\hs)$, while $[\E, \F] = \zero$, then this implies that  $[\F(y), [\E(x), \nsys]] = \zero$.
    
    \item  $\ii$ is a measurement of the first kind only if $\E \subset \ff(\ii^*_\xx)$. Commutativity of  $\E$ follows from (i).   Now, let us define $\nsys(t):=e^{\imag t \E(x)}
    \nsys e^{-\imag t \E(x)}$.   We may write 
    \begin{align*}
    \frac{d}{dt}\nsys(t)     =\imag [\E(x), \nsys(t)],
    \end{align*}
    from which we obtain 
    \begin{align*}
    \nsys(t)&=\nsys     + \imag \int^t_0 dt_1     [\E(x), \nsys(t_1)]     \\
    &=     \nsys + \imag \int^t_0 d t_1 [\E(x), \nsys]     - \int^t_0 dt_1 \int^{t_1}_0     d t_2 \, e^{\imag t_2 \E(x)}[\E(x), [\E(x), \nsys]]     e^{- \imag t_2 \E(x)}      \\
    &= \nsys + \imag t[\E(x), \nsys].
    \end{align*}
    In the final line we have used (i), which implies that $[\E(x),[\E(x), \nsys]]=\zero$ holds for all $x$. We thus obtain the inequality 
    \begin{eqnarray*}
    2\| \nsys \| \geqslant     \| \nsys(t) - \nsys \|= |t| \| [\E(x), \nsys]\|
    \end{eqnarray*}
    for all $t$. Given that $\nsys$ is a bounded operator, this is clearly satisfied only if     $[\E(x), \nsys]=\zero$. 
    
    
    \item  Since repeatability implies first-kindness, then by (ii) $\E$ must commute with $\nsys$.   Sharpness of $\E$ follows from item (i) of \propref{prop:repeatable-instrument-identity} which gives $\ii_\xx^*(\E(x)^2) = \E(x)$ for a repeatable $\ii$, and the fact that if $\ff(\ii^*_\xx)$ is a von Neumann algebra then $\E \subset \ff(\ii^*_\xx)$ implies that $\E^2 \subset \ff(\ii^*_\xx)$. 
\end{enumerate}

\end{proof}

 We shall now give two examples where $\ff(\ii^*_\xx)$ is a von Neumann algebra, and so the implications of \thmref{theorem:faithful-fixedpoint-WAY} hold.

\begin{lemma}\label{lemma:Luders-algebra}
Consider the L\"uders $\E$-instrument $\ii^L$ acting in $\hs$, defined in \eq{eq:luders}.   If either (i)  $\dim(\hs) <\infty$, or (ii) $\E$  is commutative, then $\ff({\ii^L_\xx}^*)$ is a von Neumann algebra. 
\end{lemma}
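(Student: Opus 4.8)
The plan is to invoke Lindblad's lemma (\lemref{lemma:Lindblad}) wherever a faithful fixed \emph{state} is available, and to fall back on a direct computation of the fixed-point set when it is not. The basic observation driving both cases is that $\ii^L_\xx$ is a unital channel: since $\sum_{x}\E(x) = \onesys$, \eq{eq:luders} gives $\ii^L_\xx(\onesys) = \sum_x \sqrt{\E(x)}\,\onesys\,\sqrt{\E(x)} = \sum_x \E(x) = \onesys$. The strategy therefore splits according to whether the Schr\"odinger-picture fixed-point set $\ff(\ii^L_\xx)$ contains a faithful state.

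For case (i), finite dimension, unitality immediately supplies one: the maximally mixed state $\onesys/\dim(\hs)$ is fixed by $\ii^L_\xx$ and is faithful, so $\ff(\ii^L_\xx)$ contains a faithful state and \lemref{lemma:Lindblad} yields that $\ff({\ii^L_\xx}^*)$ is a von Neumann algebra at once.

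For case (ii) this route is blocked: when $\E$ is commutative with continuous spectrum, $\E'$ need not contain any faithful density operator (indeed any nonzero trace-class operator), so there may be no faithful fixed state and \lemref{lemma:Lindblad} cannot be applied. Instead I would prove directly that $\ff({\ii^L_\xx}^*) = \E'$, which is manifestly a von Neumann algebra. The inclusion $\E' \subseteq \ff({\ii^L_\xx}^*)$ is immediate and already noted: if $A$ commutes with every $\E(x)$ it commutes with every $\sqrt{\E(x)}$, so ${\ii^L_\xx}^*(A) = \sum_x \sqrt{\E(x)}A\sqrt{\E(x)} = \sum_x A\,\E(x) = A$. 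For the reverse inclusion I would diagonalise the abelian von Neumann algebra $\E''$ through its joint spectral (direct-integral) decomposition $\hs \cong \int^{\oplus}_{\Omega} \hs_\omega\, d\mu(\omega)$, in which each $\E(x)$ acts on the fibre $\hs_\omega$ as multiplication by a scalar $e_x(\omega) \in [0,1]$ with $\sum_x e_x(\omega) = 1$ almost everywhere. In this representation ${\ii^L_\xx}^*$ acts as the identity within each fibre and multiplies the cross-fibre component linking $\omega$ and $\omega'$ by the factor $\sum_x \sqrt{e_x(\omega)e_x(\omega')}$; by the Cauchy--Schwarz inequality this factor is at most $1$, with equality precisely when $e_x(\omega) = e_x(\omega')$ for all $x$, i.e.\ (since the $e_x$ separate points of the spectrum of $\E''$) precisely when $\omega = \omega'$. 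Hence a fixed point can have no cross-fibre part: it must be decomposable, which is exactly the condition $A \in \E'$.

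The main obstacle is this reverse inclusion. The conceptual point is that commutativity collapses the L\"uders operation to the identity fibrewise, leaving it only to annihilate off-diagonal ($\omega \neq \omega'$) contributions; but turning the heuristic ``multiplier on the kernel'' into a rigorous statement for a \emph{general} bounded (not merely Hilbert--Schmidt) fixed operator in infinite dimension requires testing the fixed-point equation ${\ii^L_\xx}^*(A) = A$ between spectral subspaces $P(S')\,(\cdot)\,P(S)$ of the joint spectral measure and passing to a limit over a generating family of Borel sets. Once $\ff({\ii^L_\xx}^*) = \E'$ is established the conclusion is automatic, and it is also consistent with \lemref{lemma:fixed-points-instrument}(ii), which guarantees $\ff({\ii^L_\xx}^*) \subseteq \E'$ as soon as the fixed-point set is known to be a von Neumann algebra.
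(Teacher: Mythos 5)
Your case (i) is exactly the paper's argument: the complete mixture is a faithful fixed state of $\ii^L_\xx$, and \lemref{lemma:Lindblad} finishes. Your remark that this route is genuinely blocked in case (ii) is also correct (for $\E''$ maximal abelian with nonatomic spectrum the commutant contains no trace-class operators, and the channel may have no fixed states at all), and the inclusion $\E' \subseteq \ff({\ii^L_\xx}^*)$ is fine. But the decisive step of case (ii) --- the reverse inclusion $\ff({\ii^L_\xx}^*) \subseteq \E'$ --- is precisely what you do not prove, and you should be aware that the paper does not prove it from scratch either: it invokes the literature. The identity $\ff({\ii^L_\xx}^*) = \E'$ for commutative $\E$ was established for two-valued observables in \cite{Busch1998}, conjectured in general in \cite{Arias2002}, and proven only about a decade later \cite{Weihua2010, Prunaru2011}. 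So what you have flagged as ``the main obstacle'' is in fact a research-level theorem, not a limit argument awaiting routine completion.

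The reason your proposed completion would fail as stated is worth making concrete. In the direct-integral picture, ${\ii^L_\xx}^*$ acts as a Schur multiplier with symbol $c(\omega,\omega') = \sum_x \sqrt{e_x(\omega)e_x(\omega')} = \langle v(\omega), v(\omega')\rangle$, where $v(\omega) := (\sqrt{e_x(\omega)})_x$ are unit vectors in $\ell^2$. On Hilbert--Schmidt operators the multiplier acts entrywise on the kernel, the HS norm is controlled by $\sup|c|$, and your equality-case Cauchy--Schwarz argument is valid --- which is why, e.g., one can show there are no nonzero fixed \emph{trace-class} operators in the nonatomic maximal abelian case. For a general bounded fixed point $A$, however, there is no kernel, and the operator norm of a Schur multiplier is \emph{not} controlled by the supremum of its symbol (triangular-truncation-type symbols bounded by $1-\epsilon$ can have unbounded multiplier norm). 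Correspondingly, your compression scheme runs into this: the spectral projections $P(S)$ lie in $\E''$, so $B := P(S')AP(S)$ is again a fixed point, and on well-separated sets one even has $\sup_{S\times S'} c < 1$; but the needed estimate $\|\sum_x \sqrt{\E(x)}\,B\,\sqrt{\E(x)}\| \leqslant (\sup_{S\times S'} c)\,\|B\|$ is false in general --- the triangle inequality only yields $\sum_x (\sup_{S'}\sqrt{e_x})(\sup_{S}\sqrt{e_x})\,\|B\|$, where the suprema sit inside the sum and the bound typically exceeds $1$. Absent a uniform contraction on compressions, the limit over a generating family of Borel sets does not force the cross-fibre parts to vanish. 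The honest fix, within the economy of this lemma, is to do what the paper does and cite \cite{Weihua2010, Prunaru2011} for case (ii).
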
 
\begin{proof}
Let us first consider (i).  Define the complete mixture $\omega:= \onesys/ \dim(\hs)$, which is faithful. It follows trivially that $\ii^L_\xx(\omega) = \omega$, and so $\ff(\ii^L_\xx)$ contains a faithful state $\omega$. By \lemref{lemma:Lindblad}  $\ff({\ii^L_\xx}^*)$ is a von Neumann algebra.  Now let us consider (ii). Recall  that  $\ff(\ii^*_\xx)$ is a von Neumann algebra if  $\ff(\ii^*_\xx) = \{K_i, K_i^*\}'$, with $\{K_i\}$ any Kraus representation of $\ii_\xx$ \cite{Bratteli1998}. But for a L\"uders instrument, we have $\{K_i, K_i^*\}' = \{\sqrt{\E(x)}\}' = \E'$. While $\E' \subset \ff({\ii^L_\xx}^*)$ always holds, it was observed that in infinite-dimensional systems there exists $\E$ for which $ \ff({\ii^L_\xx}^*) \not\subset \E'$ \cite{Arias2002,Weihua2009}. However, it was shown in \cite{Busch1998} that for binary observables, it always holds that $\ff({\ii^L_\xx}^*) = \E'$. Since binary observables are commutative, this led to the conjecture that the fixed-point set of the L\"uders $\E$-channel is the commutant of $\E$ for all commutative observables \cite{Arias2002}, making $\ff({\ii^L_\xx}^*)$ a von Neumann algebra, which was later proven to be the case   \cite{Weihua2010, Prunaru2011}. 

\end{proof}

Let us highlight an interesting consequence of the above lemma:

\begin{corollary}\label{Luders-WAY}
Let   $\mm := (\ha, \xi, \ee, \Z)$ be a measurement scheme for an $\E$-compatible L\"uders instrument $\ii^L$  acting in $\hs$. Assume that $\E$ is commutative, and that  $\ee$  conserves an additive quantity $N = N\sub{\s} \otimes \oneapp + \onesys\otimes N\sub{\aa}$ on average. It holds that $\E$ commutes with $\nsys$. 
\end{corollary}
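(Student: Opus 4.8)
The plan is to obtain this corollary as an immediate consequence of \lemref{lemma:Luders-algebra} together with item (ii) of \thmref{theorem:faithful-fixedpoint-WAY}; almost all of the work consists in checking that the three hypotheses of the theorem are in force and that the L\"uders instrument realises a measurement of the first kind. No new estimate is needed.

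First I would record that $\ff({\ii^L_\xx}^*)$ is a von Neumann algebra: since $\E$ is assumed commutative, this is precisely the content of part (ii) of \lemref{lemma:Luders-algebra}, which holds irrespective of the dimension of $\hs$. Next I would verify that $\ii^L$ is a measurement of the first kind, i.e. that $\E \subset \ff({\ii^L_\xx}^*)$. Commutativity of $\E$ gives $\E \subset \E'$, and since $\E' \subset \ff({\ii^L_\xx}^*)$ always holds for the L\"uders instrument, the inclusion $\E \subset \ff({\ii^L_\xx}^*)$ follows; alternatively one checks directly that ${\ii^L_\xx}^*(\E(y)) = \sum_x \sqrt{\E(x)}\,\E(y)\,\sqrt{\E(x)} = \E(y)\sum_x \E(x) = \E(y)$, using that commuting effects and their square roots commute, together with the normalisation $\sum_x \E(x) = \onesys$.

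With these two facts established, and together with the hypothesis that $\ee$ conserves the additive quantity $N$ on average, all premises of item (ii) of \thmref{theorem:faithful-fixedpoint-WAY} are satisfied for the instrument $\ii = \ii^L$. That item then yields that $\E$ is a commutative observable commuting with $\nsys$; since commutativity was already assumed, the remaining conclusion $[\E(x),\nsys]=\zero$ for all $x$ is exactly the claim. I do not anticipate any genuine obstacle: the only point demanding a moment's care is that the von Neumann algebra property for the L\"uders channel of a commutative observable is dimension-independent (part (ii) of \lemref{lemma:Luders-algebra}), so that the argument goes through without assuming $\hs$ finite-dimensional.
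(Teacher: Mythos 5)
Your proposal is correct and takes essentially the same route as the paper's own proof: it invokes part (ii) of \lemref{lemma:Luders-algebra} to get that $\ff({\ii^L_\xx}^*)$ is a von Neumann algebra, establishes first-kindness via $\E \subset \E' \subset \ff({\ii^L_\xx}^*)$, and concludes with item (ii) of \thmref{theorem:faithful-fixedpoint-WAY}. Your direct computation ${\ii^L_\xx}^*(\E(y)) = \E(y)$ and the remark on dimension-independence are correct but merely make explicit what the paper leaves implicit.
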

\begin{proof}
If $\E \subset \E'$, then $\ff({\ii^L_\xx}^*) =  \E'$ is a von Neumann algebra. Moreover, it holds that  $\E \subset \ff({\ii^L_\xx}^*)$,  so that the L\"uders instrument for a commutative observable is  a measurement of the first kind.  It follows from item (ii) of  \thmref{theorem:faithful-fixedpoint-WAY} that $\E$ must commute with $\nsys$.  
\end{proof}

\begin{lemma}\label{lemma:fuzzy-sharp}
 Assume that  $\dim(\hs) <\infty$, and let $\G:= \{\G(z) : z\in \zz\}$ be a rank-1  observable acting in $\hs$.  If an instrument $\ii$ does not disturb $\G$, then $\ff(\ii^*_\xx)$ is a von Neumann algebra.
\end{lemma}
\begin{proof}
Since $\G(z)$ are rank-$1$ effects, we may write $\G(z) = \lambda_z P_z$, with $\lambda_z \in (0,1]$ and $P_z$ a rank-1 projection. It follows that $\G \subset \ff(\ii^*_\xx) \implies \{P_z\} \subset \ff(\ii^*_\xx)$. But we may write
\begin{align*}
\tr[P_z \ii_\xx(P_z)] =  \tr[\ii_\xx^*(P_z) P_z] = \tr[P_z P_z] = \tr[P_z] = 1    ,
\end{align*} 
and so $\G \subset \ff(\ii^*_\xx) \implies \{P_z\} \subset \ff(\ii_\xx)$. Consequently, we may construct the faithful state $\omega = \sum_z p_z P_z $ with  $p_z >0$ and $\sum_z p_z =1$, so that $\omega \in \ff(\ii_\xx)$. By \lemref{lemma:Lindblad}, $\ff(\ii^*_\xx)$ is a von Neumann algebra. 
\end{proof}

\section{Non-faithful fixed states and measurement disturbance}\label{app:non-faithful-fixed-points}
In this section we analyse the structure of the fixed-point set of arbitrary channels, which need not contain a faithful state. From here, the results of the previous section are generalised. We then provide novel quantitative bounds for first-kind measurements which  complement our generalisation of the WAY theorem given in  \thmref{theorem:Strong-WAY}.

Due to the Schauder–Tychonoff fixed point theorem \cite{Fixed-points-appl}, all channels $\Phi : \trc(\hs) \to \trc(\hs)$ have at least one fixed state. However, it may be  that none of these are faithful. In such a case, the fixed-point set of the dual channel is not necessarily a von Neumann algebra, but rather forms an operator space \cite{Intro-Operator-Space}. This setting has been much less investigated, and its analysis
forms the first part of this section. While the discussion thus far has  been applicable for infinite-dimensional systems---except in some examples---in this section we shall always assume that $d:= \dim(\hs) < \infty$.

\subsection{Fixed-point structure of arbitrary channels}\label{app:operator-space-theory}

Consider a channel $\Phi : \trc(\hs) \to \trc(\hs)$, and its dual in the Heisenberg picture $\Phi^*: \lo(\hs) \to \lo(\hs)$. We may define the channels
\begin{align}\label{eq:Phi-av-defn}
& \Phi_\av(\cdot):= \lim_{N\to \infty} \frac{1}{N}\sum_{n=1}^N 
\Phi^{n}(\cdot), &\Phi^*_\av(\cdot):= \lim_{N\to \infty} \frac{1}{N}\sum_{n=1}^N 
\Phi^{*n}(\cdot),
\end{align} 
where $\Phi^{n}$ denotes $n$ consecutive applications of $\Phi$.  Note that these limits exists since $d <\infty$.   According to the Jordan decomposition theorem, $\Phi^*$ is represented as a summation of projections onto eigenspaces multiplied by the corresponding eigenvalues,  and nilpotent operators whose eigenspaces are invariant subspaces; $\Phi^*_\av$ corresponds to the projection onto the subspace 
with eigenvalue $1$. The fixed-point set  $\ff(\Phi^*)$   forms an operator space, i.e., a norm-closed vector subspace of the codomain of $\ff(\Phi^*)$, and $\Phi^*_\av$ is a CP projection onto $\ff(\Phi^*)$.
\begin{lemma}\label{lemma:Phi-av-prop}
Consider the channels  $\Phi_\av^*$ and $\Phi_\av$ defined in \eq{eq:Phi-av-defn}. These have the following properties:
\begin{enumerate}[(i)]
\item $\Phi^* \circ  \Phi^*_\av=\Phi^*_\av \circ \Phi^* = \Phi^*_\av \circ \Phi^*_\av=\Phi^*_\av$ and $\Phi \circ \Phi_\av = \Phi_\av \circ \Phi = \Phi_\av\circ \Phi_\av = \Phi_\av$.

\item $\Phi^*_\av(\lo(\hs))=\ff(\Phi^*_\av) = \ff(\Phi^*)$ and $\Phi_\av(\trc(\hs)) = \ff(\Phi_\av) = \ff(\Phi)$.
\end{enumerate}
\end{lemma}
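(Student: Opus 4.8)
The plan is to read the two defining limits as Ces\`aro (ergodic) averages and to prove the lemma in two stages: first item (i), which says that $\Phi^*_\av$ is idempotent and ``absorbs'' composition with $\Phi^*$, and then item (ii), which identifies its range and fixed-point set with $\ff(\Phi^*)$ by a purely algebraic argument resting on (i). Throughout I would work with the Heisenberg-picture map $\Phi^*$ and remark that the Schr\"odinger-picture statements for $\Phi_\av$ are proved verbatim (or follow by duality). The only analytic input is uniform boundedness of the partial powers: since $\Phi$ is a channel, every power $\Phi^{*n}$ is again unital and completely positive, so $\|\Phi^{*n}\| = \|\Phi^{*n}(\one)\| = 1$ for all $n$; existence of the limits is already granted by $d<\infty$.

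For item (i), the key step is a telescoping identity. Composing $\Phi^*$ with the $N$-th partial average gives
\begin{align*}
\Phi^* \circ \frac{1}{N}\sum_{n=1}^N \Phi^{*n} = \frac{1}{N}\sum_{n=1}^N \Phi^{*(n+1)} = \frac{1}{N}\sum_{n=1}^N \Phi^{*n} + \frac{1}{N}\big(\Phi^{*(N+1)} - \Phi^*\big),
\end{align*}
and since $\|\Phi^{*(N+1)} - \Phi^*\| \leqslant 2$ the boundary term vanishes as $N\to\infty$, yielding $\Phi^* \circ \Phi^*_\av = \Phi^*_\av$; composing on the right instead gives $\Phi^*_\av \circ \Phi^* = \Phi^*_\av$ by the identical computation. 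Idempotence $\Phi^*_\av \circ \Phi^*_\av = \Phi^*_\av$ then follows by pulling $\Phi^*_\av$ inside the average defining the second factor and using $\Phi^*_\av \circ \Phi^{*n} = \Phi^*_\av$, which is obtained by iterating $\Phi^*_\av \circ \Phi^* = \Phi^*_\av$. The $\Phi_\av$ identities are the same statements read in the Schr\"odinger picture.

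For item (ii), idempotence from (i) makes $\Phi^*_\av$ a completely positive projection, so its range equals its fixed-point set: if $A = \Phi^*_\av(B)$ then $\Phi^*_\av(A) = \Phi^*_\av\circ\Phi^*_\av(B) = \Phi^*_\av(B) = A$, while the reverse inclusion is trivial. This gives $\Phi^*_\av(\lo(\hs)) = \ff(\Phi^*_\av)$. The equality $\ff(\Phi^*_\av) = \ff(\Phi^*)$ is then two easy inclusions: any $A\in\ff(\Phi^*)$ has $\Phi^{*n}(A)=A$ for all $n$, hence $\Phi^*_\av(A)=A$; conversely, if $\Phi^*_\av(A)=A$ then $\Phi^*(A)=\Phi^*\circ\Phi^*_\av(A)=\Phi^*_\av(A)=A$ by item (i). The statements for $\Phi_\av$ and $\ff(\Phi)$ follow identically.

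Once convergence is granted, there is no real obstacle: the lemma is almost entirely algebraic, reducing to the telescoping identity and the standard range-equals-fixed-points property of an idempotent. The one genuinely analytic point---which I would flag as the content \emph{behind} the asserted convergence rather than part of this proof---is the absence of nontrivial Jordan blocks at eigenvalues of modulus one, which is what upgrades mere boundedness of $\|\Phi^{*n}\|$ to convergence of the Ces\`aro mean; this is exactly the Jordan-decomposition remark preceding the lemma. I would therefore take the limits as given and devote the proof to the two short arguments above.
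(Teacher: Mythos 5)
Your proposal is correct and follows essentially the same route as the paper: the paper dismisses item (i) as trivial and proves item (ii) by exactly your argument---$\ff(\Phi^*)\subset\ff(\Phi^*_\av)$ since $\Phi^{*n}(A)=A$ for all $n$, the converse via $\Phi^*(A)=\Phi^*\circ\Phi^*_\av(A)=\Phi^*_\av(A)=A$ from (i), and range equals fixed-point set by idempotence. Your telescoping computation simply supplies the details of (i) that the paper omits, and your closing remark correctly locates the analytic content (convergence of the Ces\`aro means) in the Jordan-decomposition observation preceding the lemma.
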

\begin{proof}
  (i) is trivial, and so we shall only prove (ii). Let us first consider  the Heisenberg picture channel $\Phi_\av^*$. That  $\ff(\Phi^*) \subset \ff(\Phi^*_\av)$ is trivial. Conversely, for any $A\in \ff(\Phi^*_\av)$,  by (i) we have $\Phi^*(A) = \Phi^*\circ \Phi^*_\av(A)=\Phi^*_\av(A) = A$, and therefore $\ff(\Phi^*_\av) \subset \ff(\Phi^*)$. It follows that $\ff(\Phi^*_\av) = \ff(\Phi^*)$. Similarly,  for all $A\in \lo(\hs)$ it holds that $\Phi^*_\av\circ \Phi^*_\av(A) 
=\Phi^*_\av(A)$, and thus  $\Phi^*_\av(\lo(\hs)) \subset \ff(\Phi^*_\av)$. That $\ff(\Phi^*_\av) \subset \Phi^*_\av(\lo(\hs))$ is trivial, and so we also have $\Phi^*_\av(\lo(\hs)) = \ff(\Phi^*_\av)$. The relations in (ii) for the Schr\"odinger picture channel $\Phi_\av$ follow from similar arguments.
\end{proof}

Now consider the state
\begin{align}\label{eq:rho-0}
   \rho_0:= \Phi_\av\left(\frac{1}{d} \onesys\right). 
\end{align}
By \lemref{lemma:Phi-av-prop}, it holds that $\rho_0 \in \ff(\Phi_\av)= \ff(\Phi)$. We define by $P$ the minimal support projection on $\rho_0$:
\begin{align}\label{eq:P-defn}
   P:= \min \{Q : Q \text{ is a projection, } \rho_0 = Q \rho_0 Q \}. 
\end{align}
In other words, for all projections $Q$ such that $\rho_0 = Q \rho_0 Q$, it holds that $Q \geqslant P$. Note that if $P=\onesys$ then $\ff(\Phi)$ contains a faithful state.  The following lemma provides some useful properties of $P$.
\begin{lemma}\label{lemma:P-properties}
Consider the state $\rho_0$  defined in \eq{eq:rho-0}, with  the minimal support projection  $P$ as defined in \eq{eq:P-defn}, and $P^\perp := \onesys - P$ its orthogonal complement. The following hold:
\begin{enumerate}[(i)]
    \item $\Phi^*_\av(P)=\onesys$ and $\Phi_\av^*(P^\perp) = \zero$.

\item For all $A \in \lo(\hs)$, $\Phi^*_\av(A) = \Phi^*_\av(P A P)$.

\item  $P=\min \{Q:  Q\mbox{ is a projection, } \rho =  Q \rho  Q \,\ \forall \,  \rho \in \ff(\Phi)\}$.

\item  $P=\min \{Q:  Q\mbox{ is a projection, } \Phi^*_\av(Q)=\onesys$\}. 

\item  $\Phi^*(P)\geqslant P$ and $\Phi^*(P^{\perp}) \leqslant P^{\perp}$.  

\end{enumerate}
\end{lemma}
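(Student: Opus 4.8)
The plan is to run everything off two ingredients: the characterisation of $P$ as the support projection of $\rho_0$ (so that $\rho_0 = Q\rho_0 Q$ holds exactly when $Q\geqslant P$), and the elementary fact that $\tr[X\rho]=0$ for positive $X,\rho$ forces $X\rho=\zero$. I would treat the items in the stated order, since (i) is the workhorse for all that follows.

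For (i), since $\Phi_\av$ is a trace-preserving positive map, $\rho_0$ is a state with support $P$, so $P^\perp\rho_0=\zero$. Using the duality and $\rho_0=\Phi_\av(\onesys/d)$,
\[
\tr[\Phi^*_\av(P^\perp)] = d\,\tr\!\big[\Phi^*_\av(P^\perp)\,\tfrac{1}{d}\onesys\big] = d\,\tr[P^\perp\rho_0] = 0 .
\]
As $\Phi^*_\av(P^\perp)\geqslant\zero$ has vanishing trace it is zero, and unitality of $\Phi^*_\av$ gives $\Phi^*_\av(P)=\onesys-\Phi^*_\av(P^\perp)=\onesys$. For (ii), $\Phi^*_\av$ is a channel, hence an operation, so from $\Phi^*_\av(P^\perp)=\zero$ and \lemref{lemma:operation-annihilation} we get $\Phi^*_\av(P^\perp B)=\Phi^*_\av(BP^\perp)=\zero$ for all $B$; expanding $A=PAP+PAP^\perp+P^\perp AP+P^\perp AP^\perp$ kills the last three terms and yields $\Phi^*_\av(A)=\Phi^*_\av(PAP)$.

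Items (iii) and (iv) share one engine: any projection $Q$ in the relevant set satisfies $\tr[Q^\perp\rho_0]=0$, hence $Q^\perp\rho_0=\zero$ and therefore $Q\geqslant P$, while $P$ itself lies in each set, so $P$ is the minimum. Concretely, for (iii) every $\rho\in\ff(\Phi)=\ff(\Phi_\av)$ obeys $\tr[P^\perp\rho]=\tr[\Phi^*_\av(P^\perp)\rho]=0$, so $\rho=P\rho P$ and $P$ is admissible; conversely an admissible $Q$ must support $\rho_0\in\ff(\Phi)$, forcing $Q\geqslant P$. For (iv), membership of $P$ is precisely (i), and if $\Phi^*_\av(Q)=\onesys$ then $\Phi^*_\av(Q^\perp)=\zero$, so $\tr[Q^\perp\rho_0]=\tr[\Phi^*_\av(Q^\perp)\onesys/d]=0$ and again $Q\geqslant P$.

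The substantive item is (v), since it concerns $\Phi^*$ rather than its average. Here I would use that $\rho_0$ is a fixed state of $\Phi$ itself:
\[
\tr[\Phi^*(P^\perp)\rho_0]=\tr[P^\perp\Phi(\rho_0)]=\tr[P^\perp\rho_0]=0,
\]
so $\Phi^*(P^\perp)\rho_0=\zero$. Because $\rho_0$ has range exactly $P\hs$, this gives $\Phi^*(P^\perp)P=\zero$, and self-adjointness of $\Phi^*(P^\perp)$ upgrades it to $\Phi^*(P^\perp)=P^\perp\Phi^*(P^\perp)P^\perp$. The closing move is a compression argument: from $\zero\leqslant\Phi^*(P^\perp)\leqslant\Phi^*(\onesys)=\onesys$ together with the fact that $\Phi^*(P^\perp)$ lives in the corner $P^\perp(\cdot)P^\perp$, positivity of $P^\perp(\onesys-\Phi^*(P^\perp))P^\perp$ yields $\Phi^*(P^\perp)\leqslant P^\perp$; subtracting from $\Phi^*(\onesys)=\onesys$ then gives $\Phi^*(P)\geqslant P$. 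I expect the only delicate point to be the identification $\mathrm{range}(\rho_0)=P\hs$ needed to pass from $\Phi^*(P^\perp)\rho_0=\zero$ to $\Phi^*(P^\perp)P=\zero$; everything else is bookkeeping with positivity.
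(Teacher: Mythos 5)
Your proposal is correct and follows essentially the same route as the paper's proof: items (i)--(iv) rest on the same ingredients (the trace computation against $\rho_0$ via duality, positivity forcing $\Phi^*_\av(P^\perp)=\zero$, \lemref{lemma:operation-annihilation}, and minimality of $P$ as the support projection of $\rho_0$), differing only cosmetically in routing some steps through complements. In (v) you spell out the compression step---$\Phi^*(P^\perp)=P^\perp\Phi^*(P^\perp)P^\perp\leqslant P^\perp$, legitimate since $\dim(\hs)<\infty$ guarantees $\mathrm{range}(\rho_0)=P\hs$---that the paper's terse ``it follows that $\Phi^*(P)\geqslant P$'' leaves implicit, which is a clarification rather than a divergence.
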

\begin{proof}
\begin{enumerate}[(i):]
    \item Since $\Phi_\av^*$ is a channel, and $\zero < P \leqslant \onesys$, it follows that $\zero \leqslant \Phi_\av^*(P) \leqslant \onesys$. But  by \eq{eq:rho-0} $\tr[ \Phi^*_\av(P)] = d \,  \tr[\rho_0 P] =d$, and so $\Phi_\av^*(P) = \onesys$. It trivially follows that $\Phi_\av^*(P^\perp) = \Phi_\av^*(\onesys) - \Phi_\av^*(P) = \onesys - \onesys = \zero$. 
    
    \item Since $P$ is positive, then by (i) and \lemref{lemma:operation-annihilation} it holds that  $ \Phi^*_\av(P^{\perp}A)  = \Phi^*_\av(AP^{\perp}) = \zero$ for all $A$. The claim follows by noting that we may write $A = (P + P^\perp) A (P + P^\perp)$. 

\item By \lemref{lemma:Phi-av-prop} and (ii), for all $A \in \lo(\hs)$ and $\rho \in \ff(\Phi)$ we have $\tr[A \rho] = \tr[\Phi_\av^*(A) \rho] = \tr[\Phi_\av^*(P A P) \rho] = \tr[A P \rho P]$, and so $\rho \in \ff(\Phi) \implies \rho = P \rho P$. Since $P$ is the minimal support projection on $\rho_0 \in \ff(\Phi)$, the claim follows. 

\item By (i), $\Phi^*_\av(P)=\onesys$ holds. Suppose another projection $Q$ satisfies $\Phi^*_\av(Q)=\onesys$. Then (ii) implies that $ \Phi^*_\av(PQP) = \Phi^*_\av(Q) =\onesys$. As we have $1=\tr[(\onesys/d)\Phi^*_\av(PQP)]
=\tr[\rho_0 PQP]=\tr[\rho_0 Q]$, it follows that $Q\geqslant P$. 

\item Since $P$ is the smallest projection satisfying $\rho_0 = P \rho_0 P$, while  $\tr[\rho_0 \Phi^*(P)] =\tr[\Phi(\rho_0) P]=\tr[\rho_0 P]=1$, it follows that $\Phi^*(P)  \geqslant  P$, and hence  $\Phi^*(P^\perp) = \Phi^*(\onesys) - \Phi^*(P) \leqslant \onesys - P = P^\perp$.

\end{enumerate}

\end{proof}

Let us now define the operations
\begin{align}\label{eq:Phi-P-av}
&\Phi^*_{\av, P}(\cdot) := P\Phi^*_\av(\cdot)P, &\Phi^*_P(\cdot) := P\Phi^*(\cdot)P.     
\end{align}
Note that $\Phi_{\av,P}^* : \lo(\hs) \to \lo( \hs)$  is not necessarily unital, since
$\Phi^*_{\av,P}(\onesys)=P \leqslant  \onesys$. However, the restriction of $\Phi^*_{\av,P}$ to $\lo(P\hs) \to \lo(P \hs)$, 
 which is also denoted by $\Phi^*_{\av,P}$, is unital and hence a channel, since $P$ is the identity in $\lo(P \hs)$ and  $\Phi^*_{\av,P}(P) =  P$. The same holds for $\Phi_{P}^*$.

\begin{lemma}\label{lemma:fav1}
Consider the operations defined in \eq{eq:Phi-P-av}. The following hold:
\begin{enumerate}[(i)]
\item $\Phi^*_P(A) = \Phi^*_P(P A P)$ for all $A \in \lo(\hs)$.

 \item $\Phi^*_{\av, P}(A) = \Phi^*_{\av, P}(PAP)$ for all $A \in \lo(\hs)$.

\item $\Phi^*_{\av, P}$ is a completely positive projection
$\lo(\hs) \to \lo(P\hs)$. 
\item $\ff(\Phi^*_{\av,P})= \Phi^*_{\av,P}(\lo(P\hs))$.  
 \item   $\Phi^*_\av$ is a bijection from 
 $\ff(\Phi^*_{\av,P})$ to $\ff(\Phi^*)$.
\item The inverse of $\Phi^*_\av: 
\ff(\Phi^*_{\av,P}) \to \ff(\Phi^*)$ is $\Ad_P: A \mapsto P A P$.
\end{enumerate}
\end{lemma}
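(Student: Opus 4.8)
The plan is to obtain parts (ii)--(vi) as essentially formal consequences of the structural facts already established for $\Phi^*_\av$ and $P$, and to isolate the only genuine computation in part (i). The unifying observation I would exploit is that $\Phi^*_P$ and $\Phi^*_{\av,P}$ factor as $\Ad_P\circ\Phi^*$ and $\Ad_P\circ\Phi^*_\av$ respectively, where $\Ad_P(X):=PXP$ is itself a completely positive, sub-unital map, i.e., an operation (with single Kraus operator $P$). This lets me import \lemref{lemma:operation-annihilation}, \lemref{lemma:Phi-av-prop}, and \lemref{lemma:P-properties} almost directly.

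Part (i) is the step I expect to be the main (indeed only) obstacle, and the key move is to show that the operation $\Phi^*_P$ annihilates $P^\perp$. Since $\Phi^*_P=\Ad_P\circ\Phi^*$ is a composition of completely positive maps it is an operation, and $\Phi^*_P(P^\perp)=P\Phi^*(P^\perp)P$. By \lemref{lemma:P-properties}(v) one has $\Phi^*(P^\perp)\leqslant P^\perp$, so sandwiching with $P$ gives $\zero\leqslant P\Phi^*(P^\perp)P\leqslant PP^\perp P=\zero$, whence $\Phi^*_P(P^\perp)=\zero$. Applying \lemref{lemma:operation-annihilation} to $\Phi^*_P$ with the positive operator $P^\perp$ then yields $\Phi^*_P(P^\perp B)=\Phi^*_P(BP^\perp)=\zero$ for all $B\in\lo(\hs)$. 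Writing $A=PAP+PAP^\perp+P^\perp AP+P^\perp AP^\perp$ and noting that each of the last three summands carries an outer factor $P^\perp$, all three are killed by $\Phi^*_P$, leaving $\Phi^*_P(A)=\Phi^*_P(PAP)$.

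Part (ii) is immediate, since \lemref{lemma:P-properties}(ii) already gives $\Phi^*_\av(A)=\Phi^*_\av(PAP)$, and sandwiching with $P$ is exactly (ii). For part (iii), complete positivity of $\Phi^*_{\av,P}=\Ad_P\circ\Phi^*_\av$ follows because $\Phi^*_\av$ is a limit of completely positive maps and $\Ad_P$ is completely positive, while its range lies in $\lo(P\hs)$ by construction; idempotency I would deduce by combining (ii) with $\Phi^*_\av\circ\Phi^*_\av=\Phi^*_\av$ from \lemref{lemma:Phi-av-prop}(i), computing $\Phi^*_{\av,P}(\Phi^*_{\av,P}(A))=P\Phi^*_\av(P\Phi^*_\av(A)P)P=P\Phi^*_\av(\Phi^*_\av(A))P=\Phi^*_{\av,P}(A)$. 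Part (iv) is then the general fact that the range of an idempotent coincides with its fixed-point set, together with the remark that any fixed point satisfies $A=\Phi^*_{\av,P}(A)=PAP$ and so automatically lies in $\lo(P\hs)$, giving $\ff(\Phi^*_{\av,P})=\Phi^*_{\av,P}(\lo(P\hs))$.

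Finally, I would prove (v) and (vi) together by exhibiting $\Ad_P$ as an explicit two-sided inverse. Using \lemref{lemma:P-properties}(ii) and $\ff(\Phi^*_\av)=\ff(\Phi^*)$ from \lemref{lemma:Phi-av-prop}(ii): for $A\in\ff(\Phi^*)$ one has $\Phi^*_\av(\Ad_P(A))=\Phi^*_\av(PAP)=\Phi^*_\av(A)=A$, and $PAP\in\ff(\Phi^*_{\av,P})$ because $\Phi^*_{\av,P}(PAP)=P\Phi^*_\av(A)P=PAP$; conversely, for $B\in\ff(\Phi^*_{\av,P})$ one has $\Ad_P(\Phi^*_\av(B))=P\Phi^*_\av(B)P=\Phi^*_{\av,P}(B)=B$, while $\Phi^*_\av(B)\in\ff(\Phi^*)$ since $\Phi^*_\av$ maps into $\ff(\Phi^*)$. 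These two identities show that $\Phi^*_\av$ and $\Ad_P$ are mutually inverse bijections between $\ff(\Phi^*_{\av,P})$ and $\ff(\Phi^*)$, which is precisely (v) and (vi).
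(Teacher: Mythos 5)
Your proposal is correct and follows essentially the same route as the paper's own proof: (i) via $\Phi^*_P(P^\perp)=\zero$ from \lemref{lemma:P-properties}(v) followed by \lemref{lemma:operation-annihilation}, (ii) from \lemref{lemma:P-properties}(ii), (iii)--(iv) from the idempotency supplied by \lemref{lemma:Phi-av-prop}(i) and the range-equals-fixed-points fact. Your only (cosmetic) deviation is handling (v) and (vi) simultaneously by exhibiting $\Ad_P$ as a two-sided inverse, where the paper proves surjectivity and injectivity of $\Phi^*_\av$ separately and then reads off the inverse---the same identities in either packaging.
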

\begin{proof}
\begin{enumerate}[(i):]
\item By item (v) of \lemref{lemma:P-properties}, it holds that $\zero \leqslant \Phi^*_P(P^\perp) = P \Phi^*(P^\perp) P \leqslant P P^\perp P =\zero $, and so $\Phi^*_P(P^\perp) = \zero$. By \lemref{lemma:operation-annihilation},  it holds that  $\Phi_P^*(P^\perp A) = \Phi_P^*(A P^\perp)= \zero$ for all $A \in \lo(\hs)$. Since $A = (P + P^\perp) A (P + P^\perp)$, the claim follows.
 
 \item By item (ii) of \lemref{lemma:P-properties}, $\Phi_{\av}^*(A) = \Phi_{\av}^*(P A P)$. The claim immediately follows.

\item By item (i) of \lemref{lemma:Phi-av-prop} and (ii) above, we have
\begin{align*}
\Phi^*_{\av, P} \circ \Phi^*_{\av, P}(A) = \Phi^*_{\av, P} (P \Phi^*_{\av}(A) P) = \Phi^*_{\av, P} \circ \Phi^*_{\av}(A)
=P\Phi^*_\av\circ\Phi^*_\av(A)P = P\Phi^*_\av(A)P =\Phi^*_{\av, P}(A).    
\end{align*}

\item   By (iii), it follows  that for any $A \in \lo(P \hs)$, $\Phi^*_{\av,P}(A) \in \ff(\Phi^*_{\av,P})$ and so $\Phi^*_{\av,P}(\lo(P\hs)) \subset \ff(\Phi^*_{\av,P})$. The converse is trivial.

\item  For all $A \in \ff(\Phi^*) = \ff(\Phi^*_\av)$, there exists an operator $P A P \in \ff(\Phi_{\av,P}^*)$ such that $\Phi_\av^*(P A P) = \Phi_\av^*(A) = A$. Therefore, $\Phi_\av^*$ is surjective.  Now assume that there exists  $A \in \ff(\Phi_{\av, P}^*)$ such that  $\Phi^*_\av(A)=\zero$. This  implies that $A = \Phi_{\av, P}^*(A) = P\Phi_\av^*(A)P = \zero$. Therefore, $\Phi_\av^*$ is injective. 
\item Follows from above.
\end{enumerate}
\end{proof}

The above results have the following useful consequence:

\begin{prop}\label{prop:fav2}
Consider the operations  $\Phi_{\av,P}^*$ and $\Phi^*_P$  defined in \eq{eq:Phi-P-av}. 
It holds that 
\begin{align*}
\ff(\Phi^*_{\av,P})= \ff(\Phi^*_P) =  P \ff(\Phi_\av^*)P = P\ff(\Phi^*)P     
\end{align*} 
is a von Neumann algebra in $\lo(P \hs)$.
\end{prop}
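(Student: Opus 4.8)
The plan is to read off the four-fold equality and the algebra property from the structural results already established, with a single new input. The three equalities not involving $\Phi^*_P$ are essentially in hand: by \lemref{lemma:fav1}(v)--(vi) the map $\Phi^*_\av$ restricts to a bijection $\ff(\Phi^*_{\av,P}) \to \ff(\Phi^*)$ whose inverse is $\Ad_P : A \mapsto PAP$, so $\ff(\Phi^*_{\av,P}) = P\,\ff(\Phi^*)\,P$, and since $\ff(\Phi^*) = \ff(\Phi^*_\av)$ by \lemref{lemma:Phi-av-prop}(ii) this is also $P\,\ff(\Phi^*_\av)\,P$. Hence the substantive tasks are (a) $\ff(\Phi^*_P) = \ff(\Phi^*_{\av,P})$ and (b) that the common set is a von Neumann algebra.

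For (b) I would invoke Lindblad's lemma. First observe that $\rho_0$ of \eqref{eq:rho-0}, regarded as a state on $P\hs$, is faithful there (its support projection is exactly $P$ by \eqref{eq:P-defn}) and is fixed by the channel $\Phi_{\av,P}$ on $\trc(P\hs)$ dual to $\Phi^*_{\av,P}|_{\lo(P\hs)}$: testing against $A \in \lo(P\hs)$ and using $\rho_0 = P\rho_0 P$ together with $\rho_0 \in \ff(\Phi_\av)$ gives $\tr[\rho_0 \Phi^*_{\av,P}(A)] = \tr[\rho_0 \Phi^*_\av(A)] = \tr[\rho_0 A]$. Then \lemref{lemma:Lindblad} applied to $\Phi_{\av,P}$ makes $\ff(\Phi^*_{\av,P})$ a von Neumann algebra in $\lo(P\hs)$.

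The crux is (a), which I would obtain from the sharper statement $(\Phi^*_P)_\av = \Phi^*_{\av,P}$ on $\lo(P\hs)$, after which \lemref{lemma:Phi-av-prop}(ii) applied to the channel $\Phi_P$ gives $\ff(\Phi^*_P) = \ff((\Phi^*_P)_\av) = \ff(\Phi^*_{\av,P})$. The key observation is that any Kraus family $\{K_i\}$ of $\Phi$ must leave $\mathrm{ran}(P)$ invariant: from $\Phi(\rho_0) = \rho_0$ and $\rho_0 = P\rho_0 P$, compressing $\sum_i K_i \rho_0 K_i^*$ by $P^\perp$ forces $P^\perp K_i \rho_0^{1/2} = 0$ for each $i$, and faithfulness of $\rho_0$ on $P\hs$ yields $P^\perp K_i P = 0$, i.e.\ each $K_i$ is block upper-triangular with respect to $P\hs \oplus P^\perp\hs$. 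A short block computation then shows that the $P$-corner of $\Phi^*(X) = \sum_i K_i^* X K_i$ is $\sum_i (PK_i^*P)(PXP)(PK_iP)$, so that $P\Phi^*(X)P = \Phi^*_P(PXP)$ for every $X$; in particular $P\Phi^*(X)P$ depends only on $PXP$.

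With this identity an induction gives $(\Phi^*_P)^n(A) = P\,\Phi^{*n}(A)\,P$ for all $A \in \lo(P\hs)$ and all $n$, since applying $\Phi^*_P$ to $P\Phi^{*n}(A)P$ reproduces the $P$-corner of $\Phi^{*(n+1)}(A)$. Averaging over $n$ and using the definition \eqref{eq:Phi-av-defn} of $\Phi^*_\av$ (the limit exists since $d < \infty$) yields $(\Phi^*_P)_\av(A) = P\Phi^*_\av(A)P = \Phi^*_{\av,P}(A)$, as required. I expect the main obstacle to be precisely this invariance of $\mathrm{ran}(P)$ under the Kraus operators and the resulting identity $P\Phi^*(X)P = \Phi^*_P(PXP)$: it is what makes the two a priori distinct ergodic averages collapse onto each other, and it is the one place where the support structure of $\rho_0$ (via \lemref{lemma:P-properties}) does genuine work rather than bookkeeping.
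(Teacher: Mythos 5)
Your proposal is correct and follows essentially the same route as the paper's proof: the algebra property via \lemref{lemma:Lindblad} applied to $\Phi_{\av,P}$ with the faithful fixed state $\rho_0$ on $P\hs$, the identifications $\ff(\Phi^*_{\av,P})=P\ff(\Phi^*_\av)P=P\ff(\Phi^*)P$ via \lemref{lemma:fav1}(v)--(vi) and \lemref{lemma:Phi-av-prop}, and the collapse $\ff(\Phi^*_P)=\ff(\Phi^*_{\av,P})$ from the identity $(\Phi^*_P)^n(A)=P\,\Phi^{*n}(A)\,P$ followed by Ces\`aro averaging. The only local departure is that you re-derive the key input $\Phi^*_P(X)=\Phi^*_P(PXP)$ by showing the Kraus operators of $\Phi$ leave $\mathrm{ran}(P)$ invariant, whereas the paper simply cites \lemref{lemma:fav1}(i), obtained there from $\Phi^*(P^\perp)\leqslant P^\perp$ (\lemref{lemma:P-properties}(v)) and \lemref{lemma:operation-annihilation}; both arguments are sound in the finite-dimensional setting assumed here.
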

\begin{proof}
Recall that the operation $\Phi_{\av,P}^*: \lo(P\hs) \to \lo(P\hs)$ is unital, where the unit in $\lo(P\hs)$ is $P$. Moreover, $\rho_0$ as defined in \eq{eq:rho-0} is a faithful fixed point of $\Phi_{\av,P}$ in $\trc(P \hs)$. By \lemref{lemma:Lindblad}, the fixed-point set $\ff(\Phi^*_{\av,P})$ is a   von Neumann algebra in $\lo(P\hs)$. Now  we need only show that $\ff(\Phi^*_{\av,P})= \ff(\Phi^*_P) = P \ff(\Phi_\av^*)P = P\ff(\Phi^*)P$. That $ P \ff(\Phi_\av^*)P = P\ff(\Phi^*)P$ trivially follows from \lemref{lemma:Phi-av-prop}, which gives $\ff(\Phi^*) = \ff(\Phi_\av^*)$.   That  $\ff(\Phi^*_{\av,P})=  P\ff(\Phi^*)P$ follows from item (vi) of \lemref{lemma:fav1}, since the map $\Ad_P : A \mapsto PAP$ is a bijection from $\ff(\Phi^*)$ to $\ff(\Phi^*_{\av,P})$.  To show that  $\ff(\Phi^*_{\av,P})= \ff(\Phi^*_P) $, let us first define the  operation  $\Phi_{P, \av}^*:\lo(\hs) \to \lo(P\hs)$ as
\begin{align*}
    \Phi_{P, \av}^*(\cdot) &:= \lim_{N \to \infty} \frac{1}{N} \sum_{n=1}^N \Phi_P^{*n}(\cdot).
\end{align*}
 But by item (i) of \lemref{lemma:fav1},   $\Phi^*_P(A) = \Phi^*_P(P A P)$, and so  $\Phi_P^{*n}(A) =  P \Phi^{*n}(A) P$ for all $A \in \lo(\hs)$ and $n \in \nat$. It follows that
\begin{align*}
  \Phi_{P, \av}^*(\cdot)   & = \lim_{N \to \infty} \frac{1}{N} \sum_{n=1}^N P\Phi^{*n}(\cdot)P = P \Phi_{\av}^*(\cdot) P =: \Phi^*_{\av, P}(\cdot).
\end{align*}
Since $\Phi_{\av, P}^* = \Phi_{P, \av}^*$,   it trivially follows that  $\Phi_P^* \circ \Phi_{\av, P}^* = \Phi_{\av, P}^* \circ \Phi_P^* = \Phi_{\av, P}^* \circ \Phi_{\av, P}^* = \Phi_{\av, P}^*$. Therefore, $\ff(\Phi^*_{\av,P})  = \ff(\Phi^*_P)$ can be shown  by the same arguments as in  \lemref{lemma:Phi-av-prop}

\end{proof}


\subsection{Measurement disturbance revisited}\label{app:non-disturbance-fixed-point}

We are now ready to address the question of measurement disturbance, generalising the observations of \thmref{theorem:faithful-fixedpoint-WAY}. As before, let $\ii:= \{\ii_x: x\in \xx\}$ be an $\E$-compatible instrument, with $\ii_\xx(\cdot) := \sum_x \ii_x(\cdot)$ the corresponding $\E$-channel. By \eq{eq:Phi-av-defn} and \eq{eq:Phi-P-av} we define 
\begin{align*}
    \ii_\av^*(\cdot) &:= \lim_{N \to \infty} \frac{1}{N} \sum_{n=1}^N \ii_\xx^{*n}(\cdot), \qquad 
    \ii_{\av,P}^*(\cdot) := P \ii_\av^*(\cdot) P, \qquad \ii_P^*(\cdot) := P \ii_\xx^*(\cdot) P,
\end{align*}
where as in \eq{eq:P-defn}, $P$ is the minimal support projection of $\rho_0 := \ii_\av(\frac{1}{d} \onesys)$, which corresponds with the minimal projection on the support of $\ff(\ii_\xx)$. By  \propref{prop:fav2}, $\ff(\ii_{\av,P}^*) = \ff(\ii_P^*) =P \ff(\ii^*_\av)P = P \ff(\ii^*_\xx)P$ is a von Neumann algebra in $\lo(P \hs)$. We define by $P \E P := \{P \E(x) P : x\in \xx\}$  the restriction of  $\E$ to an observable in  $P\hs$, which satisfies $\sum_x P \E(x) P = P$, and  $(P \E P) ' := \{A \in \lo(P \hs) : [P \E(x) P, A] = \zero \, \forall x\}$ denotes the commutant of $P \E P$ in $\lo(P \hs)$. $P \F P$ is similarly defined.

Before generalising \thmref{theorem:faithful-fixedpoint-WAY} for the case where $\ff(\ii_\xx)$ may not contain any faithful states, and  thus $\ff(\ii_\xx^*)$ may not necessarily be a von Neumann algebra, let us first prove a generalisation of  \lemref{lemma:faithful-fixedpoint-commutant}. 
\begin{lemma}\label{lemma:non-faithful-fixedpoint-commutant}
Let   $\mm:= (\ha, \xi, \ee, \Z)$ be a measurement scheme for an $\E$-instrument $\ii$ acting in $\hs$, and let $P$ be the minimal support projection on $\ff(\ii_\xx)$. It holds that $\ff(\ii_P^*) \subset (P \E P) '$.  Additionally, if $\ee$  conserves an additive quantity $N = N\sub{\s} \otimes \oneapp + \onesys\otimes N\sub{\aa}$ on average, where $\nsys \in \los(\hs)$ and $\napp \in \los(\ha)$, then  for all $A\in \lo(\P \hs)$ the following implication holds:   $A\in \ff(\ii_P^*) \implies [A,  P N\sub{\s} P ] \in \ff(\ii_P^*) \implies [A,  P N\sub{\s} P ] \in (P\E P)'$.
\end{lemma}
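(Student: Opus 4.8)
The plan is to reduce everything to the von Neumann algebra $\ff(\ii_P^*)$ furnished by \propref{prop:fav2}, and to exploit the multiplicative (Cauchy--Schwarz--saturating) behaviour of the operation $\tilde\Gamma(\cdot):= P\,\Gamma_\xi^\ee(\cdot)\,P$ on the fixed points of $\ii_P^*$, where $\Gamma_\xi^\ee$ is the channel of \eq{eq:Gamma-U}. This $\tilde\Gamma$ is a sub-unital CP map (indeed $\tilde\Gamma(\onesys\otimes\oneapp)=P$) satisfying $\ii_P^*(C)=\tilde\Gamma(C\otimes\oneapp)$ for all $C\in\lo(\hs)$, and $P\E(x)P=\tilde\Gamma(\onesys\otimes\Z(x))$, so it packages both the effects of $P\E P$ and the action of $\ii_P^*$.

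First I would establish the multiplicative-domain property. If $A\in\ff(\ii_P^*)$ then, since $\ff(\ii_P^*)$ is a $*$-algebra by \propref{prop:fav2}, the operators $A^*,A^*A,AA^*$ are all fixed, whence $\ii_P^*(A^*A)=A^*A=\ii_P^*(A^*)\ii_P^*(A)$ and likewise with $A,A^*$ interchanged. Applying \corref{corollary:multiplicability} to $\tilde\Gamma$ with $B=A\otimes\oneapp$ then yields $\tilde\Gamma((A\otimes\oneapp)D)=A\,\tilde\Gamma(D)$ and $\tilde\Gamma(D(A\otimes\oneapp))=\tilde\Gamma(D)\,A$ for every $D\in\lo(\hs\otimes\ha)$. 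The inclusion $\ff(\ii_P^*)\subset(P\E P)'$ then drops out immediately, since $\onesys\otimes\Z(x)$ commutes with $A\otimes\oneapp$:
\begin{align*}
[P\E(x)P,\,A]=[\tilde\Gamma(\onesys\otimes\Z(x)),A]=\tilde\Gamma\big([\onesys\otimes\Z(x),\,A\otimes\oneapp]\big)=\zero.
\end{align*}

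For the chain of implications, the second arrow is exactly this inclusion, so only $A\in\ff(\ii_P^*)\implies[A,PN_\s P]\in\ff(\ii_P^*)$ remains. Here I would use average conservation to evaluate $\tilde\Gamma(N)$ two ways. On one side, $\ee^*(N)=N$ and additivity give $\Gamma_\xi^\ee(N)=\Gamma_\xi(N)=N_\s+\tr[\napp\xi]\onesys$, so $\tilde\Gamma(N)=PN_\s P+\tr[\napp\xi]P$ and $[A,\tilde\Gamma(N)]=[A,PN_\s P]$ because the scalar term commutes with $A=PAP$. On the other side, splitting $N$ additively and pushing $A$ through $\tilde\Gamma$ via the multiplicative-domain property gives $[A,\tilde\Gamma(\onesys\otimes\napp)]=\tilde\Gamma([A\otimes\oneapp,\onesys\otimes\napp])=\zero$, hence $[A,\tilde\Gamma(N)]=[A,\ii_P^*(N_\s)]$. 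Comparing, $[A,PN_\s P]=[A,\ii_P^*(N_\s)]$. Writing $M:=\ii_P^*(N_\s)$ and invoking \lemref{lemma:fav1}(i) (so that $\ii_P^*(PN_\s P)=\ii_P^*(N_\s)=M$), one more use of the multiplicative-domain property with $C=PN_\s P\in\lo(P\hs)$ shows $\ii_P^*([A,PN_\s P])=[A,\ii_P^*(PN_\s P)]=[A,M]=[A,PN_\s P]$, i.e.\ $[A,PN_\s P]\in\ff(\ii_P^*)$.

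The main obstacle I anticipate is the bookkeeping with the compression by $P$: I must verify that $\tilde\Gamma$ is genuinely sub-unital so that \corref{corollary:multiplicability} applies, that $A=PAP$ commutes with $P$, and, crucially, that $\ii_P^*$ annihilates the off-diagonal (in $P,P^\perp$) part of $N_\s$. This last point is precisely the content of \lemref{lemma:fav1}(i) (ultimately \lemref{lemma:operation-annihilation} applied to $\ii_P^*(P^\perp)=\zero$), and it is what legitimises replacing $\ii_P^*(PN_\s P)$ by $\ii_P^*(N_\s)$ so the two evaluations of $[A,\tilde\Gamma(N)]$ can be matched and the argument closed.
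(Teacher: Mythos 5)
Your proposal is correct and follows essentially the same route as the paper: both compress the restriction map to $\Gamma_{\xi,P}^\ee(\cdot)=P\Gamma_\xi^\ee(\cdot)P$, invoke \propref{prop:fav2} to get the $*$-algebra structure of $\ff(\ii_P^*)$, apply \corref{corollary:multiplicability} to pull fixed points through the compressed map, and use average conservation plus additivity via $\Gamma_\xi(N)=\nsys+\tr[\napp\,\xi]\onesys$. Your two-step bookkeeping (matching the two evaluations of $[A,\tilde\Gamma(N)]$ and then using \lemref{lemma:fav1}(i) to identify $\ii_P^*(P\nsys P)$ with $\ii_P^*(\nsys)$) is only a cosmetic rearrangement of the paper's single chain $[A,P\nsys P]=\Gamma_{\xi,P}^\ee([A,P\nsys P]\otimes\oneapp)=\ii_P^*([A,P\nsys P])$.
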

\begin{proof}
 By \eq{eq:Gamma-U}, let us define the   operation $\Gamma_{\xi,P}^\ee : \lo(\hs\otimes \ha) \to \lo(\hs), B \mapsto P \Gamma_\xi^\ee(B) P$. It is easily verified that $\ii_P^*(A) =  \Gamma_{\xi,P}^\ee(A \otimes \oneapp)$ for all $A \in \lo(P\hs)$, and $ P \E(x) P = \Gamma_{\xi,P}^\ee(\onesys \otimes \Z(x))$. Note that by the same arguments as item (i) of  \lemref{lemma:fav1}, it can easily be shown that $\Gamma_{\xi,P}^\ee(B) = \Gamma_{\xi,P}^\ee((P \otimes \oneapp)B (P \otimes \oneapp))$ holds for all $B \in \lo(\hs\otimes \ha)$. It follows that $\Gamma_{\xi, P}^\ee$  is  unital  when restricted to $\lo(P\hs\otimes \ha) \to \lo(P \hs)$, and we may  equivalently write $ P \E(x) P = \Gamma_{\xi,P}^\ee(P \otimes \Z(x))$.
 
  By  \propref{prop:fav2}, $\ff(\ii_P^*) $ is a von Neumann algebra in $\lo(P \hs)$, and so for all $A \in \lo(P \hs)$, if $A \in \ff(\ii_P^*)$, then $A^*A, A A^* \in \ff(\ii_P^*)$.  By \corref{corollary:multiplicability} it holds that for all $A \in \ff(\ii_P^*)$ and $B \in \lo(P \hs \otimes \ha)$ we have  $  A\Gamma_{\xi, P}^\ee (B) = \Gamma_{\xi, P}^\ee ((A \otimes \oneapp)B)$ and $  \Gamma_{\xi, P}^\ee (B)A = \Gamma_{\xi, P}^\ee (B(A \otimes \oneapp))$. It follows that for all $A \in \ff(\ii_P^*)$ and $x\in \xx$ we have
  \begin{align*}[ A, P \E(x) P] = [A, \Gamma_{\xi,P}^\ee(P \otimes \Z(x))]= \Gamma_{\xi,P}^\ee([A \otimes \oneapp, P \otimes \Z(x)]) = \zero,
  \end{align*}
and so  $ \ff(\ii^*_P) \subset (P \E P) '$.

 Now let us assume that $\ee$  conserves an additive quantity $N = N\sub{\s} \otimes \oneapp + \onesys\otimes N\sub{\aa}$ on average. This  implies that $P \Gamma_\xi(N) P = \Gamma_{\xi,P}^\ee(N) = \Gamma_{\xi,P}^\ee((P \otimes \oneapp)N(P \otimes \oneapp))$, and so  
\begin{eqnarray*}
P \nsys P + \tr[\napp \xi] P
= \Gamma_{\xi,P}^\ee(P \nsys P \otimes \oneapp)  + \Gamma_{\xi,P}^\ee(P \otimes \napp). 
\end{eqnarray*}
 Since $ \ff(\ii^*_P)$ is a von Neumann algebra in $\lo(P \hs)$, then  by \corref{corollary:multiplicability} and the arguments above, it follows that for all $A \in \ff(\ii_P^*)$ we have
\begin{align*}
    [A, P \nsys P] &= [ A, \Gamma_{\xi,P}^\ee(P \nsys P \otimes \oneapp)] + [ A,\Gamma_{\xi,P}^\ee(P \otimes \napp)] \nonumber \\
    & = \Gamma_{\xi,P}^\ee([A \otimes \oneapp, P\nsys P \otimes \oneapp]) + \Gamma_{\xi,P}^\ee([A \otimes \oneapp, P \otimes \napp ]) \nonumber \\
    & = \Gamma_{\xi,P}^\ee([A, P \nsys P]\otimes \oneapp) = \ii^*_P([A, P \nsys P]).
\end{align*}
We thus have $A \in \ff(\ii^*_P) \implies [A, P\nsys P] \in \ff(\ii^*_P)$, and since $ \ff(\ii^*_P) \subset (P \E P) '$, it follows that $[A, P\nsys P] \in (P \E P)'$. 
\end{proof}

We are now ready to generalise \thmref{theorem:faithful-fixedpoint-WAY}. 
\begin{theorem}\label{theorem:non-faithful-fixedpoint-WAY}
Let $\E :=\{\E(x) : x\in \xx\}$ and $\F:=\{\F(y) : y \in \yy\}$ be observables acting in $\hs$. Let $\mm:=(\ha, \xi, \ee, \Z)$ be a measurement scheme for an  $\E$-instrument $\ii$, and assume that $\ee$ conserves an additive quantity $N = \nsys \otimes \oneapp + \onesys \otimes \napp$ on average, where $\nsys \in \los(\hs)$ and $\napp \in \los(\ha)$. If $P$ is  the minimal support projection on $\ff(\ii_\xx)$, then the following hold: 
\begin{enumerate}[(i)]
    \item $\F \subset \ff(\ii_\xx^*)$  only if $P \F P$ commutes with   $P \E P$, with  $P \Delta  \nsys P := P \ii_\xx^*( \nsys ) P - P \nsys P$, and with \\
    $\{[P \E(x) P, P \nsys P] : x\in \xx\}$.
    \item $\ii$  is a measurement of the first kind only if  $P\E P$  is  commutative and commutes with $P \nsys P$. 
    
    \item  $\ii$ is repeatable only if $P \E P$ is sharp and commutes with $P \nsys P$.
\end{enumerate}
\end{theorem}
\begin{proof}
\begin{enumerate}[(i):]
    \item  By \propref{prop:fav2}, $P \ff(\ii_\xx^*) P = \ff(\ii^*_P)$, and so $\F \subset \ff(\ii_\xx^*)$ implies that $P \F P \subset \ff(\ii^*_P)$.  By \lemref{lemma:non-faithful-fixedpoint-commutant}, it holds that $P \F P$ must commute with $P \E P$, and that  $[ P \F(y) P, P\nsys P] = \ii_P^*([ P \F(y) P, P\nsys P])$.    Given that $\ff(\ii^*_P)$ is a von Neumann algebra, we have $\ii_P^*((P \F(y) P)^2) = \ii_P^*(P \F(y) P)^2 = (P \F(y) P)^2$.  By \corref{corollary:multiplicability} it follows that  $[ P \F(y) P, P\nsys P] = [ P \F(y) P, \ii_P^*(P\nsys P)]$. By item (i) of \lemref{lemma:fav1}, it holds that $\ii_P^*(P\nsys P) = \ii_P^*(\nsys) = P \ii_\xx^*(\nsys) P$. Therefore, $P \F P$ must commute with $P \Delta  \nsys P$. Moreover, by \lemref{lemma:non-faithful-fixedpoint-commutant} non-disturbance implies that $[ P \F(y) P, P\nsys P] \in (P \E P)'$, and since $P \F P$ commutes with $P \E P$, this implies that $[P \F(y) P, [P \E(x) P, P \nsys P]] = \zero$ must hold. 
    
    \item If $\ii$ is a measurement of the first kind, $\E \subset \ff(\ii^*_\xx)$,  it follows from \propref{prop:fav2} that  $P \E P \subset \ff(\ii^*_P)$.  By \lemref{lemma:non-faithful-fixedpoint-commutant} $P \E P $ must be commutative, and since $\ff(\ii^*_P)$ is a von Neumann algebra, then for all spectral projections $R \in \lo(P \hs)$ of $P \E(x) P$ it must hold that  $[R, P \nsys P]$ commutes with $P \E P$. But this implies that $[[R, P \nsys P], R^\perp]=\zero$, where we define $R^\perp := P - R$. As such, given that $P R = R P = R$,  it holds that  $[R  \nsys P -P \nsys R,R^\perp] = R \nsys R^\perp + R^\perp \nsys R = \zero$. Multiplying from the left by $R$, we thus have $R \nsys R^\perp = R \nsys (P - R) = \zero$,  and so $R  \nsys P  = R \nsys R$. Since the right hand side is self-adjoint, and $R  \nsys P = R P \nsys P$, it follows that  $[R, P \nsys P ]=\zero$.  Since this relation holds for all  spectral projections $R$ of all effects of $P \E P$, it follows that $P \E P$ must commute with $P \nsys P$.

    \item Commutativity of $P \E P$ with $P \nsys P$ follows from (ii) and the fact that repeatability implies first-kindness. Sharpness  of $P \E P$ follows from the fact that the fixed points of a repeatable instrument can  have support only in the eigenvalue-1 eigenspaces of $\E$, as shown in  \propref{prop:repeatable-instrument-identity}.
\end{enumerate}
 
\end{proof}

Note that if $P=\onesys$, implying that $\ff(\ii_\xx)$ contains a faithful state so that $\ff(\ii_\xx^*)$ is a von Neumann algebra, then the above theorem reduces to 
 \thmref{theorem:faithful-fixedpoint-WAY}. Interestingly,  in the case of qubits such an equivalence will always hold, even if $\ff(\ii_\xx)$ does not contain a faithful state. We demonstrate this by an alternative proof for Proposition 6 in Ref. \cite{Heinosaari2010}.

\begin{corollary}\label{corollary:first-kind-WAY-qubit}
If $\dim(\hs) = 2$, then \thmref{theorem:non-faithful-fixedpoint-WAY} reduces to \thmref{theorem:faithful-fixedpoint-WAY}.
\end{corollary}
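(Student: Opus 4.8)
The plan is to branch on the rank of the minimal support projection $P$ of $\rho_0 := \ii_\av(\frac{1}{d}\onesys)$, exploiting that when $d = \dim(\hs) = 2$ only two cases can occur. Since $\rho_0 \in \ff(\ii_\xx)$ is a state, its support is either all of $\hs$ or one-dimensional, so either $P = \onesys$ or $\dim(P\hs) = 1$. If $P = \onesys$, then $\ff(\ii_\xx)$ contains the faithful state $\rho_0$, hence $\ff(\ii_\xx^*)$ is a von Neumann algebra by \lemref{lemma:Lindblad}, and every $P$-conjugated quantity in \thmref{theorem:non-faithful-fixedpoint-WAY}---namely $P\E P$, $P\nsys P$, $\ii_P^*$, and $\Delta P\nsys P$---coincides with its unconjugated counterpart in \thmref{theorem:faithful-fixedpoint-WAY}. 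In this case the two theorems are literally identical, as already observed in the note preceding the corollary.

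The substance of the proof is the case $\dim(P\hs) = 1$, where $\ff(\ii_\xx)$ carries no faithful state. Here $\lo(P\hs) = \co P$, so by \propref{prop:fav2} the von Neumann algebra $\ff(\ii_P^*) = \ff(\ii_{\av,P}^*) = P\ff(\ii_\xx^*)P$ is merely the scalars $\co P$. The \emph{main step}, which I would treat most carefully, is to promote this to the triviality of $\ff(\ii_\xx^*)$ itself. I would combine two earlier results: by items (v) and (vi) of \lemref{lemma:fav1} (applied to $\Phi = \ii_\xx$) the map $\ii_\av^*$ is a bijection from $\ff(\ii_{\av,P}^*)$ onto $\ff(\ii_\xx^*)$, while item (i) of \lemref{lemma:P-properties} gives $\ii_\av^*(P) = \onesys$. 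Hence
\begin{align*}
\ff(\ii_\xx^*) = \ii_\av^*\big(\ff(\ii_{\av,P}^*)\big) = \ii_\av^*(\co P) = \co\,\ii_\av^*(P) = \co\onesys,
\end{align*}
so $\ff(\ii_\xx^*) = \co\onesys$ is a trivial von Neumann algebra and the hypothesis of \thmref{theorem:faithful-fixedpoint-WAY} is satisfied despite the absence of a faithful fixed state. This displayed computation is the crux, as it is precisely what converts the lack of a faithful qubit fixed state into triviality of the fixed-point set rather than mere operator-space structure.

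It then remains to check that on this rank-one branch the two theorems carry the same content. All conditions of \thmref{theorem:non-faithful-fixedpoint-WAY} become automatic: as $\lo(P\hs) = \co P$ is commutative, $P\F P$ and $P\E P$ commute with each other and with $P\nsys P$ and $\Delta P\nsys P$, any observable with effects in $\co P$ is commutative, and the sharpness of $P\E P$ demanded in (iii) is supplied by the eigenvalue-one structure of repeatable instruments in \propref{prop:repeatable-instrument-identity}. Conversely, because $\ff(\ii_\xx^*) = \co\onesys$, non-disturbance forces $\F \subset \co\onesys$, first-kindness forces $\E \subset \co\onesys$, and repeatability forces $\E$ to be the trivial sharp observable with a single certain outcome; for such trivial observables the commutation and sharpness conditions of \thmref{theorem:faithful-fixedpoint-WAY} hold identically and automatically. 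Thus in both branches \thmref{theorem:non-faithful-fixedpoint-WAY} delivers exactly the conclusions of \thmref{theorem:faithful-fixedpoint-WAY}, completing the reduction.
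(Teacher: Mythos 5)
Your proof is correct and follows essentially the same route as the paper: branch on whether the minimal support projection $P$ is $\onesys$ (faithful fixed state, \lemref{lemma:Lindblad}) or rank-1, and in the latter case show $\ff(\ii_\xx^*) = \co\onesys$ so that both theorems deliver the same trivial content. The only cosmetic difference is that you obtain the triviality via the bijection in items (v)--(vi) of \lemref{lemma:fav1}, whereas the paper computes directly $A = \ii_\av^*(PAP) = \lambda\,\ii_\av^*(P) = \lambda\onesys$ using items (i)--(ii) of \lemref{lemma:P-properties}; the two computations are interchangeable.
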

\begin{proof}
 Let us consider the minimal support projection $P$ on the fixed-point set $\ff(\ii_\xx)$. As all channels must have at least one fixed state, then when $\dim(\hs)=2$ it holds that,  for any instrument $\ii$,  either   $P= \onesys$  or  $P$ is a rank-1 projection.  If $P=\onesys$ then     $\ff(\ii_\xx)$ contains a  faithful state, so that by \lemref{lemma:Lindblad} $\ff(\ii_\xx^*)$ is a von Neumann algebra, and \thmref{theorem:non-faithful-fixedpoint-WAY} reduces to \thmref{theorem:faithful-fixedpoint-WAY}. Now assume that $ P $ is a rank-1 projection, so that for all $A \in \lo(\hs)$ it holds that $P A P = \lambda P$, with  some $\lambda \in \co$. Recall from \lemref{lemma:Phi-av-prop} that $\ff(\ii_\xx^*) = \ff(\ii_\av^*)$. By item (ii) of \lemref{lemma:P-properties},   for all  $A\in \ff(\ii_\xx^*)$ it holds that $A = \ii_\av^*(A)=\ii_\av^*(P A P) = \lambda \ii_\av^*( P) = \lambda \onesys$, and so $\ff(\ii_\xx^*)$ is a trivial von Neumann algebra containing only operators proportional to the identity. In this case, the only non-disturbed observables are trivial, and will clearly commute with all of $\lo(\hs)$. In such a case we may simply replace $P$ with $\onesys$ in items (i)-(iii) of  \thmref{theorem:non-faithful-fixedpoint-WAY}, so that it reduces to \thmref{theorem:faithful-fixedpoint-WAY}.  
  
\end{proof}

\subsection{Non-disturbance and distinguishability}\label{app:non-disturbance-distinguishability}

Here, we present some novel results regarding the structure of non-disturbing measurements that go beyond those in the preceding sections, indicating an intimate relationship between non-disturbance and distinguishability. These results hold for general instruments, are independent of  conservation laws, and do not explicitly depend on the support projection $P$ on the fixed-point set of the specific measurement channel $\ii_\xx$. First, let us show that if an instrument $\ii$ does not disturb a non-trivial observable, then there exists a family of distinguishable states that remain distinguishable after a non-selective measurement by $\ii$.  

\begin{prop}[Non-disturbance implies distinguishability]\label{prop:nondisturbance-distinguishability} 
Consider an  instrument $\ii$ acting in $\hs$, and assume that $\ii$ does not  disturb a non-trivial observable $\F$. Then there exists a norm-1 observable $\G:=\{\G(z) : z\in \zz\}$ acting in $\hs$ that is non-disturbed by $\ii$,  so that for every family of  states $\{\rho_z : z\in \zz\}$ that are perfectly distinguishable by a $\G$ measurement,   $\{\ii_\xx(\rho_z) : z \in \zz\}$ remain perfectly distinguishable by a $\G$ measurement. Moreover, if $\ff(\ii_\xx)$ contains a faithful state, then $\G$ can be  taken as a sharp observable.
\end{prop}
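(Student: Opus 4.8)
The plan is to separate the two halves of the claim: the preservation of distinguishability, which follows immediately from non-disturbance, and the genuine work, which is the construction of a suitable norm-1 (or sharp) observable $\G$ lying in $\ff(\ii_\xx^*)$. First I would record the elementary observation that makes the distinguishability statement almost automatic: if $\G \subset \ff(\ii_\xx^*)$, then for every $z$ and every state $\rho$ one has $\tr[\G(z)\ii_\xx(\rho)] = \tr[\ii_\xx^*(\G(z))\rho] = \tr[\G(z)\rho]$, so a non-selective $\ii$-measurement leaves the entire $\G$-statistics unchanged. Consequently, any family $\{\rho_z\}$ that is perfectly distinguishable by $\G$ (say $\tr[\G(z)\rho_z]=1$ for each $z$) is mapped to $\{\ii_\xx(\rho_z)\}$ with $\tr[\G(z)\ii_\xx(\rho_z)] = 1$, hence equally distinguishable. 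The crux of the proof is thus to exhibit a \emph{norm-1} observable $\G$ contained in $\ff(\ii_\xx^*)$.

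Second, I would build $\G$ from $\F$. Since $\ii_\xx$ is a channel, $\ii_\xx^*$ is unital and preserves the involution, so $\ff(\ii_\xx^*)$ is a self-adjoint operator space containing $\one$ (closed under linear combinations, as noted following the fixed-point definition). Because $\F$ is non-trivial, some effect $\F(y_0) \in \ff(\ii_\xx^*)$ is not proportional to $\one$, hence has distinct extreme eigenvalues $\lambda_{\min} < \lambda_{\max}$, both attained since $d < \infty$. I would then set $\tilde B := (\F(y_0) - \lambda_{\min}\one)/(\lambda_{\max}-\lambda_{\min})$, an affine combination that remains in $\ff(\ii_\xx^*)$ and satisfies $\zero \leqslant \tilde B \leqslant \one$ with smallest eigenvalue $0$ and largest eigenvalue $1$. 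The two-outcome observable $\G := \{\tilde B,\, \one - \tilde B\}$ then has both effects of operator norm exactly $1$ and both lying in $\ff(\ii_\xx^*)$, so $\G$ is a norm-1 observable non-disturbed by $\ii$, and the first paragraph applies. Note that this construction never refers to the support projection $P$, as the statement requires.

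Finally, for the ``moreover'' clause I would invoke \lemref{lemma:Lindblad}: if $\ff(\ii_\xx)$ contains a faithful state then $\ff(\ii_\xx^*)$ is a von Neumann algebra, whence the spectral projections $\{Q_i\}$ of the self-adjoint element $\F(y_0)$ also lie in $\ff(\ii_\xx^*)$, as already observed after \lemref{lemma:Lindblad}. Taking $\G := \{Q_i\}$ gives a sharp observable with at least two non-zero effects (since $\F(y_0)$ has at least two distinct eigenvalues), each a projection and hence of norm $1$, all contained in $\ff(\ii_\xx^*)$; the distinguishability argument then carries over verbatim.

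The only point that genuinely requires care is the norm-1 condition in the general case, namely ensuring that \emph{both} $\tilde B$ and $\one - \tilde B$ attain operator norm $1$: a bare rescaling of $\F(y_0)$ need not give a norm-1 complement, and it is precisely the affine \emph{shift} by $\lambda_{\min}\one$ that places the spectrum of $\tilde B$ in $[0,1]$ touching both endpoints, forcing $\|\tilde B\| = \|\one - \tilde B\| = 1$. By contrast, the distinguishability half presents no obstacle, being a one-line consequence of $\G \subset \ff(\ii_\xx^*)$, and existence of a non-vacuous family $\{\rho_z\}$ is guaranteed in finite dimensions because the norm-1 effects attain eigenvalue $1$.
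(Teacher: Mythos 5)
Your proof is correct, but it takes a genuinely different and more elementary route than the paper's. The paper works through the averaging channel $\ii^*_\av$ and the minimal support projection $P$ of $\ff(\ii_\xx)$: it shows $P\ff(\ii^*_\xx)P = \ff(\ii^*_{\av,P})$ is a nontrivial von Neumann algebra in $\lo(P\hs)$ (nontrivial because a non-trivial $\F$ fixed by $\ii_\xx^*$ cannot have all compressions $P\F(y)P$ proportional to $P$), picks mutually orthogonal projections $\R(z)$ summing to $P$ inside that algebra, and defines $\G(z) := \ii^*_\av(\R(z))$; norm-1 then follows from $P\G(z)P = \R(z)$, and the faithful case gives sharpness automatically because $P = \onesys$ forces $\G(z) = \R(z)$. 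You instead bypass all of this by an affine rescaling of a single non-trivial effect $\F(y_0)$, producing a binary norm-1 observable directly inside $\ff(\ii^*_\xx)$ --- and you correctly identify the one delicate point, namely that the shift by $\lambda_{\min}\onesys$ (legitimate since $\onesys \in \ff(\ii^*_\xx)$ and the fixed-point set is a linear space) is what makes \emph{both} effects attain norm $1$; your separate treatment of the faithful case via \lemref{lemma:Lindblad} and spectral projections is also sound, and the distinguishability half is, as you say, an immediate consequence of $\G \subset \ff(\ii^*_\xx)$, just as in the paper. What your shortcut loses is the extra structure the paper's construction carries: its $\G$ comes with an orthogonal resolution $\{\R(z)\}$ of $P$ satisfying $P\G(z)P = \R(z)$, and it is precisely this feature (not merely the literal statement of the proposition) that is invoked later in \thmref{thm:first-kind-post-processing-norm-1} and \corref{corollary:first-kind-fidelity-norm}, where $\E$ is diagonalised over such a family; your two-outcome $\G$ proves the proposition as stated but could not be substituted into those downstream arguments. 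In short: your argument buys brevity and independence from the $\Phi_\av$/$P$ machinery, while the paper's buys a canonical $\G$ adapted to the fixed-point algebra that the rest of Section 5 builds on.
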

\begin{proof}
Suppose that a non-trivial observable $\F:= \{\F(y) : y \in \yy\}$ is non-disturbed by  $\ii$, i.e, $\F \subset \ff(\ii^*_\xx) = \ff(\ii_\av^*)$.  Given that $\ff(\ii^*_{\av,P}) = P \ff(\ii^*_\av) P $ (see \propref{prop:fav2}),  this implies that $P \F(y) P \in \ff(\ii^*_{\av,P})$. That $\F$ is non-trivial implies that there must be a $y$ for which $P \F(y) P$ is not proportional to $P$. If that were the case every $P \F(y)P$ could be written as $P\F(y)P=c_y P$
with some $c_y\geqslant 0$, which would imply that $\F(y)=\ii^*_\av(\F(y)) = \ii_\av^*(P \F(y) P)=c_y \ii_\av^*(P) = c_y \onesys$, and so $\F$ would be a  trivial observable. 
Therefore $\ff(\ii^*_{ \av, P}) = P\ff(\ii^*_\av)P$ is a nontrivial von Neumann algebra in $\lo(P \hs)$, and there exists a family of projections $\R:= \{\R(z):z \in \zz\}\subset \ff(\ii^*_{\av,P})$ satisfying $\R(z) \R(y) = \delta_{z,y} \R(z)$ and $\sum_z \R(z) = P$. We may consider $\R$ as a sharp observable acting in  $P \hs$. 

 Using $\R$, we may define a (generally unsharp) observable $\G$ acting in $\hs$ by
\begin{align*}
 \G := \{\G(z) = \ii^*_\av(\R(z)): z \in \zz\},   
\end{align*} 
where   $\sum_z \G(z) = \ii_\av^*(P) = \onesys$. Given that $\ii_\xx^* \circ \ii_\av^* = \ii_\av^*$ (see \lemref{lemma:Phi-av-prop}), it follows that   $\G \subset \ff(\ii_\xx^*)$, i.e., $\G$ is non-disturbed by $\ii$. Moreover, since $ P \G(z) P = \ii^*_{\av,P}(\R(z))= \R(z)$ holds, then each (non-zero) effect of $\G$ has at least one eigenvector with eigenvalue $1$, and so $\| \G(z)\|=1$. Moreover, if $\ff(\ii_\xx)$ contains a faithful state, then $P=\onesys$, and so we have $\G(z) = P \G(z) P =  \R(z)$, implying that $\G \equiv \R$ is sharp. Now let us note that  the family of states $\{\rho_z  \}$ are perfectly distinguishable given a measurement of $\G$ if and only if $\rho_z = \P(z) \rho_z \P(z)$, where $\P(z) \geqslant \R(z)$ projects onto the eigenvalue-1 eigenspace of $\G(z)$. In such a case  it trivially holds that  $\tr[\G(y) \rho_z] =  \delta_{z,y}$. But  Since $\G \subset \ff(\ii^*_\xx)$, we also have 
\begin{align*}
\tr[\G(y) \ii_\xx(\rho_z)] = \tr[\ii_\xx^*(\G(y))\rho_z] = \tr[\G(y) \rho_z] = \delta_{z,y},
\end{align*}
and so $\{\ii_\xx(\rho_z)\}$ continues to be perfectly distinguishable by a $\G$ measurement. 
\end{proof}

In the special case where $\ii$ is a measurement of the first kind, we may strengthen the above result as follows: 

\begin{theorem}\label{thm:first-kind-post-processing-norm-1}
Let $\ii$ be an instrument compatible with a non-trivial observable $\E$ acting in $\hs$. If $\ii$ is a measurement of the first kind, then  
$\E$ is described by a classical post-processing of a norm-1 observable $\G :=\{\G(z) : z \in \zz\}$ with properties given in \propref{prop:nondisturbance-distinguishability}, i.e., 
\begin{align}\label{eq:post-processing-first-kind}
\E(x) = \sum_z p(x|z) \G(z),  
\end{align} 
where $\{p(x|z)\}$ is a family of non-negative numbers that satisfy $\sum_x p(x|z)=1$ for each $z$.  
\end{theorem}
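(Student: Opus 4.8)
The plan is to realise the post-processing as the joint spectral decomposition of the compressed effects $P\E(x)P$ inside the von Neumann algebra $\ff(\ii_P^*)$, and then to lift the resulting identity back to $\hs$ through the averaged channel $\ii_\av^*$. First I would record that first-kindness means $\E \subset \ff(\ii_\xx^*)$, and by \lemref{lemma:Phi-av-prop} this set equals $\ff(\ii_\av^*)$, so every $\E(x)$ is a fixed point of $\ii_\av^*$. Compressing by $P$ and invoking \propref{prop:fav2}, the operators $P\E(x)P$ all lie in the von Neumann algebra $\ff(\ii_{\av,P}^*) = \ff(\ii_P^*) = P\ff(\ii_\xx^*)P$ acting on $P\hs$.

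The conceptual crux comes next: I would show that the family $\{P\E(x)P : x\in\xx\}$ is \emph{commutative}. This follows because $P\E(x)P \in \ff(\ii_P^*)$ while, by \lemref{lemma:non-faithful-fixedpoint-commutant}, $\ff(\ii_P^*) \subset (P\E P)'$; hence every $P\E(x)P$ commutes with every $P\E(y)P$. Since $\dim(\hs) < \infty$, the commuting self-adjoint operators $\{P\E(x)P\}$ admit a simultaneous diagonalisation, whose minimal joint spectral projections $\{\R(z) : z\in\zz\}$ are the atoms of the abelian algebra they generate. Because a von Neumann algebra contains the spectral projections of its self-adjoint elements, $\R(z) \in \ff(\ii_{\av,P}^*)$, and by construction $\R(z)\R(y) = \delta_{z,y}\R(z)$ with $\sum_z \R(z) = P$. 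On each atom $P\E(x)P$ is a scalar, so $P\E(x)P = \sum_z p(x|z)\R(z)$ with $p(x|z) \geqslant 0$ (as $\zero \leqslant P\E(x)P$). Summing over $x$ and using $\sum_x \E(x) = \onesys$ yields $\sum_x P\E(x)P = P = \sum_z \R(z)$, whence $\sum_x p(x|z) = 1$ for each $z$; thus $\{p(x|z)\}$ is a genuine classical post-processing.

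Finally, this $\R$ is precisely of the form used in \propref{prop:nondisturbance-distinguishability}, namely a resolution of $P$ into projections of $\ff(\ii_{\av,P}^*)$, so $\G(z) := \ii_\av^*(\R(z))$ is the norm-1 observable non-disturbed by $\ii$ with the stated distinguishability property. The claimed identity then follows by a one-line computation: since $\E(x)$ is fixed by $\ii_\av^*$ and, by item (ii) of \lemref{lemma:P-properties}, $\ii_\av^*(A) = \ii_\av^*(PAP)$ for all $A$, we obtain $\E(x) = \ii_\av^*(\E(x)) = \ii_\av^*(P\E(x)P) = \sum_z p(x|z)\,\ii_\av^*(\R(z)) = \sum_z p(x|z)\G(z)$, which is \eq{eq:post-processing-first-kind}. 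I expect the only genuinely substantive step to be establishing the commutativity of $\{P\E(x)P\}$; once that is secured the atoms supply the post-processing coefficients automatically, and the lift through $\ii_\av^*$ is purely formal.
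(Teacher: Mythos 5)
Your proof is correct, and its skeleton matches the paper's: compress the effects to $P\hs$, show that the family $\{P\E(x)P\}$ is commutative (indeed central in the fixed-point algebra), pick mutually orthogonal projections $\R(z)\in\ff(\ii^*_{\av,P})$ with $\sum_z \R(z)=P$ that diagonalise them, set $\G(z):=\ii^*_\av(\R(z))$, and lift via $\E(x)=\ii^*_\av(P\E(x)P)$. Where you genuinely diverge is at the crux. The paper establishes the centrality $P\E P\subset \ff(\ii^*_{\av,P})\cap\ff(\ii^*_{\av,P})'$ operationally: supposing some $[P\E(x)P,\R(z)]\neq\zero$, it uses $\R\subset P\ff(\ii^*_\xx)P$ to assemble, from the operations of $\ii$, a joint observable for $P\E P$ and the sharp observable $\{\R(z),P-\R(z)\}$, and then invokes the fact that compatibility with a sharp observable forces commutation. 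You instead invoke the first, conservation-free assertion of \lemref{lemma:non-faithful-fixedpoint-commutant}, namely $\ff(\ii_P^*)\subset (P\E P)'$; combined with $P\E P\subset\ff(\ii_P^*)$ this gives commutativity at once, and in fact the full centrality claim, since $\ff(\ii_P^*)\subset(P\E P)'$ is equivalent to $P\E P\subset\ff(\ii_P^*)'$. Both routes ultimately rest on the same machinery (\corref{corollary:multiplicability} applied through a dilating measurement scheme, which every instrument admits --- a point you should state explicitly when invoking the lemma, whose hypotheses are phrased in terms of a scheme), but yours is shorter and reuses a result proved just before the theorem, whereas the paper's version makes the operational content visible: the obstruction to non-commuting compressed effects is joint measurability with a sharp observable. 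Your choice of $\R$ as the atoms of the abelian algebra generated by $\{P\E(x)P\}$, rather than an arbitrary diagonalising family in the centre, is a harmless specialisation; the remaining steps --- positivity of $p(x|z)$ from positivity of the compressions, $\sum_x p(x|z)=1$ from $\sum_x P\E(x)P=P$ and linear independence of the atoms, and deferral of the norm-1, non-disturbance, and distinguishability properties of $\G$ to \propref{prop:nondisturbance-distinguishability} --- coincide with the paper's.
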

\begin{proof}
Assume that the $\E$-instrument $\ii$ is a measurement of the first kind, that is, $\E \subset \ff(\ii^*_\xx)$. It follows that  $P \E P \subset \ff(\ii^*_{\av,P})$. In fact, we can show that    $P \E P \subset \ff(\ii^*_{\av,P})  \cap \ff(\ii^*_{\av,P})'$. First, recall from \propref{prop:nondisturbance-distinguishability} that if $\E$ is non-trivial then there exists a family of projections $\R := \{\R(z) : z\in \zz\} \subset \ff(\ii^*_{\av,P})$, satisfying $\R(z) \R(y) = \delta_{z,y} \R(z)$ and $\sum_z \R(z) = P$. If $P \E P \not\subset \ff(\ii^*_{\av,P})  \cap \ff(\ii^*_{\av,P})'$, then $\R$ can be chosen so that $[P\E(x)P, \R(z)]\neq \zero$ for some  $x$ and $z$. But note that $\R \subset \ff(\ii^*_{\av,P}) = P \ff(\ii^*_\xx) P$ implies that the observable $\{P\ii_x^*(\R(z))P, P \ii_x^*(\onesys - \R(z)) P : x\in \xx\}$ is  a joint measurement for $P \E P$ and the sharp observable $\{\R(z), P - \R(z)\}$. By compatibility, it follows that $[P\E(x)P, \R(z)]= \zero$ must hold for all $x$ and $z$, and so   $P \E P$ must be contained in the Abelian algebra $\ff(\ii^*_{\av,P}) \cap \ff(\ii^*_{\av,P})'$. Consequently,  $\R \subset \ff(\ii^*_{\av,P}) \cap \ff(\ii^*_{\av,P})'$ can be chosen so as to simultaneously diagonalise all  $P \E(x) P$, that is, we may write $P \E(x) P=\sum_z p(x|z) \R(z)$. Recalling that $\E(x) = \ii_\av^*(\E(x)) = \ii_\av^*(P \E(x) P)$,  then defining the observable $\G$ by $\G(z) = \ii^*_\av(\R(z))$ gives us \eq{eq:post-processing-first-kind}. As in  \propref{prop:nondisturbance-distinguishability} it holds that $\G \subset \ff(\ii_\xx^*)$,  $\G$ is a norm-1 observable, if $\ff(\ii_\xx)$ contains a faithful  state then $\G$ is also sharp, and if $\{\rho_z\}$ are perfectly distinguishable by a $\G$ measurement then so are $\{\ii_\xx(\rho_z)\}$.    
\end{proof}

Finally, we present the following implication of the  above theorem: 
\begin{corollary}\label{corollary:first-kind-fidelity-norm}
Let $\ii$ be an  instrument compatible with a non-trivial observable $\E$ acting in $\hs$, and assume that $\ii$ is a measurement of the first kind. For any outcome $x$ associated with a non-trivial effect $\E(x)$, and for any pair of unit vectors $\psi, \phi \in \hs$ satisfying $ \E(x)\psi =\| \E(x)\| \psi$ and $(\onesys - \E(x))\phi  =\| \onesys  - \E(x)\| \phi$, respectively, it holds that $\psi$ and $\phi$ are orthogonal, and that $F(\ii_\xx(\pr{\psi}), \ii_\xx(\pr{\phi}))=0$, where $F(\rho, \sigma):= \tr[\sqrt{\sqrt{\rho} \sigma \sqrt{\rho}}]$ is the fidelity between states $\rho$ and $\sigma$. 
\end{corollary}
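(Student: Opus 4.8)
The plan is to reduce everything to the classical post-processing structure supplied by \thmref{thm:first-kind-post-processing-norm-1}. Since $\ii$ is a measurement of the first kind for the non-trivial observable $\E$, that theorem provides a norm-1 observable $\G := \{\G(z) : z \in \zz\} \subset \ff(\ii_\xx^*)$ and a stochastic family $\{p(x|z)\}$ with $\E(x) = \sum_z p(x|z)\G(z)$. I would exploit three features of this decomposition: the $\G(z)$ are positive and sum to $\onesys$; they lie in the linear space $\ff(\ii_\xx^*)$; and, being norm-1, each non-zero $\G(z)$ has an eigenvalue-$1$ eigenvector $\eta_z$, which---since the remaining effects then annihilate $\eta_z$---is an eigenvector of $\E(x)$ with eigenvalue $p(x|z)$.

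The heart of the argument is a localisation statement for the extremal eigenvectors of $\E(x)$. For any unit vector $v$, positivity of the $\G(z)$ together with $\sum_z \<v|\G(z)|v\> = 1$ gives $\<v|\E(x)|v\> = \sum_z p(x|z)\<v|\G(z)|v\> \leqslant p_{\max}$, where $p_{\max} := \max_z p(x|z)$ is the maximum over those $z$ with $\G(z) \neq \zero$; the eigenvectors $\eta_z$ show this value is attained, so $\|\E(x)\| = p_{\max}$. Writing $Z_{\max} := \{z : p(x|z) = p_{\max}\}$ and $\G(Z_{\max}) := \sum_{z \in Z_{\max}}\G(z)$, the equality $\<\psi|\E(x)|\psi\> = p_{\max}$ forces $(p_{\max} - p(x|z))\<\psi|\G(z)|\psi\> = 0$ for every $z$, hence $\G(z)\psi = \zero$ for $z \notin Z_{\max}$ and therefore $\G(Z_{\max})\psi = \psi$. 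The same reasoning applied to $\onesys - \E(x) = \sum_z (1 - p(x|z))\G(z)$ gives $\G(Z_{\min})\phi = \phi$ with $p_{\min} := \min_z p(x|z)$ and $Z_{\min} := \{z : p(x|z) = p_{\min}\}$. Non-triviality of the effect $\E(x)$---i.e.\ that it is not a multiple of $\onesys$---forces the $p(x|z)$ not to be all equal, so $p_{\max} > p_{\min}$ and $Z_{\max} \cap Z_{\min} = \emptyset$. Since $Z_{\max} \subseteq \zz \setminus Z_{\min}$, positivity gives $0 \leqslant \<\phi|\G(Z_{\max})|\phi\> \leqslant \<\phi|\G(\zz \setminus Z_{\min})|\phi\> = 0$, so $\G(Z_{\max})\phi = \zero$ and hence $\<\psi|\phi\> = \<\G(Z_{\max})\psi|\phi\> = \<\psi|\G(Z_{\max})\phi\> = 0$.

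For the fidelity I would pass to the two-outcome coarse-graining $\{\G(Z_{\max}), \onesys - \G(Z_{\max})\}$, whose effects are linear combinations of elements of $\ff(\ii_\xx^*)$ and hence non-disturbed: $\ii_\xx^*(\G(Z_{\max})) = \G(Z_{\max})$. Non-disturbance together with the localisation above yields $\tr[\G(Z_{\max})\ii_\xx(\pr{\psi})] = \tr[\ii_\xx^*(\G(Z_{\max}))\pr{\psi}] = \<\psi|\G(Z_{\max})|\psi\> = 1$ and $\tr[\G(Z_{\max})\ii_\xx(\pr{\phi})] = \<\phi|\G(Z_{\max})|\phi\> = 0$. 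I then invoke the elementary fact that if an effect $E$ satisfies $\tr[E\rho] = 1$ and $\tr[E\sigma] = 0$ then $\rho$ is supported on the eigenvalue-$1$ eigenspace of $E$ while $\sigma$ is supported on $\ker E$; these are orthogonal, so $\rho\sigma = \zero$ and $F(\rho,\sigma) = 0$. Taking $E = \G(Z_{\max})$, $\rho = \ii_\xx(\pr{\psi})$ and $\sigma = \ii_\xx(\pr{\phi})$ delivers $F(\ii_\xx(\pr{\psi}), \ii_\xx(\pr{\phi})) = 0$.

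I expect the localisation statement of the second paragraph to be the only genuine obstacle: one must check that the given extremal eigenvectors are annihilated by all $\G(z)$ lying outside $Z_{\max}$, respectively $Z_{\min}$, which relies only on positivity and normalisation of $\G$ and not on any commutativity among the $\G(z)$. Once this is established, both the orthogonality $\<\psi|\phi\> = 0$ and the vanishing of the fidelity drop out of the non-disturbance of the single coarse-grained effect $\G(Z_{\max})$.
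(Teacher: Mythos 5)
Your proof is correct and takes essentially the same route as the paper's: both rest on the post-processing decomposition supplied by \thmref{thm:first-kind-post-processing-norm-1} and the same $Z_{\max}/Z_{\min}$ localisation of the extremal eigenvectors of $\E(x)$, with non-triviality of the effect forcing $Z_{\max} \cap Z_{\min} = \emptyset$ and hence the orthogonality of $\psi$ and $\phi$. The only difference is cosmetic: where the paper concludes by invoking \propref{prop:nondisturbance-distinguishability}, you re-derive that step directly from the single fixed effect $\G(Z_{\max}) \in \ff(\ii_\xx^*)$ via the elementary support argument, which is a valid shortcut.
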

\begin{proof}
For each outcome $x$, we may coarse-grain $\E$ into a binary observable $\{\E(x), \E(\overline{x}) := \onesys - \E(x)\}$. By \thmref{thm:first-kind-post-processing-norm-1}, first-kindness implies that $\E(x) = \sum_z p(x|z) \G(z)$ and $\E(\overline{x}) = \sum_z p(\overline{x}|z) \G(z)$, with $p(\overline{x}|z)  := 1 - p(x|z)$, where $\{\G(z) : z \in \zz\}$ is a norm-1 observable with properties given in \propref{prop:nondisturbance-distinguishability}, while $\{p(x|z)\}$ is a family of non-negative numbers satisfying $\sum_{x'} p(x'|z)=1$ for each $z$. Now let us define $p_{\max}(x) := \max_z\{p(x|z)\}$ and $p_{\min}(x) := \min_z\{p(x|z)\}$ as the maximum and minimum values of the set $\{p(x|z)\}$. We may thus define the sets $Z_{\max} := \{z \in \zz: p(x|z) = p_{\max}(x)\}$ and $Z_{\min} := \{z \in \zz: p(x|z) = p_{\min}(x)\}$. Using such sets, we may define $\G(Z_{\max}) := \sum_{z\in Z_{\max}} \G(z)$ and $\G(Z_{\min}) := \sum_{z\in Z_{\min}} \G(z)$. Since $\G$ is norm-1, then we may also define $\P(Z_{\max}) := \sum_{z\in Z_{\max}} \P(z)$ and $\P(Z_{\min}) := \sum_{z\in Z_{\min}} \P(z)$, where $\P(z)$ is the projection onto the eigenvalue-1 eigenspace of $\G(z)$. Since $\E(x)$ is assumed to be non-trivial, then it must hold that $Z_{\max} \cap Z_{\min} = \emptyset$. If this were not so, it would hold that all $p(x|z)$ are the same, in which case $\E(x) \propto \onesys$. Consequently, $\P(Z_{\max}) \P(Z_{\min}) = \zero$.  

Now let us note that 
\begin{align*}
\| \E(x) \| = \sup_{\|\psi\|=1} \<\psi| \E(x) \psi\> =  \sup_{\|\psi\|=1}  \sum_{z} p(x|z) \<\psi| \G(z) \psi\>.    
\end{align*} 
We may now  show that a unit vector $\psi$ satisfies $\E(x) \psi = \| \E(x) \| \psi$ if and only if $\P(Z_{\max}) \psi = \psi$. Let us first prove the only if statement. For any unit vector $\psi$, it holds that $\<\psi| \E(x) \psi\>  \leqslant   p_{\max}(x)$, which follows from the fact that $p(x|z)$ are positive numbers and that $\{\<\psi| \G(z) \psi\>\}$ is a probability distribution, with the upper bound being saturated when $\<\psi| \G(Z_{\max}) \psi\> = 1$.  But this in turn is satisfied only if $\<\psi| \P(Z_{\max}) \psi\> = 1$, in which case $\P(Z_{\max}) \psi = \psi$. As such, it follows that $\|\E(x)\| = p_{\max}(x)$, and  the unit vector $\psi$ satisfies $\E(x) \psi = \| \E(x)\| \psi$ only if $\P(Z_{\max}) \psi = \psi$. The if statement is trivial. 

By similar arguments as above, we may show that $\|\onesys - \E(x) \| = \| \E(\overline{x}) \| = p_{\max}(\overline{x}) =1 - p_{\min}(x)$, and that   the unit vector $\phi$ satisfies $(\onesys - \E(x))\phi  =\| \onesys  - \E(x)\| \phi$ if and only if $\P(Z_{\min})\phi = \phi$.  Since $\E(x)$ is non-trivial, then as argued above  $\psi$ and $\phi$ are orthogonal, and perfectly distinguishable by a $\G$ measurement. By \propref{prop:nondisturbance-distinguishability} it holds that $\ii_\xx(\pr{\psi})$ and $\ii_\xx(\pr{\phi})$ are also perfectly distinguishable by a $\G$ measurement, that is,  $F(\ii_\xx(\pr{\psi}), \ii_\xx(\pr{\phi}))=0$.

\end{proof}

\subsection{Measurements of the first kind, distinguishability, and the Wigner-Araki-Yanase theorem}\label{app:first-kind-WAY}

We shall now use the results in the preceding section   to obtain   quantitative bounds for first-kind measurements in the presence of a conservation law,  that complement our generalisation of the WAY theorem  given in  \thmref{theorem:Strong-WAY}.  To this end, let us first provide a generalisation of Theorem 2 in Ref. \cite{Miyadera2006a}, which we shall use in the sequel:
\begin{lemma}\label{lemma:WAY-distinguishability-inequality}
Let $\mm:= (\ha, \xi, \ee, \Z)$ be a measurement scheme for an $\E$-instrument $\ii$ acting in $\hs$, and assume that $\ee$  conserves an additive quantity $N = \nsys \otimes \oneapp + \onesys \otimes \napp$ on average, where $\nsys \in \los(\hs)$ and $\napp \in \los(\ha)$. For any pair of orthogonal unit vectors $\psi, \phi \in \hs$, the following will hold:
\begin{align}\label{eq:WAY-distinguishability-inequality}
    |\<\psi| \nsys \phi\>| \leqslant \| \napp \| F(\ii_\xx(\pr{\psi}), \ii_\xx(\pr{\phi})) + \|\nsys\| F(\Lambda(\pr{\psi}), \Lambda(\pr{\phi})), 
\end{align}
where $\Lambda$ is the conjugate channel to $\ii_\xx$ defined in \eq{eq:conjugate-channel}, and $F(\rho, \sigma)$ is the fidelity between states $\rho$ and $\sigma$.
\end{lemma}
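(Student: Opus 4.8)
The plan is to turn the matrix element $\<\psi|\nsys\phi\>$ into a statement about the system channel $\ii_\xx$ and its conjugate $\Lambda$ by invoking the conservation law, and then to control the resulting \emph{off-diagonal} channel outputs by the fidelities of the \emph{diagonal} outputs of the complementary channel. First I would exploit additivity of $N$, average conservation $\ee^*(N)=N$, and the orthogonality $\<\psi|\phi\>=0$. Since $\tr[(\onesys\otimes\napp)(\ket{\phi}\bra{\psi}\otimes\xi)]=\<\psi|\phi\>\,\tr[\napp\xi]=0$, we have $\<\psi|\nsys\phi\>=\tr[N(\ket{\phi}\bra{\psi}\otimes\xi)]=\tr[N\,\ee(\ket{\phi}\bra{\psi}\otimes\xi)]$, the last step being $\ee^*(N)=N$ applied to the trace-class operator $\ket{\phi}\bra{\psi}\otimes\xi$. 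Recalling $\ii_\xx(T)=\tr\sub{\aa}[\ee(T\otimes\xi)]$ and $\Lambda(T)=\tr\sub{\s}[\ee(T\otimes\xi)]$ from \eqref{eq:conjugate-channel}, splitting $N=\nsys\otimes\oneapp+\onesys\otimes\napp$ and taking the two partial traces gives the identity
\begin{align*}
\<\psi|\nsys\phi\>=\tr[\nsys\,\ii_\xx(\ket{\phi}\bra{\psi})]+\tr[\napp\,\Lambda(\ket{\phi}\bra{\psi})].
\end{align*}

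Next I would pass to trace-norm estimates (writing $\|\cdot\|_1$ for the trace norm), $|\tr[\nsys\,\ii_\xx(\ket{\phi}\bra{\psi})]|\leqslant\|\nsys\|\,\|\ii_\xx(\ket{\phi}\bra{\psi})\|_1$ and likewise for the second term with $\|\napp\|$. The crux, and the point where this generalises Theorem 2 of Ref.~\cite{Miyadera2006a}, is a \emph{cross} bound: the trace norm of the off-diagonal output of one channel is controlled by the fidelity of the diagonal outputs of its conjugate. To prove it I would take a common Stinespring dilation $V:\hs\to\hs\otimes\ha\otimes\kk$ of the channel $T\mapsto\ee(T\otimes\xi)$, so that $\ket{\Psi_\psi}:=V\psi$ is a purification of $\ee(\pr{\psi}\otimes\xi)$, with $\ii_\xx(\ket{\phi}\bra{\psi})=\tr_{\ha\otimes\kk}\ket{\Psi_\phi}\bra{\Psi_\psi}$ and $\Lambda(\ket{\phi}\bra{\psi})=\tr_{\hs\otimes\kk}\ket{\Psi_\phi}\bra{\Psi_\psi}$. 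The key fact is the Uhlmann-type identity $\|\tr_{B}\ket{\Phi}\bra{\Psi}\|_1=\max_{U}|\bra{\Psi}(U\otimes\one_B)\ket{\Phi}|=\tr[\sqrt{\sqrt{\rho_B^{\Psi}}\,\rho_B^{\Phi}\sqrt{\rho_B^{\Psi}}}]$, where the first equality is the variational formula for the trace norm over unitaries and the second is Uhlmann's theorem for the reduced states $\rho_B$ on the complementary factor $B$. Taking $B=\ha\otimes\kk$ identifies $\|\ii_\xx(\ket{\phi}\bra{\psi})\|_1$ with the fidelity of the $\ha\otimes\kk$-reduced states, and monotonicity of fidelity under the partial trace over the dilation space $\kk$ bounds this by $\sqrt{F(\Lambda(\pr{\psi}),\Lambda(\pr{\phi}))}$; symmetrically $\|\Lambda(\ket{\phi}\bra{\psi})\|_1\leqslant\sqrt{F(\ii_\xx(\pr{\psi}),\ii_\xx(\pr{\phi}))}$. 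This is the complementarity underlying the estimate: the system operator $\nsys$ is tied to the \emph{apparatus} fidelity and the apparatus operator $\napp$ to the \emph{system} fidelity. Substituting these two estimates into the identity yields $|\<\psi|\nsys\phi\>|\leqslant\|\napp\|\sqrt{F(\ii_\xx(\pr{\psi}),\ii_\xx(\pr{\phi}))}+\|\nsys\|\sqrt{F(\Lambda(\pr{\psi}),\Lambda(\pr{\phi}))}$, which is the right-hand side of \eqref{eq:WAY-distinguishability-inequality} with $F$ understood as the Uhlmann fidelity $\tr[\sqrt{\sqrt{\rho}\sigma\sqrt{\rho}}]$.

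I expect the main obstacle to be exactly this last cross-bound in the fully general setting, where $\xi$ may be mixed and $\ee$ non-unitary, so the post-interaction state $\ee(\pr{\psi}\otimes\xi)$ is not pure and the clean pure-state complementarity between $\ii_\xx$ and $\Lambda$ is lost. Purifying by the isometry $V$ restores a pure global vector $\ket{\Psi_\psi}$ at the cost of the environment $\kk$, and it is the monotonicity of fidelity under discarding $\kk$ that makes the inequality run in the weakening direction and recovers $F(\Lambda(\pr{\psi}),\Lambda(\pr{\phi}))$ and $F(\ii_\xx(\pr{\psi}),\ii_\xx(\pr{\phi}))$ rather than the smaller joint-state fidelities. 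A secondary point requiring care is the application of the operator equality $\ee^*(N)=N$ to the non-positive, non-self-adjoint operator $\ket{\phi}\bra{\psi}\otimes\xi$, which is legitimate because average conservation is precisely the identity $\ee^*(N)=N$ on $\lo(\hs\otimes\ha)$ and extends by linearity to all of $\trc(\hs\otimes\ha)$.
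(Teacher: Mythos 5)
Your proposal is correct, and its skeleton parallels the paper's proof while the key technical lemma is different. Both arguments dilate the map $T \mapsto \ee(T\otimes\xi)$ by a Stinespring isometry into $\hs\otimes\ha\otimes\kk$ and use additivity, orthogonality, and average conservation to reduce the matrix element to two cross terms; your intermediate identity $\<\psi|\nsys\phi\> = \tr[\nsys\, \ii_\xx(\ket{\phi}\bra{\psi})] + \tr[\napp\, \Lambda(\ket{\phi}\bra{\psi})]$ is exactly the content of the paper's triangle-inequality step, stated as an exact Schr\"odinger-picture equality before any estimate is made (and, as you note, it needs no dilation at all, only the duality $\tr[\ee^*(N)T]=\tr[N\ee(T)]$, which indeed holds for arbitrary trace-class $T$, self-adjoint or not). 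Where you genuinely diverge is in converting the cross terms into fidelities: the paper inserts a POVM resolution $\sum_z \G(z)$, applies Cauchy--Schwarz termwise, and invokes the Fuchs--Caves characterisation $F(\rho,\sigma)=\min_{\G}\sum_z \tr[\G(z)\rho]^{\frac{1}{2}}\tr[\G(z)\sigma]^{\frac{1}{2}}$, whereas you prove the cross bound via the trace-norm variational formula $\|\tr_B \ket{\Phi}\bra{\Psi}\|_1 = \max_U |\<\Psi|(U\otimes\one_B)\Phi\>|$, Uhlmann's theorem identifying this with the fidelity of the $B$-reductions, and monotonicity of fidelity under tracing out the dilation space $\kk$. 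Both routes are sound; yours is more modular and makes transparent why $\nsys$ pairs with the apparatus fidelity and $\napp$ with the system fidelity, while the paper's measured-fidelity route stays at the level of outcome statistics, which pays off downstream: in the proof of \thmref{theorem:first-kind-WAY-distinguishability} the bound is exploited precisely by upper-bounding $F$ with the Bhattacharyya overlap of the specific binary measurement $\{\Z(x),\Z(\overline{x})\}$. You are also right to flag the convention: the inequality, by either route, holds with $F$ the square-root fidelity $\tr[\sqrt{\sqrt{\rho}\sigma\sqrt{\rho}}]$ (as the Fuchs--Caves formula in the paper's proof makes explicit), even though \corref{corollary:first-kind-fidelity-norm} defines $F$ with a square; since the subsequent applications only use the case $F=0$, this discrepancy is harmless, and your derivation lands on the correct, linear-in-fidelity form.
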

\begin{proof}
Let us consider the augmented Hilbert space $\ha \otimes \kk$ so that $\xi\in \s(\ha)$ admits the purification $\xi = \tr\sub{\kk}[\pr{\varphi}]$, with the unit   vector $\varphi \in \ha \otimes \kk$. Moreover, if $\kk$ is sufficiently large, then by Stinespring's dilation theorem the channel $\ee^*$ can be expressed as $\ee^*(A) = V^*(A \otimes \one\sub{\kk})V$ for all $A \in \lo(\hs\otimes \ha)$, where $V: \hs\otimes \ha \to \hs\otimes \ha \otimes \kk$ is an isometry. By additivity of $N$, and orthogonality of $\psi, \phi$, we have
\begin{align*}
    \<\psi \otimes \varphi| N \phi \otimes \varphi\> = \<\psi| \nsys \phi\>. 
\end{align*}
On the other hand, average conservation  of $N$ by $\ee$ implies that 
\begin{align*}
    N = \ee^*(N) = V^*(\nsys \otimes \oneapp \otimes \one\sub{\kk} )V + V^*(\onesys \otimes \napp \otimes \one\sub{\kk} )V.
\end{align*}
We therefore have
\begin{align}\label{eq:inner-prod-WAY-dist-1}
 |\<\psi| \nsys \phi\>| \leqslant |\<\psi\otimes \varphi|V^*(\nsys \otimes \oneapp \otimes \one\sub{\kk} )V\phi \otimes \varphi\>| + |\<\psi \otimes \varphi|V^*(\onesys \otimes \napp \otimes \one\sub{\kk} )V \phi \otimes \varphi\>|.   
\end{align}
For any observable $\G:= \{\G(z)\}$ acting in $\ha$,  we may write
\begin{align*}
&|\<\psi\otimes \varphi|V^*(\nsys \otimes \oneapp \otimes \one\sub{\kk} )V\phi \otimes \varphi\>| \nonumber \\
& \qquad = |\sum_z \<\psi\otimes \varphi|V^*(\nsys \otimes \G(z) \otimes \one\sub{\kk} )V\phi \otimes \varphi\>| \nonumber \\
& \qquad  \leqslant \|\nsys \| \sum_z | \<\psi\otimes \varphi|V^*(\onesys \otimes \G(z) \otimes \one\sub{\kk} )V\phi \otimes \varphi\>| \nonumber \\
&\qquad \leqslant \|\nsys\| \sum_z |\<\psi\otimes \varphi|V^*(\onesys \otimes \G(z) \otimes \one\sub{\kk} )V\psi \otimes \varphi\>|^{\frac{1}{2}} |\<\phi\otimes \varphi|V^*(\onesys \otimes \G(z) \otimes \one\sub{\kk} )V\phi \otimes \varphi\>|^{\frac{1}{2}} \nonumber \\
& \qquad = \|\nsys\| \sum_z \tr[\onesys \otimes \G(z)  \ee(\pr{\psi} \otimes \xi)]^{\frac{1}{2}} \tr[\onesys \otimes \G(z)  \ee(\pr{\phi} \otimes \xi)]^{\frac{1}{2}} \nonumber \\
& \qquad =  \|\nsys\| \sum_z \tr[ \G(z)  \Lambda(\pr{\psi})]^{\frac{1}{2}} \tr[\G(z) \Lambda(\pr{\phi})]^{\frac{1}{2}}.
\end{align*}
In the third line we have used the Cauchy-Schwarz inequality,  in the fourth line we used Stinespring's dilation theorem together with the fact that $\varphi$ is a purification of $\xi$, and in the final line we use the definitions of the partial trace and the conjugate channel $\Lambda$. Now, note that  the fidelity satisfies  $F(\rho, \sigma) = \min_{\G} \sum_z \tr[\G(z) \rho]^{\frac{1}{2}}\tr[\G(z) \sigma]^{\frac{1}{2}}$ \cite{Fuchs1995, Barnum1996}. Therefore, choosing $\G$ so as to obtain the fidelity, we have
\begin{align*}
 |\<\psi\otimes \varphi|V^*(\nsys \otimes \oneapp \otimes \one\sub{\kk} )V\phi \otimes \varphi\>| \leqslant \|\nsys\| F(\Lambda(\pr{\psi}), \Lambda(\pr{\phi})).   
\end{align*}
Using similar steps, we may also write 
\begin{align*}
|\<\psi\otimes \varphi|V^*(\onesys \otimes \napp \otimes \one\sub{\kk} )V\phi \otimes \varphi\>| \leqslant \|\napp\| F(\ii_\xx(\pr{\psi}), \ii_\xx(\pr{\phi})).    
\end{align*} 
By \eq{eq:inner-prod-WAY-dist-1}, we thus obtain \eq{eq:WAY-distinguishability-inequality}.
\end{proof}

We are now ready to prove our main result in this section:

\begin{theorem}\label{theorem:first-kind-WAY-distinguishability}

Consider a measurement scheme $\mm := (\ha, \xi, \ee, \Z)$ for  a nontrivial observable $\E$ with the instrument $\ii$ acting in $\hs$. Assume that $\ii$ is a measurement of the first kind,  and  that $\ee$  conserves an additive quantity $N = N\sub{\s} \otimes \oneapp + \onesys\otimes N\sub{\aa}$ on average, where $\nsys \in \los(\hs)$ and $\napp \in \los(\ha)$.  
For each outcome $x$ associated with a non-trivial effect $\E(x)$, let  $\mathcal{K}_{\max}(x)$ and $\mathcal{K}_{\min}(x)$ be  subspaces of $\hs$ defined by 
\begin{align*}
&\mathcal{K}_{\max}(x):=\{\psi \in \hs : \ \E(x)\psi =\| \E(x)\| \psi\}, &\mathcal{K}_{\min}(x):=\{\phi \in \hs:\ (\onesys - \E(x))\phi  =\| \onesys  - \E(x)\| \phi\}.    
\end{align*}
$\mathcal{K}_{\max}(x)$ and $\mathcal{K}_{\min}(x)$ are orthogonal, and  for all unit vectors $\psi \in \mathcal{K}_{\max}(x)$ and $\phi \in \mathcal{K}_{\min}(x)$, it holds that 
\begin{align}\label{eq:distinguishability-WAY-bound}
|\langle \psi |\nsys \phi\rangle | 
\leqslant \|\nsys\| \left( \| \E(x)\|^{\frac{1}{2}} (1 - \|\onesys - \E(x) \|)^{\frac{1}{2}} + (1 - \| \E(x) \|)^{\frac{1}{2}} \|\onesys - \E(x) \|^{\frac{1}{2}}\right). 
\end{align}
\end{theorem}

\begin{proof}

 For each outcome $x$ associated with a non-trivial effect $\E(x)$, we may coarse-grain $\E$ into a binary observable $\{\E(x), \E(\overline{x}) := \onesys - \E(x)\}$. By \corref{corollary:first-kind-fidelity-norm}, it holds that for any unit vectors $\psi \in \mathcal{K}_{\max}(x)$ and $\phi \in \mathcal{K}_{\min}(x)$, $\psi$ and $\phi$ are orthogonal---implying that  $\mathcal{K}_{\max}(x)$ and $\mathcal{K}_{\min}(x)$ are orthogonal subspaces---and $F(\ii_\xx(\pr{\psi}), \ii_\xx(\pr{\phi}))=0$. As such,  given the average conservation of $N$ by the interaction channel  $\ee$,   \lemref{lemma:WAY-distinguishability-inequality} implies that the following inequality must hold:
\begin{align*}
|\langle \psi | \nsys \phi \rangle | &\leqslant \| \nsys \| F\left(\Lambda(\pr{\psi}), 
\Lambda(\pr{\phi})\right) \\
& \leqslant \| \nsys \| \sum_{a=x,\overline{x}} \tr[\Z(a) \Lambda(\pr{\psi})]^{\frac{1}{2}}\tr[\Z(a) \Lambda(\pr{\phi})]^{\frac{1}{2}} \\
& = \| \nsys \| \sum_{a=x,\overline{x}} \langle \psi |\E(a) \psi\rangle^{\frac{1}{2}} \langle \phi|\E(a) \phi\rangle^{\frac{1}{2}} \\
& = \|\nsys\| \left( \| \E(x)\|^{\frac{1}{2}} (1 - \|\onesys - \E(x) \|)^{\frac{1}{2}} + (1 - \| \E(x) \|)^{\frac{1}{2}} \|\onesys - \E(x) \|^{\frac{1}{2}}\right).
\end{align*}
The second line uses the fact that for any states $\rho, \sigma$, it holds that $F(\rho, \sigma) \leqslant  \sum_a \tr[\F(a) \rho]^{\frac{1}{2}}\tr[\F(a) \sigma]^{\frac{1}{2}}$ for any  observable $\F$  \cite{Fuchs1995, Barnum1996}.  The third line uses the fact that $\Lambda$ is the conjugate channel to $\ii_\xx$ defined in \eq{eq:conjugate-channel}, and so it holds that $\tr[\Z(a)\Lambda(\rho)] =  \tr[\E(a) \rho]$ for all $\rho$ and  $a=x,\overline{x}$. To see how the final line is obtained, note that  we have $\langle \psi |\E(x) \psi\rangle = \| \E(x)\|$ and $\langle \phi| (\onesys - \E(x) ) \phi\rangle = \| \onesys - \E(x) \|$ by construction. For the first term, i.e.,   $a = x$,  we obtain $\langle \psi |\E(a) \psi\rangle = \| \E(x)\|$ and $\langle \phi|\E(a) \phi\rangle = \langle \phi|(\onesys - (\onesys - \E(x)) ) \phi\rangle = 1 - \langle \phi| (\onesys - \E(x) ) \phi\rangle = 1 - \| \onesys - \E(x) \|$. The second term for $a = \overline{x}$ is obtained in a similar manner.

\end{proof}

Let us note that if $\E$ commutes with $\nsys$, then \thmref{theorem:first-kind-WAY-distinguishability} imposes no restrictions on first-kindness. To see this,  let us  note that for any (possibly trivial) effect $\E(x)$,    and for any unit vectors $\psi, \phi$ satisfying  $\E(x)\psi = \|\E(x)\| \psi$ and $(\onesys - \E(x)) \phi = \|\onesys - \E(x)\| \phi$, it holds that $\<\psi| \E(x) \nsys \phi\> = \|\E(x)\| \<\psi|  \nsys \phi\>$ and $\<\psi|  \nsys \E(x) \phi\> = (1 - \|\onesys - \E(x)\|) \<\psi|  \nsys \phi\>$. It follows that if $[\E(x),\nsys]= \zero$ then either (i) $\<\psi|  \nsys \phi\> = 0$, or (ii)  $ \|\E(x)\| + \|\onesys - \E(x)\|=1$. Condition (i) implies that the lower bound of \eq{eq:distinguishability-WAY-bound} vanishes, and so no constraint is imposed. On the other hand, condition (ii) implies that $\|\E(x)\| + \|\onesys - \E(x)\| = 1 + (p_{\max}(x) - p_{\min}(x)) = 1$, where we recall from \corref{corollary:first-kind-fidelity-norm} that $p_{\max}(x) = \| \E(x)\|$ and $p_{\min}(x) = 1 - \| \onesys - \E(x)\|$ are the largest and smallest values from the set $\{p(x|z)\}$ given by \thmref{thm:first-kind-post-processing-norm-1}. Such equality is satisfied if and only if $p_{\max}(x) = p_{\min}(x) = \lambda$, in which case by \eq{eq:post-processing-first-kind} it follows that  $\E(x) = \lambda \onesys$ is a trivial effect. But \eq{eq:distinguishability-WAY-bound}  applies only to non-trivial effects, and so no constraints are imposed in such a case.    
 
Let us now show that   the first-kindness statement of  \thmref{theorem:Strong-WAY} may be recovered directly from \eq{eq:distinguishability-WAY-bound}.

\begin{corollary}\label{corollary:repeatable-WAY-distinguishability}
Consider a measurement scheme $\mm := (\ha, \xi, \ee, \Z)$ for  an $\E$-instrument $\ii$  acting in $\hs$, and assume that $\ee$  conserves an additive quantity $N = N\sub{\s} \otimes \oneapp + \onesys\otimes N\sub{\aa}$ on average, where $\nsys \in \los(\hs)$ and $\napp \in \los(\ha)$. If $\ii$ is a measurement of the first kind,  then for all effects $\E(x)$ that have both eigenvalue 1 and 0,
\begin{align*}
    \P [\E(x),  \nsys ] \P =  [\E(x),  \P \nsys \P] = \zero,
\end{align*}
 where  $\P = \P_0(x) + \P_1(x)$, with $\P_0(x)$ and $\P_1(x)$ orthogonal projections   onto the eigenvalue-0 and eigenvalue-1 eigenspaces of $\E(x)$, respectively.
\end{corollary}
\begin{proof}
 It is trivial to verify that $\supp(\P_1(x)) \equiv \kk_{\max}(x)$ and $\supp(\P_0(x)) \equiv \kk_{\min}(x)$ as defined in \thmref{theorem:first-kind-WAY-distinguishability}. Any $\varphi \in \supp(\P)$ may be written as $\varphi = \alpha \psi + \beta \phi$, where $\psi \in \kk_{\max}(x)$ and $\phi \in  \kk_{\min}(x)$ are unit vectors and $\alpha, \beta \in \co$. 
Since $\|\E(x)\| =  \|\onesys -\E(x)\|=1$ for any effect that has both eigenvalue 1 and 0, then by \thmref{theorem:first-kind-WAY-distinguishability} it follows that $\<\varphi| \P_1(x) \nsys \P_0(x) \varphi\> = \alpha^* \beta \<\psi | \nsys \phi\> = 0$ for all $\varphi \in \supp(\P)$. It follows that $ \P_1(x) \nsys \P_0(x) = \P_1(x) \nsys (\P - \P_1(x)) = \zero $, which implies that $\P_1(x) \nsys \P = \P_1(x) \nsys \P_1(x)$. Since the right hand side is self-adjoint, and $\P_1(x) = \P_1(x)\P = \P \P_1(x)$, it follows that $[\P_1(x), \P \nsys \P]=\zero$. But since $\P_1(x) = \E(x) \P = \P \E(x)$, we have $[\E(x), \P \nsys \P] = \P [\E(x) ,  \nsys ] \P =\zero$. This completes the proof.
\end{proof}

\bibliography{Projects-WAY}

\end{document}